\newcommand{\Z}{\mathbb{Z}}
\newcommand{\N}{\mathbb{N}}
\newcommand{\C}{\mathbb{C}}
\newcommand{\R}{\mathbb{R}}
\newcommand{\half}{\mathbb{H}}
\newcommand{\disc}{\mathbb{D}}
\renewcommand{\P}{\mathbb{P}}
\newcommand{\E}{\mathbb{E}}
\newcommand{\ind}{\mathbbm{1}}
\newcommand{\F}{\mathcal{F}}
\newcommand{\B}{\mathcal{B}}
\newcommand{\OO}{\mathcal{O}}
\newcommand{\oo}{o}
\newcommand{\de}{\mathrm{d}}
\DeclareMathOperator{\dist}{dist}
\DeclareMathOperator{\diam}{diam}
\newcommand{\hcap}{\mathrm{cap}_\half}
\newcommand{\usigma}{{\underline{\sigma}}}
\newcommand{\eps}{\varepsilon}
\DeclareMathOperator{\imag}{Im}
\DeclareMathOperator{\real}{Re}
\renewcommand{\coloneq}{\mathrel{\mathop:}=}
\theoremstyle{plain}
\newtheorem{theorem}{Theorem}[section]
\newtheorem{lemma}[theorem]{Lemma}
\newtheorem{proposition}[theorem]{Proposition}
\newtheorem{corollary}[theorem]{Corollary}
\theoremstyle{definition}
\newtheorem{definition}[theorem]{Definition}
\newtheorem{condition}{Condition}
\newtheorem{conditionc}{Condition}
\theoremstyle{remark}
\newtheorem{remark}[theorem]{Remark}
\newcommand{\enustyo}{%
\renewcommand{\theenumi}{\arabic{enumi}}
\renewcommand{\labelenumi}{\arabic{enumi}.}
}
\newcommand{\enustyii}{%
\renewcommand{\theenumi}{(\roman{enumi})}
\renewcommand{\labelenumi}{\theenumi}
}
\newcommand{\tl}{\mathbb{T}}
\newcommand{\hl}{\mathbb{T}'}
\newcommand{\len}{\Lambda}
\newcommand{\Prob}{\mathrm{Prob}}
\newcommand{\xs}{X_{\mathrm{simple}}}
\newcommand{\psd}{p_{\mathrm{sd}}}
\newcommand{\dd}{\mathrel{\mathop:}}
\newcommand{\elen}{m}
\def\br#1{\left(#1\right)}
\def\brs#1{\left\{#1\right\}}
\def\bra#1{\left|#1\right|}
\newcommand{\discr}[1]{#1^\#}
\newcommand{\discrui}[2]{#1^{\#,#2}}
\newcommand{\conti}[1]{#1^\medsquare}
\begin{document}

\title{Random curves, scaling limits and Loewner evolutions}
\author{Antti Kemppainen\footnote{Email: Antti.H.Kemppainen@helsinki.fi}
\and Stanislav Smirnov\footnote{Email: Stanislav.Smirnov@unige.ch}
}
\date{\today}

\maketitle

\begin{abstract}
In this paper, we provide a framework of estimates for describing 2D scaling
limits by Schramm's SLE curves. In particular, we show that a weak estimate
on the probability of an annulus crossing implies that a random curve arising
from a statistical mechanics model will have scaling limits and those will be
well-described by Loewner evolutions with random driving forces.
Interestingly, our proofs indicate that existence of a nondegenerate
observable with a conformally-invariant scaling limit seems sufficient to
deduce the required condition.

Our paper serves as an important step in establishing the convergence of Ising and
FK Ising interfaces to SLE curves, moreover, the setup is adapted to branching interface trees, 
conjecturally describing the full interface picture by a collection of branching SLEs.
\end{abstract}

\clearpage

\tableofcontents

\clearpage


\section{Introduction}

%

Oded Schramm's introduction of SLE as the only possible conformally invariant scaling limit of interfaces
has led to much progress 
in our understanding of 2D lattice models at criticality.
For several of them it was shown that interfaces
(domain wall boundaries) indeed
converge to Schramm's SLE curves as the lattice mesh tends to zero
\cite{smirnov-2001-,smirnov-2001-p,smirnov-2003-,lawler-schramm-werner-2004-,
schramm-sheffield-2005-,smirnov-2010-,camia-newman-2007-,schramm-sheffield-2009-}.

All the existing proofs start  by relating some observable to
a discrete harmonic or holomorphic function with appropriate boundary values and describing its scaling limit
in terms of its continuous counterpart.
Conformal invariance of the latter allowed then to construct
the scaling limit of the interface itself by sampling the observable as it is drawn.
The major technical problem in doing so is how to deduce 
from some weaker notions
the strong convergence of interfaces, i.e., the convergence in law with respect to the topology
induced by the uniform norm to the space of continuous curves, which are only defined up to reparametrizations.
So far two routes have been suggested:
first to prove the convergence of the driving process in Loewner characterization, 
and then improve it to convergence of curves, cf. \cite{lawler-schramm-werner-2004-};
or first establish some sort of precompactness for laws of discrete interfaces,
and then prove that any sub-sequential scaling limit is in fact
an SLE, cf. \cite{smirnov-2003-}.

We will lay framework for both  approaches, showing that a rather weak hypotheses
is sufficient to conclude that 
an interface has sub-sequential scaling limits, but also
that they can be described almost surely by Loewner evolutions.
We build upon an earlier work of Aizenman and Burchard \cite{aizenman-burchard-1999-},
but draw stronger conclusions from similar conditions, and
also reformulate them in several geometric as well as conformally invariant ways.

At the end we check this condition for a number of lattice models.
In particular, this paper serves as an important step in establishing the convergence of Ising and FK Ising interfaces 
\cite{chelkak-duminil-hongler-kemppainen-smirnov-2013-}.
Interestingly, our proofs indicate that existence of a non-degenerate observable with a conformally invariant scaling limit seems sufficient to deduce the required condition.
So we believe, that the main obstacle to establish convergence to SLE of interfaces in various
models lies in finding a (exactly or approximately) discrete holomorphic observable.
Our techniques also apply to interfaces in massive versions of lattice models, as in \cite{makarov-smirnov-2010-}.
In particular, the proofs for loop-erased random walk and harmonic explorer we include below can be modified
to their massive counterparts, as those have similar martingale observables \cite{makarov-smirnov-2010-}.

Moreover, our setup is adapted to branching interface trees, conjecturally converging to branching SLE$(\kappa,\kappa-6)$,
cf \cite{sheffield-2009-}.
We are preparing a follow-up \cite{kemppainen-smirnov-2013-}, 
which will exploit this in the context of the critical FK Ising model.
In the percolation case a construction was proposed in  \cite{camia-newman-2006-},
also using the Aizenman--Burchard work.

Another approach to a single interface was proposed by Sheffield and Sun \cite{sheffield-sun-2012-}.
They ask for milder condition on the curve, but require simultaneous convergence of the Loewner evolution driving force
when the curve is followed in two opposite directions towards generic targets.
The latter property is missing in many of the important situations we have in mind,
like convergence of the full interface tree.

The authors would like to thank the anonymous referees for their valuable comments
as well as Alexander Glazman and Hugo Duminil-Copin for reading through
and commenting on the preliminary versions of the paper. Their efforts to increase the
quality of this work are highly appreciated.

\subsection{The setup and the assumptions}

Our paper is concerned with sequences of random planar curves
and different conditions sufficient to establish their precompactness. 

We start with a probability measure $\P$ on the set $X(\C)$ of planar curves,
having in mind an interface (a domain wall boundary) in some lattice model of statistical physics
or a self-avoiding random trajectory on a lattice.
By a \emph{planar curve} we mean a continuous mapping $\gamma: [0,1] \to \C$.
The resulting space $X(\C)$ is endowed with the usual supremum metric with minimum taken over all reparameterizations,
which is therefore parameterization-independent, see Section~\ref{ssec: curves}.
Then we consider $X(\C)$ as a measurable space with Borel $\sigma$-algebra.
For any domain $V \subset \C$, let $\xs(V)$ be the set of Jordan curves $\gamma: [0,1] \to \overline{V}$
such that $\gamma(0,1) \subset V$. Note that the end points are allowed to lie on the boundary.

\begin{figure}[tbh]
\centering
\subfigure[Typical setup: a random curve is defined on a lattice approximation of $U$
and is connecting two boundary points $a$ and $b$.] 
{
	\label{sfig: domains and dmp a}
	\includegraphics[scale=.4]
{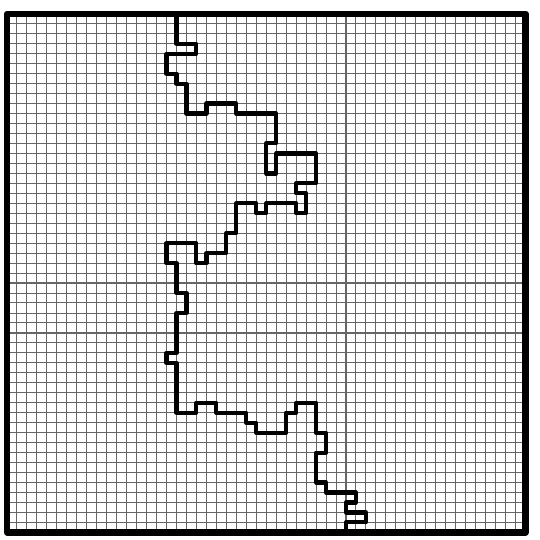}
} 
\hspace{0.6cm}
\subfigure[The same random curve after a conformal transformation to $\disc$ taking $a$ and $b$ to $-1$ and $+1$, 
respectively.]
{
	\label{sfig: domains and dmp b}
	\includegraphics[scale=.4]
{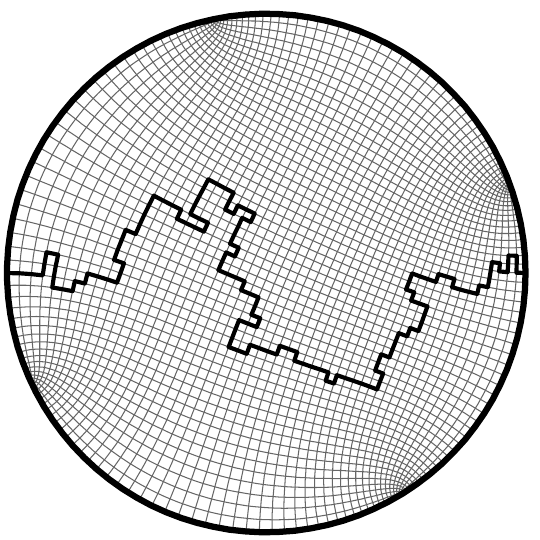}
}
\hspace{0.6cm}
\subfigure[Under the domain Markov property the curve conditioned on its beginning part has
the same law as the one in the domain with the initial segment removed.]
{
	\label{sfig: domains and dmp c}
	\includegraphics[scale=.4]
{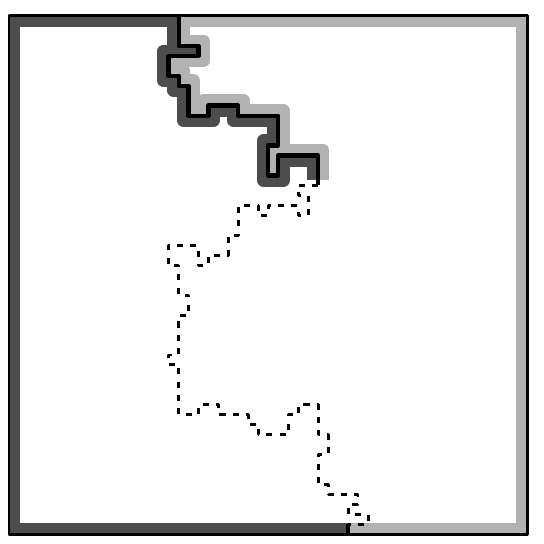}
}  
\caption{The assumptions of the main theorem are often 
easier to verify in the domain where the curve is originally defined (a) and the slit domains
appearing as we trace the curve (c).
Nevertheless, to set up the Loewner evolution we need to uniformize conformally to a fixed domain, e.g. the unit disc (b). 
Figure~(c) illustrates the domain Markov property under which it is possible to verify the simpler ``time $0$'' condition (presented in this section)
instead of its conditional versions (see Section~\ref{ssec: four conditions}).
}
\label{sfig: domains and dmp}
\end{figure}

Typically, the random curves we want to consider connect two boundary points $a,b \in \partial U$
in a simply connected domain $U$. 
Also it is possible to assume that the random curve is (almost surely) simple, because
the curve is usually defined on a lattice with small but finite lattice mesh without ``transversal'' self-intersections.
Therefore, by slightly perturbing the lattice and the curve it is possible to remove self-intersections. 
The main theorem of this paper involves the
Loewner equation, and consequently the curves have to be either simple or non-self-traversing, i.e.,
curves that are limits of sequences of simple curves. 

While we work with different domains $U$, we still prefer to restate our conclusions for a fixed domain.
Thus we encode the domain $U$ and the curve end points $a,b \in \partial U$ 
by a conformal transformation $\phi$ from $U$ onto the unit disc $\disc= \{ z\in \C \;:\; |z|<1 \}$.
The domain $U=U(\phi)$ is then the domain of definition of $\phi$ and the points $a$ and $b$
are preimages $\phi^{-1}(-1)$ and $\phi^{-1}(1)$, respectively, if necessary define these in the sense of prime ends.

Because of the above reasons the first fundamental object in our study is 
a \emph{pair} $(\phi,\P)$ where $\phi$ is a \emph{conformal map}
and $\P$ is a \emph{probability measure on curves} with the following restrictions: Given
$\phi$ we define the domain $U=U(\phi)$ to be the domain of definition of $\phi$ and we require that
$\phi$ is a conformal map from $U$ onto the unit disc $\disc$.
Therefore $U$ is a simply connected domain other than $\C$.
We require also that $\P$ is supported on (a closed subset of)
\begin{equation} \label{eq: curves on u}
\left\{ \gamma \in \xs(U) \;:\; 
   \begin{gathered}
   \text{the beginning and end point of} \\ 
   \text{$\phi (\gamma)$ are $-1$ and $+1$, respectively}   
   \end{gathered} \right\}.
\end{equation}
The second fundamental object in our study is 
\emph{some collection} $\Sigma$ of pairs $(\phi,\P)$ satisfying the above restrictions.

Because the spaces involved are metrizable, when discussing convergence
we may always think of $\Sigma$ as a sequence $( (\phi_n,\P_n) )_{n \in \N}$. 
In applications, we often have in mind a sequence of interfaces for the same
lattice model but with varying lattice mesh 
$\delta_n \searrow 0$: then each $\P_n$ is supported on curves defined on the $\delta_n$-mesh lattice.
The main reason for working with the more abstract family compared to a sequence is to simplify the notation.
If the set in \eqref{eq: curves on u} is non-empty, which is assumed, then there are in fact plenty of such curves,
see Corollary~2.17 in \cite{pommerenke-1992-}.

We uniformize by a disk $\disc$ to work with a bounded domain.
As we show later in the paper, our conditions are conformally invariant, so the choice of a particular uniformization domain is not important.

For any $0<r<R$ and any point $z_0 \in \C$, denote the annulus of radii $r$ and $R$ 
centered at $z_0$ by $A(z_0,r,R)$:
\begin{equation}
A(z_0,r,R) = \{ z \in \C \;:\; r < |z - z_0| < R \} .
\end{equation}
We call the quantity $(1/2\pi) \, \log \left( R/r\right)$ the modulus of the annulus 
$A(z_0,r,R)$.
The following definition makes speaking about crossing of annuli precise.

\begin{definition}\label{def: crossing}
For a curve $\gamma:[T_0,T_1] \to \C$ and an annulus $A=A(z_0,r,R)$, $\gamma$ is said to 
be a \emph{crossing} of the annulus $A$ if both $\gamma(T_0)$ and $\gamma(T_1)$ lie outside $A$ 
and they are in the different components of $\C \setminus A$. A curve
$\gamma$ is said to \emph{make a crossing} of the annulus $A$ if
there is a subcurve which is a crossing of $A$. 
A \emph{minimal crossing} of the annulus $A$ is a crossing which doesn't have genuine subcrossings.
\end{definition}

\begin{figure}[tbh!]
\centering
\subfigure[Unforced crossing: the component of the annulus is not
disconnecting $a$ and $b$. It is possible that the curve avoids the set. In this picture $A^u$
is the entire half-annulus.] 
{
	\label{sfig: crossings a}
	\includegraphics[scale=.75]
{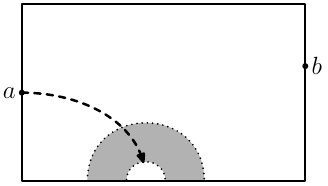}
} 
\hspace{0.4cm}
\subfigure[Forced crossing: the component of the annulus
disconnects $a$ and $b$ and does it in the way, that every curve connecting $a$ and $b$
has to cross the annulus at least once. In this picture $A^u$ is empty.]
{
	\label{sfig: crossings b}
	\includegraphics[scale=.75]
{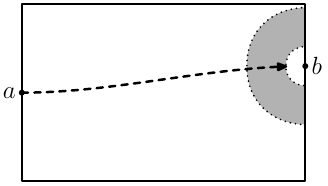}
}
\hspace{0.4cm}
\subfigure[There is an ambiguous case which resembles more either one of the previous two cases depending
on the geometry. In this case the component of the annulus separates $a$ and $b$, but there are some
curves from $a$ to $b$ in $U$ which don't cross the annulus. In this picture $A^u$ is the small top part
of the half-annulus.] 
{
	\label{sfig: crossings c}
	\includegraphics[scale=.75]
{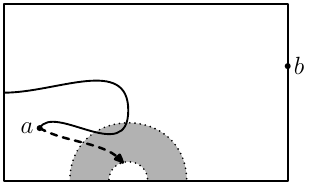}
}  
\subfigure[Unforced crossing of a topological quadrilateral (quad): the quad is not disconnecting 
$a$ and $b$.] 
{
	\label{sfig: crossings d}
	\includegraphics[scale=.75]
{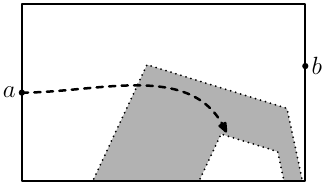}
} 
\hspace{0.4cm}
\subfigure[The quads we consider have two of their sides on the boundary and two in the interior of the domain.
We usually denote the sides by $S_0, S_2 \subset  U$ and $S_1,S_3 \subset \partial U$. The set $V$ is the interior
of the quad. There exists a unique number $L>0$ (and a unique conformal map) such that the quad can be mapped 
conformally onto the rectangle $(0,L) \times (0,1)$ so that the sides of the quad get mapped to the sides of the rectangle
and $S_0$ gets mapped onto $\{0\} \times{} {[0,1]}$.]
{
	\label{sfig: crossings e}
	\includegraphics[scale=.75]
{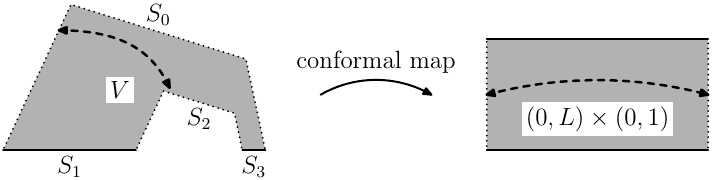}
}
\caption{The general idea of Condition~\ref{cond: annulus} is that an event of an unforced crossing 
has uniformly positive probability to fail. In the figures~\protect\subref{sfig: crossings a}--\protect\subref{sfig: crossings c}
the solid line is the boundary of the domain,
the dotted lines are the boundaries of the annulus and the dashed lines refer to the crossing event we are considering.
The conformally invariant version of this is Condition~\ref{cond: conf} which is formulated using topological quadrilaterals.
The figure \protect\subref{sfig: crossings d} corresponds to the figure~\protect\subref{sfig: crossings a}
in this latter setting. 
} 
\label{fig: crossings}
\end{figure}

We cannot require that crossing any fixed annulus has a small probability under $\P$:
indeed, annuli centered at $a$ or at $b$ have to be crossed at least once. 
For that reason we introduce the following definition
for a fixed simply connected domain $U$ and an annulus $A=A(z_0,r,R)$ which is allowed to vary.
If $\partial B(z_0,r) \cap \partial U = \emptyset$ define $A^u= \emptyset$, otherwise
\begin{equation}\label{eq: definition Au}
A^u = \left\{ z \in U \cap A \,:\, 
   \begin{gathered}
   \text{the connected component of $z$ in $U \cap A$} \\
   \text{doesn't disconnect $a$ from $b$ in $U$}
   \end{gathered}
   \right\} .
\end{equation}
This reflects the idea explained in Figure~\ref{fig: crossings}.

The main theorem is proven under a set of equivalent conditions. In this section, two simplified versions
are presented. They are so called \emph{time $0$ conditions} which imply 
the stronger \emph{conditional} versions
if our random curves satisfy the domain Markov property, cf. Figure~\ref{sfig: domains and dmp c}. 
It should be noted that even in physically interesting situations the latter might fail,
so the conditions presented in the section~\ref{ssec: four conditions} 
should be taken as the true assumptions of the main theorem.

\begin{condition}\label{cond: g const time zero}
The family $\Sigma$ is said to satisfy a 
\emph{geometric bound on an unforced crossing (at time zero)}
if there exists $C >1$ such that
for any $(\phi,\P) \in \Sigma$
and for any annulus $A=A(z_0,r,R)$ with $0 < C \, r \leq R$, 
\begin{equation}\label{ie: cond g1}
\P \left( \text{$\gamma$ makes a crossing of $A$ which is contained in $\overline{A^u}$} \, \right) \leq \frac{1}{2}  . 
\end{equation}
\end{condition}

We stress already at this point that the constant $1/2$ on the right-hand side of \eqref{ie: cond g1}
or in similar bounds
is arbitrary and could be replaced by any other constant strictly less than one. We will demonstrate this  
in Corollary~\ref{cor: equiv conditions onehalf}.

A \emph{topological quadrilateral} $Q=(V; S_k, k=0,1,2,3)$ consists a domain $V$ which is homeomorphic to
a square in a way that the boundary arcs $S_k$, $k=0,1,2,3$, are in counterclockwise order
and correspond to the four edges of the square. 
There exists a unique positive $L$ and a conformal map from $Q$ 
onto a rectangle $[0,L]\times[0,1]$ mapping $S_k$ to the four edges
of the rectangle with image of $S_0$ being $\{0\}\times[0,1]$.
The number $L$ is called the \emph{modulus} of (or the \emph{extremal length the curve family} joining the opposite sides of) 
$Q$ and we will denote it by $\elen(Q)$. 

We often consider a topological quadrilateral $Q=(V; S_k, k=0,1,2,3)$ which is lying on the boundary in the sense that
$S_1 \cup S_3 \subset \partial U$ while $S_0 \cup S_2 \subset U$ 
--- this idea corresponds to the condition imposed when we defined $A^u$. 
For this type of topological quadrilateral we say that a curve $\gamma: [T_0,T_1] \to \C$ \emph{crosses} $Q$
in the domain $U$ if 
there is a subinterval $[t_0,t_1] \subset [T_0,T_1]$ such that $\gamma (t_0,t_1) \subset V$, but
$\gamma [t_0,t_1]$ intersects both $S_0$ and $S_2$.
The other notions of Definition~\ref{def: crossing} are extended to the topological quadrilaterals
in the same way.
The following is the first conformally invariant version of our conditions, formulated in terms of
topological quadrilaterals.

\begin{conditionc}\label{cond: c const time zero}
The family $\Sigma$ is said to satisfy a 
\emph{conformal bound on an unforced crossing (at time zero)}
if there exists $M >0$ such that
for any $(\phi,\P) \in \Sigma$
and for any topological quadrilateral $Q$ with $V(Q) \subset U$, $S_1 \cup S_3 \subset \partial U$ and
$\elen(Q) \geq M$
\begin{equation}
\P \left( \text{$\gamma$ makes a crossing of $Q$} \right) \leq \frac{1}{2}  . 
\end{equation}
\end{conditionc}

\begin{remark}
In percolation type models of statistical physics including the random cluster models, 
this type of crossing events are the most central objects of study.
\end{remark}

\begin{remark}
Notice that depending on the point of view, either one of the conditions can appear stronger than
the other one.
In Condition~\ref{cond: g const time zero} we require 
that the bound holds for all annuli with large modulus and
simultaneously for \emph{all} components of $A^u$, 
whereas in Condition~\ref{cond: c const time zero} the bound holds 
for all topological quadrilaterals with large modulus and
for its \emph{single} (only) component. 
On the other hand, the set of topological quadrilaterals
is bigger than the set of topological quadrilaterals $Q$ whose boundary arcs $S_0(Q)$ and $S_2(Q)$
are subsets of different boundary components of some annulus and $V(Q)$ is subset of that annulus. The latter set is 
the set of shapes relevant in Condition~\ref{cond: g const time zero}, at least naively speaking.
\end{remark}

\subsection{Main theorem}

The main results of this article will be on the tightness of certain families of probability measures
and on the convergence of probability measures in the weak sense.
Hence let's first recall the following definitions.

\begin{definition}
Let $Y$ be a metric space and $\B_Y$ its Borel $\sigma$-algebra. 

If $\Sigma_0$ is a collection of probability measures on $(Y,\B_Y)$, then
a random variable $f : Y \to \R$ is said to be \emph{tight} or \emph{stochastically bounded} 
in $\Sigma_0$ if and only if
for each $\eps>0$ there is $M>0$ such that $\P( |f| \leq M ) \geq 1-\eps$ for all $\P \in \Sigma_0$.

A collection $\Sigma_0$ of probability measures on $(Y,\B_Y)$ is said to be 
\emph{tight} if for each $\eps>0$ there exists a compact
set $K \subset Y$ so that $\P(K) \geq 1 - \eps$ for any $\P \in \Sigma_0$. 
\end{definition}

For the background in the weak convergence of probability measures the reader should see for example \cite{billingsley-1999-}.
Prohorov's theorem states that a family of probability measures is relatively 
compact if it is tight, see Theorem~5.1 in \cite{billingsley-1999-}. 
Moreover, in a separable and complete metric space relative compactness and tightness are equivalent.

Denote by $\phi \P$ the pushforward of $\P$ by $\phi$ defined by
\begin{equation}
( \phi \P )(A) = \P( \phi^{-1}(A) )
\end{equation}
for any measurable $A \subset \xs(\disc)$.
In other words $\phi\P$ is the law of the random curve $\phi(\gamma)$.
Given a family $\Sigma$ as above, define the family of pushforward measures 
\begin{equation}\label{eq: def sigma disc}
\Sigma_\disc = \left\{ \phi \P \;:\; (\phi,\P)\in \Sigma  \right\} .
\end{equation}
The family $\Sigma_\disc$ consist of measures  on the curves $\xs(\disc)$ connecting $-1$ to $1$.

Fix a conformal map
\begin{equation}\label{eq: def Phi}
\Phi(z) = i \frac{z+1}{1-z} 
\end{equation}
which takes $\disc$ onto the \emph{upper half-plane} $\half = \{ z \in \C \,:\, \imag z>0\}$. 
Note that if $\gamma$ is distributed according to $\P \in \Sigma_\disc$,
then $\tilde{\gamma} = \Phi(\gamma)$ is a simple curve in the upper half-plane
slightly extending the definition of $\xs(\half)$, namely,
$\tilde{\gamma}$ is simple 
with $\tilde{\gamma}(0) = 0 \in \R$, $\tilde{\gamma}((0,1)) \subset \half$
and $|\gamma(t)| \to \infty$ as $t \to 1$.
Therefore by the results of Appendix~\ref{ssec: sle}, if $\tilde{\gamma}$ is parametrized with the half-plane capacity,
then it has a continuous driving term $W=W_\gamma:\R_+ \to \R$. 
As a convention the driving term or process of a curve or a random curve in $\disc$
means the driving term or process in $\half$ after the transformation $\Phi$
and using the half-plane capacity parametrization.

The following theorem and its reformulation, Proposition~\ref{prop: reformulation of main theorem},
are the main results of this paper.
Note that the following theorem concerns with $\Sigma_\disc$. 
The proof will be presented in Section~\ref{sec: proof main}. It is divided into three independent steps
each in its own subsection and the actual proof is then presented in Section~\ref{ssec: proof main}. 
See Section~\ref{ssec: four conditions}
for the exact assumptions of the theorem, namely, Condition~\ref{def: b unf crossing}.
It should be noted that when the random curve has the domain Markov property, 
which is schematically defined in Figure~\ref{sfig: domains and dmp c},
Condition~\ref{cond: g const time zero} implies Condition~\ref{def: b unf crossing},
which is merely a conditional version of Condition~\ref{cond: g const time zero}.

\begin{theorem} \label{thm: main}
If the family $\Sigma$ of probability measures satisfies Condition~\ref{def: b unf crossing}, then
the family $\Sigma_\disc$ is tight
and therefore relatively compact in the topology of the weak convergence of probability measures on $(X,\B_X)$.
Furthermore if $\P_n \in \Sigma_\disc$ is converging weakly and the limit is denoted by $\P^*$ then
the following statements hold $\P^*$ almost surely
\begin{enumerate} \enustyii
\item \label{enui: main thm a} the point $1$ is not a double point, 
i.e., $\gamma(t) = 1$ only if $t=1$, \label{ei: main b simple}
\item \label{enui: main thm b} there exists $\beta>0$ such that $\gamma$ has 
a H\"older continuous parametrization for the H\"older exponent $\beta$,
\item \label{enui: main thm c} the tip $\gamma(t)$ of the curve lies on the boundary of the connected component of 
$\disc \setminus \gamma[0,t]$ containing $1$ (having the point $1$ on its boundary), for all $t$, \label{ei: main tip}
\item \label{enui: main thm d} if $\hat{K}_t$ is the hull of $\Phi(\gamma[0,t])$, 
then the capacity $\hcap (\hat{K}_t) \to \infty$ as $t \to 1$\label{ei: main hcap infty} 
\item \label{enui: main thm e} for any parametrization of $\gamma$  the capacity $t \mapsto \hcap (\hat{K}_t)$
   is strictly increasing and if $(K_t)_{t \in \R_+}$ is $(\hat{K}_t)_{t \in [0,1)}$ reparametrized with capacity, then
   the corresponding $g_t$ satisfies the Loewner equation with
   a driving process $(W_t)_{t \in \R_+}$ 
   which is H\"older continuous for any exponent $\alpha < 1/2$. \label{ei: main loewner}
\end{enumerate}
Furthermore, there exist constants $\eps>0$ and $C>0$ such that 
\begin{equation}\label{ie: main thm integrability}
\E^* \left[\exp\left(\eps \max_{s \in [0,t]} |W_s|/\sqrt{t} \right)\right] \leq C
\end{equation}
for any $t>0$. Here $\E^* $ denotes the expected value with respect to $\P^*$.
\end{theorem}

\begin{remark}
Note that the claims \ref{ei: main b simple}--\ref{ei: main hcap infty} don't depend on the parameterization.
\end{remark}

The following corollary clarifies the relation between the convergence of random curves and the convergence of their driving processes.
For instance, it shows that if the driving processes of Loewner chains satisfying Condition~\ref{def: b unf crossing} converge,
also the limiting Loewner chain is generated by a curve.
In the statement of the result we assume that $\half$ is endowed with a bounded metric, for instance, the one
inherited from the Riemann sphere. Another possibility is to map $\half$ onto a bounded domain such as $\disc$.

\begin{corollary}\label{cor: driving to curve}
Suppose that $(W^{(n)})_{n \in \N}$ is a sequence of driving processes of random Loewner chains
that are generated by simple random curves $(\gamma^{(n)})_{n \in \N}$ in $\half$, satisfying Condition~\ref{def: b unf crossing}. 
Suppose also that $(\gamma^{(n)})_{n \in \N}$ are parametrized by capacity. Then
\begin{itemize}
\item $(W^{(n)})_{n \in \N}$ is tight in the metrizable space of continuous functions on $[0,\infty)$ with the topology
of uniform convergence on the compact subsets of $[0,\infty)$.
\item $(\gamma^{(n)})_{n \in \N}$ is tight in the space of curves $X$.
\item $(\gamma^{(n)})_{n \in \N}$ is tight in the metrizable space of continuous functions on $[0,\infty)$ with the topology
of uniform convergence on the compact subsets of $[0,\infty)$.
\end{itemize}
Moreover, if the sequence converges in any of the topologies above it also converges in the two other topologies
and the limits agree in the sense that the limiting random curve is
driven by the limiting driving process.
\end{corollary}

The space $C([0,\infty))$ is metrizable, since a metric on it is given, for example, by
\begin{equation*}
\de(f,g) = \sum_{n = 0}^\infty 2^{-n} \min\{1,\sup\{ |f(t) - g(t)| \,:\, t \in [0,2^n] \}\} .
\end{equation*}
It is understood that $a=\gamma^{(n)}(0)$ and $b=\infty$ in the definition of $A^u$.

For the next corollary let's define the space of \emph{open curves} by identifying in the set of continuous maps
$\gamma:(0,1) \to \C$ different parametrizations. The topology will be given by the convergence on the compact subsets
of $(0,1)$. See also Section~\ref{ssec: proof corollaries}.
It is necessary to consider open curves since in rough domains nothing guarantees that there are
curves starting from a given boundary point or prime end.

We say that $(U_n,a_n,b_n)$, $n \in \N$, converges to $(U,a,b)$ in the \emph{Carath\'eodory sense} 
if there exist conformal and onto mappings $\psi_n : \disc \to U_n$ and $\psi: \disc \to U$
such that they satisfy 
$\psi_n(-1)=a_n$, $\psi_n(+1)=b_n$, $\psi(-1)=a$ and $\psi(+1)=b$ (possibly defined as prime ends)
and such that $\psi_n$ converges to $\psi$ uniformly in the compact subsets of $\disc$ as $n \to \infty$.
Note that this limit is not necessarily unique as a sequence $(U_n,a_n,b_n)$ can converge to different limits for different
choices of $\psi_n$. However if know that $(U_n,a_n,b_n)$, $n \in \N$, converges to $(U,a,b)$, then
$\psi(0) \in U_n$ for large enough $n$ and $U_n$ converges to $U$ 
in the usual sense of Carath\'eodory kernel convergence with respect to the point $\psi(0)$. For the definition see 
Section~1.4 of \cite{pommerenke-1992-}.

The next corollary shows that if we have a converging sequence of random curves in the sense of Theorem~\ref{thm: main}
and if they are supported on domains which converge in the Carath\'eodory sense, then the limiting random curve
is supported on the limiting domain. Note that the Carath\'eodory kernel convergence allows 
that there are deep fjords in $U_n$
which are ``cut off'' as $n \to \infty$. One can interpret the following corollary to state that with high probability 
the random curves don't enter any of these fjords. This is a desired property of the convergence.

\begin{corollary}\label{cor: convergence in general domains}
Suppose that the sequence $(U_n,a_n,b_n)$ converges to $(U^*, a^*, b^*)$ in the Carath\'eodory sense  
(here $a^*$, $b^*$ are possibly defined as prime ends)
and suppose that
$(\phi_n)_{n \geq 0}$ are conformal maps such that $U_n = U(\phi_n), a_n = a(\phi_n), b_n = b(\phi_n)$
and $\lim \phi_n = \phi^*$ for which $U^* = U(\phi^*), a^* = a(\phi^*), b = b(\phi^*)$.
Let $\hat{U} = U^* \setminus (V_a \cup V_b)$ where $V_a$ and $V_b$ are some neighborhoods of $a$ and $b$, respectively, 
and set $\hat{U}_n = \phi_n^{-1} \circ \phi (\hat{U})$. If $(\phi_n,\P_n)_{n \geq 0}$ satisfy Condition~\ref{def: b unf crossing}
and $\gamma^{(n)}$ has the law $\P_n$, then $\gamma^{(n)}$ restricted to $\hat{U}_n$ has a weakly converging subsequence in the topology
of $X$, the laws for different $\hat{U}$ are consistent so that it is possible to define a random curve $\gamma$ on the open interval
$(0,1)$ such that the limit for $\gamma^{(n)}$ restricted to $\hat{U}_n$ is $\gamma$ restricted to the closure of $\hat{U}$.
In particular, almost surely the limit of $\gamma^{(n)}$ is supported on open curves of $U^*$ and doesn't enter 
$(\limsup U_n) \setminus \overline{U}^*$.
\end{corollary}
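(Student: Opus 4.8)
The plan is to combine the tightness and regularity furnished by Theorem~\ref{thm: main} with a separate "no deep fjord penetration" estimate, which is where the annulus-crossing Condition~\ref{def: b unf crossing} really enters. First I would fix a pair of neighbourhoods $V_a, V_b$ of $a^*, b^*$ and work with the compact set $\hat U = U^* \setminus (V_a \cup V_b)$ and its preimages $\hat U_n = \phi_n^{-1}\circ\phi^*(\hat U)$; since $\phi_n \to \phi^*$ uniformly on compacts, $\hat U_n \to \hat U$ in a strong (e.g. Hausdorff) sense once $V_a, V_b$ are open. The curve $\gamma^n$ restricted to the (relatively closed) portion of time it spends in $\overline{\hat U_n}$ is a random element of $X$; I would first argue tightness of these restricted curves. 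Mapping everything to $\disc$ (or the sphere) so that $\Sigma_\disc$ is defined, tightness of $\phi_n\P_n$ is exactly Theorem~\ref{thm: main}, and the restriction map to $\overline{\hat U}$ is continuous where it is defined, so tightness is inherited; the only subtlety is that the domains $\hat U_n$ move, but because they converge and the obstruction to equicontinuity in Theorem~\ref{thm: main} is an equicontinuity-type modulus controlled by annulus crossings uniformly over $\Sigma$, the same modulus of continuity estimate applies uniformly in $n$. Hence along a subsequence $\gamma^n|_{\hat U_n} \Rightarrow$ some random curve in $\overline{\hat U}$.

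Next I would promote this to a statement about a single random open curve $\gamma:(0,1)\to\C$ supported in $U^*$. The point is consistency: for $\hat U \subset \hat U'$ (shrinking $V_a, V_b$) the limiting laws of $\gamma^n|_{\hat U_n}$ and $\gamma^n|_{\hat U'_n}$ agree on the smaller piece, essentially by the continuous-mapping theorem applied to further restriction, together with the diagonal/subsequence bookkeeping so that a single subsequence works for a countable exhausting family $V_a^{(k)}, V_b^{(k)}$ shrinking to $\{a^*\},\{b^*\}$. Kolmogorov-type consistency then yields a projective limit: a random continuous map on $(0,1)$ whose restriction to each $\overline{\hat U}$ is the corresponding weak limit. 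That $\gamma$ takes values in the open set $U^*$ (and not on $\partial U^*$ at interior times) follows because each $\overline{\hat U}$ is a compact subset of $U^*$ and $(0,1)$ is exhausted by the times spent in these sets; the endpoint behaviour (tending to $a^*$ and $b^*$ as prime ends) I would read off from claims \ref{ei: main b simple}--\ref{ei: main tip} of Theorem~\ref{thm: main} transported through $\phi^*$.

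The crucial and hardest step is the last sentence: almost surely $\gamma^n$ does not enter $(\limsup \overline{U_n}) \setminus \overline{U^*}$, i.e. the curves avoid the fjords that get pinched off in the Carath\'eodory limit. Here is the mechanism I expect to use. If a point $z_0$ lies in a fjord of $U_n$ that is cut off in the limit, then Carath\'eodory kernel convergence forces the extremal distance (in $U_n$) between a small disc around $z_0$ and a fixed compact "core" of $U^*$ to blow up as $n\to\infty$; equivalently there is an annulus $A(z_0, r, R_n)$ with $R_n/r \to \infty$ whose unforced part $A^u$ (with respect to $U_n$, $a_n$, $b_n$) separates $z_0$ from the bulk, and any excursion of $\gamma^n$ from the core into the disc $B(z_0,r)$ must make an unforced crossing of that annulus. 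Condition~\ref{def: b unf crossing} (via its geometric form, Condition~\ref{cond: g const time zero}, and its conditional strengthening — which is what gives the needed exponential decay in the number of disjoint crossings, as in Aizenman--Burchard \cite{aizenman-burchard-1999-}) then bounds the probability of such a crossing by a quantity tending to $0$. Taking a countable dense set of potential fjord points $z_0$ and a union bound over dyadic scales gives that, with probability one, for all large $n$ the curve $\gamma^n$ stays within any fixed neighbourhood of $\overline{U^*}$; combined with the already-established weak convergence on compacts this yields support on open curves of $U^*$ and exclusion of $(\limsup \overline{U_n})\setminus\overline{U^*}$. I anticipate the technical friction to be entirely in making the "fjord $\Rightarrow$ wide unforced annulus" implication quantitative and uniform, and in handling the prime-end subtleties near $a^*, b^*$ where $\hat U$ deliberately excises neighbourhoods.
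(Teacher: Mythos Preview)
Your tightness/consistency outline in the first two paragraphs is essentially what the paper does: precompactness of the restricted curves comes from the Aizenman--Burchard type estimates of Section~\ref{ssec: aizburch}, and the exhaustion-by-$\hat U$ argument to build a single open curve is routine once that is in place.

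The gap is in the fjord step. Your mechanism ``for each $z_0$ in a fjord, find an unforced annulus of large modulus, then union-bound over a countable dense set of $z_0$'' does not close. The point is that a single domain $U_n$ may carry \emph{many} disjoint fjords with mouth of size $\leq\delta$ and depth $\geq R$, and Condition~\ref{def: b unf crossing} (equivalently \ref{cond: conf power-law}) only gives you a bound of order $(\delta/R)^{\Delta}$ for the crossing of \emph{one} such fjord. Nothing in the hypotheses controls the \emph{number} of fjords in terms of $\delta$ by a power smaller than $\Delta$, so the naive sum need not be small; a countable dense set of target points does not help, since the terms you are summing are not uniformly summable. The paper handles this by proving a separate Lemma~\ref{lm: fjords}: for a disjoint family $(Q_j)$ of such fjord-quadrilaterals one combines the standard estimate $\elen(Q_j)\ge \tfrac{1}{2\pi}\log(R/\delta)$ with an \emph{area-based} estimate $\elen(Q_j)\ge (R-\delta)^2/\mathrm{area}(Q_j)$, and then splits the family according to whether $\mathrm{area}(Q_j)$ is large or small. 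Large-area fjords are few (their number is bounded by $\mathrm{area}(U)/\text{threshold}$) and the logarithmic bound suffices; small-area fjords have huge extremal length by the area bound, and one sums $\exp(-c/\mathrm{area}(Q_j))$ using $\sum_j \mathrm{area}(Q_j)\le\mathrm{area}(U)$. Without this two-scale argument (or an equivalent summability mechanism) your union bound is unjustified, and this is precisely the ``technical friction'' you flagged but did not resolve. Once Lemma~\ref{lm: fjords} is in hand, the rest is a grid approximation of the fjord mouths plus Borel--Cantelli, as you anticipated.
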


Here we define $\limsup A_n$ for a sequence of sets $A_n \subset \C$ to be the set 
\begin{equation*}
\left\{x \in \C \,:\, \exists \text{ increasing } n_k\in \N \text{ and }  
  x_k \in A_{n_k} \text{ sequences s.t. } \lim_{k \to \infty} x_k = x\right\} 
 = \bigcap_{m=1}^\infty \overline{\bigcup_{n=m}^\infty A_n}  \;.
\end{equation*}

\subsection{The principal application of the main theorem}\label{ssec: main application}

The results of this paper (Corollary~\ref{cor: convergence in general domains} 
and Proposition~\ref{prop: fkIsing}) together with \cite{smirnov-2010-}, \cite{chelkak-smirnov-2009-} and
\cite{chelkak-duminil-copin-hongler-2013-}
are used in \cite{chelkak-duminil-hongler-kemppainen-smirnov-2013-} to establish 
the following (strong) convergence result for the
Ising model interfaces.
For the exact setting consult Section~\ref{ssec: fk model}.

\begin{theorem}[%
Chelkak--Duminil-Copin--Hongler--Kemppainen--Smirnov \cite{chelkak-duminil-hongler-kemppainen-smirnov-2013-}]
\label{thm: cd-chks}
Let $U$ be a bounded simply connected domain with two distinct boundary points $a,b$
(possibly defined as prime ends).
\begin{itemize}
\item (Convergence of spin Ising interfaces) Consider the interface $\gamma_\delta$ 
in the critical spin Ising model with Dobrushin boundary conditions on $(U_\delta,a_\delta,b_\delta)$. 
The law of $\gamma_\delta$ converges weakly, as $\delta \to 0$, 
to the chordal Schramm-Loewner Evolution SLE$(\kappa)$ running from $a$ to $b$ in $U$ with $\kappa=3$.
\item (Convergence of FK Ising interfaces).
Consider the interface $\gamma_\delta$ in the critical FK Ising model with Dobrushin boundary conditions 
on $(U_\delta,a_\delta,b_\delta)$. The law of $\gamma_\delta$ converges weakly, as $\delta \to 0$, to the chordal Schramm-Loewner Evolution SLE$(\kappa)$ running from $a$ to $b$ in $U$ with $\kappa=16/3$.
\end{itemize}
\end{theorem}

The above result is based on a standard approach  for proving convergence. First we
show \emph{precompactness} of the sequence so that it has subsequential limits. 
Then we show that those limits are independent of the subsequence (\emph{uniqueness}).
It follows that the whole sequence converges to this unique limit.
The results of the present article are sufficient to cover the entire precompactness part, but this work
also gives some required tools for the uniqueness part.

The uniqueness part is based on finding an \emph{observable} which has a well-behaved scaling limit.
A typical observable is a solution of a discrete boundary value problem, e.g., the observable
could be a discrete harmonic function with prescribed boundary values and 
defined on the same or related graph as the interface. There needs to be a strong connection
between the observable and the interface so that the observable is a martingale with respect to
the information generated by the growing curve.

Unfortunately, the observables satisfying all the required properties 
have so far been found in only a few cases.

In the article \cite{chelkak-duminil-hongler-kemppainen-smirnov-2013-}
Condition~\ref{def: b unf crossing} is verified for the spin Ising model 
using the results of \cite{chelkak-duminil-copin-hongler-2013-}.
In Section~\ref{ssec: spin Ising} below, we give its alternative derivation 
using only the observable results of \cite{chelkak-smirnov-2009-}, thus giving a new proof
of Theorem~\ref{thm: cd-chks}, independent of \cite{chelkak-duminil-copin-hongler-2013-} and using only
\cite{chelkak-smirnov-2009-} and the ``martingale characterization'' from \cite{chelkak-duminil-hongler-kemppainen-smirnov-2013-}.

Moreover, our proof indicates that in general,
a non-degenerate martingale observable should suffice
to verify Condition~\ref{def: b unf crossing}. Another known example of such an approach
is its verification for the harmonic observable
which we sketch in Section~\ref{ssec: he}.

\subsection{An application to the continuity of SLE}

This section is devoted to an application of Theorem~\ref{thm: main}.

Consider SLE$(\kappa)$, $\kappa \in [0,8)$, for different values of $\kappa$. For an introduction to Schramm--Loewner evolution see
Appendix~\ref{ssec: sle} below and \cite{lawler-2005-}. 
The driving processes of the different SLEs can be given in the same probability space in the obvious
way by using the same standard Brownian motion for all of them. A natural question is to ask whether or not SLE is as a random curve
continuous in the parameter $\kappa$. See also \cite{johansson-rohde-wong-2012-}, where it is proved that SLE is continuous in $\kappa$
for small and large $\kappa$ in the sense of almost sure convergence of the curves when the driving processes are
coupled in the way given above. We will prove the following theorem using Corollary~\ref{cor: driving to curve}.

\begin{theorem}\label{thm: sle continuity}
Let $\gamma^{[\kappa]}(t)$, $t \in [0,\infty)$, be SLE$(\kappa)$ parametrized by capacity.
Suppose that $\kappa \in [0,8)$ and $\kappa_n \to \kappa$ as $n \to \infty$. 
Then as $n \to \infty$, the law of $\gamma^{[\kappa_n]}$ converges weakly to the law of $\gamma^{[\kappa]}$
in the topology of uniform convergence on the compact subsets of $[0,\infty)$. 
\end{theorem}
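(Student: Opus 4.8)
The plan is to apply Corollary~\ref{cor: driving to curve} with the sequence of SLE$(\kappa_n)$ curves, for which the driving processes are $W^{(n)}_t = \sqrt{\kappa_n}\, B_t$ with a single fixed standard Brownian motion $B$. The first task is to check that the family of SLE$(\kappa)$ measures, with $\kappa$ ranging over a compact subinterval of $[0,8)$, satisfies Condition~\ref{def: b unf crossing} (equivalently, one of the time-zero conditions together with the domain Markov property, which SLE enjoys). I would do this by establishing a uniform-in-$\kappa$ bound on the probability that SLE$(\kappa)$ makes an unforced crossing of an annulus of large modulus. The natural route is a one-point / derivative-exponent estimate: for $\kappa < 8$ the SLE curve has, uniformly on compact $\kappa$-intervals, a modulus of continuity / dimension bound coming from the standard computation with the martingale $|g_t'(z)|^{s}$ times a power of $\mathrm{Im}\,g_t(z)$, whose exponents depend continuously on $\kappa$ and stay in the good range as long as $\kappa$ stays bounded away from $8$. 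Since the constants in these estimates can be taken uniform over $\kappa \in [0, 8-\eta]$, the annulus-crossing probability bound is uniform, giving Condition~\ref{def: b unf crossing} for the whole family.

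Next, with Condition~\ref{def: b unf crossing} in hand, Corollary~\ref{cor: driving to curve} tells us that the pairs $(\gamma^{[\kappa_n]}, W^{(n)})$, parametrized by capacity, are tight in the topology of uniform convergence on compact intervals of $\R_+$, and moreover that convergence of the driving processes forces convergence of the curves to the curve driven by the limit. But $W^{(n)}_t = \sqrt{\kappa_n}\,B_t \to \sqrt{\kappa}\,B_t = W^{[\kappa]}_t$ uniformly on compacts, almost surely (indeed deterministically in $\kappa_n$, given the sample path of $B$). Hence $\gamma^{[\kappa_n]} \to \gamma^{[\kappa]}$ in the capacity parametrization, uniformly on compact time intervals, weakly — in fact along the coupling, almost surely on compacts. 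This is the bulk of the statement; what remains is to upgrade ``uniform convergence on compact subintervals of $\R_+$'' to ``uniform convergence on the fixed interval $[0,T]$ in the supremum norm'', i.e., to control the curves near the terminal time $T$ and rule out that mass escapes to large capacity or that the tail of the curve on $[T-\delta, T]$ behaves wildly.

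For that last upgrade I would use the tightness and the equicontinuity built into Theorem~\ref{thm: main}: the curves satisfying Condition~\ref{def: b unf crossing} have, after the capacity reparametrization, a common modulus of continuity bound (this is exactly what makes $\Sigma_\disc$ tight and gives the H\"older property in item~\ref{ei: main loewner}), so the convergence on $[0,T-\delta]$ plus a uniform smallness of the oscillation on $[T-\delta,T]$ (uniform in $n$, by the crossing estimate applied to small annuli around $\gamma^{[\kappa_n]}(T)$) yields convergence on all of $[0,T]$ in sup norm. One subtlety is that the statement is phrased for $\gamma^{[\kappa]}$ on $[0,T]$ with $T$ finite and SLE$(\kappa)$ parametrized by capacity, so there is no issue of the curve going to $\infty$; one only needs the local modulus-of-continuity control near any time, including $T$, which Condition~\ref{def: b unf crossing} supplies uniformly over the compact $\kappa$-range.

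The main obstacle, and the step deserving the most care, is the uniformity over $\kappa$ of the annulus-crossing estimate needed to verify Condition~\ref{def: b unf crossing} for the whole family at once. For a single $\kappa < 8$ such an estimate is classical, but one must check that the constants do not blow up as $\kappa \uparrow 8$ along the sequence — which is fine precisely because $\kappa_n \to \kappa < 8$ and so the $\kappa_n$ lie in some $[0, 8-\eta]$; still, writing the derivative-exponent argument with explicit continuous dependence of the exponents on $\kappa$ and extracting the uniform constant is the technical heart of the proof. Everything downstream is then a direct invocation of Corollary~\ref{cor: driving to curve} together with the trivial convergence $\sqrt{\kappa_n}\,B \to \sqrt{\kappa}\,B$ and a routine near-endpoint equicontinuity argument.
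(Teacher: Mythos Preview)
Your overall architecture matches the paper's: verify Condition~\ref{def: b unf crossing} uniformly for $\mathrm{SLE}(\kappa)$ with $\kappa$ in a compact subinterval of $[0,8)$ (using the conformal domain Markov property to reduce to the time-zero condition), then apply Corollary~\ref{cor: driving to curve} together with the obvious convergence $\sqrt{\kappa_n}\,B\to\sqrt{\kappa}\,B$ of driving processes.

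Where you differ is in the crossing estimate. You propose a derivative-exponent argument via martingales of the form $|g_t'(z)|^{s}(\imag g_t(z))^{r}$ with exponents depending continuously on $\kappa$. The paper instead verifies Condition~\ref{cond: c const time zero} in $\half$ by a short geometric reduction: given an avoidable quadrilateral $Q$ of modulus $\geq M$, reflect it across $\R$ to a doubly connected domain and apply Teichm\"uller's extremal-annulus theorem to conclude that the inner side $S_0$ lies in some $\overline{B(x,\rho x)}$ with $x>0$ and $\rho\asymp e^{-\pi M}$. By scale invariance the crossing probability is then bounded by $\P\bigl(\mathrm{SLE}(\kappa)\text{ hits }\overline{B(1,\rho)}\bigr)$, and making this $\le 1/2$ uniformly for $\kappa\in[0,\kappa_0]$ and small $\rho$ is cited as a standard estimate. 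This buys a cleaner, essentially one-line endgame, whereas your route would still need some geometric input to pass from ``unforced annulus crossing'' to a one-point hitting statement before invoking the exponent; both work, but the paper's is shorter.

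Finally, your last step --- upgrading from uniform convergence on compact subintervals of $\R_+$ to sup-norm convergence on $[0,T]$ --- is unnecessary: $T$ is a fixed finite capacity time, so $[0,T]$ \emph{is} a compact interval, and Corollary~\ref{cor: driving to curve} gives the conclusion directly. The paper accordingly has no endpoint argument.
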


We'll present the proof here since it is independent of the rest of the paper except that it relies on
Corollary~\ref{cor: driving to curve}, Proposition~\ref{prop: equiv conditions} (equivalence of geometric and conformal conditions)
and Remark~\ref{rem: dmp} (on the domain Markov property). The reader can choose to read these parts before reading this proof.

Notice that SLE$(\kappa)$ is not simple when $\kappa>4$. Therefore we need to slightly extend the setting
of this paper to be able to use it in the proof of Theorem~\ref{thm: sle continuity}.
The assumption that the random curves are simple is used essentially only to guarantee that
they are Loewner chains with continuous driving processes. Also that assumption makes it less
cumbersome to talk about the tip of the curve and whether or not some set separates the tip and the target points
from each other, but this is not a problem in the general case either, since we can always use conformal
mappings and resolve the question in some Jordan domain. As a consequence, 
no extra difficulties arise and we can work with SLE$(\kappa)$
as if they were simple curves.

\begin{proof}
Let $\kappa_0 \in [0,8)$.
First we verify that the family consisting of SLE$(\kappa)$s on $\disc$, say, where $\kappa$ runs over the interval $[0,\kappa_0]$,
satisfies Condition~\ref{def: b unf crossing}.
Since SLE$_\kappa$ has the conformal domain Markov property, it is enough to verify Condition~\ref{cond: c const time zero}. 
More specifically, it is enough
to show that there exists $M>0$ such that if $Q=(V,S_0,S_1,S_2,S_3)$ is a topological quadrilateral with $\elen(Q) \geq M$
such that  $V \subset \half$, $S_k \subset \R_+ \dd= [0,\infty)$
for $k=1,3$ and $S_2$ separates $S_0$ from $\infty$ in $\half$, then
\begin{equation}\label{ie: sle quad crossing}
\P( \text{SLE$(\kappa)$ intersects } S_0 ) \leq \frac{1}{2}
\end{equation}
for any $\kappa \in [0,\kappa_0]$.

Suppose that $M>0$ is large and $Q$ satisfies $\elen(Q) \geq M$.
Let $Q'=(V';S_0',S_2')$ 
be the doubly connected domain where $V'$ is the interior of the closure of $V \cup V^*$, $V^*$ is the mirror image of $V$
with respect to the real axis, and $S_0'$ and $S_2'$ are the inner and outer boundary of $V'$, respectively.
Then the modulus (or extremal length) of $Q'$, which is defined as the extremal length of the curve family connecting $S_0'$ and $S_2'$ in $V'$
(for the definition see Chapter~4 of \cite{ahlfors-1973-}), 
is given by $\elen(Q')=\elen(Q)/2$.

Let $x = \min ( \R \cap S_0' )>0$ and $r = \max \{ |z-x| \,:\, z \in S_0'\} >0$. Then
$Q'$ is a doubly connected domain which separates $x$ and a point on $\{z \,:\, |z-x|=r\}$ from $\{0,\infty\}$.
By Theorem~4.7 of \cite{ahlfors-1973-}, of all the doubly connected domains with this property, the complement of $(-\infty,0] \cup [x,x+r]$
has the largest modulus. By the equation 4.21 of \cite{ahlfors-1973-}, 
\begin{equation}
\exp( 2 \pi \, \elen(Q')) \leq 16 \left( \frac{x}{r} + 1 \right)
\end{equation}
which implies that $r \leq \rho x$ where
\begin{equation}
\rho = \left( \frac{1}{16} \exp( \pi M ) - 1 \right)^{-1} 
\end{equation}
which can be as small as we like by choosing $M$ large.

If SLE$(\kappa)$ crosses $Q$ then it necessarily intersects $\overline{B(x,r)}$. By the scale invariance of SLE$(\kappa)$
\begin{equation}
\P( \text{SLE$(\kappa)$ intersects } S_0 ) \leq \P\left( \text{SLE$(\kappa)$ intersects } \overline{B(1,\rho)} \right).
\end{equation}
Now by standard arguments \cite{rohde-schramm-2005-}, the right hand side can be made less than $1/2$
for $\kappa \in [0,\kappa_0]$ and $0<\rho \leq \rho_0$ where $\rho_0>0$ is suitably chosen constant.

Denote the driving process of $\gamma^{[\kappa]}$ by $W^{[\kappa]}$.
If $\kappa_n \to \kappa \in [0,8)$, then obviously $W^{[\kappa_{n}]}$ converges weakly
to $W^{[\kappa]}$. Hence by Corollary~\ref{cor: driving to curve} also $\gamma^{[\kappa_n]}$ converges weakly
to some $\tilde{\gamma}$ whose driving process is distributed as $W^{[\kappa]}$. That is,
$\gamma^{[\kappa_n]}$ converges weakly to $\gamma^{[\kappa]}$ as $n \to \infty$ provided that $\kappa_n \to \kappa$ as $n \to \infty$.
\end{proof}

\subsection{Structure of this paper}

\begin{figure}[tbh]
\centering
\subfigure[When the radius of the inner circle goes to zero, the dashed line is no longer visible from a faraway reference point.
If such an event has positive probability for the limiting measure, then the Loewner equation doesn't describe the whole curve.] 
{
	\label{sfig: lca}
	\includegraphics[scale=1]
{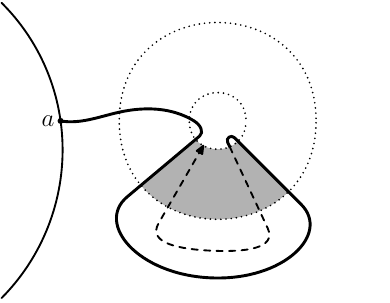}
} 
\hspace{0.4cm}
\subfigure[Longitudinal crossing of an arbitrarily thin tube of fixed length along the curve or the boundary violates the local growth
needed for the continuity of the Loewner driving term.]
{
	\label{sfig: lcb}
	\includegraphics[scale=1]
{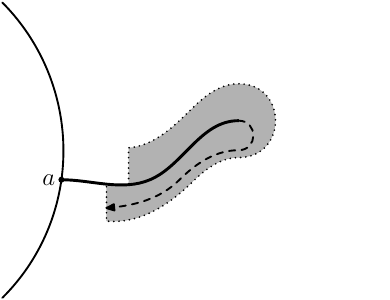}
} 
\caption{In the proof of Theorem~\ref{thm: main}, the regularity of random curves is established by establishing a probability upper bound
on multiple crossings and excluding two unwanted scenarios presented in this figure.} \label{fig: loewner curve}
\end{figure}

In Section~\ref{sec: cond}, the general setup of this paper is presented. 
Four conditions are stated and shown to be equivalent. Any one of them
can be taken as the main assumption for Theorem~\ref{thm: main}.

The proof of Theorem~\ref{thm: main} is presented in Section~\ref{sec: proof main}.
The proof consists of three parts: the first one is the existence of regular parametrizations of the random curves and
the second and third steps are described in Figure~\ref{fig: loewner curve}. 
The relevant condition is verified for a list of random curves arising
from statistical mechanics models
in Section~\ref{sec: checking condition}.


\section{The space of curves and equivalence of conditions} \label{sec: cond}

%

\subsection{The space of curves and conditions}

\subsubsection{The space of curves} \label{ssec: space of curves} \label{ssec: curves}

We follow the setup of Aizenman and Burchard's paper \cite{aizenman-burchard-1999-}:
planar curves are continuous mappings from $[0,1]$ to $\C$ modulo reparameterizations. Let
\begin{equation*}
C' = \left\{ f \in C \left( [0,1],\C \right) \,:\, 
  \begin{gathered}
  \text{either $f$ is not constant on any subinterval of $[0,1]$} \\
  \text{or $f$ is constant on $[0,1]$} 
  \end{gathered}
  \right\}.
\end{equation*}
It is also possible to work with the whole space $C \left( [0,1],\C \right)$, but the next definition is easier for $C'$.
Define an equivalence relation $\sim$ in $C'$ so that $f_1 \sim f_2$
if they are related by an increasing homeomorphism $\psi:[0,1] \to [0,1]$ with $f_2 = f_1 \circ \psi$. 
The reader can check that this defines an equivalence relation.
The mapping $f_1 \circ \psi$ is said to be a \emph{reparameterization}  of $f_1$ or that $f_1$ is \emph{reparameterized}
by $\psi$. 

Note that these parameterizations are, in a sense, arbitrary and are in general different from the
Loewner parameterization which we are going to construct.

Denote the equivalence class of $f$ by $[f]$. The set of all equivalence classes
\begin{equation*}
X = \left\{ [f] : f \in C' \right\}
\end{equation*}
is called the \emph{space of curves}. Make $X$ a metric space by setting
\begin{equation}\label{eq: curve distance metric}
\de_X ([f],[g]) = \inf \{ \|f_0- g_0\|_\infty : f_0 \in [f], g_0 \in [g] \} .
\end{equation}
It is easy to see that this is a metric, see e.g. \cite{aizenman-burchard-1999-}.
The space $X$ with the metric $\de_X$ is complete and separable
reflecting the same properties of $C \left( [0,1],\C \right)$.
And for the same reason as $C \left( [0,1],\C \right)$ is not compact neither is $X$.

Define two subspaces, the space $X_{\mathrm{simple}}$ of \emph{simple curves} and 
the space $X_0$ of curves with no self-crossings by
\begin{align*}
X_{\mathrm{simple}} &= \left\{ [f] : f \in C', f \textrm{ injective} \right\} \\
X_0 &= \overline{ X_{\mathrm{simple}} }
\end{align*}
Note that $X_0 \subsetneq X$ since there exists $\gamma_0 \in X \setminus X_{\mathrm{simple}}$ with positive
distance to $X_{\mathrm{simple}}$. For example, such is the broken line passing through points $-1,1,i$ and $-i$
which has a double point which is stable under small perturbations.

What do the curves in $X_0$ look like? Roughly speaking,
they may touch themselves and have multiple points, but they can have no ``transversal'' self-intersections.
For example, the broken line through points $-1,1,i,0,-1+i$, also has a double point at $0$,
but it can be removed by small perturbations. Also, every passage through the double point separates its neighborhood
into two components, and every other passage is contained in (the closure) of one of those.
See also Figure~\ref{fig: non-self-crossing}.

\begin{figure}[tbh]
\centering
	\includegraphics[scale=1]
{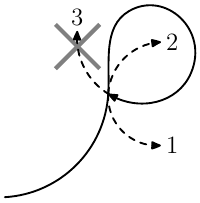}
\caption{In this example the options 1 and 2 are possible so that the resulting curve in the class $X_0$. If the curve continues
  along $3$ it doesn't lie in $X_0$, namely, there is no sequence of simple curves converging to that curve.
    }%
    \label{fig: non-self-crossing}
\end{figure}

Given a domain $U \subset \C$ define $X(U)$ as the closure of 
$\{ [f] : f \in C', f[0,1] \subset U \}$ 
in $(X,\de_X)$.
Define also $X_0(U)$ as the closure of the set of simple curves in $X(U)$. 
The notation $X_{\mathrm{simple}}(U)$ we
reserve for 
\begin{equation*}
X_{\mathrm{simple}}(U) = 
  \left\{ [f] : f \in C', f\big((0,1)\big) \subset U, f \textrm{ injective} \right\},
\end{equation*}
so the end points of such curves may lie on the boundary.
Note that the closure of $\xs(U)$ is still $X_0(U)$.

Use also notation $\xs(U,a,b)$ for curves in $\xs(U)$ whose end points are $\gamma(0)=a$ and $\gamma(1)=b$.
We will quite often consider some reference sets as
$\xs(\disc,-1,+1)$ and $\xs(\half,0,\infty)$ where the latter can be understood
by extending the above definition to curves defined on the Riemann sphere, say.

We will often use the letter $\gamma$ to denote elements of $X$, i.e. a curve modulo reparameterizations. 
Note that topological properties of the curve (such as its endpoints or passages through annuli
or its \emph{locus} $\gamma[0,1]$)
as well as metric ones (such as dimension or length) are independent of parameterization.
When we want to put emphasis on the locus, we will be speaking about \emph{Jordan} curves or arcs,
usually parameterized by the open unit interval $(0,1)$.

Denote by $\Prob(X)$ the space of probability
measures on $X$ equipped with the Borel $\sigma$-algebra $\B_X$
and the weak-$*$ topology induced by continuous functions
(which we will call weak for simplicity).
Suppose that $\P_n$ is a sequence of measures in $\Prob(X)$.

If for each $n$, $\P_n$ is supported on a closed subset of $\xs$
(which for discrete curves can be assumed without loss of generality)
and if $\P_n$ converges weakly to a probability measure $\P$,
then $1 = \limsup_n \P_n (X_0) \leq \P(X_0)$ 
by general properties of the weak convergence of probability measures \cite{billingsley-1999-}.
Therefore $\P$ is supported on $X_0$ but in general it doesn't have to be supported on $\xs$.

\subsubsection{Comment on the probability structure}

Suppose $\P$ is supported on $D \subset X(\C)$ which is a closed subset of $\xs(\C)$. 
Consider some measurable map $\chi : D \to C([0,\infty),\C)$ so that $\chi(\gamma)$ is a parametrization of $\gamma$.
If necessary $\chi$ can be continued to $D^c$ by setting $\chi =0$ there. 

Let $\pi_t$ be the natural projection from $C([0,\infty),\C)$ to $C([0,t],\C)$.
Define a $\sigma$-algebra
\begin{equation*}
\F_t^{\chi,0} = \sigma ( \pi_s \circ \chi, 0 \leq s \leq  t )~,
\end{equation*}
and make it right continuous by setting $\F_t^\chi = \bigcap_{s>t} \F_s^{\chi,0}$.

For a moment denote by $(\tau,\hat{\tau})$ for given $\gamma,\hat{\gamma} \in D$ the maximal pair of times
such that $\chi(\gamma)|_{[0,\tau]}$ is equal to $\chi(\hat{\gamma})|_{[0,\hat{\tau}]}$ in $X$, that is, equal modulo a reparameterization.
We call $\chi$ a \emph{good parametrization} of the curve family $D$, if for each $\gamma,\hat{\gamma} \in D$, 
$\tau=\hat{\tau}$ and $\chi(\gamma,t)=\chi(\hat{\gamma},t)$ for all $0 \leq t \leq \tau$.

Each reparameterization from a good parametrization to another can be represented as stopping times $T_u$, $u \geq 0$.
From this it follows that the set of stopping times is the same for every good parametrization. We will use simply the
notation $\gamma[0,t]$ to denote the $\sigma$-algebra $\F_t^\chi$. The choice of a good parametrization $\chi$ is 
immaterial since all the events we will consider are essentially reparameterization invariant. But to ease the notation
it is useful to always have some parametrization in mind.

Often there is a natural choice for the parametrization. For example, if we are considering paths on a lattice, then
the probability measure is supported on polygonal curves. In particular, the curves are piecewise smooth and it is possible
to use the arc length parametrization, i.e. $|\gamma'(t)|=1$. One of the results in this article is that given the hypothesis,
which is described next, it is possible to use the capacity parametrization of the Loewner equation.
Both the arc length and the capacity are good parameterizations.

The following lemma is implied by the above definitions.

\begin{lemma}
If $A \subset \C$ is a non-empty, closed set,
then $\tau_A = \inf\{ t \geq 0 \,:\, \chi(\gamma,t) \in A \}$ is a stopping time.
\end{lemma}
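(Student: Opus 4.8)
The plan is to apply the classical fact that the first hitting time of a closed set by a continuous, adapted process is a stopping time, transcribed into the present notation. The relevant process is $s \mapsto X_s := \chi(\gamma,s)$: it is continuous since $\chi(\gamma) \in C([0,\infty),\C)$, and it is adapted to $(\F_s^{\chi,0})_{s\ge 0}$ by the very definition $\F_s^{\chi,0} = \sigma(\pi_u \circ \chi : u \le s)$. Fixing $t \ge 0$, I would show $\{\tau_A \le t\} \in \F_t^{\chi,0}$, which is enough because $\F_t^{\chi,0}\subseteq\F_t^\chi$; the value $\tau_A=\infty$ (a curve avoiding $A$) is permitted and causes no trouble.

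First I would establish the set identity
\[
\{\tau_A \le t\} = \Big\{\, \inf_{s \in [0,t]} \dist(X_s, A) = 0 \,\Big\}.
\]
For ``$\subseteq$'': if $\tau_A \le t$, take times $s_k \to \tau_A$ with $X_{s_k}\in A$; continuity gives $X_{s_k}\to X_{\tau_A}$ and closedness of $A$ gives $X_{\tau_A}\in A$, so the infimum is attained at $s=\tau_A\le t$ and equals $0$. For ``$\supseteq$'': $s\mapsto\dist(X_s,A)$ is continuous on the compact interval $[0,t]$, hence attains its infimum; if that infimum is $0$ there is $s^*\in[0,t]$ with $X_{s^*}\in A$, so $\tau_A\le s^*\le t$. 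Then I would replace the interval by a countable set: continuity of $s\mapsto\dist(X_s,A)$ together with density of $\Q\cap[0,t]$ in $[0,t]$ yields $\inf_{s\in[0,t]}\dist(X_s,A)=\inf_{q\in\Q\cap[0,t]}\dist(X_q,A)$. Since each $\dist(X_q,A)$ with $q\le t$ is measurable with respect to $\F_q^{\chi,0}\subseteq\F_t^{\chi,0}$, the countable infimum is $\F_t^{\chi,0}$-measurable, and being nonnegative, $\{\tau_A\le t\}=\{\inf_q\dist(X_q,A)\le 0\}\in\F_t^{\chi,0}$. As $t$ is arbitrary, $\tau_A$ is an $(\F_t^{\chi,0})$- and a fortiori an $(\F_t^\chi)$-stopping time.

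The only point requiring care — and the one I would flag as the main obstacle — is the first inclusion above: a continuous path can meet the closed set $A$ at a single irrational time without meeting it at any rational time, so the naive formula $\{\tau_A\le t\}=\bigcup_{q\in\Q\cap[0,t]}\{X_q\in A\}$ is false, and one genuinely needs the distance-function reformulation, which exploits both continuity of $X$ and closedness of $A$. An alternative I would keep in reserve is to approximate $A$ from outside by the open sets $G_n=\{x:\dist(x,A)<1/n\}$, whose hitting times $\sigma_n$ are $(\F_t^\chi)$-stopping times precisely because $\F^\chi$ is right-continuous, and then to check $\tau_A=\sup_n\sigma_n$; but the direct argument is shorter and does not even use right-continuity.
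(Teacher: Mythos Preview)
Your argument is correct and is the standard proof that the hitting time of a closed set by a continuous adapted process is a stopping time with respect to the natural (unaugmented) filtration. The paper itself gives no proof at all, stating only that the lemma ``follows immediately from above definitions''; your write-up supplies exactly the details one would expect behind that sentence, so there is nothing further to compare.
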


\begin{remark}
The stopping times we need in the proof of the main theorem are always explicitly of this type.
\end{remark}

\subsubsection{Four equivalent conditions}\label{ssec: four conditions}

Recall the general setup: we are given a collection $(\phi,\P) \in \Sigma$ where the conformal map $\phi$
contains also the information about the domain $(U,a,b)=(U(\phi),a(\phi),b(\phi))$ and $\P$ is a probability
measure on $\xs(U,a,b)$. Furthermore, we assume that each $\gamma$,
which is distributed according to $\P$,
has some suitable parametrization.

For given domain $U$ and for given simple (random) curve $\gamma$ on $U$, we always define
$U_\tau = U \setminus \gamma[0,\tau]$ for each (random) time $\tau$. We call $U_\tau$ as the domain at time
$\tau$.

\begin{definition}
For a fixed domain $(U,a,b)$ and for fixed simple (random) curve in $U$ starting from $a$,  
define for any annulus $A = A(z_0,r,R)$ and for any (random) time $\tau \in [0,1]$, 
$A^u_\tau = \emptyset$ if $\partial B(z_0,r) \cap \partial U_\tau = \emptyset$ and
\begin{equation}
A^u_\tau = \left\{ z \in U_\tau \cap A \,:\, 
   \begin{gathered}
   \text{the connected component of $z$ in $U_\tau \cap A$} \\
   \text{doesn't disconnect $\gamma(\tau)$ from $b$ in $U_\tau$}
   \end{gathered}
   \right\}
\end{equation}
otherwise. A connected set $C$ disconnects $\gamma(\tau)$ from $b$ if it disconnects some neighborhood of $\gamma(\tau)$
from some neighborhood of $b$ in $U_\tau$. If $\gamma[\tau,1]$ contains a crossing of $A$ which is contained in $A_\tau^u$,
we say that $\gamma$ \emph{makes an unforced crossing} of $A$ in $U_\tau$ (or an unforced crossing of $A$ observed at time $\tau$). 
The set $A_\tau^u$ is said to be \emph{avoidable} at time $\tau$.
\end{definition}

\begin{remark}
Neighborhoods are needed here only to incorporate the fact that $\gamma(t)$ and $b$ are boundary points.
\end{remark}

The first two of the four
equivalent conditions are geometric,
asking an unforced crossing of an annulus to be unlikely
uniformly in terms of the modulus.

\begin{condition} \label{def: b unf crossing}\label{cond: annulus}
The family $\Sigma$
is said to satisfy a \emph{geometric bound on an unforced crossing}
if there exists $C >1$ such that
for any $(\phi,\P) \in \Sigma$,
for any stopping time $0 \leq \tau \leq 1$ and for any annulus $A=A(z_0,r,R)$ where $0 < C \, r \leq R$, 
\begin{equation}
\P \big( \big.
      \gamma[\tau,1] \textrm{ makes a crossing of } A
      \text{ which is contained in } \overline{A^u_\tau} 
   \,\big|\, \gamma[0,\tau] \big)  < \frac{1}{2}  . 
\end{equation}
\end{condition}

\begin{condition} \label{def: pb unf crossing}\label{cond: annulus exp}\label{cond: geom power-law} 
The family $\Sigma$ 
is said to satisfy a \emph{geometric power-law bound on an unforced crossing}
if there exist $K >0$ and $\Delta>0$ such that 
for any $(\phi,\P) \in \Sigma$,
for any stopping time $0 \leq \tau \leq 1$ and for any annulus $A=A(z_0,r,R)$ where $0 < r \leq R$,
\begin{equation}
\P \big( \big.
      \gamma[\tau,1] \textrm{ makes a crossing of } A
      \text{ which is contained in } \overline{A^u_\tau}
   \,\big|\, \gamma[0,\tau] \big) \leq K \left( \frac{r}{R} \right)^\Delta  . 
\end{equation}
\end{condition}

Let $Q\subset U_t$ be a topological quadrilateral,
i.e. an image of the square $(0,1)^2$ under a homeomorphism $\psi$.
Define the ``sides'' $\partial_0Q$, $\partial_1Q$, $\partial_2Q$, $\partial_3Q$,
as the ``images'' of 
\begin{equation*}
\brs{0}\times(0,1), \quad (0,1)\times\brs{0}, \quad \brs{1}\times(0,1), \quad (0,1)\times\brs{1} 
\end{equation*}
under $\psi$.
For example, we set
$$\partial_0Q:=\lim_{\epsilon\to0}\mathrm{Clos}\br{\psi\br{\; (0,\epsilon)\times(0,1) \;}}.$$
We consider $Q$ such that two opposite sides
$\partial_1Q$ and $\partial_3Q$  are contained in $\partial U_t$. 
A \emph{crossing} of $Q$ is a curve in $U_t$ connecting two opposite  sides $\partial_0Q$ and $\partial_2Q$.
The latter without loss of generality (just perturb slightly)
we assume to be smooth curves of finite length inside $U_t$.
Call $Q$ \emph{avoidable} if it doesn't disconnect $\gamma(t)$ and $b$ inside $U_t$.

\begin{conditionc}
\label{cond:quad}\label{cond: conf}
The family $\Sigma$
is said to satisfy a 
\emph{conformal bound on an unforced crossing}
if there exists a constant $M >0$ such that 
for any $(\phi,\P) \in \Sigma$, for any stopping time $0 \leq \tau \leq 1$ and 
any avoidable quadrilateral $Q$ of $U_\tau$, such that the modulus $m(Q)$ is larger than $M$
\begin{equation}
\P \big( \big.
      \gamma[\tau,1] \textrm{ crosses } Q
   \,\big|\, \gamma[0,\tau] \big) \leq \frac{1}{2} .
\end{equation}
\end{conditionc}

\begin{remark}
In the condition above,
the quadrilateral $Q$ depends on $\gamma[0,\tau]$,
but this does not matter, as we consider \emph{all} such quadrilaterals.
A possible dependence on $\gamma[0,\tau]$ ambiguity can be addressed by mapping $U_t$ to a reference domain and choosing quadrilaterals there.
See also Remark~\ref{rem all any}.
\end{remark}

\begin{conditionc} \label{cond:quadexp}\label{cond: conf power-law}
The family $\Sigma$
is said to satisfy a \emph{conformal power-law bound on an unforced crossing}
if there exist constants $K$ and $\epsilon$ such that 
for any $(\phi,\P) \in \Sigma$, for any stopping time $0 \leq \tau \leq 1$ and 
any avoidable quadrilateral $Q$ of $U_\tau$ 
\begin{equation}
\P \big( \big.
      \gamma[\tau,1] \textrm{ crosses } Q
   \,\big|\, \gamma[0,\tau] \big) \leq K\,\exp(-\epsilon \, m(Q)) .
\end{equation}
\end{conditionc}

\begin{proposition} \label{prop: equiv conditions}
The four conditions \ref{cond: annulus}, \ref{cond: annulus exp}, \ref{cond:quad} and \ref{cond:quadexp}
 are equivalent and conformally invariant.
\end{proposition}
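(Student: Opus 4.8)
The plan is to prove Proposition~\ref{prop: equiv conditions} by establishing the chain of implications
\[
\ref{cond:quad}\Longrightarrow\ref{cond: annulus}\Longrightarrow\ref{cond: annulus exp}\Longrightarrow\ref{cond:quadexp}\Longrightarrow\ref{cond:quad},
\]
from which all four conditions are equivalent; conformal invariance then comes essentially for free from the conformal invariance of the modulus. The two power-law conditions obviously imply their constant counterparts (take $R/r$ large, resp.\ $m(Q)$ large), so the only real content is in upgrading a constant bound to a power-law bound, plus the two ``geometric $\leftrightarrow$ conformal'' translations.

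First I would record the dictionary between annuli and quadrilaterals. Given a round annulus $A=A(z_0,r,R)$ and a configuration $(U_\tau,\gamma(\tau),b)$, a crossing of $A$ contained in $A^u_\tau$ is, after restricting to the relevant connected component of $U_\tau\cap A$, precisely a crossing of a topological quadrilateral $Q$ whose ``free'' sides $\partial_1Q,\partial_3Q$ lie on $\partial U_\tau$ and whose ``crossed'' sides $\partial_0Q,\partial_2Q$ lie on the two boundary circles of $A$; avoidability of $A^u_\tau$ corresponds exactly to avoidability of $Q$. The modulus of such a $Q$ is comparable to (in fact bounded below by a constant times) $\log(R/r)$, by a Grötzsch-type extremal-length estimate: any curve family joining the two boundary circles inside a sub-domain of the annulus has extremal length at least $\tfrac{1}{2\pi}\log(R/r)$ (this is the standard lower bound for the modulus of a subdomain of an annulus, cf.\ the argument already used in the continuity-of-SLE proof above via Theorem~4.7 and eq.~4.21 of \cite{ahlfors-1973-}). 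This gives \ref{cond:quad}$\Rightarrow$\ref{cond: annulus}: if $Cr\le R$ with $\log C \ge 2\pi M$, then every avoidable quadrilateral coming from such an annulus has modulus $\ge M$, so its crossing probability is $\le 1/2$; summing over the at most countably many components of $A^u_\tau$ is \emph{not} what we want (that would lose the bound), so instead one notes that a crossing of $A$ contained in $A^u_\tau$ is a crossing of \emph{one} such quadrilateral $Q=Q(z_0,r,R)$, namely the one built from all of $U_\tau\cap A$ minus the disconnecting components — more precisely, $A^u_\tau$ itself is (a union of components of) such a quadrilateral-like region, and a single application of \ref{cond:quad} to the appropriate avoidable quadrilateral containing $A^u_\tau$ suffices. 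Conversely, given an avoidable quadrilateral $Q\subset U_\tau$ with large modulus, one inscribes round annuli: a quadrilateral of modulus $m$ with two sides on $\partial U_\tau$ contains, by a length-area/Pythagorean argument on the conformal rectangle picture, a round annulus $A(z_0,r,R)$ with $\log(R/r)$ comparable to $m$, separating $\partial_0 Q$ from $\partial_2 Q$, and such that $A^u_\tau \supset$ the crossing part; this yields \ref{cond: annulus}$\Rightarrow$\ref{cond:quad} after iteration to get the power law, and \ref{cond:quadexp}$\Rightarrow$\ref{cond:quad} trivially.

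The heart of the matter is the amplification \ref{cond: annulus}$\Rightarrow$\ref{cond: annulus exp} (and symmetrically \ref{cond:quad}$\Rightarrow$\ref{cond:quadexp}). Here I would use the Markov/iteration structure: given an annulus $A(z_0,r,R)$ with $R/r$ large, choose the largest integer $k$ with $C^k r \le R$, so $k \asymp \log(R/r)$, and consider the nested round annuli $A_j = A(z_0, C^{j-1} r, C^{j} r)$ for $j=1,\dots,k$. Any crossing of $A$ contained in $A^u_\tau$ must cross each $A_j$; the key geometric observation is that once the crossing part of $A_j$ is avoidable (lies in $(A_j)^u$ relative to the configuration at the stopping time when the curve first reaches $\partial B(z_0, C^{j} r)$), the event of making that crossing has conditional probability $< 1/2$ by Condition~\ref{cond: annulus}, applied at the stopping time $\tau_j$ when the curve enters the inner boundary of $A_j$ (or at $\tau$ itself if it is already inside). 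One has to check that avoidability is ``inherited'' down the scales — if the big crossing is unforced, so is each of the sub-crossings at the relevant stopping time — which is a topological lemma about $A^u_\tau$: if a connected component of $U_\sigma\cap A_j$ disconnected $\gamma(\sigma)$ from $b$, it would also (being contained in $U_\tau\cap A$) force the crossing of $A$, contradiction. Multiplying the $k$ conditional estimates via the tower property gives probability $\le (1/2)^k \le K (r/R)^\Delta$ with $\Delta = \log 2/\log C$.

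I expect the main obstacle to be precisely this last topological bookkeeping: carefully setting up the nested stopping times $\tau_j$, verifying that on the event that $\gamma$ makes an unforced crossing of the big annulus $A$ the curve does make, \emph{for each $j$}, an unforced crossing of $A_j$ observed at time $\tau_j$ (so that Condition~\ref{cond: annulus} applies with the correct conditioning), and handling the possibility that the curve exits and re-enters the annuli several times — i.e.\ passing from ``makes a crossing'' to a clean nested sequence of crossings each of which is genuinely unforced at the moment it begins. The prime-end/boundary subtleties in the definition of $A^u_\tau$ (the ``neighbourhood'' clause in the disconnection condition) will need a little care but should not be a serious difficulty. Once the amplification and the annulus$\leftrightarrow$quadrilateral dictionary are in place, conformal invariance is immediate: Condition~\ref{cond:quad} is phrased purely in terms of moduli of quadrilaterals in $U_\tau$, which are preserved by the conformal map $\phi$ to $\disc$ (and by any further conformal change of uniformizing domain), and the slit domains $U_\tau$ transform correctly, so $\Sigma$ satisfies \ref{cond:quad} iff $\Sigma_\disc$ does.
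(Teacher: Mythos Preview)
Your chain has a real gap at \ref{cond:quad}$\Rightarrow$\ref{cond: annulus}, and this is in fact the hardest implication in the whole proposition. The set $A^u_\tau$ is in general a \emph{disjoint union} of many (possibly countably many) topological quadrilaterals, one for each non-disconnecting component of $U_\tau\cap A$. An unforced crossing of $A$ is a crossing of \emph{some one} of these, and the event in Condition~\ref{cond: annulus} is the union over all of them. There is no single avoidable quadrilateral ``containing $A^u_\tau$'' with modulus comparable to $\log(R/r)$; the components are separated from one another by the disconnecting components and by pieces of $\partial U_\tau$, and cannot be glued into one quad. So a single application of \ref{cond:quad} does not suffice, and a union bound over components is useless without control on their number. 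The paper handles this direction via \ref{cond:quadexp}$\Rightarrow$\ref{cond: annulus}: for each radius $re^x$ one looks at the family of arcs $\mathcal I^x$ of $\partial B(z_0,re^x)\cap U_\tau$, uses the Beurling lower bound $m(Q(I^x))\ge\int_0^x |I^y(I^x)|^{-1}\,dy$ for the modulus of the quad behind each arc, and then runs a genuinely nontrivial inductive summation (on the number of arcs at the outermost level, using Jensen's inequality for the convex function $x\mapsto e^{-\epsilon/x}$ on $[0,\epsilon/2]$) to produce a \emph{separating} family of arcs $\{I_j\}$ with $\sum_j e^{-\epsilon\,\ell(I_j)}<1/(2K)$. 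This is the content you are missing, and it is not a bookkeeping issue.

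A secondary point: your claim that an avoidable quadrilateral of modulus $m$ contains a round annulus with $\log(R/r)$ \emph{comparable to $m$} is false for snake-like $Q$; the paper's extremal-length argument (your ``Pythagorean'' one) only gives $R/r\gtrsim\sqrt m$. What actually works is the weaker statement that $Q$ with $m(Q)\ge 4(C+1)^2$ contains a round annulus of fixed ratio $C$ that every crossing of $Q$ must traverse in an unforced way --- that is \ref{cond: annulus}$\Rightarrow$\ref{cond:quad} with a concrete constant --- and then the amplification \ref{cond:quad}$\Rightarrow$\ref{cond:quadexp} is done purely conformally by slicing $Q$ into $\lfloor m(Q)/M\rfloor$ sub-quads of modulus $M$ (not by finding bigger annuli). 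Your \ref{cond: annulus}$\Rightarrow$\ref{cond: annulus exp} amplification via nested concentric annuli is correct and is exactly what the paper does; ironically the ``topological bookkeeping'' you flag as the main obstacle there is the easiest part of the whole proof.
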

This proposition is proved below in Section~\ref{ssec: equiv cond}.
Equivalence of conditions immediately implies the following

\begin{corollary}\label{cor: equiv conditions onehalf}
The constant $1/2$ in Conditions~\ref{def: b unf crossing} and \ref{cond:quad} can be replaced by any other from $(0,1)$.
\end{corollary}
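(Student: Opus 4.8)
The plan is to route everything through the two power-law conditions, Condition~\ref{cond: geom power-law} and Condition~\ref{cond:quadexp}, which carry no distinguished constant, exploiting the fact that the value $1/2$ is used nowhere essential in the proof of Proposition~\ref{prop: equiv conditions}: the only feature of it that matters is that it is $<1$, so that an iteration can drive the crossing probability to $0$. For $p\in(0,1)$ write Condition~\ref{def: b unf crossing}${}_p$ and Condition~\ref{cond:quad}${}_p$ for the statements obtained from Conditions~\ref{def: b unf crossing} and~\ref{cond:quad} by replacing $1/2$ with $p$. It suffices to prove that, for every $p\in(0,1)$, Condition~\ref{def: b unf crossing}${}_p$ is equivalent to the geometric power-law Condition~\ref{cond: geom power-law}, and Condition~\ref{cond:quad}${}_p$ to the conformal power-law Condition~\ref{cond:quadexp}. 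Together with Proposition~\ref{prop: equiv conditions} (which already contains the equivalence Condition~\ref{cond: geom power-law} $\Leftrightarrow$ Condition~\ref{cond:quadexp}) this makes all the statements Condition~\ref{def: b unf crossing}${}_p$, Condition~\ref{cond:quad}${}_{p'}$, $p,p'\in(0,1)$, mutually equivalent, and the corollary is the special case $p=1/2$.

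One direction is immediate. Monotonicity gives Condition~\ref{def: b unf crossing}${}_p$ $\Rightarrow$ Condition~\ref{def: b unf crossing}${}_{p'}$ and Condition~\ref{cond:quad}${}_p$ $\Rightarrow$ Condition~\ref{cond:quad}${}_{p'}$ whenever $p\le p'$; and the power-law conditions imply the plain ones for \emph{every} constant by adjusting the modulus threshold. Indeed, if Condition~\ref{cond: geom power-law} holds with exponent $\Delta$ and prefactor $K$, choosing $C'$ so large that $K\,(C')^{-\Delta}\le p'$ forces $\P(\gamma[\tau,1]\text{ makes a crossing of }A\text{ in }A^u_\tau\mid\gamma[0,\tau])\le K(r/R)^\Delta\le p'$ for all annuli $A$ with $C'r\le R$, which is Condition~\ref{def: b unf crossing}${}_{p'}$; the implication Condition~\ref{cond:quadexp} $\Rightarrow$ Condition~\ref{cond:quad}${}_{p'}$ is identical, using a large threshold modulus $M'=\epsilon^{-1}\log(K/p')$.

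The substantive direction is the reverse, i.e.\ \emph{boosting the constant down} towards $0$: Condition~\ref{def: b unf crossing}${}_p$ $\Rightarrow$ Condition~\ref{cond: geom power-law} and Condition~\ref{cond:quad}${}_p$ $\Rightarrow$ Condition~\ref{cond:quadexp}. This is precisely the nesting-and-iteration argument carried out inside the proof of Proposition~\ref{prop: equiv conditions} for the passage from the plain to the power-law conditions, and I would run it verbatim with $p$ in place of $1/2$. Schematically, in the conformal case one uniformizes an avoidable $Q$ of modulus $m(Q)\ge M$ to $[0,m(Q)]\times[0,1]$, slices it by $\lfloor m(Q)/M\rfloor$ vertical lines into stacked sub-quadrilaterals each of modulus $\ge M$ and each still avoidable (a subregion of a non-disconnecting region is non-disconnecting), notes that any crossing of $Q$ restricts to a crossing of every slice, and conditions successively on $\gamma[0,\tau_j]$, with $\tau_j$ the first time a crossing of the $j$-th slice is completed, applying Condition~\ref{cond:quad}${}_p$ at each step; this yields $\P(\gamma[\tau,1]\text{ crosses }Q\mid\gamma[0,\tau])\le p^{\lfloor m(Q)/M\rfloor}\le p^{-1}\exp(-(\log(1/p)/M)\,m(Q))$, i.e.\ Condition~\ref{cond:quadexp} with $K=p^{-1}$ and $\epsilon=\log(1/p)/M$. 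The annulus case is the same with concentric annuli of a fixed radius ratio $C$ in place of the slices.

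The one point requiring care — and already dealt with in the proof of Proposition~\ref{prop: equiv conditions}, which is why I would simply invoke it rather than redo it — is verifying that the nested objects remain avoidable at the intermediate random times and that those times are genuine stopping times of the filtration $\gamma[0,\cdot]$ (the latter via the hitting-time lemma applied to suitable closed sets), so that the conditional hypothesis applies legitimately at every stage. I expect this bookkeeping, rather than any new idea, to be the whole difficulty; it merely makes visible that the value $1/2$ is entirely arbitrary and could be any constant in $(0,1)$ throughout the paper.
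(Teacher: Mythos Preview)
Your proposal is correct and is exactly the argument the paper has in mind: the paper simply states that the corollary follows immediately from Proposition~\ref{prop: equiv conditions}, and you have spelled out precisely why --- namely, that the power-law conditions \ref{cond: geom power-law} and \ref{cond:quadexp} carry no distinguished constant, and that the iteration arguments \ref{def: b unf crossing}$\Rightarrow$\ref{cond: geom power-law} and \ref{cond:quad}$\Rightarrow$\ref{cond:quadexp} work verbatim with any $p\in(0,1)$ in place of $1/2$, yielding exponents $\log(1/p)/\log C$ and $\log(1/p)/M$ respectively.
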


\subsubsection{Remarks concerning the conditions}

\begin{remark}
Conditions~\ref{cond: annulus} and \ref{cond: annulus exp} could be described as being \emph{geometric}
since they involve crossing of fixed shape.
Conditions~\ref{cond:quad} and \ref{cond:quadexp} are \emph{conformally invariant} because they are formulated
using the modulus, i.e., the extremal length which is a conformally invariant quantity.
The conformal invariance in Proposition~\ref{prop: equiv conditions} means for example, that if 
Condition~\ref{cond: annulus} holds with a constant $C>1$ for $(\phi,\P)$ defined in $U$ 
and if $\psi:U \to U'$ is conformal and onto, then Condition~\ref{cond: annulus}
holds for $(\phi \circ \psi^{-1}, \psi \P)$ with a constant $C'>1$ which depends only on the constant $C$
but not on $(\phi,\P)$ or $\psi$.
\end{remark}

\begin{remark}\label{rem: dmp}
To formulate the \emph{domain Markov property} with an appropriate set of stopping times,
let's suppose that 
$\Sigma$ is a collection of pairs $(\phi_n^{U,a,b},\P_n^{U,a,b})$
where $n \in \N$ refers to the lattice mesh $\delta_n$ which tends to zero as $n$ tends to infinity,
$U$ is a simply connected domain whose boundary is a discrete curve (broken line) on the lattice
with mesh $\delta_n$ and $a$ and $b$ are lattice points on the boundary of the domain and
as usual $\phi_n^{U,a,b}$ is a conformal map taking $U,a,b$ onto $\disc,-1,1$.
If for any stopping time $\tau$, such that $\gamma(\tau)$ is almost surely a lattice point,
it  holds that
\begin{equation*}
\P^{U,a,b}_n \left(\left. \gamma|_{[\tau,1]} \in \cdot \,\right|\, \gamma|_{[0,\tau]} \right) 
= \P^{U \setminus \gamma[0,\tau],\gamma(\tau),b}_n ,
\end{equation*}
then the random curve or $\Sigma$ is said to have the domain Markov property.
This property could be formulated
more generally so that if $\P$ is a probability measure such that $(\phi,\P) \in \Sigma$ for some $\phi$,
then for any stopping time $\tau$,
$\P \left(\left. \gamma|_{[\tau,1]} \in \cdot \,\right|\, \gamma|_{[0,\tau]} \right)$
is equal to some probability measure $\P'$ such that $(\phi',\P') \in \Sigma$
for some $\phi'$.

When the domain Markov property holds,
the ``time zero conditions'' \ref{cond: g const time zero} and \ref{cond: c const time zero} 
are sufficient for Conditions~\ref{cond: annulus} and \ref{cond:quad}, respectively.
\end{remark}

\begin{remark}\label{rem all any}
Our conditions impose an estimate on conditional probability, which is hence satisfied almost surely.
By taking a countable dense set  of round annuli (or of topological rectangles),
we see that it does not matter whether we require the estimate to hold separately for \emph{any} given annulus almost surely;
or to hold almost surely for \emph{every} annulus. The same argument applies to topological rectangles.
\end{remark}

\begin{remark}
Suppose now that the random curve $\gamma$ is an interface in a statistical physics model with two possible states
at each site, say, \emph{blue} and \emph{red}. In that case $U$ will be a simply connected domain formed by entire faces
of some lattice, say, hexagonal lattice, $a,b \in \partial U$ are boundary points, the faces next to
the arc $ab$ are colored blue and next to the arc $ba$ red and $\gamma$ is the interface between the
blue cluster of $ab$ (connected set of blue faces) and the red cluster of $ba$.

In this case under \emph{positive association} (e.g. observing blue faces somewhere increases the probability
of observing blue sites elsewhere) the sufficient condition implying Condition~\ref{def: b unf crossing}
is uniform upper bound for the probability of the \emph{crossing event of  
an annular sector} with alternating boundary conditions (red--blue--red--blue) on the four boundary arcs 
(circular--radial--circular--radial) by blue faces.
For more detail, see Section~\ref{sssec: cond for fk ising}.
\end{remark}

\subsection{Equivalence of the geometric and conformal conditions} \label{ssec: equiv cond}

In this section we prove Proposition~\ref{prop: equiv conditions}
about equivalence of geometric and conformal conditions.
We start with recalling the notion of Beurling's extremal length
and then proceed to the proof.
Note that since Condition \ref{cond:quad} is conformally invariant,
conformal invariance of other conditions immediately follows.

Suppose that a curve family $\Gamma \subset X$ consists of curves that are regular enough for the purposes below.
A non-negative Borel function $\rho$ on $\C$ is called \emph{admissible} if
\begin{equation} \label{ie: admissible rho}
\int_\gamma \rho \; \de \len \; \geq 1
\end{equation}
for each $\gamma \in \Gamma$. Here $\de \len$ is the arc-length measure.

The \emph{extremal length} of a curve family $\Gamma \subset X$ is defined as
\begin{equation} \label{eq: conf module}
\elen(\Gamma) = \frac{1}{\inf_{\rho}\int \rho^2 \; \de A}
\end{equation}
where the infimum is taken over all the admissible functions $\rho$.
Here $\de A$ is the area measure (Lebesgue measure on $\C$).
The quantity inside the infimum is called the \emph{$\rho$-area} and 
the quantity on the left-hand side of the inequality~\eqref{ie: admissible rho} is called the \emph{$\rho$-length} of $\gamma$.

The extremal length is conformally invariant. The \emph{modulus} $\elen(Q)$ of a topological quadrilateral $Q=(V,S_0,S_1,S_2,S_3)$
can be defined as the extremal length of the curve family connecting the sides $S_0$ and $S_2$ within $V$. By
conformal invariance this definition of the modulus agrees with the one given in the introduction, for instance, in
Figure~\ref{sfig: crossings e}.
Similarly, the modulus of an annulus, which was also given above, is equal to the extremal length of the curve family connecting
the two boundary circles of the annulus.

The following basic estimate is easy to obtain.

\begin{lemma}\label{lm: elen annulus}
Let $A=A(z_0,r_1,r_2)$, $0 < r_1 < r_2$, be an annulus. Suppose that $\Gamma$ is a curve family with the property
that each curve $\gamma \in \Gamma$ contains a crossing of $A$. Then,
\begin{equation}
\elen (\Gamma) \geq \frac{1}{2\pi} \log \left( \frac{r_2}{r_1} \right)
\end{equation}
and therefore
\begin{equation}
r_1 \geq r_2 \cdot \exp \left( - 2\pi\elen (\Gamma) \right) .
\end{equation}
\end{lemma}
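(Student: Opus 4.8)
The plan is to exhibit a single admissible metric $\rho$ for the family $\Gamma$ and compute its $\rho$-area, which by definition of extremal length as an infimum gives the desired lower bound. Without loss of generality translate so that $z_0 = 0$. The natural choice is the logarithmic (conformally flat) metric on the annulus: set
\begin{equation*}
\rho(z) = \frac{1}{|z| \log(r_2/r_1)} \quad \text{for } r_1 < |z| < r_2,
\end{equation*}
and $\rho(z) = 0$ otherwise. First I would check admissibility. Any $\gamma \in \Gamma$ contains a crossing of $A$, i.e.\ a subcurve joining the circle $|z| = r_1$ to the circle $|z| = r_2$ inside the closed annulus; along that subcurve the radial coordinate $|z|$ increases (in the net sense) from $r_1$ to $r_2$, so
\begin{equation*}
\int_\gamma \rho \, \de\len \;\geq\; \int_{r_1}^{r_2} \frac{\de s}{s \log(r_2/r_1)} \;=\; 1,
\end{equation*}
using that $|\de|z|| \leq \de\len$ along the curve. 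Hence $\rho$ is admissible for $\Gamma$.

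Next I would compute the $\rho$-area. In polar coordinates,
\begin{equation*}
\int \rho^2 \, \de A \;=\; \int_0^{2\pi} \!\! \int_{r_1}^{r_2} \frac{1}{s^2 \log^2(r_2/r_1)} \, s \, \de s \, \de\theta \;=\; \frac{2\pi}{\log^2(r_2/r_1)} \int_{r_1}^{r_2} \frac{\de s}{s} \;=\; \frac{2\pi}{\log(r_2/r_1)}.
\end{equation*}
Therefore $\inf_\rho \int \rho^2 \, \de A \leq 2\pi / \log(r_2/r_1)$, and by the definition of extremal length,
\begin{equation*}
\elen(\Gamma) \;=\; \frac{1}{\inf_\rho \int \rho^2 \, \de A} \;\geq\; \frac{1}{2\pi} \log\!\left(\frac{r_2}{r_1}\right).
\end{equation*}
Rearranging gives $r_1 \geq r_2 \exp(-2\pi \elen(\Gamma))$, which is the second claim.

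There is no serious obstacle here; the only point requiring a little care is the admissibility estimate, where one must argue that a crossing subcurve genuinely has radial variation at least $\log(r_2/r_1)$ in the $\rho$-length sense — this follows because the endpoints of a crossing lie on the two bounding circles (or outside, in opposite components of $\C \setminus A$, in which case the subcurve still sweeps the full radial range of $A$), so the monotone rearrangement of $|z|$ along the curve covers $[r_1, r_2]$. If one wants to be fully rigorous about curves that are merely continuous rather than rectifiable, one restricts to the regularity class already assumed for $\Gamma$ in the text (smooth curves of finite length), for which the line integral is unambiguous; the general case follows by the standard approximation built into the extremal length formalism.
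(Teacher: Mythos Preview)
Your proof is correct and is essentially the paper's argument unpacked: the paper reduces to the known extremal length of the annulus crossing family $\widehat{\Gamma}$ via the monotonicity principle (any $\rho$ admissible for $\widehat{\Gamma}$ is admissible for $\Gamma$) and then cites $\elen(\widehat{\Gamma}) = \frac{1}{2\pi}\log(r_2/r_1)$, whereas you explicitly exhibit the extremal metric $\rho(z)=1/(|z|\log(r_2/r_1))$ and compute. The underlying content is identical.
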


\begin{proof}
Let $\widehat{\Gamma}$ be the family of curves connecting the two boundary circles of $A$. 
If $\rho$ is admissible for $\widehat{\Gamma}$ then it is also admissible for $\Gamma$. Hence,
$\elen(\Gamma) \geq \elen(\widehat{\Gamma}) = (2\pi)^{-1} \, \log( r_2 / r_1 )$.
\end{proof}

Next we present an integral estimate for the extremal length which will be
essential in the proof below. The first formulation of this lemma is classical
and the second form is the one that we use.

\begin{lemma}[Integral estimates of the extremal length]\label{lem: elen integral}
Let $a<b$, let $\Omega$ be a domain and let $C_a$ and $C_b$ be two subsets of $\overline{\Omega}$.
Let $\Gamma$ be the curve family connecting $C_a$ to $C_b$ inside $\Omega$.
For each $x \in (a,b)$ let $I_x$ be a set separating $C_a$ and $C_b$ in $\Omega$.

\begin{itemize}
\item
Suppose that $C_a \subset \overline{\Omega} \cap \{z\in \C \,:\, \real z < a\}$,
$C_b \subset \overline{\Omega} \cap \{z\in \C \,:\, \real z > b\} $ and
$I_x \subset \Omega \cap \{z\in \C \,:\, \real z = x\}$  for each $x$.
Suppose also that the mapping $x \mapsto \Lambda(I_x)$
is measurable where $\Lambda$ is the length measure.
The extremal length $\elen(\Gamma)$ satisfies
\begin{equation*}
\elen(\Gamma) \geq \int_a^b \frac{\de x}{\Lambda(I_x)} . 
\end{equation*}
\item Let $z_0 \in \C$ and
suppose that $C_a \subset \overline{\Omega} \cap \{z\in \C \,:\, |z-z_0| < e^a\}$,
$C_b \subset \overline{\Omega} \cap \{z\in \C \,:\, |z-z_0| > e^b\} $ and
$I_x \subset \Omega \cap \{z\in \C \,:\, |z-z_0| = e^x\}$  for each $x$.
Suppose also that the mapping $x \mapsto \theta(I_x)$
is measurable where $\theta$ is the arc length measure defined in radians for any subset of a circle
of the form $\partial B(z_0,e^x)$.
The extremal length $\elen(\Gamma)$ satisfies
\begin{equation*}
\elen(\Gamma) \geq \int_a^b \frac{\de x}{\theta(I_x)} . 
\end{equation*}
\end{itemize}
\end{lemma}

\begin{proof}
Let $l=\int_a^b \frac{\de x}{\Lambda(I_x)}$.
The first claim follows if we choose the particular function
$\rho(z) = l^{-1} \,\ind_{a<\real z <b} \,\Lambda(I_{\real z})^{-1}$ to give an upper bound
for the infimum in \eqref{eq: conf module}.
The second claim follows then by conformal invariance of the extremal length.
\end{proof}

We now proceed to showing the equivalence of four conditions by establishing the following implications:

\paragraph{\ref{def: b unf crossing}$\Leftrightarrow$\ref{def: pb unf crossing}} 
Condition~\ref{def: b unf crossing} directly follows from \ref{def: pb unf crossing}
by setting $C:=(2K)^{1/\Delta}$.

In the opposite direction, an unforced crossing of the annulus $A(z_0,r,R)$
implies consecutive unforced crossings of the concentric annuli $A_j:=A(z_0,C^{j-1}r,C^{j}r)$,
with $j\in\{1,\dots,n\}$, $n \dd=\lfloor \log (R/r)/\log C \rfloor$,
which have conditional (on the past) probabilities of at most $1/2$ by Condition~\ref{def: b unf crossing}. 
Trace the curve $\gamma$ denoting by $\tau_j$ the ends of unforced crossings of $A_{j-1}$'s
(with $\tau_1=\tau$), and estimating
\begin{align*}
\P &\big( \big.
      \gamma[\tau,1] \textrm{ crosses } A^u_\tau
   \,\big|\, \gamma[0,\tau] \big) \leq 
\prod_{j=1}^n~\P \big( \big.
      \gamma[\tau_j,1] \textrm{ crosses } (A_j)^u_{\tau_j}
   \,\big|\, \gamma[0,\tau_j] \big)\\
&\leq \left( \frac{1}{2} \right)^n
\leq \left( \frac{1}{2} \right)^{(\log (R/r)/\log C)-1}
= 2 \left( \frac{r}{R} \right)^{\log 2/\log C}.
\end{align*}
We infer condition~\ref{def: pb unf crossing} with $K:=2$ and $\Delta:=\log 2/\log C$.

\paragraph{\ref{cond:quad}$\Leftrightarrow$\ref{cond:quadexp}} 
This equivalence is proved similarly to the equivalence of the geometric conditions.
The only difference is that instead of cutting an annulus into concentric ones of moduli $C$,
we start with an avoidable quadrilateral $Q$, and cut from it $n=[m(Q)/M]$
quadrilaterals $Q_1,\dots,Q_n$ of modulus $M$.
If $Q$ is mapped by a conformal map $\phi$
onto the rectangle $\{z:\,0<\real z<m(Q),\,0<\imag z<1\}$,
we can set $Q_j:=\phi^{-1}\{z:\,(j-1)M<\real z<jM,\,0<\imag z<1\}$.
Then as we trace $\gamma$, all $Q_j$'s are avoidable for its consecutive pieces.

\paragraph{\ref{def: b unf crossing}$\Rightarrow$\ref{cond:quad}}
We show that Condition \ref{def: b unf crossing} with constant $C$ 
implies Condition \ref{cond:quad} with $M = 4 (C+1)^2$.

Let $m\ge M$ be the modulus of $Q$,
i.e. the extremal length $\elen(\Gamma)$ of the family $\Gamma$ of curves
joining $\partial_0Q$ to $\partial_2Q$ inside $Q$. 
Let $\Gamma^*$ be the dual family of curves
joining $\partial_1Q$ to $\partial_3Q$ inside $Q$, then
$\elen(\Gamma)=1/\elen(\Gamma^*)$. 

Denote by $d_1$ the distance  between
$\partial_1Q$ and $\partial_3Q$
in the inner Euclidean metric of $Q$, and let $\gamma^*$ be
a curve of length $\le2d_1$ joining $\partial_1Q$ to $\partial_3Q$
inside $Q$.
Observe that any crossing $\gamma$  
of $Q$ 
contains a subcurve which an element of $\Gamma$ and therefore it
has diameter $d\ge2C d_1$.
Indeed, working with the extremal length of the family $\Gamma^*$,
take a metric $\rho$ equal to $1$ in the $d_1$-neighborhood of $\gamma$.
Then its area integral $\iint \rho^2$ is at most $(d+2 d_1)^2$.
But every curve from $\Gamma^*$ intersects $\gamma$ and runs through this
neighborhood for the length of at least $d_1$, thus having $\rho$-length
at least $d_1$.
Therefore $1/m=\elen(\Gamma^*)\ge(d_1)^2 / (d + 2 d_1)^2$,
so we conclude that $m \le (2 + d/d_1)^2$
and hence
\begin{equation}
d\ge(\sqrt{m}-2)d_1\ge\br{2(C+1)-2}d_1=2 Cd_1.
\label{eq:diam}
\end{equation}
 
Now take an annulus $A$ centered at the middle point of $\gamma^*$ with inner radius
$d_1$ and outer radius $R:=C d_1$.
It is sufficient to prove that every crossing of $Q$ contains an unforced crossing of $A$.

Assume on the contrary that $\gamma$ is a curve crossing $Q$ but not $A$.
Clearly $\gamma$ has to intersect $\gamma^*$, say at $w$.
But $\gamma^*$ is entirely contained inside the inner circle of $A$. 
On the other hand by \eqref{eq:diam}
the diameter of $\gamma$ is bigger than $2R$.
Thus $\gamma$ intersects both boundary circles of $A$, and
we deduce  Condition \ref{cond:quad}.

\paragraph{\ref{cond:quadexp}$\Rightarrow$\ref{def: b unf crossing}}
Now we will show that Condition  \ref{cond:quadexp}
with constants $K$ and $\epsilon$
(equivalent to Condition \ref{cond:quad}) implies
Condition \ref{def: b unf crossing} with constant $C=\br{2Ke^2}^{2\pi/\epsilon}$.

We have to show that probability of an unforced crossing of a fixed 
annulus $A=A(z_0,r,Cr)$ is at most $1/2$.
Without loss of generality assume that we work with the crossings from 
the inner circle to the outer one.

For $x\in [0,\log C]$ denote by ${\cal I}^x$
the (at most countable)
set of arcs $I^x$ which  compose $\Omega\cap\partial B(z_0,re^x)$.
By $|I|$ we will denote the length of the arc $I$ measured in radians
(regardless of the circle radius).
Given two arcs $I^x$ and $I^y$ with $y<x$, we will write
$I^y\prec I^x$ if any curve $\gamma$ intersecting $I^x$
has to intersect $I^y$ first, and can do 
so  without intersecting any other arc from ${\cal I}^y$ afterwards.
We denote by $I^y(I^x)$ the unique arc $I^y\in{\cal I}^y$
such that $I^y\prec I^x$.

\begin{figure}[tbh]
\centering
	\includegraphics[scale=.9]
{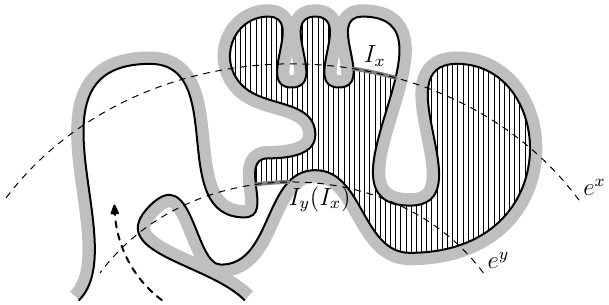}
\caption{This figure shows an example of the last arc $I^y(I^x)$ that a path to $I^x$ has to intersect
($I^y(I^x)$ and $I^x$ are the gray lines)
and the corresponding topological quadrilateral (the region shaded with vertical lines). 
}
\end{figure}

By $Q(I^x)$ we denote the topological quadrilateral
which is cut from $\Omega$ by the arcs $I^x$ and $I^0(I^x)$.
Denote
$$
\ell\br{I^x}=\ell_0^x\br{I^x}:=\int_0^x\frac1{\bra{I^y(I^x)}}dy.
$$
By the second integral estimate of Lemma~\ref{lem: elen integral},
\begin{equation}\label{eq:beurling}
m(Q(I^x))\ge \ell\br{I^x}.
\end{equation}
Note that if $\gamma$ crosses $A$ and intersects $I^x$, then
it makes an unforced crossing of $Q(I^x)$, so we conclude that
by Condition~\ref{cond:quadexp} the probability of
crossing $A$ and intersecting $I^x$ is majorated by
\begin{equation}\label{eq:quadbound}
 K\,\exp\br{-\epsilon \ell\br{I^x}}.
\end{equation}
Denote also $\bra{{\cal I}^x}:=\sum\bra{I^x}$
and $\ell\br{{\cal I}^x}:=\int_0^x\frac1{\bra{{\cal I}^y}}dy$

We call a collection of arcs $\brs{I_j}$
(possibly corresponding to different $x$'s) \emph{separating}, 
if every unforced crossing $\gamma$ intersects one of those.
To deduce Condition~\ref{def: b unf crossing}, by \eqref{eq:quadbound} 
it is enough to find a separating collection
of arcs such that
\begin{equation}\label{eq:2K}
\sum_j \exp\br{-\epsilon\,\ell\br{I_j}} < \frac1{2K}.
\end{equation}
Note that for every $x$ the total length $\bra{{\cal I}^x}\le2\pi$,
and so by our choice of constant $C$ we have
$$\ell\br{{\cal I}^{\log C}}\ge\frac{\log C}{2\pi}\ge\frac2\epsilon,$$ 
as well as
$$\exp\br{2-\epsilon\ell\br{{\cal I}^{\log C}}}\le\exp\br{2-\epsilon\frac{\log C}{2\pi}}
\le\exp\br{2-\log\br{2Ke^2}}=\frac1{2K}.$$ 

Therefore it is 
enough to establish 
that for any $w \in [0,\log C]$ with $\ell\br{{\cal I}^{w}}\ge\frac2\epsilon$
there exist arcs
$I_j$ separating ${\cal I}^w$ with the following estimate:
\begin{equation}\label{eq:abstract}
\sum_j \exp\br{-\epsilon\,\ell\br{I_j}} \le \exp\br{2-\epsilon\ell\br{{\cal I}^w}}.
\end{equation}
We will do this in an abstract setting 
for families of arcs.
Besides properties mentioned above, we note that
for any two arcs $I$ and $J$ the arcs
$I^x(I)$ and $I^x(J)$ either coincide or are disjoint.
Also without loss of generality any arc $I$ we consider
satisfies $I\prec J$ for some $J\in{\cal I}^w$.

By a limiting argument it is enough to prove \eqref{eq:abstract}
for ${\cal I}^w$ of finite cardinality $n$, and we will do this by induction in $n$.

If $n=1$, then we take the only arc $J$ in ${\cal I}^{w}$ as the separating one,
and the estimate \eqref{eq:abstract} readily follows:
$$\exp\br{-\epsilon\,\ell\br{J}}
 = \exp\br{-\epsilon\,\ell\br{{\cal I}^w}}
 < \exp\br{2-\epsilon\,\ell\br{{\cal I}^w}}.$$

Suppose $n>1$.
Denote by $v$ the minimal number such that ${\cal I}^v$ contains more than one arc.

If
$$\ell_v^w\br{{\cal I}^w}:=\int_v^w\frac1{\bra{{\cal I}^y}}dy < \frac2\epsilon,$$
then we take the only arc $J$ in ${\cal I}^{v-\delta}$ as the separating one.
The required estimate \eqref{eq:abstract} then holds if $\delta$ is small enough:
\begin{align*}
\exp\br{-\epsilon\,\ell\br{J}}&
= \exp\br{-\epsilon\,\ell_0^w\br{{\cal I}}+\epsilon\,\ell_{v-\delta}^w\br{{\cal I}}}  \\
&\le \exp\br{-\epsilon\,\ell\br{{\cal I}_0^w}+\epsilon\frac2\epsilon}
= \exp\br{-\epsilon\,\ell\br{{\cal I}}+2}.
\end{align*}

\begin{figure}[tbh]
\centering
\subfigure[] 
{
	\label{sfig: el integral estimate a}
	\includegraphics[scale=.65]
{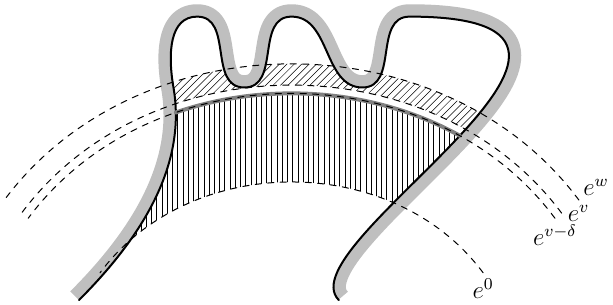}
} 
\hspace{0.2cm}
\subfigure[]
{
	\label{sfig: el integral estimate b}
	\includegraphics[scale=.65]
{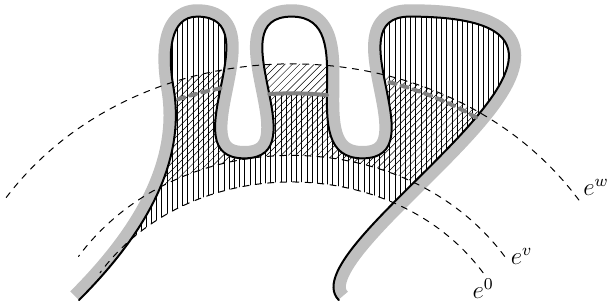}
}
\caption{In this figure, the shading with diagonal lines represents the integral $\ell_v^w\br{{\cal I}^w}$
which in the subfigure \ref{sfig: el integral estimate a} is small 
and in \ref{sfig: el integral estimate b} is big. In the first case 
the arc of the circle of radius $e^{v - \delta}$ gives the arc with desired properties. In the
second case we use the induction hypothesis to find a set of arcs of circles with radii in the range $[e^v, e^w]$.
These arcs are here illustrated by gray lines and
one of the topological quadrilaterals cut by an arc are illustrated in both subfigures by
vertical shading.}
\label{fig: el integral estimate}
\end{figure}

Now assume  that, on the contrary,
$$\ell_v^w\br{{\cal I}^w} \ge \frac2\epsilon.$$
Suppose ${\cal I}^v$ is composed of the arcs $J_k$.
For each $k$ denote by ${\cal I}_k^x$ the collection of arcs $I\in{\cal I}^x$
such that $J_k\prec I$.
Since
\begin{equation}\label{eq:below}
\ell_v^w\br{{\cal I}_k^w}\ge\ell_v^w\br{{\cal I}^w}\ge\frac2\epsilon,
\end{equation}
we can apply the induction assumption to each of those collections
${\cal I}_k^w$ on the interval $x\in[v,w]$,
obtaining a set of separating arcs $\brs{I_{j,k}}_j$ 
such that 
\begin{equation}
\sum_j \exp\br{-\epsilon\,\ell_v\br{I_{j,k}}} \le \exp\br{2-\epsilon\ell_v^w\br{{\cal I}_k}}.
\end{equation}
Then the desired estimate follows from
\begin{align}
\sum_{j,k} \exp\br{-\epsilon\,\ell\br{I_{j,k}}} 
\le& \exp\br{-\epsilon\,\ell_0^v\br{{\cal I}^v}}~\sum_{k} \sum_{j} \exp\br{-\epsilon\,\ell_v\br{I_{j,k}}}\notag\\
\le& \exp\br{-\epsilon\,\ell_0^v\br{{\cal I}^v}}~\sum_{k} \exp\br{2-\epsilon\ell_v^w\br{{\cal I}_k^w}}\notag\\
\overset{*}
{\le}& \exp\br{-\epsilon\,\ell_0^v\br{{\cal I}^v}}~\exp\br{2-\epsilon\ell_v^w\br{{\cal I}^w}}\label{eq:star}\\
=& \exp\br{2-\epsilon\ell\br{{\cal I}^w}},\notag
\end{align}
assuming we have the inequality (\ref{eq:star}$*$) above.
To prove it we first observe that for $x\in[v,w]$,
$$\sum_k\bra{{\cal I}_k^x}=\bra{{\cal I}^x}.$$
Using Jensen's inequality for the probability measure
$$
\frac{dy}{ \bra{{\cal I}^y} \ell_v^w\br{{\cal I}} },$$
and the convex function $x^{-1}$, we write
\begin{align*}
\ell_v^w\br{{\cal I}_k}&=\int_v^w\frac1{\bra{{\cal I}_k^y}}dy
=\int_v^w\br{\frac{\bra{{\cal I}_k^y}}{\bra{{\cal I}^y}
\ell_v^w\br{{\cal I}}}}^{-1} 
\frac{dy}{\bra{{\cal I}^y}\ell_v^w\br{{\cal I}}}\\
&\ge\br{\int_v^w\frac{\bra{{\cal I}_k^y}}{\bra{{\cal I}^y}
\ell_v^w\br{{\cal I}}} 
\frac{dy}{\bra{{\cal I}^y}\ell_v^w\br{{\cal I}}}}^{-1}
=\br{\int_v^w\frac{\bra{{\cal I}_k^y}dy}
{\bra{{\cal I}^y}^2\ell_v^w\br{{\cal I}}^2}}^{-1}.
\end{align*}
Thus
\begin{align}
\notag\sum_k\frac1{\ell_v^w\br{{\cal I}_k}}&
\le\sum_k\br{\int_v^w\frac{\bra{{\cal I}_k^y}dy}
{\bra{{\cal I}^y}^2\ell_v^w\br{{\cal I}}^2}}={\int_v^w\frac{\sum_k\bra{{\cal I}_k^y}dy}
{\bra{{\cal I}^y}^2\ell_v^w\br{{\cal I}}^2}}\\
&={\int_v^w\frac{dy}
{\bra{{\cal I}^y}}\,\frac1{\ell_v^w\br{{\cal I}}^2}}
={\ell_v^w\br{{\cal I}}
\frac1{\ell_v^w\br{{\cal I}}^2}}
=\frac1{\ell_v^w\br{{\cal I}}}.\label{eq:jensen}
\end{align}
An easy differentiation shows that the function $F(x):=\exp\br{-\epsilon/x}$ vanishes at $0$, is increasing and convex on the interval $[0,\epsilon/2]$, and so is sublinear there.
Observing that the numbers $1/\ell_v^w\br{{\cal I}_k}$ as well as their sum
belong to this interval by \eqref{eq:below} and \eqref{eq:jensen}, we can write
\begin{align*}
\sum_{k} \exp\br{-\epsilon\ell_v^w\br{{\cal I}_k}}
&=\sum_{k}F\br{1/\ell_v^w\br{{\cal I}_k}}\le F\br{\sum_{k}1/\ell_v^w\br{{\cal I}_k}}\\
&\le F\br{1/\ell_v^w\br{{\cal I}}}
=\exp\br{-\epsilon\ell_v^w\br{{\cal I}}},
\end{align*}
thus proving the inequality (\ref{eq:star}$*$) and the desired implication.

This completes the circle of implications, thus proving Proposition~\ref{prop: equiv conditions}.


%
%

\section{Proof of the main theorem} \label{sec: proof main}

In this section, we present the proof of Theorem~\ref{thm: main}. 
As a general strategy, we find an increasing sequence of events $E_n \subset \xs(\disc)$ such that
\begin{equation*}
\lim_{n \to \infty} \inf_{\P \in \Sigma_\disc} \P(E_n) = 1
\end{equation*}
and the curves in $E_n$ have some good properties which among other things guarantee that the closure of $E_n$
is contained in the class of Loewner chains.

The structure of this section is as follows.
To use the main lemma (Lemma~\ref{lm: main lemma with convergence} in appendix, which constructs the Loewner chain) 
we need to verify its
three assumptions.
In Section~\ref{ssec: aizburch}, it is shown that with high probability the curves will have parametrizations
with uniform modulus of continuity. Similarly the results
in Section~\ref{ssec: cont driving} guarantee that the driving processes in the capacity parametrization have
uniform modulus of continuity with high probability.
In Section~\ref{ssec: no six arms}, a uniform result on the visibility of the tip $\gamma(t)$ is proven
giving the uniform modulus of continuity of the functions $F$ of Lemma~\ref{lm: main lemma with convergence}. 
Finally in the end of this section we prove the main theorem and its corollaries.

A tool which makes many of the proofs easier is the fact that we can use always the most suitable form of the
equivalent conditions. Especially, by the results of Section~\ref{ssec: equiv cond}
if Condition~\ref{def: b unf crossing} can be verified in the original domain then 
Condition~\ref{def: b unf crossing} (or any equivalent condition) holds in any reference domain
where we choose to map the random curve as long as the map is conformal. Furthermore, Condition~\ref{def: b unf crossing}
holds after we observe the curve up to a fixed time or a random time and then erase the observed initial part
by conformally mapping the complement back to reference domain.

\subsection{Reformulation of the main theorem}

In this section we reformulate the main result so that its proof amounts to verifying
four (more or less) independent properties, which are slightly technical to formulate.
The basic definitions are the following, see Sections~\ref{ssec: aizburch}, \ref{ssec: cont driving}
and \ref{ssec: no six arms} for more details.
Assume that $\rho_n$ is a decreasing sequence such that $\rho_n \searrow 0$ as $n \to \infty$,
that $\alpha,\alpha',T, R$ are positive numbers 
and that $\psi: [0,\infty) \to [0,\infty)$ is continuous and strictly increasing function with $\psi(0) = 0$. 
Define the following random variables
\begin{align}
N_0 &= \sup\{ n\geq 2 \,:\,
  \gamma \text{ intersects $\partial B(1,\rho_{n-1})$ after intersecting $\partial B(1,\rho_n)$} \} \\
C_{1,\alpha} &= \inf\left\{ C>0 \,:\, 
      \begin{gathered}
	  \gamma \text{ can be parametrized s.t.}\\
	  |\gamma(s) - \gamma(t)| \leq C |t-s|^\alpha \quad \forall (t,s) \in [0,1]^2
      \end{gathered}
      \right\}\\
C_{2,\alpha',T} &= \inf\left\{ C>0 \,:\, 
      \begin{gathered}
	  |W_\gamma(s) - W_\gamma(t)| \leq C |t-s|^{\alpha'}  \quad \forall (t,s) \in [0,T]^2 
      \end{gathered}
      \right\}\\
C_{3,\psi,T,R} &= \inf\left\{ C>0 \,:\, 
      \begin{gathered}
	  |F_\gamma(t,y) - \hat{\gamma}(t)| \leq C \psi(y)
	  \quad \forall (t,y) \in [0,T] \times [0,R]
      \end{gathered}
      \right\}
\end{align}
where $\hat{\gamma} = \Phi(\gamma)$ and 
\begin{equation}\label{eq: def hyperbolic geodesic to tip}
F_\gamma(t,y) = g_t^{-1} ( W_\gamma(t) + i\,y) 
\end{equation}
which can be called a hyperbolic geodesic ending to the tip of the curve.

We will prove the next proposition in 
Sections~\ref{ssec: aizburch}, \ref{ssec: cont driving}
and \ref{ssec: no six arms}. Theorem~\ref{thm: main} follows from the proposition (including the results of the next three
subsections) and Lemma~\ref{lm: main lemma with convergence}.

\begin{proposition}\label{prop: reformulation of main theorem}
If $\Sigma$ satisfies Condition~\ref{def: b unf crossing} and $\Sigma_\disc$ is as in \eqref{eq: def sigma disc}, 
then the following statements hold
\begin{itemize}
\item The random curves $\gamma$, whose laws form the collection $\Sigma_\disc$,
are \emph{transient uniformly}  in the following sense: 
there exists a sequence $\rho_n$ such that the random variable $N_0$ is tight in $\Sigma_\disc$.
\item The family of measures $\Sigma_\disc$ is \emph{tight} in $X$:
There exists $\alpha>0$ such that $C_{1,\alpha}$ is a tight random variable in $\Sigma_\disc$.
\item The family of measures $\Sigma_\disc$ is \emph{tight} in the sense of driving process convergence:
There exists $\alpha'>0$ such that $C_{2,\alpha',T}$ is a tight random variable in $\Sigma_\disc$
for each $T>0$.
\item 
There exists $\psi$ such that $C_{3,\psi,T,R}$ is a tight random variable in $\Sigma_\disc$
for each $T>0$, $R>0$.
\end{itemize}
\end{proposition}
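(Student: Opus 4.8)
The four statements will all be deduced from the power-law form of the standing hypothesis: by Proposition~\ref{prop: equiv conditions}, Condition~\ref{def: b unf crossing} implies Conditions~\ref{def: pb unf crossing} and \ref{cond:quadexp}, all of them conformally invariant and hence valid for every $\P\in\Sigma_\disc$ and, after applying $\Phi$, in $\half$. The tool used everywhere is the multi-scale bound obtained by iterating the conditional estimates: tracing $\gamma$ and applying Condition~\ref{def: pb unf crossing} at the successive stopping times at which it crosses the nested annuli $A(z_0,C^{j}r,C^{j+1}r)$ shows that $\gamma$ makes $k$ nested unforced crossings of $A(z_0,r,C^{k}r)$ with probability at most $2^{-k}$, uniformly in $\Sigma_\disc$; equivalently, an unforced crossing of $A(z_0,r,R)$ has probability at most $K(r/R)^\Delta$, and similarly for thin quadrilaterals via Condition~\ref{cond:quadexp}. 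One also records that if an annulus (or quadrilateral) stays at a definite distance from $\{-1,+1\}$, or has small outer radius, then its non-separating component cannot disconnect $-1$ from $+1$, so \emph{every} crossing there is unforced and the bound applies to all crossings; it is only near the two endpoints that crossings can be forced. Each bullet is then proved by exhibiting, for prescribed $\eps>0$, an event of $\Sigma_\disc$-probability at least $1-\eps$ on which the relevant random variable is bounded by a deterministic constant, with these events increasing in that constant.

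\emph{Tightness of $\Sigma_\disc$ in $X$ (Section~\ref{ssec: aizburch}).} This is the Aizenman--Burchard mechanism. For an annulus $A(z,r,R)$ away from $\{-1,+1\}$, iterating the conditional bound at the stopping times ending successive crossings gives $\P(\gamma\ \text{makes}\ k\ \text{successive crossings of}\ A(z,r,R))\le (K(r/R)^\Delta)^k$, which is exactly the Aizenman--Burchard crossing hypothesis; combined with a control near $\pm1$ coming from the transience estimate below, their argument produces, for each $\eps$, a deterministic exponent $\alpha=\alpha(K,\Delta)>0$ and a deterministic $M$ with $\P(C_{1,\alpha}\le M)\ge1-\eps$ for all $\P\in\Sigma_\disc$, i.e. tightness with uniformly H\"older sample curves.

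\emph{Uniform transience and visibility of the tip (endpoint analysis, Section~\ref{ssec: no six arms}).} Let $\tau_n$ be the first time $\gamma$ hits $\partial B(1,\rho_n)$. On $\{N_0\ge n\}$ the curve reaches $\partial B(1,\rho_{n-1})$ after $\tau_n$ and, since it terminates at $1$, returns into $B(1,\rho_n)$ afterwards; hence $\gamma[\tau_n,1]$ crosses the annulus $A(1,\rho_n,\rho_{n-1})$, and every such crossing is unforced in $U_{\tau_n}$ because no component of $U_{\tau_n}\cap A(1,\rho_n,\rho_{n-1})$ can separate a neighbourhood of $\gamma(\tau_n)$, which lies on the inner circle, from a neighbourhood of the centre $1$. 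Thus $\P(N_0\ge n\mid\gamma[0,\tau_n])\le K(\rho_n/\rho_{n-1})^\Delta$, and a union bound over $n>M$ makes $\P(N_0>M)$ arbitrarily small (for a sufficiently fast-decreasing sequence $\rho_n$), so $N_0$ is tight. For $C_{3,\psi,T,R}$, the point $F_\gamma(t,y)=g_t^{-1}(W_\gamma(t)+iy)$ traces the hyperbolic geodesic of $U_t$ from $b$ to the tip $\hat{\gamma}(t)$, and $|F_\gamma(t,y)-\hat{\gamma}(t)|$ can be large only if $\gamma[0,t]$ has wrapped near $\hat{\gamma}(t)$ so as to hide the tip behind a thin fjord; such a configuration forces a fixed number of unforced crossings of a small annulus centred near the tip (the ``no six arms'' estimate), and by the multi-scale bound the probability that it occurs for some $t\in[0,T]$ and some scale $\le R$ is, off an event of probability $\le\eps$, excluded down to a deterministic scale --- which is precisely the stochastic boundedness of $C_{3,\psi,T,R}$, for a modulus $\psi$ determined by $K,\Delta$.

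\emph{Tightness of the driving process (Section~\ref{ssec: cont driving}).} Working in $\half$ with the capacity parametrization, $|W_\gamma(t)-W_\gamma(s)|$ is bounded by $\diam\,g_s(\gamma[s,t])$; since $\hcap$ of the added hull is $\asymp t-s$, this diameter is $O((t-s)^{1/2})$ up to logarithmic corrections \emph{unless} $\gamma[s,t]$ makes a longitudinal crossing of an arbitrarily thin tube (Figure~\ref{sfig: lcb}), which again costs a definite number of unforced crossings and is therefore excluded, uniformly and at all dyadic scales in capacity-time up to $T$, off an event of probability $\le\eps$. On the complement one obtains $|W_\gamma(t)-W_\gamma(s)|\le M|t-s|^{\alpha'}$ for any fixed $\alpha'<1/2$, giving $\P(C_{2,\alpha',T}\le M)\ge1-\eps$ uniformly in $\Sigma_\disc$. \textbf{The main obstacle} is the endpoint/tip analysis of Section~\ref{ssec: no six arms}: the crossing hypothesis says nothing about forced crossings, so one must locate the correct \emph{unforced} annuli around the advancing tip and prove that hiding the tip, or trapping a macroscopic region, genuinely costs several of them; the Aizenman--Burchard and Loewner-continuity parts are comparatively routine once the multi-scale bound is in hand.
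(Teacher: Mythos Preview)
Your outline matches the paper's overall strategy, and the transience and driving-process sketches are essentially the ones in Sections~\ref{ssec: cont driving} and~\ref{ssec: no six arms}. There is, however, a real gap in the second bullet (tightness of $\Sigma_\disc$ in $X$). You assert that ``if an annulus stays at a definite distance from $\{-1,+1\}$ \dots\ then \emph{every} crossing there is unforced,'' and then iterate to get $\P(\gamma\text{ makes }k\text{ crossings})\le(K(r/R)^\Delta)^k$. This is not correct: whether a crossing is unforced is decided in the \emph{slit} domain $U_\tau$ relative to $\gamma(\tau)$ and $b$, not in $\disc$ relative to $-1$ and $+1$. Once the curve has entered the inner disc of $A(z_0,r,R)$ and $b$ lies outside, the component of $U_\tau\cap A$ through which it must exit typically \emph{does} disconnect $\gamma(\tau)$ from $b$, so the exit crossing is forced and your iteration breaks down already at the second step.

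The paper handles exactly this obstacle with the index argument (Lemma~\ref{lm: index unforced crossings}): split $A$ into three concentric sub-annuli $A_1,A_2,A_3$ and track the integer index $I(A,D_t)$ (the minimal number of $A_2$-crossings any admissible continuation must make). One shows that whenever a minimal crossing of $A$ raises the index by $2n-1$, it must contain at least $2n-1$ \emph{unforced} crossings of the $A_k$. Since the index is bounded by a small constant $n_0$ depending only on whether $\overline{B(z_0,r)}$ meets $\partial\disc$ or contains $\pm1$, $n$ total crossings of $A$ force at least $(n-n_0)/2$ unforced crossings of the sub-annuli, and \emph{those} can be iterated. This yields Condition~\ref{def: pb multiple crossing} with $\Delta_n$ linear in $n$, which is what Aizenman--Burchard needs. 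Without this topological lemma (or an equivalent device producing genuinely unforced crossings from many arbitrary ones), the passage from Condition~\ref{def: pb unf crossing} to the Aizenman--Burchard hypothesis is not justified.
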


\subsection{Extracting weakly convergent subsequences of probability measures on curves} \label{ssec: aizburch}

In this subsection, we first review the results of \cite{aizenman-burchard-1999-} and then we verify
their assumption (which they call hypothesis H1) given that Condition~\ref{def: b unf crossing} holds.
At some point in the course of the proof, we observe that it is nicer to work with a smooth domain
such as $\disc$, hence justifying the effort needed to prove the equivalence of the conditions.

Aizenman and Burchard \cite{aizenman-burchard-1999-} made the following assumption on a collection of 
probability measures on the space of curves. They called it Hypothesis H1 and for us it is
Condition~\ref{def: pb multiple crossing}.

\begin{condition} \label{def: pb multiple crossing}
A collection of measures $\Sigma_0$ on $X(\C)$ is said to satisfy a \emph{power-law bound on multiple crossings} 
if for each $n$, there are constants
$\Delta_n \geq 0$, $K_n > 0$ such that
\begin{equation}
\P \big( \gamma \textrm{ makes $n$ crossings of } A(z_0,r,R) \big) \leq K_n \left( \frac{r}{R} \right)^{\Delta_n} 
   \label{eq: pb multiple crossing}
\end{equation}
for any annulus $A(z_0,r,R)$ and for each $\P \in \Sigma_0$ and that satisfy $\Delta_n \to \infty$ as $n \to \infty$.
\end{condition}

\begin{remark}
The sequence $(\Delta_n)$ can trivially be chosen to be non-decreasing. Hence it is actually enough to check that
$\Delta_{n_j} \to \infty$ along a subsequence $n_j \to \infty$. 
\end{remark}

Based on this assumption Aizenman and Burchard proved the following result, see Theorem~1.1 and Theorem 2.3 in \cite{aizenman-burchard-1999-}.

\begin{theorem}[Aizenman--Burchard \cite{aizenman-burchard-1999-}]\label{thm: a-b}
Assume that a collection of measures $\Sigma_0$ on $X(\C)$ satisfies Condition~\ref{def: pb multiple crossing} and
that $\gamma$ is uniformly bounded, i.e., 
there exists $R>0$ such that $\P(\gamma \subset B(0,R))=1$ for all $\P \in \Sigma_0$
Then the following statements hold.

\begin{enumerate}
\item The family of $\Sigma_0$ is tight and hence any sequence in $\Sigma_0$ contains a weakly convergent
subsequence.
\item There exists exponents $\alpha>0$ and $\beta>0$ such that following random variables are tight on $\Sigma_0$ 
\begin{align}
Z_\alpha (\gamma) 
  &= \sup \big\{ M(\gamma,l) \cdot l^{\alpha} \; : \; 0<l<1 \big\} \\
\hat{Z}_\beta(\gamma) 
  &= \inf_{\hat{\gamma}} \, \sup \, \big\{ w(\hat{\gamma},\delta) \cdot \delta^{-\beta} 
             \; : \; 0<\delta<1 \big\}       \label{eq: umc}
\end{align}
where we use the following definitions.
The random variable $M(\gamma,l)$ is the minimum of the numbers $n$ such that there exists 
a partition $0 = t_0 < t_1 < \ldots < t_n = 1$ of the time interval $[0,1]$ such that
$\diam ( \gamma[ t_{k-1}, t_k ] ) \leq l$ for any $k = 1, 2,\ldots,n$. 
The random variable $w(\hat{\gamma},\delta)$ is the modulus of continuity of the parametrization $\hat{\gamma}$
of $\gamma$, that is,
\begin{equation}
w(\hat{\gamma},\delta) = \max \{ \, |\hat{\gamma}(t) - \hat{\gamma}(s)| \;:\; (s,t) \in [0,1]^2 \textrm{ s.t. } |s-t| \leq \delta \} .
\end{equation}
The infimum in $\hat{Z}_\beta(\gamma) $ is over all parametrizations $\hat{\gamma}$ of $\gamma$.
\end{enumerate}
\end{theorem}

\begin{remark}
A bound of the type $Z_\alpha(\gamma) \leq K$ for some $K>0$ and $\alpha > 0$ was called \emph{tortuosity bound} in 
\cite{aizenman-burchard-1999-} and similarly bound for $Z_\beta(\gamma) \leq K$ for some $K>0$ and $\beta > 0$
is \emph{modulus of continuity bound}. Existence of one type of bound implies existence of the other bound,
which might hint how Condition~\ref{def: pb multiple crossing} is sufficient assumption for this result. 
\end{remark}

\begin{remark}
The compact subsets $K \subset X$ were characterized in Lemma~4.1 in \cite{aizenman-burchard-1999-}.
A closed set $K \subset X$ is compact if and only if there exists a function $\psi:(0,1] \to (0,1]$ such that
\begin{equation*}
M(\gamma,l) \leq \frac{1}{\psi(l)}
\end{equation*}
for any $\gamma \in K$ and for any $0< l \leq 1$. 
And this is equivalent to the existence of parametrization which allows a uniform bound on the modulus of continuity.
\end{remark}

We will use the remainder  of this section to show that Condition~\ref{def: pb unf crossing}
implies Condition~\ref{def: pb multiple crossing} and hence the results of Theorem~\ref{thm: a-b}.
Notice that we assume Condition~\ref{def: pb unf crossing} in the original domain while
Condition~\ref{def: pb multiple crossing} is 
shown to hold in
a smooth and bounded reference domain which we choose to be $\disc$.

\begin{proposition} \label{prop: cond implies aizbur}
If $\Sigma$ satisfies Condition~\ref{def: pb unf crossing}, then $\Sigma_\disc$ satisfies Condition~\ref{def: pb multiple crossing}.
Hence then also the conclusions of Theorem~\ref{thm: a-b} hold.
\end{proposition}

Let $D_t = \disc \setminus \gamma(0,t]$. 
Let $\tilde{C}>1$.
For an annulus $A=A(z_0,r,\tilde{C}^3 r)$ define  three concentric subannuli $A_k = A(z_0, \tilde{C}^{k-1} r, \tilde{C}^k r)$, $k=1,2,3$.
Define the \emph{index}  $I(A,D_t) \in \{0,1,2,\ldots\}$ 
of $\gamma$ at time $t$ with respect to $A$ to be 
the minimal number of crossings of $A_2$ made by $\tilde{\gamma}$ where $\tilde{\gamma}$ runs over the set of 
all possible futures of $\gamma[0,t]$
\begin{equation*}
\{\tilde{\gamma} \in \xs(D_t) \,:\, \tilde{\gamma} \textrm{ connects } \gamma(t) \textrm{ to } b \} .
\end{equation*}

Consider a sequence of stopping times $\tau_0 = 0$ and 
\begin{equation*}
\tau_{k+1} = \inf\{ t > \tau_{k} : \gamma [\tau_k,t] \textrm{ crosses } A \}
\end{equation*}
where $k=0,1,2,\ldots$ Define also $\sigma_0=0$ and
\begin{equation*}
\sigma_{k+1} = \inf\{ t > \sigma_{k} : \gamma [\sigma_k,t] \textrm{ crosses } A_2 \} .
\end{equation*}

Since $\gamma(\tau_k)$ and $\gamma(\tau_{k+1})$ lie in the different components of $\C \setminus A$,
the curve $\gamma [\tau_k,\tau_{k+1}]$ has to cross $A_2$ an odd number of times. 
Hence there are odd number of $l$ such that $\tau_k < \sigma_{l+1} < \tau_{k+1}$.
For each $l$, $\gamma[ \sigma_l, \sigma_{l+1}]$ crosses $A_2$ exactly once and therefore the index changes by $\pm 1$.
From this it follows that 
\begin{equation*}
I(A,D_{\tau_{k+1}}) = I(A,D_{\tau_k}) + 2n -1
\end{equation*}
with $n \in \Z$.

\begin{lemma} \label{lm: index unforced crossings}
Let $A=A(z_0,r,R)$ be an annulus and let $A_k, k=1,2,3$ be its subannuli as above.
\begin{enumerate}\enustyii
\item\label{enui: iuc i}
If $A$ is not on $\partial \disc$, i.e. $\overline{B(z_0,r)} \cap \partial \disc = \emptyset$,
then on the event $\tau<1$, $I(A,D_\tau)=1$, where $\tau$ is the hitting time of $\overline{B(z_0,r)}$.
\item\label{enui: iuc ii}
If $A$ is on $\partial D_s$, i.e. $\overline{B(z_0,r)} \cap \partial D_s \neq \emptyset$, and
the index increases from $I$ to $I + 2n - 1$, $n \geq 1$, during a minimal crossing $\gamma[s,t]$ of $A$
then the total number of unforced crossings of the annuli $A_k$, 
$k = 1,2,3$, made by $\gamma[s,t]$ has to be at least $2n-1$.
\end{enumerate}
\end{lemma}

\begin{proof}
The statement \ref{enui: iuc i} can be easily verified  
since the point $+1$ can be reached from $\gamma(\tau)$ in $D_s$ while making only one crossing by 
following a path close to the boundary of $D_s$.

Suppose now that $A$ is on $\partial D_s$.
Let $m \leq m'$ be such that 
\begin{equation*}
\sigma_{m-1} < s < \sigma_{m} \textrm{ and } \sigma_{m'} < t < \sigma_{m'+1} .
\end{equation*}
As we observed above if we set
$y_l \dd= I(A,D_{\sigma_{l}}) - I(A,D_{\sigma_{l-1}})$
then these changes in the index take values $y_l \in \{-1,1\}$ and they sum up to
\begin{equation*}
\sum_{l=m}^{m'} y_l = 2n -1, 
\end{equation*}
that is, to the total change of the index during $[s,t]$.

We claim that the following two statements hold:
\begin{itemize}
\item If $y_{m'} = 1$ then the last crossing $\gamma[\sigma_{m'},t]$ 
of a component of $A_1$ or $A_3$ has to be unforced 
as observed at time $\sigma_{m'}$.
\item If $y_{l} = 1 = y_{l+1}$ then the latter crossing  
$\gamma[ \sigma_{l}, \sigma_{l+1}]$ is an unforced crossing of $A_2$ 
as observed at time $\sigma_{l}$.
\end{itemize}
To prove these claims, let $h=m'$ or $h=l$ (depending on the claim, respectively) and suppose that $y_h=1$.
Let $C_0 \subset \partial A_2 \cap D_{\sigma_h}$ be the boundary arc of $A_2$ which has the property that
any curve from $\gamma(\sigma_h)$ to $+1$ in $D_{\sigma_h}$ has to intersect $C_0$ and $C_0$ is not separated from
$\gamma(\sigma_h)$ by any other such arc. Let $V_0$ be the component of $D_{\sigma_h}\setminus C_0$ which contains 
(a neighborhood of) $+1$
and let $V_1$ to be the component of $A_2 \cap D_{\sigma_h}$ which has $\gamma(\sigma_h)$ and $C_0$ on its boundary.
See Figure~\ref{fig: crossings a2}.
Then $\gamma(\sigma_h)$ can be connected to $+1$ in $V_0 \cup C_0 \cup V_1$.
Because we assumed that $y_k=1$, $\gamma(\sigma_h)$ and $C_0$ are in the different circular boundary arcs of the annulus $A_2$
and thus
it is clear that if $V_1 \subset A_2$ was crossed next, then the index would decrease by one.
Hence the next crossing in both of the scenarios has to be in the complement of $V_0 \cup C_0 \cup V_1$.
Since $\gamma(\sigma_h)$ can be connected to $+1$ in $V_0 \cup C_0 \cup V_1$, this crossing is unforced
as observed at time $\sigma_h$. Thus the claims hold.

\begin{figure}[htb]
\begin{center}
{
\includegraphics[scale=.9]{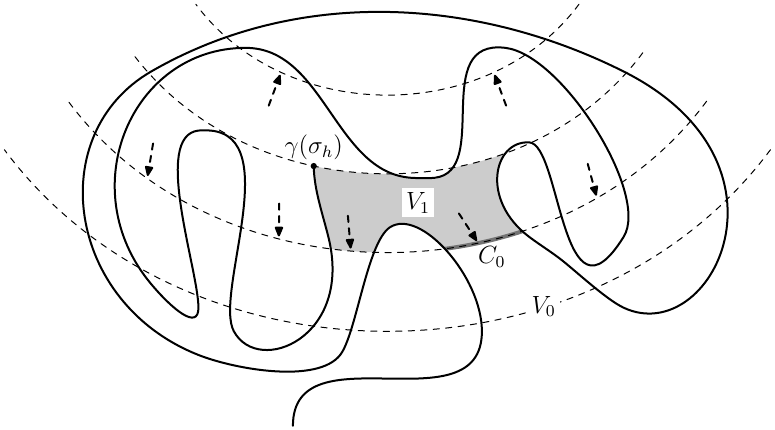}
}
\caption{
A sector of three concentric annuli with an initial segment of the interface. The boundaries of the annuli
are the dashed circular arcs (which are only partly shown in the figure).
If the index increases between times $\sigma_{h-1}$ and $\sigma_h$, then $\gamma(\sigma_h)$ and $C_0$
are in different circular arcs . If the next crossing was in $V_1$, it would decrease the index.
Hence in the scenarios in the proof of Lemma~\ref{lm: index unforced crossings}, the crossings corresponding to
arrows on grey background are not allowed. Therefore a crossing which corresponds to one of the arrows
on white background is going to happen next and those crossings are unforced.
}
\label{fig: crossings a2}
\end{center}
\end{figure}

The rest of the proof is divided in two cases depending on $y_{m'} \in \{-1, 1\}$.
If $y_{m'} = -1$, then
\begin{equation*}
\max_{j = m,\ldots,m'} \sum_{l=m}^{j} y_l \geq 2n  .
\end{equation*}
Therefore there has to be at least $2n-1$ pairs $(l,l+1)$ so that $y_{l} = 1 = y_{l+1}$. This can be easily proven
by induction. Hence the statement~\ref{enui: iuc ii} holds in this case by the second property we proved above.

If $y_{m'} = 1$, then there are at least $2n-2$ pairs $(l,l+1)$ so that $y_{l} = 1 = y_{l+1}$ by the same argument as in the
previous case.
In addition to this the last crossing $\gamma[\sigma_{m'},t]$ is unforced crossing of 
$A_1$ or $A_3$ by the first property we proved above. 
Hence the statement~\ref{enui: iuc ii} holds
also in this case.
\end{proof}

Now we are ready to give the proof of the main result of this section. Notice that here we need that the domain is smooth
otherwise the number $n_0$ below wouldn't be bounded. There are of course many ways to bypass this: 
for instance, if we
want the measures to be supported on H\"older curves (including the end points on the boundary), then we need to assume
that minimal number of crossings of annuli $A(z_0,r,R)$ centered at $z_0 = a$ or $z_0 = b$ grows at most
as a power of $r$ as $r \to 0$.

\begin{proof}[Proof of Proposition~\ref{prop: cond implies aizbur}]
We will prove the first claim that if $\Sigma$ satisfies Condition~\ref{def: pb unf crossing}, 
then $\Sigma_\disc$ satisfies Condition~\ref{def: pb multiple crossing}. 
The rest of the proposition follows then from
the results of
\cite{aizenman-burchard-1999-} which we formulated above in Theorem~\ref{thm: a-b}.

First of all, we can concentrate on the case that the variables $z_0,r,R$ are bounded. 
We can assume that $z_0 \in B(0,3/2), r<1/2, R<1$. In the complementary case either the left-hand side of
\eqref{eq: pb multiple crossing} is zero by the fact that there are no crossing of the annulus that stay inside
the unit disc or the ratio $r/R$ is uniformly bounded away from zero. In the latter case the constant $K_n$ can be
chosen so that the right-hand side of \eqref{eq: pb multiple crossing} is greater than one and \eqref{eq: pb multiple crossing}
is satisfied trivially.

Denote as usual $A=A(z_0,r,R)$.
By the fact that $R<1$, at most one of the points $\pm 1$ is in $A$. If either $\pm 1$ is in $A$, denote the distance from that point
to $z_0$ by $\rho$. Then $r < \rho < R$ and a trivial inequality shows that 
\begin{equation*}
\max \left\{ \frac{\rho}{r}, \frac{R}{\rho} \right\} \geq \sqrt{ \frac{R}{r} } .
\end{equation*}
Hence for each annulus, it is possible to choose a smaller annulus inside it
so that the points $\pm 1$ are away from that annulus and 
the ratio of the radii is still at least square root of the original one.
If we are able to show existence of the constants $K_n$ and $\Delta_n$ for annuli $A$ such that $\{-1,1\} \cap A = \emptyset$ then
constants $\hat{K}_n = K_n$ and $\hat{\Delta}_n = \Delta_n / 2$ can be used for a general annulus.

Let $A$ be such that $\{-1,1\} \cap A = \emptyset$ 
and set $n_0$ and $\tau$ in the following way:
if $\overline{B(z_0,r)}$ intersects the boundary, let $n_0=1$ when $\overline{B(z_0,r)}$ contains $-1$ or $1$
and $n_0=0$ otherwise and let $\tau=0$.
If $\overline{B(z_0,r)}$ doesn't intersect the boundary, let $n_0=2$ and let 
$\tau= \inf\{ t \in [0,1] \,:\, \gamma(t) \in \overline{B(z_0,r)}\}$.

By Lemma~\ref{lm: index unforced crossings}, 
if there is a crossing of $A$ that increases the index, there are unforced crossings of the annuli $A_k$, $k=1,2,3$. 
We can apply this result after time $\tau$.
If the curve doesn't make any unforced crossings of the annuli $A_k$, $k=1,2,3$, then there are at most
$n_0$ crossings of $A$.
This argument generalizes so that
if there are $n > n_0$ crossings of $A$, we apply Condition~\ref{def: pb unf crossing}
$(n-n_0)/2$ times in the annuli $A_k$, $k=1,2,3$, to get the bound 
\begin{equation*}
\P \left( \gamma \textrm{ makes $n$ crossings of } A(z_0,r,R)  \right)
     \leq K^{ \frac{n-n_0}{2}  } \,\cdot\, \left( \frac{r}{R} \right)^{  \frac{\Delta}{6} ( n-n_0 ) }
\end{equation*}
for any $\P \in \Sigma_\disc$. 
Hence the proposition holds for $\Delta_n = \Delta \cdot (n-2)/12$.
\end{proof}

\subsection{Continuity of driving process and finite exponential moment} \label{ssec: cont driving}

Let $\Phi: \disc \to \half$ be a conformal mapping such that $\Phi(-1)=0$ and $\Phi(1)=\infty$. To make the choice
unique, it is also possible to fix $\Phi(z) = \frac{2i}{1-z} + \OO(1)$ as $z \to 1$, i.e.
\begin{equation}
\Phi(z) = i \frac{1+z}{1-z} .
\end{equation}
Denote by $\Phi_t = \Phi \circ g_t$. We often shorten the notation by writing $\Phi \gamma = \Phi(\gamma)$.

Denote by $W(\,\cdot\,,\Phi \gamma)$ the driving process of $\Phi \gamma$ in the capacity parametrization.
Our primary interest is to estimate the tails of the distribution of the increments of the driving process.
Let's first study what kind of events are those when $|W(t,\Phi \gamma) -W(s,\Phi \gamma)|$ is large. 
Suppose that $u$ and $L$ are positive real numbers such that $u/L$ is small. Consider a hull $K$ that is a subset of a rectangle
$R_{L,u}=[-L,L] \times [0,u]$. If $K \cap [L,L+iu]\neq \emptyset$ then
for any $z$ in this set, 
$0.9 \, L \leq g_K(z) \leq 1.1 \, L$ as proved below in Lemma \ref{lm: exitpointest}. On the other hand
if $K \cap \, [-L +iu,L+iu] \neq \emptyset$ then $\hcap (K) \geq \frac{1}{4} u^2$. 
This is proved in Lemma \ref{lm: hcap}.

Based on this observation the following inequality holds 
\begin{equation} \label{ie: tail of driving process}
\P \left( \, \left|W \left(\frac{1}{4} u^2, \Phi\gamma \right)  \right| \geq 2 L \, \right) 
  \leq \P \big( \, \real [ (\Phi \gamma)( \tau_{R_{L,u}} ) ] = \pm L 
  \, \big) ,
\end{equation}
where $\tau_{R_{L,u}} = \inf \{ \, t \in [0,1] \; : \; \Phi \gamma (t) \in \half \cap \partial R_{L,u} \, \}$.
Therefore we study the event that the curve exits a rectangular neighborhood of the origin in the upper half-plane
through the sides of the rectangle. Notice also that the capacity $t = u^2/4$ corresponds to the height $2 \sqrt{t} = u$
of the rectangle
in the inequality~\eqref{ie: tail of driving process}. This is ultimately the source for the
exponent $\alpha < 1/2$  and for the term
$\sqrt{t}$ in \eqref{ie: main thm integrability} in the main theorem  (Theorem~\ref{thm: main}).
Figure~\ref{fig: crossings cont driving} illustrates both this correspondence and the proof of the next proposition.

\begin{figure}[htb]
\begin{center}
\includegraphics[scale=1.1]{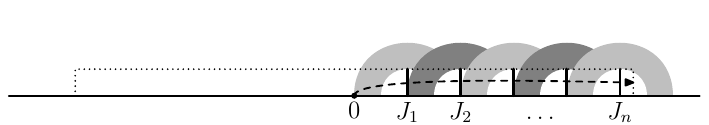}
\caption{If $\sup \{ |W_u - W_s| \; : \; u \in [s,t] \} \geq L$, then the curve $u \mapsto g_s(\gamma (u)) - W_s$, 
$s \leq u \leq t$, exits
the rectangle $[-L,L] \times [0,2\sqrt{t-s}]$ from one of the sides $\{ \pm L \} \times [0,2\sqrt{t-s}]$. Especially
the curve has to intersect all the vertical lines and make an unforced crossing of each of the annuli centered
at the base points of those lines.}
\label{fig: crossings cont driving}
\end{center}
\end{figure}

\begin{proposition}\label{prop: basic H rectangle estimate}
If Condition~\ref{def: b unf crossing} holds, then there are constants $K>0$ and $c>0$ so that
\begin{equation} \label{ie: rect crossing}
\P \big( \, \real [ (\Phi \gamma)( \tau_{R_{L,u}} ) ] = \pm L \, \big) \leq K e^{-c \frac{L}{u}} 
\end{equation}
for any $0<u<L$.
\end{proposition}

\begin{proof}
If Condition~\ref{def: b unf crossing} holds then it also holds in $\half$ by the results of 
Section~\ref{ssec: equiv cond}.
Let $C>1$ be the constant of Condition~\ref{def: b unf crossing} in $\half$.

By symmetry, it is enough to consider the event $E$ that
$\Phi \gamma$ exits the rectangle $R_{L,u}$ from the right-hand side $\{ L \} \times [0,u]$. 
Let $n = \lfloor L/(C u) \rfloor$. Consider the lines $J_k = \{ C u \cdot k \} \times [0,u]$, $k=1,2,\ldots,n$.
On the event $E$, each of the lines $J_k$ are hit before $\tau_{R_{L,u}}$ and the hitting times are ordered
\begin{equation*}
0 < \tau_{J_1} < \tau_{J_2} < \ldots < \tau_{J_n} \leq \tau_{R_{L,u}} <1
\end{equation*}
See Figure~\ref{fig: crossings cont driving}.

Let $x_k = C u \cdot k$ which is the base point of $J_k$.
On the event $E$ the annulus $A(x_1,u,Cu)$ is crossed and after each $\tau_{J_k}$ the annulus $A(x_{k+1},u,Cu)$
is crossed. Hence Condition~\ref{def: b unf crossing} can be applied with the stopping times 
$0, \tau_{J_1}, \ldots , \tau_{J_{n-1}}$ and the annuli $A(x_1,u,Cu)$, $A(x_2,u,Cu), \ldots$, $A(x_n,u,Cu)$. This gives the upper bound
$2^{-n}$ for the probability of $E$. Hence the inequality~\eqref{ie: rect crossing} follows with suitable constants depending only
on $C$. 
\end{proof}

We can now apply the above bounds~\eqref{ie: tail of driving process} and \eqref{ie: rect crossing}
to show the next proposition which can be interpreted in the following way.
The first statement shows the uniform transience of the curves (uniform over $\P \in \Sigma_\disc$)
in the same sense as in Proposition~\ref{prop: reformulation of main theorem}.
The second statement is a sufficient technical statement for 
the H\"older continuity of the driving processes and is used
in the proof of Theorem~\ref{thm: holder exponent driving}.
The third statement is needed for the exponential integrability of the driving process
in Theorem~\ref{thm: main}.

\begin{proposition} \label{pr: contdrv} \label{thm: holder cont}
Let $\upsilon(t)=\hcap ( \Phi\gamma[0,t] )/2$ for any $t \in [0,1)$ and define 
$\upsilon(1) = \lim_{t \to 1} \upsilon(t) \in (0,\infty]$.
If Condition~\ref{def: b unf crossing} holds, then 
\begin{enumerate}\enustyo
\item \label{enui: pr contdrv 1}
For all $\P \in \Sigma_\disc$, $\P(\upsilon(1) = \infty)=1$. There exists a sequence $b_n \in \R$ such that
\begin{equation}\label{ie: prob capacity goes to infinity}
\P\left( \sup_{0 \leq t \leq n } |W_t(\hat{\gamma})| \leq b_n \right) \geq 1 - \frac{1}{n}
\end{equation}
for any $\P \in \Sigma_\disc$.
\item \label{enui: prop contdrv 2}
Fix $T>0$ and $0<\alpha< \frac{1}{2}$. 
Let $X' \subset \xs(\disc)$ be the set of simple curves such that $\upsilon(1) > T$. Define
\begin{equation}
G_n = \left\{ \gamma \in X' : \sup_{j 2^{-n} \leq u \leq (j+1) 2^{-n}}
            |W_u(\hat{\gamma}) - W_{j 2^{-n}}(\hat{\gamma})| \leq 2^{-\alpha n} 
            \right\} .
\end{equation}
Then for large enough $n \geq n_0(\alpha,T,K,c)$
\begin{align*}
\P \big(  G_n \big)
   \geq 1 - 2^{- n} 
\end{align*}
for any $\P \in \Sigma_\disc$.
\item \label{enui: pr contdrv 3}
There exists constants $\eps>0$ and $C>0$ such that 
\begin{equation}
\E_\P \left[\exp\left(\eps \max_{s \in [0,t]} |W_s(\hat{\gamma})|/\sqrt{t} \right)\right] \leq C
\end{equation}
for any $t>0$ and for any $\P \in \Sigma_\disc$. Here $\E_\P$ is the expected value
with respect to $\P$.
\end{enumerate}
\end{proposition}

\begin{proof}
Notice first that in the inequality~\eqref{ie: tail of driving process}
we can replace $|W (u^2/4, \Phi\gamma )|$ on the left by $\max_{0 \leq s \leq u^2/4} |W (u^2/4, \Phi\gamma )|$.
This stronger version follows from the very same observation.

1. Let $b_n = (4/c) \, \sqrt{n} \log (K \, n)$. Then by \eqref{ie: tail of driving process} and
\eqref{ie: rect crossing}
\begin{equation}
\P\left( \sup_{0 \leq t \leq n } |W_t(\hat{\gamma})| > b_n \right) 
   \leq K \exp \left( - c \frac{b_n}{4 \sqrt{n}} \right) = \frac{1}{n} .
\end{equation}
In particular, $\P(\upsilon(1)=\infty)=1$.

2. Estimate the probability of the complement of $G_n$ by the following sum
\begin{align*}
\P \left( G_n^c \right)
  &\leq \sum_{j=0}^{2^n} 
     \P \left( \max_{u \in [T(j-1) 2^{-n},T j 2^{-n}]}|W_u - W_{T(j-1) 2^{-n}} | > 2^{- \alpha n } \right) \\
  &\leq K 2^{n} e^{-(c/4) T^{-1/2} 2^{(1/2-\alpha) n} } \leq 2^{-n}
\end{align*}
for $n$ large enough depending on $\alpha, T, K, c$.

3. Fix $t>0$ and denote the random variable $\max_{s \in [0,t]} |W_s(\hat{\gamma})|$ by $Z$.
Let $\eps>0$, which we fix in a moment, and $\phi(x) = \exp(\eps x /\sqrt{t})$.
Then by an equality following from Fubini's theorem and by the bound~\eqref{ie: rect crossing}
\begin{align*}
\E_\P ( \phi(Z) ) &= \phi(0) + \int_0^\infty \phi'(x) \P( Z \geq x ) \,\de x 
  \leq 1+ \frac{K \eps}{\sqrt{t}} \int_0^\infty \exp\left( \left(\eps  - \frac{c}{4} \right)\frac{x}{\sqrt{t}} \right) \de x \\
  &= 1+K \eps \int_0^\infty \exp\left( \left(\eps  - \frac{c}{4} \right)  y \right) \de y
\end{align*}
where $K,c$ are as in \eqref{prop: basic H rectangle estimate}. Choose $\eps < \frac{c}{4}$. Then the constant on the right
is finite. It is also independent of $t$ and $\P$ as claimed.
\end{proof}

Finally we reformulate the above somewhat technical results into the following cleaner theorem
(implied by the previous proposition as explained above)
on the H\"older continuity of the driving processes.
The theorem follows from the statement \ref{enui: prop contdrv 2} of Proposition~\ref{pr: contdrv} above and
Lemma~7.1.6 and the proof of Theorem~7.1.5 in \cite{durrett-1996-}.

\begin{theorem} \label{thm: holder exponent driving}
If Condition~\ref{def: b unf crossing} holds, then for each $\P \in \Sigma_\disc$ the curve $\gamma$ is
a Loewner chain which has $\alpha$-H\"older continuous driving process $\P$-almost surely for any $0 < \alpha < 1/2$
and the $\alpha$-H\"older norm of the driving process restricted to $[0,T]$ for $T>0$ is stochastically bounded.
\end{theorem}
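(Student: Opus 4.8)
The plan is to upgrade the estimates of Proposition~\ref{pr: contdrv} to a pathwise H\"older bound by a quantitative Kolmogorov-type chaining argument; the key point is to express the resulting H\"older seminorm through the two random quantities already controlled there, so that stochastic boundedness survives uniformly over the whole family $\Sigma_\disc$ rather than measure by measure. Fix $\P\in\Sigma_\disc$, an exponent $\alpha\in(0,1/2)$ and a horizon $T>0$. By Proposition~\ref{pr: contdrv}(1) we have $\P(\upsilon(1)=\infty)=1$, so $\P$-almost surely the curve $\Phi(\gamma)$ in the half-plane capacity parametrization is a simple curve defined on all of $[0,\infty)$ whose total half-plane capacity is infinite; such a curve generates a Loewner chain $(g_t)_{t\ge0}$ with a continuous driving function $W=W_\gamma:[0,\infty)\to\R$ (the existence and continuity of $W$ being the content of the Appendix invoked around Theorem~\ref{thm: main}). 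Thus the curve is $\P$-a.s.\ a Loewner chain, and it remains only to bound the $\alpha$-H\"older seminorm of $W$ on $[0,T]$ in the stochastic sense, uniformly in $\P$.

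First I would localise the dyadic increment bounds. By Proposition~\ref{pr: contdrv}(2), for every integer $n\ge n_0$ --- where $n_0=n_0(\alpha,T)$ depends on the family $\Sigma$ (through the constant $C$ of Condition~\ref{def: b unf crossing}) but not on the individual measure --- the event $G_n$, on which $W$ oscillates by at most $2^{-\alpha n}$ over each subinterval of $[0,T]$ of length $T2^{-n}$, satisfies $\P(G_n^c)\le2^{-n}$. Since $\sum_n2^{-n}<\infty$, the Borel--Cantelli lemma shows that $\P$-a.s.\ $\gamma\in G_n$ for all large $n$, so the random level $N=N(\gamma):=\inf\{m\ge n_0:\gamma\in G_n\text{ for all }n\ge m\}$ is a.s.\ finite, and $\P(N>m)\le\sum_{n\ge m}\P(G_n^c)\le2^{-m+1}$ for all $m\ge n_0$, with this tail bound uniform over $\P\in\Sigma_\disc$.

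Next comes the chaining step. Put $B_T:=\sup_{0\le t\le T}|W(t)|$. For $s,t\in[0,T]$ with $|t-s|<T2^{-N}$, pick $n\ge N$ with $T2^{-n-1}\le|t-s|<T2^{-n}$; then $s$ and $t$ lie in at most two adjacent subintervals of the $n$-th subdivision of $[0,T]$, and at most three applications of the oscillation bound of $G_n$ at the relevant subdivision points yield $|W(t)-W(s)|\le3\cdot2^{-\alpha n}\le3\,(2/T)^{\alpha}\,|t-s|^{\alpha}$. For $|t-s|\ge T2^{-N}$ one uses instead $|W(t)-W(s)|\le2B_T\le2B_T\,T^{-\alpha}2^{N\alpha}\,|t-s|^{\alpha}$. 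Hence the $\alpha$-H\"older seminorm of $W|_{[0,T]}$ is at most $\max\!\big(3\,(2/T)^{\alpha},\,2B_T\,T^{-\alpha}2^{N\alpha}\big)$, a deterministic function of $\alpha$, $T$, $N$ and $B_T$ alone.

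Finally I would conclude stochastic boundedness. The level $N$ is stochastically bounded by the uniform tail estimate just obtained, and $B_T$ is stochastically bounded by Proposition~\ref{pr: contdrv}(1): choosing the integer $n\ge T$ gives $\P(\sup_{[0,n]}|W|\le b_n)\ge1-1/n$ with $b_n$ deterministic. Given $\eps>0$, choose $m\ge n_0$ with $2^{-m+1}\le\eps/2$ and a deterministic $M'$ with $\P(B_T\le M')\ge1-\eps/2$ uniformly over $\Sigma_\disc$; on the event $\{N\le m\}\cap\{B_T\le M'\}$, which has probability at least $1-\eps$, the H\"older seminorm is at most $\max(3\,(2/T)^{\alpha},\,2M'\,T^{-\alpha}2^{m\alpha})=:M$, a constant independent of $\P$. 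Since $\alpha\in(0,1/2)$ and $T>0$ were arbitrary, the theorem follows. The whole argument is elementary once Proposition~\ref{pr: contdrv} is in hand; the one point needing care --- and the one I would treat as the real content --- is to run the chaining quantitatively, so that the H\"older constant depends on $\gamma$ only through $N$ and $B_T$, which is precisely what makes the bound uniform across the family instead of holding one measure at a time.
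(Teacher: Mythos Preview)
Your proof is correct and follows exactly the route the paper indicates: Proposition~\ref{pr: contdrv} supplies the uniform dyadic oscillation bounds $\P(G_n^c)\le 2^{-n}$ and the uniform control on $\sup_{[0,T]}|W|$, and then a Kolmogorov-type chaining (which the paper leaves as a reference to Durrett) converts these into an $\alpha$-H\"older bound; your explicit packaging of the H\"older seminorm as a deterministic function of the random level $N$ and the supremum $B_T$ is precisely what is needed to make the bound uniform over $\Sigma_\disc$.
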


\subsection{Continuity of the hyperbolic geodesic to the tip} \label{ssec: no six arms}

In the proof of the main theorem, we are going to apply Lemma~\ref{lm: main lemma with convergence}
of the appendix. Therefore we repeat here the following definition:
for a simple curve $\gamma$ in $\half$, let $(g_t)_{t \in \R_+}$ and $(W(t))_{t \in \R_+}$ be its
Loewner chain and driving function. Then we define the \emph{hyperbolic geodesic from $\infty$ to the tip $\gamma(t)$} 
as $F: \R_+ \times \R_+ \to \overline{\half}$ by
\begin{equation*}
F(t,y) = g_t^{-1} (W(t) + i\,y) .
\end{equation*}
The corresponding geodesic in $\disc$ for the curve $\Phi^{-1} \gamma$ is
\begin{equation}
F_{\disc}(t,y) = \Phi^{-1} \circ F(t,y) .
\end{equation}

Consider now the collection $\Sigma_\disc$ 
and the random curve $\gamma$ in $\xs(\disc,-1,+1)$. Define $F$ and $F_\disc$ as above for the curves $\Phi \gamma$ 
and $\gamma$, respectively. 
For $\rho>0$, let $\tau_\rho$ 
be the hitting time of $B(1,\rho)$, i.e., $\tau_\rho$ is the smallest $t$ such that $|\gamma(t)-1| \leq \rho$.
The following is the main result of this subsection.

\begin{theorem}\label{thm: hyp geodesic tip}
Suppose that $\Sigma$ satisfies Condition~\ref{def: b unf crossing}.
There exists a continuous increasing function $\psi:\R_+ \to \R_+$ such that $\psi(0)=0$ and
for any $\rho>0$ and $\eps>0$ there exists $\delta > 0$ such that
\begin{equation}
\P \left( 
  \begin{gathered}
  \sup_{t \in [0,\tau_\rho]} \left|F(t,y') - F(t,y)\right| \leq \psi(|y-y'|) \\
  \forall y,y' \in [0,L] \text{ s.t. } |y-y'|\leq \delta 
  \end{gathered}  
  \right)
  \geq 1 - \eps 
\end{equation}
for each $\P \in \Sigma_\disc$.
\end{theorem}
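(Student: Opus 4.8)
The plan is to control the oscillation of the hyperbolic geodesic $F(t,y)=g_t^{-1}(W(t)+iy)$ by a two-scale argument: near the tip one uses the Hölder continuity of the driving process together with standard Loewner estimates, and away from the tip one uses the crossing condition to prevent the curve from ``pinching off'' a fjord that the geodesic would have to traverse with large Euclidean length. Concretely, fix $\rho>0$ and recall that on $[0,\tau_\rho]$ the tip $\gamma(t)$ stays at Euclidean distance $\geq\rho$ from $1$, so in the half-plane coordinates $\hat\gamma=\Phi\gamma$ the relevant hulls $\hat K_t$ stay inside a bounded region and have capacity bounded above by some $T=T(\rho)$; hence $|W(t)|$ and the geometry of $g_t$ are controlled on this time interval (uniformly over $\P\in\Sigma_\disc$ up to an event of probability $<\eps/3$, using Proposition~\ref{pr: contdrv}(1)). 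Then I would split $|F(t,y')-F(t,y)|$ using the triangle inequality through the tip $\hat\gamma(t)=F(t,0)$, reducing to a bound on $\sup_{t\le\tau_\rho}\sup_{y\le\delta}|F(t,y)-\hat\gamma(t)|$: this is the quantity that must be $\le\psi(\delta)$.

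The key geometric point is that $F(t,y)$ for small $y$ lies in $D_t=\disc\setminus\gamma(0,t]$ and is separated from the tip $\gamma(t)$ by a crosscut of small harmonic measure (equivalently, a quadrilateral of large modulus): indeed the image under $g_t\circ\Phi$ of the segment $[W(t),W(t)+iy]$ is a hyperbolic geodesic, and the portion of $D_t$ between this geodesic and the tip is a topological quadrilateral whose modulus grows like $\frac{1}{2\pi}\log(1/y)$ as $y\to0$ by the conformal invariance of extremal length together with the standard half-plane estimate for the geodesic $[W(t),W(t)+iy]$ relative to the slit $\R$. If $F(t,y)$ were at Euclidean distance $\ge s$ from $\gamma(t)$, this quadrilateral would contain a crossing of an annulus $A(\gamma(t),cs,Cs)$ lying in the avoidable part $A^u_t$ (avoidable because $F(t,y)$ is on the same side of the curve as the target $1$, being reached by the geodesic from $\infty$); by Condition~\ref{def: b unf crossing}, applied at the stopping time $t$ (or, to handle all $t$ simultaneously, at a suitable dyadic net of stopping times as in Remark~\ref{rem all any} and the tortuosity argument of Section~\ref{ssec: aizburch}), such a crossing has small conditional probability, and via the power-law form Condition~\ref{def: pb unf crossing} one gets a bound $K(y)^\Delta$-type, summable to make the bad event have probability $<\eps/3$ once $\delta$ is small. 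This produces the modulus-of-continuity function $\psi$: roughly $\psi(y)$ is comparable to a power of $1/\log(1/y)$, or whatever rate falls out of chaining the annulus estimates across dyadic scales of $y$, and it is deterministic and independent of $\P$.

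The remaining ingredient is continuity in $t$ of $F(t,y)$ for fixed moderate $y$, which I would get from the Hölder continuity of $W$ on $[0,T]$ (Theorem~\ref{thm: holder exponent driving}) and the Lipschitz dependence of $g_t^{-1}$ on the driving value at fixed $y$ bounded away from $0$ — a standard Loewner flow estimate — so that the supremum over $t\in[0,\tau_\rho]$ can be reduced to a supremum over a fine net of times with an error absorbed into $\psi$. Assembling the three events (capacity/driving-process control; no large-diameter separated geodesic excursion via the crossing bound; time-continuity) each of probability $\ge1-\eps/3$ gives the claim. The main obstacle I anticipate is making the ``separated implies annulus crossing in $A^u_t$'' step uniform in $t$: one must verify that the avoidable set $A^u_t$ genuinely contains the relevant crossing (the component of the annulus must not disconnect $\gamma(t)$ from $b$ in $D_t$), which is exactly the content of the geodesic-from-$\infty$ construction — the geodesic reaches $F(t,y)$ from the $b$-side — but checking this carefully, and handling the supremum over all $t$ rather than a fixed stopping time, is the delicate part; I expect this is where the bulk of the work, and probably an appeal to the dyadic-stopping-time machinery already used for the multiple-crossing bound, will go.
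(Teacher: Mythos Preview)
Your overall strategy has the right flavour, but there is a genuine gap in the central step, and you are missing the two analytic tools on which the paper's proof rests.

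\textbf{The gap.} You propose: if $|F(t,y)-\hat\gamma(t)|\ge s$, then there is an unforced crossing of an annulus around $\gamma(t)$ contained in $A^u_t$, and you apply Condition~\ref{def: b unf crossing} at the stopping time $t$. But Condition~\ref{def: b unf crossing} bounds the probability that $\gamma[t,1]$ --- the \emph{future} of the curve --- makes an unforced crossing. By time $t$ the fjord is already there; it was carved out by $\gamma[0,t]$, which is now part of $\partial D_t$ and not a curve in $D_t$ at all. The future $\gamma[t,1]$ need not enter the fjord, so applying the condition at time $t$ yields nothing. The correct move is to locate an \emph{earlier} stopping time $\tau<t$ at which the mouth of the fjord was being formed, so that $\gamma[\tau,t]$ is the unforced crossing. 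This is exactly what the paper organises through the intermediate ``six-arm'' event $E(r,R)$ (some crosscut of diameter $\le r$ in $D_s$ separates a sub-arc $\gamma(s,t']$ of diameter $\ge R$ from the target) and the stopping times $\tau_{j,k}$ in the preceding proposition, which catch the first moment two pieces $J_j,J_k$ of the curve come within $2r$ of each other. Your reference to a ``dyadic net of stopping times'' and the tortuosity machinery is in the right spirit, but you frame it as a device for handling $\sup_t$; its real role is to find the past time at which the crossing bound becomes applicable.

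\textbf{The missing tools.} Even with the right stopping time, one must pass from ``the hyperbolic geodesic has large Euclidean diameter'' to ``there is a small Euclidean crosscut separating a large piece of curve from the target''. The paper does this with two ingredients you omit: (i) a length--area estimate (Proposition~2.2 of \cite{pommerenke-1992-}) produces, for each $t$, a crosscut $C\subset D_t$ near the tip --- the image under $\Phi^{-1}\circ g_t^{-1}$ of a semicircle of radius $\delta'\in[\delta,\sqrt\delta]$ about $W(t)$ --- of Euclidean length at most $2\pi/\sqrt{|\log\delta|}$; this is how ``$y$ small'' becomes ``small Euclidean crosscut''. (ii) The Gehring--Hayman theorem (Theorem~4.20 of \cite{pommerenke-1992-}) then says the hyperbolic geodesic from $F_\disc(t,\delta')$ to $\gamma(t)$ has diameter comparable to the \emph{shortest} connecting curve in $D_t$; so if no short connecting path exists, one lands in the bad event $E(r,R)$. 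Your quadrilateral-modulus heuristic is correct on the half-plane side but does not by itself manufacture the small Euclidean crosscut in $D_t$, nor control the geodesic by the shortest path. Finally, your third ingredient --- H\"older continuity of $W$ to get $t$-continuity of $F$ --- is unnecessary here: once $|F_\disc(t,y)-\gamma(t)|\le\psi(y)$ holds uniformly for $t\le\tau_\rho$ and small $y$, the range $y\in[\delta,L]$ is handled by Koebe distortion, as the paper notes at the outset of its proof.
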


The proof is postponed after an auxiliary result, which is interesting in its own right. Namely, the next proposition
gives a ``superuniversal'' arms exponent, i.e., the property is uniform for basically all models of statistical physics: 
under Condition~\ref{def: b unf crossing} a certain event involving
six crossings of an annulus has small probability to occur \emph{anywhere}. Therefore the corresponding
six arms exponent, if it exists, has value always greater than $2$.
To see this, suppose that the probability of this six arms event in a single annulus $A(z_0,r,R)$ tends to zero
as $r^\alpha$ when $r \to 0$. Then we can sum over the lattice $r \Z^2$ and all annuli of the form $A(z,2r,R/4)$
where $z$ is a lattice point and get upper and lower bounds of the form $r^{\alpha -2}$ for seeing this six arms event in
anywhere (in any annulus of the form $A(z_0,r,R)$). Hence if this goes to zero, we must have $\alpha>2$.

Let $D_t = \disc \setminus \gamma(0,t]$
and define the following event event $E(r,R) = E_\rho(r,R)$ on $\xs(\disc)$:
Define $E(r,R)$ as the event that there exists $(s,t) \in [0,\tau]^2$ with $s<t$ such that
\begin{itemize}
\item 
$\diam \big(  \, \gamma[s,t] \, \big) \geq R$ and
\item  
there exists 
a crosscut $C$, $\diam( C ) \leq r$, that separates $\gamma(s,t]$ from $B(1,\rho)$ in $\disc \setminus \gamma(0,s]$.
\end{itemize}
Denote the set of such pairs $(s,t)$ by $\mathcal{T}(r,R)$.

Let's first demonstrate that the event $E(r,R)$ implies a certain six arms event 
(four arms if it occurs near the boundary)
occurring somewhere in $\disc$ --- the converse statement is also true, although we don't need it here.
If $C$ is as in the definition of $E(r,R)$, then for $r < \min\{\rho,R\}/2$
at least one of the end points of $C$ has to lie on $\gamma(0,s]$. Let $T(C) \leq s $
be the largest time such that $\gamma(T(C)) \in \overline{C}$.
Then also $(T(C),t) \in \mathcal{T}(r,R)$ and we easily see that $\gamma[T(C),t]$ makes a crossing
of $(A(\gamma(T(C)),r,R/2))^u_{T(C)}$ and is therefore unforced.
Moreover, $\gamma[0,T(C)]$ contains at least three crossings of $A(\gamma(T(C)),r,R/2)$,
when $\gamma(T(C))$ is sufficiently far from the boundary, or one crossing, when
$\gamma(T(C))$ is close to the boundary. Otherwise the above crossing couldn't be unforced. See also
Figure~\ref{sfig: lca}. Finally, after $t$ the curve $\gamma$ has to still make at least two crossings to reach
the target point $+1$.
Adding these numbers together, we conclude that on the event $E(r,R)$
there is $z_0 \in \disc$ such that $A(z_0,r,R/2)$ contains at least six crossings when $|z_0|<1-r$ or
four crossings when $|z_0|\geq 1- r$ and at least one of the crossings is unforced.

Now we know that $E(r,R)$ is a proper subevent of the full six arms event. By the next result
its probability is small.

\begin{proposition}
If $\Sigma_\disc$ satisfies Condition~\ref{def: b unf crossing}, then as $r\to 0$
\begin{equation*}
\sup \{ \P \left(  E (r,R) \right) \,:\, \P \in \Sigma_\disc \} = \oo(1) .
\end{equation*}
\end{proposition}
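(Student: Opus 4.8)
The statement asserts that the probability of the event $E(r,R)$, which the preceding discussion has identified with a ``six-arm'' (or four-arm near the boundary) event carrying one \emph{unforced} crossing, tends to $0$ as $r \to 0$, uniformly over $\P \in \Sigma_\disc$. The plan is to exploit exactly this reduction together with the power-law bound on multiple crossings (Condition~\ref{def: pb multiple crossing}), which we have already obtained from Condition~\ref{def: b unf crossing} via Proposition~\ref{prop: cond implies aizbur}, and with Condition~\ref{def: pb unf crossing} itself, to produce a quantitative bound of the form $\P(E(r,R)) \leq K' (r/R)^{\Delta'}$ for suitable constants, which obviously goes to $0$. The key point is that $E(r,R)$ forces, \emph{somewhere} in $\disc$, an annulus $A(z_0,r,R)$ that is crossed many times with at least one crossing unforced — this is more than a mere multiple crossing of a \emph{fixed} annulus, so one must integrate or sum over a net of possible centres $z_0$, paying a polynomial-in-$(R/r)$ entropy factor and absorbing it into the power-law decay.

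First I would cover $\disc$ by a finite net $\{z_i\}$ of points with spacing of order $r$, so that any annulus $A(z_0,r,R)$ appearing in the event $E(r,R)$ contains a net-annulus $A(z_i, c r, R)$ (after shrinking radii by fixed factors) and is contained in $A(z_i, r, CR)$ for net-constants $c, C$; the number of net points is $O((1/r)^2)$. For each fixed net point, the conjunction ``six crossings of the annulus, at least one unforced'' is controlled by first applying Lemma~\ref{lm: index unforced crossings}-type bookkeeping (as in the proof of Proposition~\ref{prop: cond implies aizbur}) to convert the six crossings of the outer annulus into several unforced crossings of concentric sub-annuli of modulus bounded below, and then applying Condition~\ref{def: pb unf crossing} repeatedly with the appropriate stopping times. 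Because six crossings are required, one gains a factor $(r/R)^{3\Delta'}$ say (three independent unforced crossings of sub-annuli of comparable modulus), and one then chooses $r$ small enough that this beats the $(1/r)^2$ entropy: concretely, $\P(E(r,R)) \leq \sum_i \P(\text{bad at } z_i) \leq C r^{-2} \cdot K'' (r/R)^{\Delta''}$ with $\Delta'' > 2$, which is $\oo(1)$ as $r \to 0$. The near-boundary case, where only four crossings are guaranteed, is handled separately but identically, using that $n_0 \leq 2$ near $\partial\disc$ (this is where smoothness of $\disc$ is used, exactly as in Proposition~\ref{prop: cond implies aizbur}), so that four crossings still leave at least one unforced crossing to exploit.

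The main obstacle, I expect, is the careful verification that the combinatorial reduction of ``$E(r,R)$ occurs'' to ``a net-annulus is crossed many times with an unforced crossing'' is valid \emph{uniformly in the past}, i.e. that one can legitimately apply the conditional version of Condition~\ref{def: pb unf crossing} with a sequence of stopping times adapted to the successive crossings — the subtlety being that the crosscut $C$ in the definition of $E(r,R)$ is only known \emph{a posteriori}, so one must, as in the paragraph preceding the proposition, replace $C$ by the last time $T(C)$ the curve touched $\overline{C}$ and argue that $\gamma[T(C),t]$ genuinely realises an unforced crossing of $A(\gamma(T(C)), cr, R)$ relative to the domain $D_{T(C)}$. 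Once this geometric bookkeeping is pinned down, the probabilistic estimate is a routine iteration of Condition~\ref{def: pb unf crossing} exactly as in the proof of Proposition~\ref{prop: cond implies aizbur}, and the union bound over the $O(r^{-2})$ net points closes the argument.
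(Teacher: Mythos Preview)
Your approach has a genuine gap at the quantitative heart of the argument. You write that the union bound over the $O(r^{-2})$ net points closes because the six-crossing event at a fixed centre has probability $\leq K''(r/R)^{\Delta''}$ \emph{with $\Delta''>2$}. But nothing in Condition~\ref{def: b unf crossing} (or its equivalent Condition~\ref{def: pb unf crossing}) guarantees this. The power-law exponent $\Delta$ in Condition~\ref{def: pb unf crossing} is merely \emph{some} positive number; following the bookkeeping of Lemma~\ref{lm: index unforced crossings} and Proposition~\ref{prop: cond implies aizbur}, six crossings (with $n_0\le 2$) yield at most $(6-2)/2=2$ unforced crossings of the three concentric sub-annuli, hence an exponent of order $2\Delta/3$ --- not $3\Delta'$ as you claim, and certainly not larger than $2$ in general. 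If $\Delta$ is small (which the hypotheses allow), the entropy factor $r^{-2}$ overwhelms the decay and your sum diverges. The boundary four-arm case is no better.

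The paper sidesteps this obstruction by a different decomposition. Instead of a spatial net of \emph{deterministic} centres, it chops the curve itself into a \emph{random} number $N$ of arcs $J_k=\gamma[\sigma_{k-1},\sigma_k]$ of diameter at most $R/4$; the mouth of any fjord witnessing $E(r,R)$ must be formed by a pair $(J_j,J_k)$, and an associated stopping time $\tau_{j,k}$ and centre $z_{j,k}=\gamma(\tau_{j,k})$ are manufactured so that a single application of Condition~\ref{def: pb unf crossing} gives $\P(E_{j,k})\le K(r/R)^\Delta$. The crucial point is that $N\le M(\gamma,R/8)$ is \emph{stochastically bounded} (by Proposition~\ref{prop: cond implies aizbur}), so one first fixes $m$ with $\P(N>m)<\eps/2$ uniformly in $\Sigma_\disc$, and only then sums over the \emph{finitely many} (namely $\binom{m+1}{2}$) pairs. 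Since $m$ is fixed before $r$ is sent to $0$, any positive $\Delta$ suffices. Your spatial-net strategy cannot imitate this two-step ``first tightness, then power law'' structure, because the net cardinality necessarily grows like $r^{-2}$.
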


\begin{remark}
Since $\P \left(  E (r,R) \right)$ is decreasing in $R$, the bound is uniform for $R \geq R_0 > 0$.
\end{remark}

The idea of the proof is the following: divide the curve $\gamma$ into $N$ arcs
\begin{equation}\label{eq: def gamme cut to pieces}
J_k = \gamma[\sigma_{k-1}, \sigma_k]
\end{equation}
$0=\sigma_0 < \sigma_1 < \ldots < \sigma_N=1$ such that $\diam(J_k) \leq R/4$, $k=1,2,\ldots,N$.
Let $J_0 = \partial \disc$. For the event $E(r,R)$, firstly there has to exist a fjord of depth $R$ 
with a mouth formed by some pair $(J_j,J_k)$, $j<k$, and the number of such pairs is less than $N^2$.
Secondly, there has to be a piece of the curve which enters the fjord, hence resulting in an unforced crossing. 
Hence (given $N^2$) the probability that $E(r,R)$ occurs is less than $const. \cdot N^2 (r/R)^\Delta$.

\begin{proof}
Suppose that $0 < r < R/20$. We will specify more carefully in the end of the proof how small $r$ is for given $R$.

It is useful to do this by defining $\sigma_k$ as stopping times by setting $\sigma_k = 0$, $k \leq 0$, and then recursively
\begin{equation*}
\sigma_k =  \sup \left\{ t \in [\sigma_{k-1},1] : \diam \big( \gamma[\sigma_{k-1},t] \big) < \frac{R}{4} \right\} .
\end{equation*}
Let $J_k$, $k>0$, be as in \eqref{eq: def gamme cut to pieces} and let $J_0 = \partial \disc$.
Observe that if the curve is divided into pieces that have diameter at most $R/4-\eps$, $\eps>0$, then
none of these pieces can contain more than one of the $\gamma(\sigma_k)$.
Therefore $N \leq \inf_{\eps>0} M(\gamma,R/4 - \eps) \leq M(\gamma,R/8)$ 
where $M$ is as in Theorem~\ref{thm: a-b}. By that theorem
$N$ is stochastically bounded, which we will use below.

Define also stopping times
\begin{equation}
\tau_{j,k} = \inf \{ \, t \in  [\sigma_{k-1}, \sigma_k] \;:\; \dist ( \gamma(t), J_j ) \leq 2 r \} .
\end{equation}
for $0 \leq j<k$. If the set is empty, let's define the infimum to be equal to $1$.

Suppose that the event $E(r,R)$ occurs. Take a crosscut $C$ and a pair of times $0 \leq s <t \leq 1$
as in the definition of $E(r,R)$. 

Let $V \subset D_s$ be the connected component $D_s \setminus C$ which is disconnected from $+1$ by $C$ in $D_s$.
Let $j< k$ be such that the end points of $C$ are on $J_j$ and $J_k$. Then it holds that 
the stopping time $\tau \dd= \tau_{j,k}<1$ and we can set $z_1 = \gamma(\tau_{j,k})$.
Let $z_2$ be any point on $J_j$ such that $|z_1 - z_2 | = 2r$. 

Let $C' = [z_1,z_2] \dd= \{ \lambda \, z_1 +(1-\lambda) \, z_2 \,:\, \lambda \in [0,1]\}$
and
\begin{equation*}
V' = \{ z \in V \,:\, z \text{ is disconnected from $+1$ by $C'$ in $D_\tau$} \}
\end{equation*}
and let $D' = D_s \setminus V$.

We claim that the event of an unforced crossing of $(A(z_1,2r,R/2))^u_{\tau}$ occurs.

To prove this, notice first that $\partial D' = \partial \disc \cup \gamma[0,t_C] \cup C$ where
$t_C \in [0,1]$ is the unique time such that $\{\gamma(t_C)\} = \overline{C} \cap J_k$, i.e., the point $\gamma(t_C)$ is the
end point of $C$ which lies on $J_k$. 
Therefore $\partial D' \subset (\partial \disc \cup \gamma[0,\tau]) \cup (J_k \cup C)$.
Hence
$(J_k \cup C)$ separates the set $V$ from $+1$ in $D_\tau$.
Since $V \setminus V'$ is separated from $V'$ by $C'$ in $D_\tau$, we see that
$V \setminus V'$ is a subset of the union of the bounded components of $\C \setminus (J_j \cup J_k \cup C \cup C')$.
Consequently $V \setminus V' \subset B(z_1,R/4+ 3r)$.

Now since we known that $V \setminus V' \subset B(z_1,R/4+ 3r)$, 
$\gamma[s,t] \subset V$ and $\gamma[s,t]$ is connected, we can find $[s',t'] \subset (\tau,t]$
such that $\gamma[s',t']$ is a subset of $\overline{V'}$ and it crosses $A(z_1,2r,R/2)$.
Hence we have shown that $\gamma[s,t]$ contains an unforced crossing of $A_{j,k} \dd= A(z_1,2r, R/2)$ as observed at time $\tau=\tau_{j,k}$.
Consequently if we define $E_{j,k} = E_{j,k}(r,R)$ as
\begin{equation}
E_{j,k} = \left\{ \gamma \in \xs(\disc,-1,+1) \,:\, 
  \begin{gathered}
  \gamma[\tau_{j,k},1] \text{ contains a crossing of } A_{j,k} \\
  \text{which is contained in } (A_{j,k})^u_{\tau_{j,k}}
  \end{gathered}  
  \right\}
\end{equation}
we have shown that $E(r,R) \subset \bigcup_{j=0}^\infty \, \bigcup_{k = j+1 }^\infty \, E_{j,k}$.

Let $\eps>0$ and choose $m \in \N$ such that $\P(N > m) \leq \eps/2$ for all $\P \in \Sigma_\disc$.
Now
\begin{align}
\P(E(r,R)) \leq & \P(N > m) + \P\left[ \; \bigcup_{0 \leq j<k} \{N \leq m\} \cap E_{j,k} \right] \nonumber \\
  \leq & \frac{\eps}{2} + \P\left[ \; \bigcup_{0 \leq j<k \leq m} \{N \leq m\} \cap E_{j,k} \right] \nonumber \\
  \leq & \frac{\eps}{2} + \sum_{0 \leq j<k \leq m} \P\left[  \{N \leq m\} \cap E_{j,k} \right] \nonumber \\
  \leq & \frac{\eps}{2} + K m^2 \left( \frac{r}{R} \right)^\Delta  \leq \eps
\end{align}
when $r$ is smaller than $r_0>0$ which depends on $R$ and $\eps$. Here we used the facts that
$\{N \leq m\} \cap E_{j,k}=\emptyset$ when $k > m$ and that 
$\P[  \{N \leq m\} \cap E_{j,k} ] \leq \P[ E_{j,k} ]$.
\end{proof}

\begin{proof}[Proof of Theorem~\ref{thm: hyp geodesic tip}]
In this proof, we work on the unit disc. Fix $\rho>0$ and let $\tau=\tau_\rho$ as above.
Let $D' = \disc \setminus \overline{B(1,\rho)}$. Since $\Phi$ and $\Phi^{-1}$ are uniformly continuous
on $D'$ and $\Phi(D')$, respectively, it is sufficient to prove the corresponding claim for $F_\disc$.
Furthermore it is sufficient to show that $|F_\disc(t,y) - \gamma(t)|\leq \psi(y)$ for $0<y \leq \delta$,
because $y \mapsto F_\disc(t,y)$, $y \in [\delta,1]$, is equicontinuous family by Koebe distortion theorem. 

Let $R_n>0$, $n \in \N$, be any sequence such that $R_n \searrow 0$ as $n \to \infty$.
By the previous proposition, we can choose a sequence $r_n$, $n \in \N$, 
such that $r_n < R_n$ and
\begin{equation}
\P(E(r_n,R_n)) \leq 2^{-n}
\end{equation}
for all $n \in \N$ and for all $\P \in \Sigma_\disc$.
Therefore the random variable $N \coloneq \max\{ n\in \N \,:\, \gamma \in E(r_n,R_n)\}$ is tight:
for each $\eps>0$ there exists $m \in \N$ such that
\begin{equation}\label{eq: six arms tight}
\P( N \leq m) \geq 1 - \eps  
\end{equation}
for all $\P \in \Sigma_\disc$. Fix now $\eps>0$ and let $m \in \N$ be such that \eqref{eq: six arms tight} holds.

Define $n_0(\delta)$ to be the maximal integer such that the inequality 
\begin{equation}
\frac{2\pi}{\sqrt{|\log \delta|}} \leq r_{n_0(\delta)}
\end{equation}
holds.
For given $0<\delta<1$, there is a $\delta' \in [\delta,\delta^{1/2}]$ which can depend on $t$ and $\gamma$ such that
the crosscut $C\coloneq\{ \Phi^{-1} \circ g_t^{-1}(W(t) + i \delta' e^{i \theta} ) \,:\, \theta \in (0,\pi) \}$ has
length less than $2\pi/\sqrt{|\log \delta|}$, see Proposition~2.2 in \cite{pommerenke-1992-}. 

Now if $N > n_0(\delta)$, then there must be a path from $w\coloneq\Phi^{-1} \circ g_t^{-1}(W(t) + i \delta' )$
to $\gamma(t)$ in $D_t$ that has diameter less than $R_{n_0(\delta)}$.
By Gehring-Hayman theorem (Theorem~4.20 in \cite{pommerenke-1992-}) 
the diameter of the hyperbolic geodesic $y \mapsto F_\disc( t,y )$, $0 \leq y \leq \delta'$,  
is of the same order as the smallest possible diameter of the curve which connects $w$ with $\gamma(t)$ in $D_t$. Consequently
there is a universal constant $c>0$ such that
\begin{equation}
\diam\{ F_\disc(t,y) \,:\, y \in [0,\delta]\} \leq c R_{n_0(\delta)}
\end{equation}
for all $t \in [0,\tau]$, for all $\delta>0$ such that $n_0(\delta)>m$ and for all $\gamma$ such that $N \leq m$.
\end{proof}

%
%

\subsection{Proof of the main theorem} \label{ssec: proof main}

\begin{proof}[Proof of Theorem~\ref{thm: main} {(}Main theorem{)}]
Fix $\eps>0$. We will first choose four events $E_k$, $k=1,2,3,4$, that have large probability, namely,
\begin{equation}\label{eq: pmt high probability Ek}
\P(E_k) \geq 1 - \eps/4
\end{equation}
for all $\P \in \Sigma_\disc$. Then those events have large probability occurring simultaneously
since
\begin{equation}
\P\left( \, \bigcap_{k=1}^4 E_k \, \right) \geq 1 - \eps .
\end{equation}
Once we have defined $E_k$, denote $E = \bigcap_{k=1}^4 E_k$.

We choose $E_1$ in such a way that the half-plane capacity of $\gamma[0,t]$ goes to infinity as $t \to \infty$
in a tight way on $\Sigma_\disc$. We use
Proposition~\ref{pr: contdrv} and choose $E_1$ the intersection of the events in 
the inequality~\eqref{ie: prob capacity goes to infinity} where $n = k^2$ runs from $k=m_1$ to $\infty$ where
$m_1$ is chosen so that \eqref{eq: pmt high probability Ek} holds. Then we choose $E_2$ and $E_3$ 
so that
$E_2$ is the set of simple curves  which are \emph{in some parametrization} H\"older continuous 
with a H\"older exponent $\alpha_c>0$ and a H\"older constant $K_c$ 
and 
$E_3$ is the set of simple curves which have \emph{in the capacity parametrization} H\"older continuous driving process
with a H\"older exponent $\alpha_d>0$  and
a finite H\"older norm $K_{d,T}$ for any $T>0$ when the process is restricted to the time interval $[0,T]$.
(Here $K_{d,T}$ is naturally increasing in $T$.)
Using Proposition~\ref{prop: cond implies aizbur} and Theorem~\ref{thm: holder exponent driving}
the constants are chosen so that the bound \eqref{eq: pmt high probability Ek} is satisfied. 
Finally using Theorem~\ref{thm: hyp geodesic tip} 
we set $E_4$ to be the set of simple curves that have function $\psi_\rho$ for each $\rho>0$ as in Theorem~\ref{thm: hyp geodesic tip}
and $\delta>0$ such that the geodesic to the tip is continuous with $|F(t,y) - F(t,y')| \leq \psi_\rho(|y-y'|)$ for $|y-y'| < \delta$
and $t \in [0,\tau_\rho]$.
Also here $\psi$ and $\delta>0$ are chosen so that \eqref{eq: pmt high probability Ek} holds.

Now by Lemma~\ref{lm: main lemma with convergence} of the appendix, the set $E$ is relatively
compact in the convergence in the path convergence and in the driving convergence (and
in the convergence of curves in the capacity parametrization)
and the closure of $E$ is the same in both topologies
as the following argument shows: for a sequence $\gamma_n \in E$ we can choose subsequence such
that $\gamma_n$ converges in $X$ and $W_n$ converges uniformly on compact subsets of $[0,\infty)$
and $F_n$ converges uniformly on compact subsets of $[0,\infty) \times [0,\infty)$.
Then by Lemma~\ref{lm: main lemma with convergence}, the limits agree in the sense that
if we parametrize $\lim_{n\to \infty} \gamma_n$ by the capacity forms a Loewner chain
that is driven by $\lim_{n\to \infty} \gamma_n$.

Since $E$ is precompact in the space of curves, we have shown that 
$\Sigma_\disc$ is a tight family of probability measures
on $X$, especially, and hence by Prohorov's theorem we can choose for any sequence $\P_n \in \Sigma_\disc$ a weakly convergent
subsequence. This shows the first claim.
The claims \ref{enui: main thm a}--\ref{enui: main thm e} 
of Theorem~\ref{thm: main}
follow from taking the closure of $E$ in any of the above topologies. Any subsequent weak limit $\P^*$ of $\P_n \in \Sigma_\disc$
satisfies $\P^*(\overline{E}) \geq \limsup_{n \to \infty} \P_n(\overline{E}) \geq 1 - \eps$. Hence these
claims holds $\P^*$ almost surely.

The last claim of Theorem~\ref{thm: main} on the exponential integrability of the driving process follows similarly from
the claim \ref{enui: pr contdrv 3} of Proposition~\ref{pr: contdrv}.
\end{proof}

\subsection{The proofs of the corollaries of the main theorem}\label{ssec: proof corollaries}

In this section we will prove Corollaries~\ref{cor: driving to curve} and \ref{cor: convergence in general domains}. 

\begin{proof}[Proof of Corollary~\ref{cor: driving to curve}]
If $\gamma^{(n)}$ satisfy Condition~\ref{def: b unf crossing} and its law is $\P_n$, then by (the proof of) Theorem~\ref{thm: main}
for each $\eps>0$ we can choose an event $E$ satisfying
$\inf_{n} \P_n (E) \geq 1 -\eps$ such that $E$ is relatively compact in all three topologies of the statement of 
Corollary~\ref{cor: driving to curve}. This fact follows from Lemma~\ref{lm: main lemma with convergence}
when for any sequence $\tilde{\gamma}_n \in E$ we pass to a subsequence where $\tilde{\gamma}_{n_k}$
converges in $X$, its driving term converges uniformly on compact intervals and the hyperbolic geodesic $F_n$
converges on compact sets. By Lemma~\ref{lm: main lemma with convergence}, we get the convergence in the 
capacity parametrization and, in addition, it holds that these limits agree in the sense that the limiting curve is
driven by the limiting driving term. Since $E$ is relatively compact, the sequence $\P_n$ is tight
in the same topology.

By this tightness, we see that if the sequence of random curves $\gamma^{(n)}$ or the sequence of driving
processes  $W^{(n)}$ converges in one of the three topologies, it converges also in the two other topologies.
The argument for this is essentially the same as above. We pass to a subsequence where the convergence
takes place also in the other topology. Then we notice that the sequence of the laws satisfies 
$\inf_{n} \P_n (E) \geq 1 -\eps$ hence the probability for the limiting objects to agree in the above sense
is at least $1-\eps$. Since this holds for any $\eps$, the law of the other limiting object is uniquely determined.
Therefore there is no need to pass to a subsequence, but the entire sequence converges.
\end{proof}

For the proof of Corollary~\ref{cor: convergence in general domains}
notice first that
by the proof of \ref{cond:quadexp}$\Rightarrow$\ref{def: b unf crossing} 
in Section~\ref{ssec: equiv cond} we have constants $C_1,C_2$ such that if 
$Q \subset U$ is a simply connected domain, whose boundary consists of a subset of $\partial U$ and some subsets of $U$
which are crosscuts
$S_0$ and $S_2^j$, $j = 1,2, \ldots$, (finite or infinite set), and if $Q$ has the property that it doesn't disconnect $a$ from $b$
and $S_0$ is the ``outermost'' of the crosscuts (disconnecting the others from $a$ and $b$),
then
\begin{equation}\label{ie: generalized quad crossing bound}
\P(\gamma \text{ crosses } Q) \leq C_1 \exp( - C_2 \elen(Q))
\end{equation}
where crossing means that $\gamma$ intersects one of the $S_2^j$'s and $\elen(Q)$ is the extremal length of the curve family
connecting $S_0$ to $\bigcup_j S_2^j$. Use the notation $S_0(Q)$ for the outermost crosscut and $\mathcal{S}_2(Q)$ for
the collection of $S_2^j$, $j = 1,2,\ldots$.

\begin{lemma}\label{lm: fjords}
Let $(U,a,b,\P)$ be a domain and a measure such that \eqref{ie: generalized quad crossing bound} with some 
$C_1$ and $C_2$ is satisfied for all $Q$
as above. Then for each $\eps>0$ and $R>0$ there is $\delta$ which only depends on $C_1,C_2,\eps,R$ and $\mathrm{area}(U)$
such that the following holds.
Let $Q_j$, $j \in I$ be a collection of quadrilaterals satisfying the conditions above
such that $\diam(S_0(Q_j)) < \delta$ for all $j$ and the length of the shortest path from $S_0(Q_j)$ to $\mathcal{S}_2(Q_j)$
is at least $R$. Then
\begin{equation}
\sum_{j \in I} \P( \gamma \text{ crosses } Q_j ) \leq \eps .
\end{equation}
\end{lemma}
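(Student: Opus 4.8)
The plan is to reduce the statement to a deterministic extremal-length estimate for each $Q_j$ and then sum over $j\in I$ using the area of $U$ as a budget. By the assumed bound \eqref{ie: generalized quad crossing bound} we have $\P(\gamma\text{ crosses }Q_j)\le C_1\exp(-C_2\elen(Q_j))$, so it suffices to prove two deterministic facts about the quadrilaterals: (i) a \emph{uniform} lower bound $\elen(Q_j)\ge\lambda(\delta)$, where $\lambda(\delta)$ depends only on $\delta$ and $R$ and $\lambda(\delta)\to\infty$ as $\delta\to0$; and (ii) $\sum_{j\in I}1/\elen(Q_j)\le\mathrm{area}(U)/R^2$. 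Granting these, I would finish as in the proof of \ref{cond:quadexp}$\Rightarrow$\ref{def: b unf crossing}: the function $F(x)=\exp(-C_2/x)$ is increasing, convex and sublinear on $[0,C_2/2]$, so $F(x)\le(x/x_0)F(x_0)$ whenever $0\le x\le x_0\le C_2/2$. Applying this with $x_0=1/\lambda(\delta)$ (which lies in $[0,C_2/2]$ once $\delta$ is small) and $x=1/\elen(Q_j)\le x_0$ gives
\begin{align*}
\sum_{j\in I}\P(\gamma\text{ crosses }Q_j)
&\le C_1\sum_{j\in I}F\!\left(\frac{1}{\elen(Q_j)}\right)\\
&\le C_1\,\lambda(\delta)\,F\!\left(\frac{1}{\lambda(\delta)}\right)\sum_{j\in I}\frac{1}{\elen(Q_j)}
\;\le\; C_1\,\frac{\mathrm{area}(U)}{R^2}\,\lambda(\delta)\,e^{-C_2\lambda(\delta)},
\end{align*}
and since $\lambda(\delta)e^{-C_2\lambda(\delta)}\to0$ as $\delta\to0$, choosing $\delta$ small enough --- depending only on $C_1,C_2,R,\eps$ and $\mathrm{area}(U)$ --- makes the right-hand side at most $\eps$. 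Fact (ii) is easy: the constant metric $\rho\equiv\ind_{Q_j}/R$ is admissible for the family of curves in $Q_j$ joining $S_0(Q_j)$ to $\mathcal{S}_2(Q_j)$, since by hypothesis every such curve has length at least $R$; hence $1/\elen(Q_j)\le\int\rho^2\,\de A=\mathrm{area}(Q_j)/R^2$, and since the $Q_j$ have pairwise disjoint interiors (as in the situations where the lemma is applied) we get $\sum_{j\in I}\mathrm{area}(Q_j)\le\mathrm{area}(U)$, which is (ii).

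The main work, and the step I expect to be the real obstacle, is the uniform estimate (i): a priori $Q_j$ can be large and geometrically wild, so its modulus need not be large merely because the mouth $S_0(Q_j)$ is thin. To handle this I would fix $j$, pick $z_0\in S_0(Q_j)$ and set $\delta_1=\sqrt{\delta}$, so that $S_0(Q_j)\subset\overline{B(z_0,\delta)}\subset B(z_0,\delta_1)$. Every curve $\gamma'$ in $Q_j$ joining $S_0(Q_j)$ to $\mathcal{S}_2(Q_j)$ starts on $S_0(Q_j)$, stays in $Q_j$, and has length $\ge R$; hence either $\gamma'$ leaves $B(z_0,\delta_1)$ --- and then contains a crossing of the annulus $A(z_0,\delta,\delta_1)$ --- or $\gamma'$ stays inside $B(z_0,\delta_1)$ --- and then is a curve of length $\ge R$ contained in $B(z_0,\delta_1)\cap Q_j$, a set of area at most $\pi\delta_1^2=\pi\delta$. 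I would then test with $\rho=\max(\rho_1,\rho_2)$, where $\rho_1$ is the standard extremal metric of $A(z_0,\delta,\delta_1)$ (so $\int\rho_1^2\,\de A=2\pi/\log(\delta_1/\delta)$ and any annulus crossing has $\rho_1$-length $\ge1$, cf. Lemma~\ref{lm: elen annulus}) and $\rho_2=\ind_{B(z_0,\delta_1)\cap Q_j}/R$. By the dichotomy $\rho$ is admissible, and $\int\rho^2\,\de A\le\int\rho_1^2\,\de A+\int\rho_2^2\,\de A\le 4\pi/\log(1/\delta)+\pi\delta/R^2$, whence $\elen(Q_j)\ge\lambda(\delta)\coloneq\bigl(4\pi/\log(1/\delta)+\pi\delta/R^2\bigr)^{-1}$, which depends only on $\delta$ and $R$ and tends to $+\infty$ as $\delta\to0$.

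The crucial point in (i) --- and the reason this works despite $Q_j$ possibly being large --- is that the ``interior'' crossings of $Q_j$ are confined to the small ball $B(z_0,\delta_1)$, so their contribution to the $\rho$-area is controlled by $\pi\delta_1^2$ regardless of how wild $Q_j$ is; this is exactly where the smallness of $\diam S_0(Q_j)$ enters, and it is what makes $\lambda(\delta)$ independent of $j$. The remaining items --- admissibility of the two test metrics, the elementary convexity of $F$ on $[0,C_2/2]$, and the identity $\log(\delta_1/\delta)=\tfrac12\log(1/\delta)$ --- are routine, and I would not dwell on them. The one point to be careful about in the write-up is the disjointness (or at least bounded overlap) of the $Q_j$, which is what lets the total area $\mathrm{area}(U)$ serve as the summability budget in (ii).
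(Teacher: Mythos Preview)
Your proof is correct. It rests on the same two extremal-length lower bounds as the paper --- a logarithmic annulus estimate giving a uniform $\elen(Q_j)\ge\lambda(\delta)\to\infty$, and an area estimate giving $1/\elen(Q_j)\le\mathrm{area}(Q_j)/R^2$ --- but packages them differently. The paper records these as $\elen(Q_j)\ge\log(R/\delta)/(2\pi)$ and $\elen(Q_j)\ge(R-\delta)^2/A_j$ and then performs a case split: on $I_1=\{j:A_j\ge\delta^{C_2/(4\pi)}\}$ it uses the first bound together with a cardinality estimate $|I_1|\le\mathrm{area}(U)\,\delta^{-C_2/(4\pi)}$, and on $I\setminus I_1$ it uses the second bound and the crude inequality $\exp(-\tilde C/x)<x^2$ for small $x$. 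You avoid the split entirely by invoking the convexity/sublinearity of $F(x)=\exp(-C_2/x)$ on $[0,C_2/2]$ (exactly as in the proof of \ref{cond:quadexp}$\Rightarrow$\ref{def: b unf crossing}), which lets the uniform bound (i) and the summable bound (ii) combine directly into a bound on $\sum_j F(1/\elen(Q_j))$. This is arguably cleaner.

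Two incidental differences are worth noting. Your derivation of (i) via $\rho=\max(\rho_1,\rho_2)$ is a bit more careful than the paper's bare appeal to Lemma~\ref{lm: elen annulus}: you explicitly handle crossings that accumulate path length $\ge R$ without leaving a small neighborhood of $S_0(Q_j)$, whereas the paper's first estimate tacitly assumes each crossing exits $B(z_j,R)$. And your proof of (ii) via the constant metric $\ind_{Q_j}/R$ is simpler than the paper's Cauchy--Schwarz computation for the same inequality. The disjointness of the $Q_j$ that you flag at the end is equally implicit in the paper's proof and does hold in the application to Corollary~\ref{cor: convergence in general domains}.
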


\begin{proof}
Take any $\delta$-ball $B(z_j,\delta)$ that contains the crosscut $S_j \dd = S_0(Q_j)$.
The standard estimate of extremal length in Lemma~\ref{lm: elen annulus} gives that
\begin{equation}
\elen(Q_j) \geq \frac{\log\frac{R}{\delta}}{2 \pi} .
\end{equation}
We claim also that
\begin{equation}
\elen(Q_j) \geq \frac{(R-\delta)^2}{A_j} .
\end{equation}
To prove the second inequality fix $j$ for the time being. Let 
\begin{equation*}
\eta(r) = \{ z \in \C \,:\, |z-z_j|=r, z\in Q_j\}
\end{equation*}
Define a metric $\rho: \C \to \R_+$ by setting $\rho(z) = 1/\Lambda(\eta(r))$, 
if $z \in \eta(r)$, and $\rho(z)=0$, otherwise. Here $\Lambda$ is again the arc length.
Then for any crossing $\gamma$ of $Q_j$
\begin{align}
\mathrm{length}_\rho(\gamma) & \geq \int_\delta^R \frac{\de r}{\Lambda(\eta(r))}  \\
\mathrm{area}(\rho) & = \int_\delta^r \Lambda(\eta(r)) \frac{\de r}{\Lambda(\eta(r))^2} = \int_\delta^R \frac{\de r}{\Lambda(\eta(r))}
\end{align}
Now the claim follows from the Cauchy-Schwarz inequality
\begin{equation}
\int_\delta^R \frac{\de r}{\Lambda(\eta(r))} \, A_j \geq \int_\delta^R \frac{\de r}{\Lambda(\eta(r))} \int_\delta^R \Lambda(\eta(r))\de r
  \geq \left( \int_\delta^R \de r \right)^2 = (R - \delta)^2
\end{equation}
and the lower bound $\elen(Q_j) \geq \inf_\gamma \mathrm{length}_\rho(\gamma)^2 / \mathrm{area}(\rho)$.

Fix some $\eps>0$.
Let $I_1 \subset I$ be the set of all $j \in I$ such that $A_j \geq \delta^{\frac{C_2}{4 \pi}}$.
Then since $Q_j$ are disjoint, the number of elements in $I_1$ is at most $\mathrm{area}(U) \delta^{-\frac{C_2}{4 \pi}}$
\begin{align*}
\sum_{j \in I_1} \P( \gamma \text{ crosses } Q_j ) & \leq C_1 \sum_{j \in I_1} \exp\left( - C_2 \frac{\log\frac{R}{\delta}}{2 \pi} \right) \\
& = C_1 \, \mathrm{area}(U) \, R^{-\frac{C_2}{2 \pi}} \, \delta^{\frac{C_2}{4 \pi}} \leq \frac{\eps}{2}
\end{align*}
when $\delta$ is small, more precisely, when $0 < \delta < \delta_1$ where $\delta_1$ depends on
$C_1$, $C_2$, $\mathrm{area}(U)$, $R$ and $\eps$ only.

On the other hand, on $I \setminus I_1$, $A_j < \delta^{\frac{C_2}{4 \pi}}$ and therefore
\begin{align}
\sum_{j \in I \setminus I_1} \P( \gamma \text{ crosses } Q_j ) 
  & \leq C_1 \sum_{j \in I \setminus I_1} \exp\left( - C_2 \frac{(R-\delta)^2}{A_j} \right) 
  \leq C_1 \sum_{j \in I \setminus I_1} A_j^2 \\
  & \leq C_1 \,  \delta^{\frac{C_2}{4 \pi}} \, \sum_{j \in I \setminus I_1} A_j
  \leq C_1 \, \mathrm{area}(U) \, \delta^{\frac{C_2}{4 \pi}} \leq \frac{\eps}{2}
\end{align}
for $0 < \delta < \delta_2$ where $\delta_2 = \delta_2 (C_1,C_2,R,\mathrm{area}(U),\eps)$.
Here we used that $\exp(-\tilde{C}/x) < x^2$ when $0 < x < x_0(\tilde{C})$.
\end{proof}

Suppose now that $(U_n,a_n,b_n)$ converges in the Carath\'eodory sense to $(U,a,b)$.
We call a subset $V$ of $U_n$ a $(\delta,R)$-\emph{fjord} if it is a connected component of $U_n \setminus S$ for some crosscut $S$
of $U_n$ such that $\diam(S) \leq \delta$, $S$ disconnects $V$ from $a_n$ and $b_n$ and the set of points $z \in V$ such that
$\dist_{U_n}(z,S) \geq R$ is non-empty,  
where $\dist_{U_n}$ is the distance inside $U_n$, i.e., the length of the shortest path connecting the two sets.
The crosscut $S$ is called the \emph{mouth} of the fjord.

\begin{proof}[Proof of Corollary~\ref{cor: convergence in general domains}]
By the assumptions $U_n \subset B(0,M)$, for some $M>0$.

The precompactness of the family of measures $(\P_n)_{n \in \N}$ when restricted outside of neighborhoods of $a_n$ and $b_n$
follows from the results of Section~\ref{ssec: aizburch}.
So it is sufficient to establish that the subsequential measures are supported on the curves of $U$ (when restricted outside of the
neighborhoods of $a$ and $b$).

Fix $0 < \delta_1 < 1/2$. For $\delta>0$ small enough and for all $n$ there is a (unique) connected component of the open set
\begin{equation}
\phi_n^{-1} \left( \disc \cap \left( B(-1,\delta_1) \cup B(1,\delta_1) \right) \right) \cup \{ z \,:\, \dist(z,\partial U_n) > \delta \}
\end{equation}
which contains the corresponding neighborhoods of $a_n$ and $b_n$. Call it $\hat{U}_n^\delta$. For $R>0$
define
\begin{equation}\label{eq: cor gen dom event}
P(R,\delta,n) = \P( \exists t \in [0,1] \text{ s.t. } \dist_{U_n}(\gamma(t),\hat{U}_n^\delta) \geq 2R ) .
\end{equation}

Suppose now that the event in \eqref{eq: cor gen dom event} happens then $\gamma$ has to enter one of the
$(3 \delta,R)$-fjords in depth $R$ at least. By approximating the mouths of the fjords from outside by curves in $3\delta$-grid
(either real or imaginary part of the point on the curve belongs to $3\delta\Z$) and by exchanging some parts of curves if they intersect,
we now define a finite collection of fjords with mouths $S_j$ on the grid
which are pair-wise disjoint. And the event in \eqref{eq: cor gen dom event} implies that
$\gamma$ enters one of these fjords to depth $R$ at least. Denote the set of points in the fjord of $S_j$ that are at most at distance $R$ to
$S_j$ by $Q_j$.

Now by Lemma~\ref{lm: fjords}, for each $\eps>0$ and $R>0$, there exists $\delta_0$ which is independent of $n$ such that
for each $0 < \delta < \delta_0$, 
\begin{equation}
P(R,\delta,n) \leq \sum_j \P_n(\gamma \text{ crosses } Q_j) \leq \eps
\end{equation}
Choose sequences $\eps_m = 2^{-m}$, $R_m = 2^{-m}$ and $\delta_m \searrow 0$ such that this estimate is satisfied.
Then we see that the sum $\sum_{m=1}^\infty P(R_m,\delta_m,n)$ is uniformly convergent for all $n$. Hence by the Borel--Cantelli lemma
for any subsequent limit measure $\P^*$, the curve $\gamma$ restricted outside $\delta_1$ neighborhoods of $a$ and $b$ stay in 
the closure of
\begin{equation}
\bigcup_{\delta>0} \lim_{n \to \infty} \hat{U}_n^\delta 
  \setminus \phi_n^{-1} \left( \disc \cap \left( B(-1,\delta_1) \cup B(1,\delta_1) \right) \right)
\end{equation}
which gives the claim.
\end{proof}



\section{Interfaces in statistical physics and 
Condition~\ref{def: b unf crossing}} \label{sec: checking condition} 

In this section, we prove (or in some cases survey the proof)
that the interfaces in the following models satisfy Condition~\ref{def: b unf crossing}:
\begin{itemize}
\item Fortuin--Kasteleyn model with the parameter value $q=2$, a.k.a. FK Ising, at criticality on the square lattice or
on a isoradial graph,
\item Fortuin--Kasteleyn model with a general parameter value $q \geq 1$,
this result holds \emph{conditionally} on a bound for 
the probability of a certain crossing event in a quadrilateral,
\item Ising model at criticality on the square lattice or
on a isoradial graph,
\item Site percolation at criticality on the triangular lattice,
\item Harmonic explorer on the hexagonal lattice,
\item Loop-erased random walk on the square lattice.
\end{itemize}
We also comment why Condition~\ref{def: b unf crossing} fails for uniform spanning tree.


\subsection{Fortuin--Kasteleyn model} \label{ssec: fk model}  

In Section~\ref{sssec: fk on general graph} we define the FK model, also known as random cluster model, on a general graph
and state the FKG inequality which is needed when verifying Condition~\ref{def: b unf crossing}.
Then in Sections~\ref{sssec: medial lattice}--\ref{sssec: fk model} we define carefully the model
on the square lattice. As a consequence it is possible to define the interface as a \emph{simple} curve
and the set of domains is \emph{stable} under growing the curve. Neither of these properties is absolutely necessary
but the former was a part of the standard setup that we chose to work in and the latter 
makes the verification of Condition~\ref{def: b unf crossing} slightly easier.
Finally, in Section~\ref{sssec: cond for fk ising}
we prove that Condition~\ref{def: b unf crossing} holds for the critical FK Ising model on the square lattice.

\subsubsection{FK Model on a general graph} \label{sssec: fk on general graph}

Suppose that $G=(V(G),E(G))$ is a finite graph, which is allowed to be a multigraph, that is,
more than one edge can connect a pair of vertices. 
For any $q > 0$ and $p \in (0,1)$, define a probability
measure on $\{0,1\}^{E(G)}$ by
\begin{equation} \label{eq: random cluster}
\mu_{G}^{p,q} (\omega)= \frac{1}{Z_G^{q,p}} \left( \frac{p}{1-p} \right)^{|\omega|} q^{k(\omega)}
\end{equation}
where $|\omega| = \sum_{e \in E(G)} \omega(e)$, $k(\omega)$ is the number of connected components in
the graph $(V(G),\omega)$ and $Z_G^{p,q}$ is the normalizing constant making the measure a probability measure.
This random edge configuration is called the \emph{Fortuin-Kasteleyn model} (FK) or the \emph{random cluster model}.

Suppose that there is a given set $E_W \subset E(G)$ which is called the set of \emph{wired edges}.
Write $E_W = \bigcup_{i=1}^n E_W^{(i)}$ where $(E_W^{(i)})_{i=1,2,\ldots,n}$ are the connected components of $E_W$.
Let $P$ be a partition of $\{1,2,\ldots,n\}$. In the set
\begin{equation}
\Omega_{E_W} = \{ \{0,1\}^{E(G)} \,:\, \omega(e) = 1 \textnormal{ for any }  e\in E_W \}
\end{equation}
define a function $k_P(\omega)$ to be the number of connected components in $(V(G),\omega)$ counted in a way that
for any $\pi \in P$ all the connected components $E_W^{(i)}$, $i \in \pi$, are counted to be in the same connected component.
The reader can think that for each $\pi \in P$ we add a new vertex $v_\pi$ to $V(G)$ and connect $v_\pi$ to a vertex in every
$E_W^{(i)}$, $i \in \pi$, by an edge which we then add also to $E_W$ and hence in the new graph there are
exactly $|P|$ connected components of the wired edges and each of those components contain exactly one $v_\pi$.
Call this new graph $\hat{G}$ and 
the new set of wired edges $\hat{E}_W$, which are defined once we give the triplet $(G,E_w,P)$.
Now the random-cluster measure with wired edges is defined on $\Omega_{E_W}$ to be
\begin{equation} \label{eq: random cluster wired}
\mu_{G,E_W,P}^{p,q} (\omega)= \frac{1}{Z_{G,E_W,P}^{p,q}} \left( \frac{p}{1-p} \right)^{|\omega|} q^{k_P (\omega)} 
\end{equation}
where we use the partition dependent $k_P$ which was defined above.
It is easy to check that if $\hat{G} / \hat{E}_W$ is defined to be the graph obtained when each component
of $\hat{E}_W$ is contracted to a single vertex $v_\pi$ (all the other edges going out of that set are kept and now have
$v_\pi$ as one of their ends)
then we have the identity
\begin{equation} \label{eq: random cluster wired suppressed}
\mu_{G,E_W,P}^{p,q} (\omega)=\mu_{\hat{G} / \hat{E}_W }^{p,q} (  \omega' )
\end{equation}
where $\omega'$ is the restriction of $\omega$ to $E(G) \setminus E_W$.
Therefore the more complicated measure \eqref{eq: random cluster wired} with wired edges can always be 
returned to the simpler one \eqref{eq: random cluster}. If $E_W$ is connected, then there is only one partition
and we can use the notation $\mu_{G,E_W}^{p,q}$. Sometimes we omit some of the subscripts if they are otherwise known.

A function $f: \{0,1\}^{E(G)} \to \R$ is said to be \emph{increasing} if $f(\omega) \leq f(\omega')$ whenever 
$\omega(e) \leq \omega'(e)$ for each $e \in E(G)$.
A function $f$ is \emph{decreasing} if $-f$ is increasing. An event $F \subset \{0,1\}^{E(G)}$ is increasing or decreasing
if its indicator function $\ind_F$ is increasing or decreasing, respectively.

A fundamental property of the FK models is the following inequality.
\begin{theorem}[FKG inequality] \label{thm: fkg}
Let $q \geq 1$ and $p \in (0,1)$ and let $G=(V(G),E(G))$ be a graph. 
If $f$ and $g$ are increasing functions on $\{0,1\}^{E(G)}$ then
\begin{equation}\label{ie: fkg}
\E (fg) \geq \E (f) \E (g) 
\end{equation}
where $\E$ is expected value with respect to $\mu_{G}^{p,q}$.
\end{theorem}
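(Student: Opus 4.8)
The plan is to deduce the FKG inequality from the \emph{FKG lattice condition} together with the general Fortuin--Kasteleyn--Ginibre theorem. Recall the latter: if $\mu$ is a strictly positive probability measure on the finite distributive lattice $\{0,1\}^{E(G)}$, ordered coordinatewise, with join $\omega\vee\omega'$ (coordinatewise maximum) and meet $\omega\wedge\omega'$ (coordinatewise minimum), and if
\[
\mu(\omega\vee\omega')\,\mu(\omega\wedge\omega')\;\geq\;\mu(\omega)\,\mu(\omega')
\qquad\text{for all }\omega,\omega'\in\{0,1\}^{E(G)},
\]
then $\E_\mu(fg)\geq\E_\mu(f)\,\E_\mu(g)$ for every pair of increasing functions $f,g$. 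This is classical and I would simply quote it (it follows, for instance, from the Ahlswede--Daykin four-functions inequality, or from Holley's inequality; see Grimmett's monograph on the random-cluster model). Since $q>0$ and $p\in(0,1)$ force $\mu_G^{p,q}(\omega)>0$ for every configuration $\omega$, the only thing to check is that $\mu_G^{p,q}$ satisfies the lattice condition.

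For that, write $\mu_G^{p,q}(\omega)=Z^{-1}a^{|\omega|}q^{k(\omega)}$ with $a=p/(1-p)>0$ and $Z=Z_G^{p,q}$. Because $\omega\vee\omega'$ and $\omega\wedge\omega'$ are formed coordinatewise, $|\omega\vee\omega'|+|\omega\wedge\omega'|=|\omega|+|\omega'|$, so the factor $a^{|\cdot|}$ contributes an \emph{equality} to the lattice inequality and can be ignored. After taking logarithms and using $q\geq1$, the lattice condition is therefore implied by the purely combinatorial statement
\[
k(\omega\wedge\omega')+k(\omega\vee\omega')\;\geq\;k(\omega)+k(\omega').
\]
Identifying $\omega$ and $\omega'$ with the edge sets $F,F'\subseteq E(G)$ on which they take the value $1$, one has $\omega\vee\omega'\leftrightarrow F\cup F'$ and $\omega\wedge\omega'\leftrightarrow F\cap F'$; since $k(F)=|V(G)|-r(F)$ where $r(F)$ is the size of a spanning forest of $(V(G),F)$ (the rank of the graphic matroid), the displayed inequality is precisely submodularity of the matroid rank, $r(F\cup F')+r(F\cap F')\leq r(F)+r(F')$. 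I would prove this by hand rather than invoke matroid theory: enumerate $F'\setminus F=\{e_1,\dots,e_m\}$ and run in parallel the two edge-addition processes, one building $F'$ out of $F\cap F'$ and the other building $F\cup F'$ out of $F$, adding $e_1,\dots,e_m$ in the same order. Adding one edge lowers the number of connected components by $0$ or $1$; and at every step the graph appearing in the second process contains the graph appearing in the first process, so if adding $e_{i+1}$ fails to merge two components in the first process it also fails in the second. Hence the per-step decrease in the second process is at most the per-step decrease in the first, and summing over $i$ gives $k(F)-k(F\cup F')\leq k(F\cap F')-k(F')$, which rearranges to the claim.

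The only real content here is the component-count inequality; everything else is standard FKG machinery that I would cite. If a self-contained argument for the general FKG theorem were wanted, I would instead include the short monotone-coupling proof: run the single-site heat-bath (Glauber) dynamics that is reversible for $\mu_G^{p,q}$, note that the lattice condition makes this dynamics monotone and hence admits a grand coupling preserving the coordinatewise order, and compare a chain started from $\mu_G^{p,q}$ with one started from the all-open configuration, letting time tend to infinity. I expect the main (and essentially only) obstacle to be the component inequality, and in particular making transparent where $q\geq1$ enters: it is exactly at the step combining $q\geq1$ with $k(F\cup F')+k(F\cap F')\geq k(F)+k(F')$. For $q<1$ this step reverses, the lattice condition fails, and FKG is genuinely false — so the hypothesis $q\geq1$ is used precisely there and cannot be removed.
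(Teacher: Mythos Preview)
Your argument is correct and is the standard proof of the FKG inequality for the random-cluster model: verify the FKG lattice condition by combining $|\omega\vee\omega'|+|\omega\wedge\omega'|=|\omega|+|\omega'|$ with the supermodularity $k(\omega\vee\omega')+k(\omega\wedge\omega')\geq k(\omega)+k(\omega')$ of the component count, then invoke the general FKG/Holley machinery. Your edge-addition proof of the supermodularity is clean and correct.

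The paper itself does not prove this theorem at all; it merely states it and writes ``For the proof see Theorem~3.8 in \cite{grimmett-2006-}.'' So there is no approach to compare against --- but what you have written is essentially the argument one finds in that reference, so you have supplied the omitted proof rather than an alternative one.
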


\begin{remark}
As explained above the measure $\mu_{G}^{p,q}$ can be replaced by any measure
conditioned to have wired edges.
\end{remark}

For the proof see Theorem~3.8 in \cite{grimmett-2006-}. 
The edges where $\omega(e)=1$ are called \emph{open} and 
the edges where $\omega(e)=0$ are called \emph{closed}. The property~\eqref{ie: fkg} is called positive association and
it means essentially that knowing that certain edges are open increases the probability for the other edges to be open. 

It is well known that the FK model with parameter $q$ is connected to the Potts model with parameter $q$. Here we are interested
in the model connected to the Ising model and hence we mainly focus to the case $q=2$ which is called \emph{FK Ising} (model).

\subsubsection{Modified medial lattice}\label{sssec: medial lattice}

Consider the planar graph $(\Z^2)_{\mathrm{even}}$ formed by the set of vertices
$\{ (i,j) \in \Z^2 : i+j \textnormal{ even}\}$ and the set of edges so that $(i,j)$ and $(k,l)$ are
connected by an edge if and only if $|i-k|=1=|j-l|$. Similarly define $(\Z^2)_{\mathrm{odd}}$ which
can be seen as a translation of $(\Z^2)_{\mathrm{even}}$ by the vector $(1,0)$, say.
Both $(\Z^2)_{\mathrm{even}}$ and $(\Z^2)_{\mathrm{odd}}$ are square lattices.
Figure~\ref{sfig: bathrooma} shows a chessboard coloring of the plane. In that figure, the vertices of
$(\Z^2)_{\mathrm{even}}$ are the centers of the blue squares, say, and 
the vertices of $(\Z^2)_{\mathrm{odd}}$ are the centers of the red squares, and 
two vertices (of the same color) are connected by an edge if the corresponding squares touch by corners. 
Note also that $(\Z^2)_{\mathrm{even}}$ and $(\Z^2)_{\mathrm{odd}}$ are the dual graphs of each other.

\begin{figure}[tb!]
\centering
\subfigure[%
The chessboard coloring holds within three square lattices:
$(\Z^2)_{\mathrm{even}}$ (blue dots and lines), $(\Z^2)_{\mathrm{odd}}$ (red dots and lines) and
the medial lattice $\hat{L}$ (black dots and lines).] 
{
	\label{sfig: bathrooma}
	\includegraphics[scale=1]
{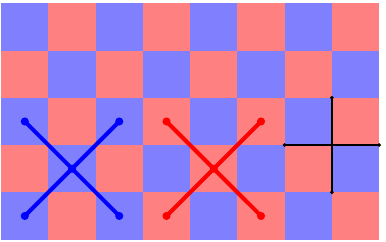}
} 
\hspace{0.5cm}
\subfigure[The modified medial lattice $L$ is formed when every vertex of $\hat{L}$
is replaced by a square. The dual lattice of $L$ is called bathroom tiling for obvious reasons.
] 
{
	\label{sfig: bathroomb}
	\includegraphics[scale=1]
{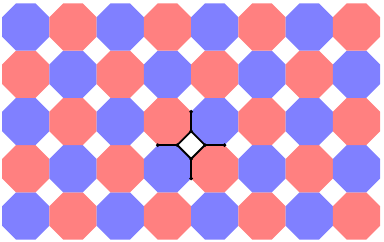}
}
\subfigure[An admissible domain: here $c_1$ and $c_2$ agree on the beginning and end and they
are otherwise avoiding each other and the domain they cut from the bathroom tiling has boundary
consisting of two monochromatic arcs.]
{
	\label{sfig: bathroomc}
	\includegraphics[scale=1]
{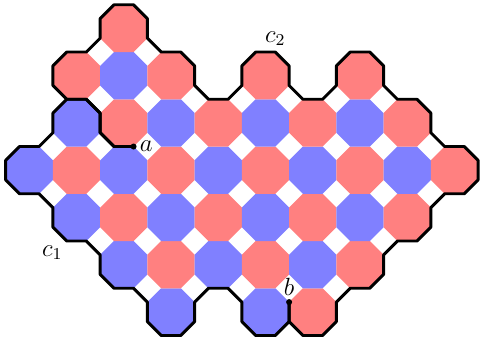}
}
\caption{Modified medial lattice and its admissible domain.} 
\label{fig: bathroom}
\end{figure}

Let $\hat{L}=(\Z + 1/2)^2$, i.e., the graph formed by the vertices and the edges of the colored squares
in the chessboard coloring. It is called
the \emph{medial lattice} of $(\Z^2)_{\mathrm{even}}$ and its dual $(\Z^2)_{\mathrm{odd}}$.
Note that vertices of $\hat{L}$ are exactly those points
where an edge of $(\Z^2)_{\mathrm{even}}$ and an edge of $(\Z^2)_{\mathrm{odd}}$ intersect.

It is useful to modify the medial lattice slightly. At each vertex of $\hat{L}$ position a white square so that
the corners are lying on the edges of $\hat{L}$. The size of the square can be chosen so that the resulting blue and red octagons
are regular. See Figure~\ref{sfig: bathroomb}. Denote the graph formed by the vertices and the edges of the octagons
by $L$ and call it \emph{modified medial lattice} of $(\Z^2)_{\mathrm{even}}$ (or $(\Z^2)_{\mathrm{odd}}$).
The dual of $L$, i.e. the blue and red octagons and the white squares
(or rather their centers), is called the \emph{bathroom tiling}.

Similarly, it is possible to define the modified medial lattice of a general planar graph $G$.
For each middle point of an edge put a vertex of $\hat{L}$. 
Go around each vertex of $G$ and connect any vertex of $\hat{L}$ to its successor by
an edge. The resulting graph is the medial graph. Notice that each vertex has degree four and hence it is possible to replace each
vertex by an quadrilateral. The result is the modified medial lattice.

\subsubsection{Admissible domains}\label{sssec: admissible}

Suppose that we are given two paths 
$(c_j(k))$, $j=1,2$, on the modified medial lattice and $k$ runs over the values $0,1,\ldots,n_j$,
that satisfy the following properties:
\begin{itemize}
\item Each $c_j$ is simple and has only blue and white faces of the bathroom tiling on its one side and red 
and white faces on the other side.
\item The first (directed) edges $(c_j(0),c_j(1))$ coincide and the edge is between a blue and a red face. 
Denote by $a$ the common starting point of $c_j$, $j=1,2$.
\item The last edges $(c_j(n_j-1),c_j(n_j))$ coincide and the edge is between a blue and a red face. 
Denote by $b$ the common ending point of $c_j$, $j=1,2$.
\item The paths $c_j$ may have arbitrarily long common beginning and end parts, but otherwise they are avoiding each other.
\item The unique connected component of $\C \setminus \bigcup \hat{c}_j$ which is bounded, has $a$ and $b$ on its boundary,
where $\hat{c}_j$ is the locus of the polygonal line corresponding to vertices $c_j(k)$, $0 \leq k \leq n_j$. 
Denote this component by $U=U(c_1,c_2)$.
\end{itemize}
Let's call a pair $(c_1,c_2)$ satisfying these properties an \emph{admissible boundary}
and $U=U(c_1,c_2)$ is called \emph{admissible domain}. Let's use a shortened notation that $U$ contains the information
how $c_1$ and $c_2$ or $a$ and $b$ are chosen.

Suppose that $(\gamma(k))_{0 \leq k \leq l}$ is a path on the modified medial lattice that starts from $a$ and possibly ends at $b$ 
but is otherwise avoiding $c_1$ and $c_2$. Suppose also that $\gamma$ has the property that 
it has only blue and white faces on one side
and only red and white faces on the other side. Call this kind of path \emph{admissible path}.
If $\gamma(k)$, $0 \leq k \leq 2m <l$ and $c_j$ are concatenated in a natural way (they have only one common point $a$)
as a curve from $\gamma(2m)$ to $b$ and this curve is
denoted as $c_{j,2m}$, then the pair $(c_{1,2m},c_{2,2m})$ is an admissible boundary. 

Later it will be useful to consider the following object.
Define \emph{generalized admissible domain with $2n$ marked boundary points} or simply \emph{$2n$-admissible domain} as
the $U(c_1,c_2,\ldots,c_{2n})$ as the bounded component of $\C \setminus (\hat{c}_1 \cup \ldots \cup \hat{c}_{2n})$
where $c_j$ are simple paths on $L$ so that $c_{2k-1}$ and $c_{2k}$ agree on the beginning and $c_{2k}$ and $c_{2k+1}$
on the end (here use cyclic order so that $c_{2n+1}=c_1$)
and otherwise as above. 
For example, we require the marked points $c_1(0), c_1(n_1), c_3(0), c_3(n_3),\ldots$
to be on the boundary of $U(c_1,c_2,\ldots,c_{2n})$. 

\subsubsection{Advantages of the definitions}\label{sssec: advantages of admissible}

Now the advantages of the above definitions are the following:
\begin{itemize}
\item It is easier to deal with simple curves on the discrete level. This is the primary motivation of 
considering the \emph{modified}
medial lattice.
\item As noted above, if we start from an admissible boundary and follow an admissible curve, then
the pair $(c_{1,2m},c_{2,2m})$ stays as an admissible boundary.
It is practical to have a \emph{stable} class of domains in that sense.
\item Let $(\pi(t)_{0 \leq t \leq l})$ be the polygonal curve corresponding to $(\gamma(k))$ so that
$\pi(k)=\gamma(k)$ and the parametrization is linear on the intervals $[k,k+1]$. Then the points of $\pi$
are bounded away from the boundary $\partial U =\hat{c}_1 \cup \hat{c}_2$ except near the end points, that is,
\begin{equation*}
\de( \pi(t), \partial U) \geq 2\eta \quad \text{when } 1 \leq t \leq l-1
\end{equation*}
and similarly the points on 
\begin{equation*}
\de( \pi(t), \pi(s)) \geq 2\eta \quad \text{when } |t-s| \geq 1
\end{equation*}
here $\eta>0$ is a constant depending on the lattice. Later, we can use this to deal with the scales smaller than $\eta$
when checking the condition.
\end{itemize}

\subsubsection{FK model on the square lattice}\label{sssec: fk model}

Let $U=U(c_1,c_2)$ be an admissible domain and assume that the octagons along $c_1$ (inside $U$ away from the common
part with $c_2$) are blue.
Denote by $G=G(c_1,c_2) \subset (\Z^2)_{\mathrm{even}}$ 
the graph formed by the centers of the blue octagons inside $U$ and
by $E_W$ the blue edges along $c_1$. $E_W$ is connected and it will be the set of wired edges.
Let $G'$ be the planar dual of $G$, that is, the graph formed by the centers of the red octagons inside $U$.
Let $G_L$  be the subgraph of $L$ formed by the vertices in $U \cup \{a,b\}$ and edges which stay inside $U$. 

For each $0<p<1$, $q>0$, define a probability measure on 
$\Omega=\Omega(c_1,c_2) = \{ \omega \in \{0,1\}^{E(G)} : \omega = 1 \textnormal{ on } E_W \}$
by
\begin{equation} \label{eq: fk measure}
\mu_U^{p,q}  = \mu_{G,E_W}^{p,q}
\end{equation}
The setup is illustrated in Figure~\ref{fig: bathroom2}.
There is a natural dual $\omega'$ of $\omega$ defined on $E(G')$ 
such that for each white square in $U$ the edge going through that square is open in $\omega'$ if and only if 
the edge of $E(G)$ going through that square is closed in $\omega$.
The duality between $\omega$ and $\omega'$ is shown in 
Figures~\ref{sfig: bathroom2a} and \ref{sfig: bathroom2b}.
Which of the two edges intersecting in a white square is open in
$\omega$ or $\omega'$ can be represented by coloring the square with that color. The picture then looks like 
Figure~\ref{sfig: bathroom2c}.
The essential information of that picture is encoded in the set of interfaces, that is, one interface starting from and ending 
to the boundary, because of the boundary conditions, and several \emph{loops}.
These interfaces are separating open cluster of $\omega$ from open cluster of $\omega'$.
Moreover, there is one-to-one correspondence between  $\omega$, $\omega'$ and the interface picture.
The random curve connecting $a$ and $b$ in the interface picture is denoted by $\gamma$ and its law by
$\P_U$ when the values of $p$ and $q$ are otherwise known.

\begin{figure}[tb!]
\centering
\subfigure[A configuration of open edges on $G$ satisfying the wired boundary condition along $c_1$.]
{
	\label{sfig: bathroom2a}
	\includegraphics[scale=.8]
{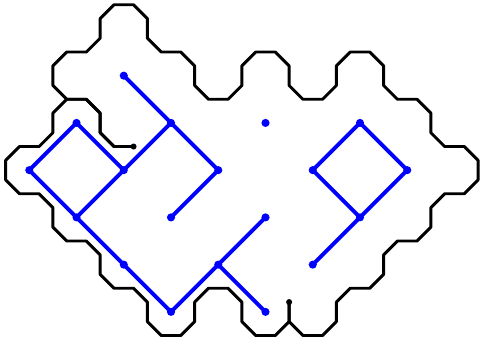}
}
\hspace{0.5cm}
\subfigure[The corresponding dual configuration of open edges on $G'$. Note that it is wired along $c_2$.]
{
	\label{sfig: bathroom2b}
	\includegraphics[scale=.8]
{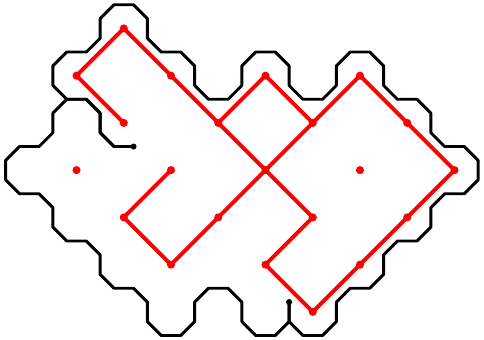}
}
\subfigure[Coloring of the squares with blue and red enables to define the collection of interfaces which
separate the blue and red regions.]
{
	\label{sfig: bathroom2c}
	\includegraphics[scale=1]
{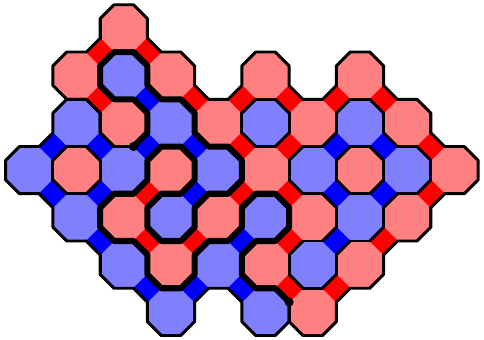}
} 
\caption{The correspondence between the configuration on $G$ (a), the configuration on $G'$ (b) 
and the interfaces and the coloring of the squares (c).} 
\label{fig: bathroom2}
\end{figure}

It is generally known that the probability measure $\mu_U^{p,q}$ can be written in the form
\begin{equation} \label{eq: fk measure var}
 \mu_U^{p,q} (\omega) = \frac{1}{Z'} 
 \left( \frac{p}{(1-p) \sqrt{q}} \right)^{|\omega|} (\sqrt{q})^{\textrm{number of loops}} .
\end{equation}
From this it follows that
\begin{equation}
\psd(q) = \frac{\sqrt{q}}{1+ \sqrt{q}}
\end{equation}
is a self-dual value of $p$. When $p = \psd$, the quantity inside the first brackets is equal to $1$, and 
it doesn't make difference whether the model was originally defined in $G$ or $G'$.
Both give the same probability for the configuration of Figure~\ref{sfig: bathroom2c}.
It turns out that the self-dual value $p = \psd$ is also the critical value at least for $q \geq 1$, 
see \cite{beffara-duminil-2012-}.

\subsubsection{Verifying Condition~\ref{def: b unf crossing} for the critical FK Ising}\label{sssec: cond for fk ising}

For each admissible domain $U$ 
(and for each choice of $a$ and $b$) define a conformal and onto map $\phi_U : U \to \disc$
such that $\phi_U(a)=-1$ and $\phi_U(b)=1$. 
In this subsection, the following result will be proven.

\begin{proposition} \label{prop: fkIsing}
Let $\P_U$ be the law of the critical FK Ising interface in $U$, i.e., $\P_U$ is the law of $\gamma$
under $\mu_U^{\psd,2}$.
Then the collection
\begin{equation} \label{eq: sigma fk}
\Sigma_{\textnormal{FK Ising}} = \left\{ \left(\phi_U, \P_U \right) \;:\;  U \textnormal{ admissible domain}
   \right\}
\end{equation}
satisfies Condition \ref{def: b unf crossing}.
\end{proposition}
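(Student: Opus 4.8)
The plan is to verify the ``time $0$'' conformal bound, Condition~\ref{cond: c const time zero}, for $\Sigma_{\textnormal{FK Ising}}$; by Remark~\ref{rem: dmp} and Proposition~\ref{prop: equiv conditions} this already yields Condition~\ref{def: b unf crossing}. The use of Remark~\ref{rem: dmp} rests on the domain Markov property, which for critical FK Ising is built into the construction of Sections~\ref{sssec: admissible}--\ref{sssec: advantages of admissible}: conditioning the interface on its first steps is the same as conditioning the random-cluster configuration on the octagons already revealed, so by the domain Markov property of the FK model the remaining interface is the critical FK Ising interface in the slit domain $U(c_{1,2m},c_{2,2m})$, which is again an admissible domain. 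Thus it suffices to exhibit $M>0$ such that, for every admissible $U$ and every avoidable topological quadrilateral $Q$ with $V(Q)\subset U$, $S_1\cup S_3\subset\partial U$ and $\elen(Q)\ge M$,
\[
\P_U\big(\gamma \text{ makes a crossing of } Q\big)\le \tfrac12 .
\]

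The first step is a planar-duality identification of this interface event with a monochromatic crossing. Since $Q$ is avoidable, its two sides $S_1,S_3$ that lie on $\partial U$ are on the \emph{same} boundary arc of $U$; after possibly exchanging the two colours we may assume it is the blue arc, so that the faces of the bathroom tiling bordering $S_1$ and $S_3$ from inside $U$ are blue. If a sub-arc of $\gamma$ crosses $Q$ from $S_0$ to $S_2$, then the faces bordering that sub-arc on its \emph{red} side form a connected red path joining $S_0$ to $S_2$ inside $Q$, running alongside the blue-wired side $S_1$ or $S_3$. Hence
\[
\{\gamma \text{ crosses } Q\}\subset\{\,\exists\ \text{red crossing of } Q \text{ from } S_0 \text{ to } S_2\,\},
\]
and on the right-hand event this red crossing is forced to run against an \emph{unfavourable} (blue) boundary condition. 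This is exactly where avoidability enters: for a non-avoidable $Q$ the sides $S_1,S_3$ would carry opposite colours, the analogous monochromatic path would be boundary-assisted, and no uniform bound of this kind could hold --- consistently with the fact that such crossings are essentially forced and occur with probability close to $1$.

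Next I would strip away the ambient domain. A red crossing of $Q$ is decreasing in the random-cluster configuration, so by the FKG inequality for $q=2$ (Theorem~\ref{thm: fkg} and the remark following it) together with monotonicity in boundary conditions --- condition on the configuration outside $Q$ and dominate the induced boundary condition by the free one --- its $\P_U$-probability is at most the probability of a red crossing of $Q$ under the critical FK Ising measure on $Q$ alone, with wired boundary conditions on $S_1\cup S_3$ and free ones on $S_0\cup S_2$; crucially this majorant no longer depends on $U$. One may moreover assume that $Q$ is at least of lattice scale, since a thinner quadrilateral cannot be crossed by the polygonal curve $\gamma$ at all, so that $\elen(Q)\ge M$ forces $Q$ to be macroscopically long in the $S_0$--$S_2$ direction.

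It remains to bound, uniformly in the modulus, the probability of crossing a quadrilateral ``the hard way''. Here I would invoke the Russo--Seymour--Welsh theory for the critical FK Ising model (on the square lattice, and on isoradial graphs), see \cite{chelkak-duminil-hongler-kemppainen-smirnov-2013-} and the references therein: crossing probabilities of quadrilaterals are uniform in the boundary conditions and controlled by the modulus alone, and iterating these estimates along $\lfloor\elen(Q)/M_0\rfloor$ consecutive sub-quadrilaterals of modulus $M_0$ --- exactly as in the proof of \ref{cond:quad}$\Leftrightarrow$\ref{cond:quadexp} --- shows that the probability of a crossing from $S_0$ to $S_2$ is at most $\rho(\elen(Q))$ with $\rho(M)\to0$ as $M\to\infty$. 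Choosing $M$ with $\rho(M)\le\tfrac12$ completes the verification of Condition~\ref{cond: c const time zero}, hence of Condition~\ref{def: b unf crossing}. The soft parts --- the domain Markov reduction, the FKG comparison of boundary conditions, and the bookkeeping of which colour crosses which pair of sides --- are routine; the genuine ingredient is the uniform-in-boundary-conditions RSW estimate for critical FK Ising, and the one combinatorial point that must be set up with care is that an unforced crossing forces a monochromatic crossing carrying the \emph{wrong} boundary condition, which is precisely what makes its probability small.
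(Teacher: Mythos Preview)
Your strategy is sound and parallels the paper's in structure---reduce to time $0$ via the domain Markov property, translate an interface crossing into a monochromatic FK crossing, use FKG to isolate the quad, then invoke a crossing estimate---but the route and the inputs differ substantially. The paper verifies the \emph{geometric} Condition~\ref{def: b unf crossing} directly (annulus crossings), shows that an unforced crossing of a component of $A^u$ adjacent to $c_1$ forces a \emph{primal} open crossing of that component (you use the dual crossing; both are valid), and then uses FKG to pass to a model with fully wired boundary along the annulus arcs. The hard work is bounding $\mu_2(V_-\leftrightarrow V_+)<s<1$ when the underlying piece of $\partial U$ inside the annulus may wind arbitrarily: the paper lifts to the universal cover of the annulus and splits into the cases $\theta\le 4\pi$ and $\theta>4\pi$, each time reducing by FKG to a \emph{fixed} conformal type (a sector of opening $6\pi$ or $2\pi$) to which Corollary~\ref{cor: fk crossing} applies uniformly in the scale. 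Your quad formulation neatly sidesteps the winding issue since $Q$ is simply connected, which is a genuine simplification.

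The gap is in the crossing estimate you invoke. After your FKG step you need: for every discrete topological quadrilateral $Q$ with wired $S_1,S_3$ and free $S_0,S_2$, the dual crossing $S_0\to S_2$ has probability $\le\tfrac12$ once $\elen(Q)\ge M$, uniformly over all such $Q$ at all scales. This is a strong RSW statement---uniform in boundary conditions and in the shape of the quad---and it is \emph{not} what the paper's toolkit provides: Proposition~\ref{prop: fkising basic crossing estimate} and Corollary~\ref{cor: fk crossing} give bounds only along sequences of discrete domains converging in the Carath\'eodory sense to a fixed non-degenerate continuum quad, which is precisely why the paper's proof reduces to a single fixed sector before invoking them. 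Your iteration ``exactly as in the proof of \ref{cond:quad}$\Leftrightarrow$\ref{cond:quadexp}'' does not transfer directly: that argument exploits the conditional structure of the interface measure, whereas here you must iterate an FK crossing estimate through successive sub-quads with random induced boundary conditions, which requires exactly the uniform-in-boundary-conditions bound you are trying to establish. Moreover your citation \cite{chelkak-duminil-hongler-kemppainen-smirnov-2013-} is circular---that paper relies on the present one. The strong RSW for FK~Ising does exist independently (Duminil-Copin--Hongler--Nolin, and in sharper conformal form in later work of Chelkak--Duminil-Copin--Hongler), and with that as a black box your argument goes through; but within the paper's own framework the covering-surface reduction to Corollary~\ref{cor: fk crossing} is doing real work that your sketch does not replace.
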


\begin{remark}
In a typical application, a sequence $U_n$ of admissible domains and a sequence of positive numbers $h_n$ are chosen.
Then the family
\begin{equation}
\Sigma = \left\{ \delta_{n,*} \left( \phi_{U_n}, \P_{U_n} \right) \;:\;  n\in \N
   \right\}
\end{equation}
also satisfies Condition~\ref{def: b unf crossing}, where $\delta_{n,*}$ is the pushforward map of the scaling
$z \mapsto h_n \, z$. 
The scaling factors $h_n$ play no role
in checking Condition~\ref{def: b unf crossing}.
\end{remark}

We postpone the proof of Proposition~\ref{prop: fkIsing} until the required tools have been presented.

Consider a $4$-admissible domain $U=U(c_1,c_2,c_3,c_4)$ such that $c_1$ and $c_3$ are wired arcs.
Let the marked points be $a_j$, $j=1,2,3,4$, in counterclockwise direction and assume that
$a_1$ and $a_2$ lie on $c_1$ and $a_3$ and $a_4$ lie on $c_3$.
Then there is a unique conformal mapping $\phi_U$ from $U$ to $\half$ such that $b_j = \phi(a_j) \in \R$
satisfy
\begin{equation*}
b_1 < b_2 < b_3 < b_4, \quad
b_2 - b_1  = b_4 - b_3, \quad
 b_2  = -1 \quad \text{and}\quad
 b_3  = 1 .
\end{equation*}
A sequence of domains $U_{n}$ is said to \emph{converge in the Carath\'eodory sense}
if the mappings $\phi_{U_n}^{-1}$ converge uniformly in the compact subsets of $\half$.

An \emph{open path} is a path of open edges of $\omega$.
For any 4-admissible domain $U$ we say that $U$ is \emph{crossed by an open path} if there is an open path
which connects the wired arcs.
Denote this event by $O(U)$. 
More generally on any graph we can talk about \emph{open crossing} of a set vertices with two specified
subsets, which we call ``sides''. This means that in the configuration $\omega$ there is an open path
connecting the sides within this set. Open crossings and crossings by the interface are different but 
in some cases related events --- this fact will be used below. 

\begin{proposition} \label{prop: fkising basic crossing estimate}
Let $U_n = h_n \hat{U}_n$ be a sequence of domains such that the sequence of reals $h_n \searrow 0$ 
and $\hat{U}_n$ is a sequence of 4-admissible domains.
If the sequence $U_n$ converges to a quadrilateral $(U,a,b,c,d)$ in the Carath\'eodory sense as $n \to \infty$,
then 
$\P_n[O (\hat{U}_n)]$ converges to a value
$s \in [0,1]$. If $(U,a,b,c,d)$ is non-degenerate then $0<s<1$.
Here $\P_n$ is the probability measure $\mu_{\hat{U}_n,P}^{\psd,2}$ where $P$ is a fixed partitioning of the set $\{1,2\}$.
\end{proposition}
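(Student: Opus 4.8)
The plan is to read the crossing event off the critical FK Ising fermionic (s-holomorphic) observable and to transport its conformal invariance and Carath\'eodory-stability---available from the analysis of the FK Ising observable---to the crossing probability, with planar self-duality supplying the final strict bounds. First I would record the relevant planar duality: by construction the complement $O(\hat U_n)^c$ is exactly the event that the dual configuration $\omega'$ joins the two \emph{free} arcs $c_2$ and $c_4$ by an open path; since $p=\psd$, $q=2$ is the self-dual (and critical) point \cite{beffara-duminil-2012-,grimmett-2006-}, $\omega'$ is again a critical FK Ising configuration, now on (a translate of) $(\Z^2)_{\mathrm{odd}}$, living in the \emph{dual} $4$-admissible domain---the same picture with wired and free arcs interchanged and marked points cyclically relabelled $a,b,c,d\mapsto b,c,d,a$---which again converges in the Carath\'eodory sense, to the ``conjugate'' quadrilateral. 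Equivalently, $O(\hat U_n)$ is the event that the exploration interface $\gamma_n$ issued from the corner $a$ terminates at one prescribed admissible corner rather than the other. I would then bring in the FK Ising fermionic observable $F_{\delta_n}$ of the Dobrushin-type boundary value problem of $\hat U_n$, recall its discrete holomorphicity and the boundary constraints on $\arg F_{\delta_n}$, and recall that a discrete primitive $H_{\delta_n}$ of $F_{\delta_n}^2$ is a bona fide observable whose values near the four marked corners determine $\P_n[O(\hat U_n)]$ up to $\oo(1)$ as $\delta_n\to 0$; see \cite{smirnov-2010-,chelkak-duminil-hongler-kemppainen-smirnov-2013-}.

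For the passage to the limit I would use that the uniform boundedness of $F_{\delta_n}$ together with s-holomorphicity \cite{smirnov-2010-,chelkak-duminil-hongler-kemppainen-smirnov-2013-} makes the families $(F_{\delta_n})$ and $(H_{\delta_n})$ precompact for uniform convergence on compact subsets of $U$. Since $\hat U_n\to(U;a,b,c,d)$ in the Carath\'eodory sense, the normalizing conformal maps and the relevant prime ends converge \cite{pommerenke-1992-}, so any subsequential limit of $F_{\delta_n}$ solves the \emph{same} continuous boundary value problem attached to $(U;a,b,c,d)$ and is therefore unique; hence $H_{\delta_n}$ converges and $\P_n[O(\hat U_n)]\to s$, a well-defined value in $[0,1]$ that does not depend on the choice of approximating maps. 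Conformal invariance of the continuous observable moreover shows that $s$ depends on $(U;a,b,c,d)$ only through the modulus $\elen$, and in a degenerate limit ($\elen\to0$ or $\elen\to\infty$) the same convergence forces $s\in\{0,1\}$.

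For the non-degenerate case, $0<s<1$ follows from the RSW-type lower bounds on critical FK Ising crossing probabilities obtained from the same observable \cite{chelkak-duminil-hongler-kemppainen-smirnov-2013-}: these give $c\le\P_n[O(\hat U_n)]$ uniformly in $n$ for some $c=c(\elen)>0$, hence $s\ge c>0$; applying the same bound in the dual domain together with $\P_n[O(\hat U_n)]=1-\P'_n[\,O\text{ of the dual quad}\,]$ gives $s\le 1-c'<1$. (As a consistency check, self-duality yields $s(\elen)+s(1/\elen)=1$, so $s=\tfrac12$ for the self-conjugate quadrilateral, and standard FKG comparisons show $\elen\mapsto s(\elen)$ is non-increasing.) I expect the main obstacle to be the Carath\'eodory-stability of the discrete observable, and in particular matching the discrete boundary conditions to the continuous ones near the four marked corners, where the limit has square-root-type singularities and $\hat U_n$ is only rough at scale $\delta_n$; this is precisely the delicate point in \cite{smirnov-2010-,chelkak-duminil-hongler-kemppainen-smirnov-2013-}, and working on the modified medial lattice with its \emph{stable} class of admissible domains (Sections~\ref{sssec: medial lattice}--\ref{sssec: advantages of admissible}) is what keeps the boundary behaviour under control here.
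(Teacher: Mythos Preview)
The paper does not prove this proposition at all: immediately after stating it, the authors simply write ``This proposition is proved in \cite{chelkak-smirnov-2009-} for general isoradial graphs'' and move on to deduce Corollary~\ref{cor: fk crossing}. Your sketch is essentially the strategy of that cited reference (and of \cite{smirnov-2010-}): identify the crossing probability with boundary data of the s-holomorphic fermionic observable, use precompactness and uniqueness of the continuous boundary value problem to get Carath\'eodory-stable convergence, and use self-duality/RSW to get $0<s<1$ in the non-degenerate case. So there is nothing to compare against in the present paper; what you wrote is a fair high-level summary of the argument in \cite{chelkak-smirnov-2009-}, though you should cite that paper rather than \cite{chelkak-duminil-hongler-kemppainen-smirnov-2013-} for the convergence of the observable and the RSW input.
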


This proposition is proved in \cite{chelkak-smirnov-2009-} for general isoradial graphs
with an exact formula based on discrete holomorphicity. 
The following is a direct consequence of Proposition~\ref{prop: fkising basic crossing estimate}.

\begin{corollary} \label{cor: fk crossing}
If $(U,a,b,c,d)$ is non-degenerate then there are $\eps >0$ and $n_0>0$ so that
$ \eps < \P_n [O(\hat{U}_n)] < 1-\eps$ for any $n>n_0$. 
\end{corollary}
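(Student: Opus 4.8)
The plan is to obtain the corollary as an immediate consequence of Proposition~\ref{prop: fkising basic crossing estimate}; all the substance is already contained in that proposition, and what remains is only the elementary observation that a convergent sequence whose limit lies strictly inside $[0,1]$ is eventually bounded away from both endpoints.

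Concretely, I would argue as follows. Write $p_n := \P_n[O(\hat U_n)]$. Since $U_n$ converges in the Carath\'eodory sense to the non-degenerate quadrilateral $(U,a,b,c,d)$, Proposition~\ref{prop: fkising basic crossing estimate} gives that $p_n \to s$ for some $s$ with $0 < s < 1$. Set $\eps := \min\{s,\,1-s\}/2 > 0$, which is strictly positive precisely because $s$ is an interior point of $[0,1]$. By the definition of convergence there is $n_0$ such that $|p_n - s| < \eps$ for all $n > n_0$; for such $n$ one then gets $p_n > s - \eps \ge \eps$ (using $2\eps \le s$) and $p_n < s + \eps \le 1 - \eps$ (using $2\eps \le 1-s$), which is exactly the claimed two-sided bound.

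There is essentially no obstacle at this level: the real content — both the existence of the limit $s$ and the strict inequalities $0 < s < 1$ in the non-degenerate case — resides entirely in Proposition~\ref{prop: fkising basic crossing estimate}, which is proved in \cite{chelkak-smirnov-2009-} via the convergence of the FK Ising fermionic observable on isoradial graphs and the Russo--Seymour--Welsh type crossing estimates it yields. The corollary itself needs nothing beyond the preceding proposition and the definition of convergence of a sequence of real numbers.
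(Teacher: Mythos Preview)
Your argument is correct and is exactly the paper's approach: the corollary is stated there as ``a direct consequence'' of Proposition~\ref{prop: fkising basic crossing estimate}, with no further proof given, and you have simply spelled out this immediate deduction from convergence $p_n \to s \in (0,1)$.
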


Finally before giving the proof of Proposition~\ref{prop: fkIsing}, we state the following conditional theorem
which is likely to be useful for FK models with $q \neq 2$. The proof is exactly the same as for 
Proposition~\ref{prop: fkIsing}. In fact, Condition \ref{def: b unf crossing} is verified for $1 \leq q \leq 4$
in \cite{duminil-sidoravicius-tassion-2014-} based on this type of estimates.

\begin{proposition} \label{prop: fk model qneq2}
Let $\P_U$ be the law of the critical FK model interface in $U$, i.e., $\P_U$ is the law of $\gamma$
under $\mu_U^{\psd,q}$ for $q \geq 1$.
If the statement of Corollary~\ref{cor: fk crossing} holds for the critical FK model with the parameter $q$,
then the collection
\begin{equation} 
\Sigma_{\textnormal{FK(q)}} = \left\{ \left(\phi_U, \P_U \right) \;:\;  U \textnormal{ admissible domain}
   \right\}
\end{equation}
satisfies Condition \ref{def: b unf crossing}.
\end{proposition}

We establish below one of the geometric conditions and not directly one of the conformally invariant conditions.
The reason for this is that we want to apply Corollary~\ref{cor: fk crossing} only a finite number of times.
To verify the conformally invariant condition directly, we would have to use some compactness property for the family
of quadrilaterals. This might be more or less equivalent to the proof below.

\begin{proof}[Proof of Proposition~\ref{prop: fkIsing}]
Let's use the continuous time parametrization of $\gamma$ with constant speed so that $\gamma(n)$ is a
lattice point of $L$ if and only if $n$ is an integer (between $0$ and $l$).
Notice that on even time instances $\gamma(2n)$ is in a crossing ``arriving'' to a white square and
it chooses left or right turn depending on the color revealed on the square, in the sense of Figure~\ref{sfig: bathroom2c}.
The filtration generated $\F_t$ by $\gamma(s)$, $s \leq t$, (or the finer filtration made by adding an
infinitesimal peek to the future) remains constant on $t \in (2n,2n+2)$ where $n$ is an integer.
Hence we can restrict to the case $t=2n$, $n$ an integer. Then $U_t  = U \setminus \gamma[0,t]$
is an admissible domain. 

This simplifies the proof a lot. Instead of considering all $t=2n$ we will consider $t=0$
and all admissible domains. In other word, we don't have to consider any stopping times below, but instead
we have to consider all possible admissible domains. But luckily this was the set of domains we used
in the definition of $\Sigma_{\textnormal{FK Ising}}$.

We can also assume that $r > \eta$ where $\eta$ is as in the section~\ref{sssec: advantages of admissible}. 
In the complementary case, we notice that no disc of the form $B=B(z_0,r)$, where $0< r \leq \eta$ 
and which intersects the boundary of the domain, can contain any lattice points of $L$ which are in the interior of the domain.
Also choose $C>0$ such that there are no trivialities such as an edge crossing $A(z_0,r,R)$ for some $z_0$ and
$r>\eta$ and $R>Cr$.

Let $U$ be an admissible domain and 
$G(U) \subset (\Z^2)_{\textnormal{even}},G'(U)\subset (\Z^2)_{\textnormal{odd}},G_L(U)\subset L$
the corresponding graphs, and let $c_1$ and $c_2$ be the two marked boundary arcs as in Section~\ref{sssec: admissible}.
Let $A=A(z_0,r,R)$ be an annulus such that $r>\eta$. Write $\mu_1=\mu_U^{\psd,2}$.
Write the set $A^u$, which is defined as in \eqref{eq: definition Au},
as a disjoint union $A^u = U_1 \cup U_2$ where 
$U_k$ is next to $c_k$, that is, if we approach the boundary of the domain by a sequence in $U_k$, we hit $c_k$ 
(say, in the conformal
sense after mapping to a reference domain with simple boundary). 
See also Figure~\ref{fig: path p}.

\begin{figure}[tb!]
\centering
	\includegraphics[scale=1]
{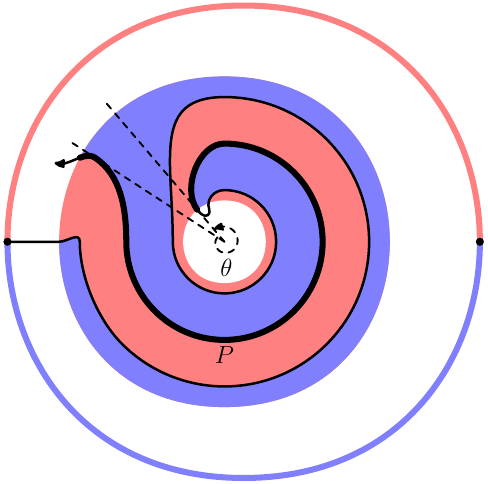}
\caption{The boundary of the domain $U$ is the black solid line and
the boundary arcs $c_1$ and $c_2$ have the solid blue and red lines, respectively, next to them. 
The parts $U_1$ and $U_2$ of $A^u$ are colored with slanted lines.
They are the regions colored with blue and red, respectively.
The boundary of $U_1$ is wired and the boundary of $U_2$ is dual wired.
As one boundary arc $P$ is fixed the components are in the $2\pi$ sector starting from that curve.
The angle $\theta$ is the difference of the maximum and minimum value of the angle variable 
defined continuously on $P$.} 
\label{fig: path p}
\end{figure}

Since $\gamma$ is the interface which separates the cluster of open edges connected to $c_1$ from the cluster 
of dual open edges connected to $c_2$
\begin{align*}
&\P_U ( \gamma \textnormal{ crosses } A^u)\\
& \leq \mu_1 ( \textnormal{open crossing of }          U_1
              \textnormal{ or dual open crossing of } U_2 ) \\
& \leq \mu_1 ( \textnormal{open crossing of }      U_1 ) +
       \mu_1 ( \textnormal{dual open crossing of } U_2 ) \\
& \leq 2 K
\end{align*}
where $K$ is the maximum of the two terms on the preceding line. Therefore we have to prove that $K< 1/4$.
By symmetry, it is enough to prove that 
\begin{equation} \label{ie: fk crossing 1}
\mu_1 ( \textnormal{open crossing of } U_1 ) < 1/4 .
\end{equation}

Let's define two discrete versions of the annulus $A$ on the modified medial lattice. 
The subset $A_{\text{blue}} \subset A $ is a doubly connected domain in $\C$.
We require that the boundary of $A_{\text{blue}} $ is a path in the modified medial lattice and that the faces
of the modified medial lattice inside $A_{\text{blue}} $ next to its boundary are blue or white.
We also require that $A_{\text{blue}} $ is maximal such domain with respect to taking unions.
Similarly, let
$A_{\text{red}}$ be the maximal subdomain of the annulus $A$ that has red and white boundary on the modified medial lattice. 
In other words, $A_{\text{blue}} $ and $A_{\text{red}}$ are discrete approximations of $A$, with correct type of boundary.
Let $V_-$ and $V_+$ be the connected components of the boundary vertices on $A_{\text{blue}}$ and denote by $V_- \leftrightarrow V_+$
the event that there is an open path between $V_-$ and $V_+$ in the given graph.
Let 
$G_2 \subset G$ be the subgraph corresponding to the domain $U_1 \cap A_{\text{blue}}$. 
Let $E_2 \subset E(G_2)$ be the set of blue edges along the boundary. Let $\mu_2$ be the random cluster
measure on $G_2$ such that the edges in $E_2$ are wired and all the components of $E_2$ are counted to be separate. 
Then by considering $f = \ind_{E_2 \subset \omega}$ in the FKG inequality we have that
\begin{equation*}
\mu_1 ( \textnormal{open crossing of } U_1 )
  \leq \mu_2 \left( V_- \leftrightarrow V_+ \right) .
\end{equation*}
Similarly, it is enough to prove there is a constant $s<1$ such that
\begin{equation} \label{ie: fk crossing 2}
\mu_2 \left( V_- \leftrightarrow V_+ \right)  
\leq s
\end{equation}
for a fixed ratio $R/r$ since using this in several concentric annuli 
we get \eqref{ie: fk crossing 1} for a larger annulus. 
Yet another similar argument shows that we can consider only annuli $A(z_0,r,R)$ where $r>C' \eta$ for any fixed $C' \geq 1$.
Namely, if \eqref{ie: fk crossing 1} holds for $r > C' \eta$ then for $\eta < r \leq C' \eta$
we can ignore the part below the scale $C' \eta$ and only consider crossing between $R$ and $C'  \eta$
and then we notice that $R \geq C r > (C/C') \cdot (C' \eta)$ and therefore by modifying the value $C$
we get \eqref{ie: fk crossing 1} for the whole range of $r$. 
Therefore we will prove \eqref{ie: fk crossing 2} when $r > C' \eta$ when $C'$ is suitably chosen and $R / r$ fixed.

\begin{figure}[t!]
\centering
\subfigure[The domain $U_1$.]
{
	\label{sfig: fk crossings le 4pi a}
	\includegraphics[scale=.6]
{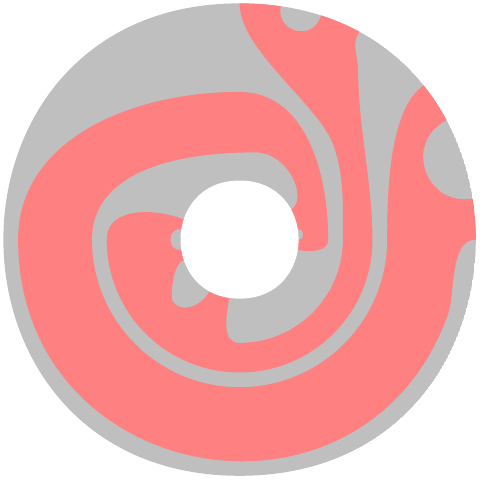}
} 
\hspace{0.25cm}
\subfigure[In the measure $\mu_2$ the edges along the boundaries of the annulus are wired.]
{
	\label{sfig: fk crossings le 4pi b}
	\includegraphics[scale=.6]
{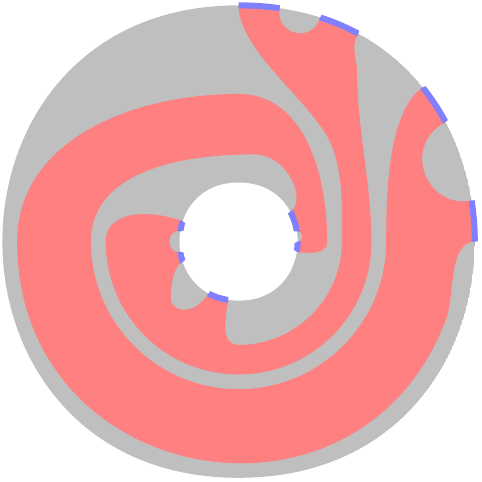}
}
\subfigure[In the case $\theta \leq 4\pi$, the final domain to be considered is a $6\pi$ opening
in the annulus with wired edges along the boundaries if the annulus. The blue color indicates the wired edges
and the red color the dual wired edges. The crossing is between the two blue boundary components.]
{
	\label{sfig: fk crossings le 4pi c}
	\includegraphics[scale=.6]
{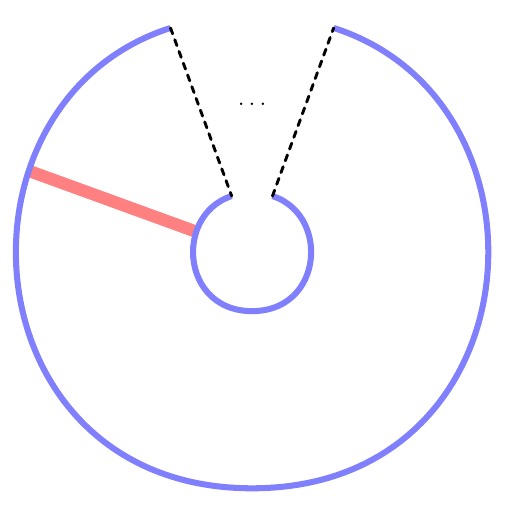}
}
\hspace{0.25cm}
\subfigure[A schematic illustration how to cover the annulus with infinitely many layers of the lattice.]
{
	\label{sfig: fk crossings le 4pi d}
	\includegraphics[scale=.9]
{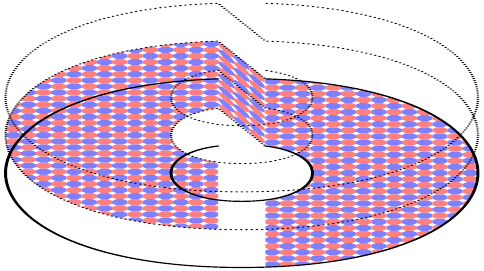}
}
\caption{Illustration how the FKG inequality is applied here in general and especially in the case $\theta \leq 4\pi$.} 
\label{fig: fk crossings le 4pi}
\end{figure}

Let $P$ be one of the boundary arcs of $U_1$ which cross $A$.
Write the points $z \in P$ in polar coordinates $z=z_0 + \rho e^{i \xi}$ so that $\xi$ is continuous along $P$.
Denote by $\theta$ the difference between the maximum and the minimum value of $\xi$ along $P$ and by
$\alpha$ the minimum value of $\xi$. The value of $\alpha$ is determined only up to additive multiple of $2\pi$
but $\theta$ is unique. Now $\xi$ spans the interval $[\alpha,\alpha+\theta]$ along $P$. 
The rest of the proof is divided into two cases: $\theta \leq 4\pi$ and $\theta > 4\pi$.

\emph{Case $\theta \leq 4\pi$:}
Consider the right half-plane $\half_1 = \{ (\rho,\xi) : \rho>0 , \xi \in \R \}$ as an infinite covering surface of 
$\C \setminus \{z_0\}$ such that
$(\rho,\xi) \in \half_1$ gets projected on $z_0 + \rho \, e^{i\xi} \in \C \setminus \{z_0\}$. 
Lift the lattice $L$ to $\half_1$ using this mapping locally in neighborhoods where it is a bijection and
define $S_{\text{blue}}$ as the lift of $A_{\text{blue}}$, that is, 
as the maximal subdomain of 
$S=S(r,R)=\{ (\rho,\xi) : 0<\rho<R , \xi \in \R \}$ 
such that the boundary is on the medial lattice and it is a blue boundary. 
Let $G_3$ be the subgraph of the lifted $(\Z^2)_{\text{even}}$ corresponding to the domain 
$S_{\text{blue}} \cap (r,R)\times(\alpha,\alpha+6\pi)$
and denote the edges along the vertical boundary as $E_3$. Let
$\mu_3$ be the random-cluster measure on $G_3$ where $E_3$ is wired and the components of $E_3$ are counted to be separate. 
Now $G_2$ can be seen as a subgraph of $G_3$. If the wired edges of the dual of $G_2$ are denoted by $E_2'$,
then applying the FKG inequality for the decreasing event $\{\omega \,:\, E_2' \subset \omega' \}$ and for the measure $\mu_3$
shows that
\begin{equation*}
\mu_2 \left( V_- \leftrightarrow V_+ \right) 
\leq \mu_3 \left( V_- \leftrightarrow V_+ 
\right) .
\end{equation*}
At this point we have reduced the problem to the event of an open crossing of fixed shape 
(independent of the domain we started with), but with variable size. 
Figure~\ref{sfig: fk crossings le 4pi c} illustrates how this shape looks like.
Since this domain is a 4-admissible domain, we can use
Proposition~\ref{prop: fkising basic crossing estimate}
to show that there are constants $C_1 \geq 1$ and 
$s_1 < 1$ such that the right-hand side of the previous
inequality is less than $s_1$ uniformly for any $r \geq C_1 \eta$ and $R = 3 r$.

The previous statement follows if we are able to show that probability of an open crossing remains bounded away
from one when we consider larger and larger $r$.
This follows claim from the following argument.
Make a counter assumption that there exists a sequence $\alpha_n \in [0,2\pi]$ and $r_n \to \infty$ such that
if we set $m_n = \mu_3 ( V_- \leftrightarrow V_+) $ in the domain $U_{3,n}$ with $r=r_n$, $R=3 r_n$ and $\alpha=\alpha_n$
and $\theta = 6 \pi$, then $m_n \to 1$ as $n \to \infty$. By choosing a subsequence we can assume that $\alpha_n$ converges.
Now the sequence of domains $r_n^{-1} U_{3,n}$ converges in the same sense 
as in Proposition~\ref{prop: fkising basic crossing estimate}
and hence $\lim_{n \to \infty} m_n < 1$.

\begin{figure}[th!]
\centering
\subfigure[$\mu_2$]
{
	\label{sfig: fk crossings g 4pi a}
	\includegraphics[scale=.7]
{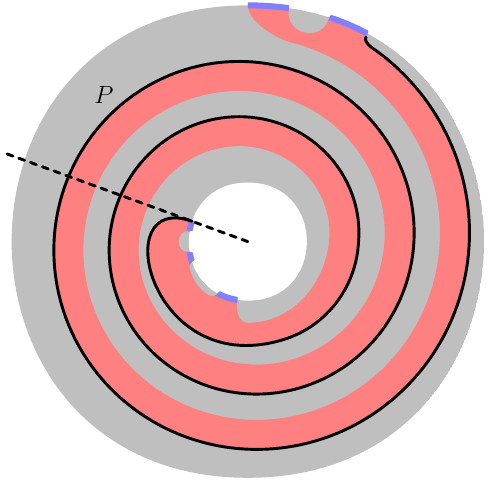}
} 
\hspace{0.25cm}
\subfigure[$\mu_4$]
{
	\label{sfig: fk crossings g 4pi b}
	\includegraphics[scale=.7]
{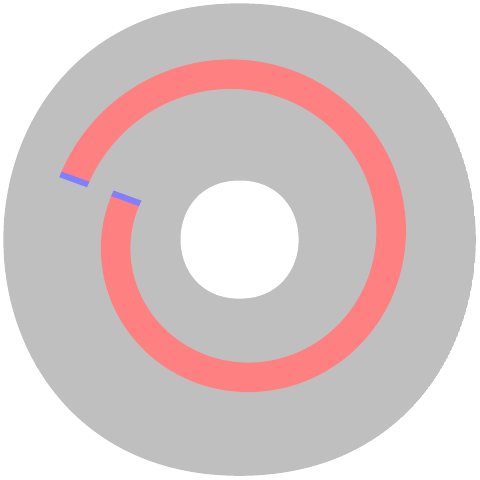}
}
\subfigure[$\mu_5$]
{
	\label{sfig: fk crossings g 4pi c}
	\includegraphics[scale=.7]
{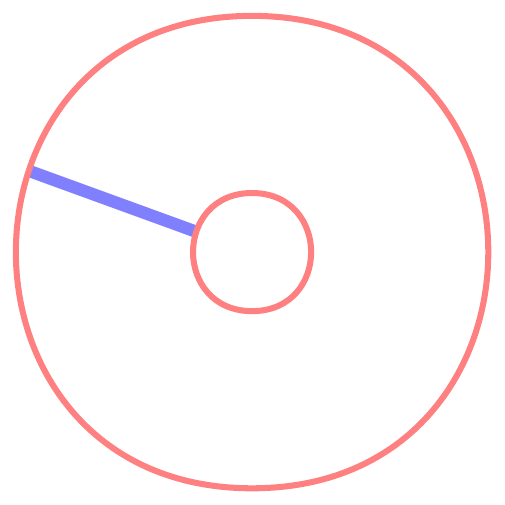}
}
\caption{Illustration how the FKG inequality is applied in the case $\theta > 4\pi$.} 
\label{fig: fk crossings g 4pi}
\end{figure}

\emph{Case $\theta > 4\pi$:} 
Similarly as in the other case
define $S_{\text{red}}$ to be the lift of $A_{\text{red}}$ to $S$.
Now note that any component of $G_2$ (view as lifted to $S$) intersects the radials $\alpha + 2 \pi$ and $\alpha + 4 \pi$
and any open crossing has to intersect those radial. Hence in the same way as above we can add blue
boundary and blue wired edges to those radials and ignore the part outside $(r,R)\times(\alpha+2\pi,\alpha+4\pi)$.
Denote the resulting graph by $G_4$ and the measure by $\mu_4$ and denote the vertices of the lifted $(\Z^2)_{\text{even}}$
along those two radials by $V_{2\pi}$ and $V_{4\pi}$. Denote the dual wired edges
of $\mu_4$ by $E_4'$. Finally if $G_5$ is the graph
corresponding to the domain $S_{\text{red}} \cap (r,R)\times(\alpha+2\pi,\alpha+4\pi)$
and $E_5$ are the boundary edges along the radials, then let $\mu_5$ be the random-cluster measure
on $G_5$ with wired edges $E_5$. In the same way as above, we can apply the FKG inequality for $\mu_5$ and
for the decreasing event $\{ \omega \,:\, E_4' \subset \omega' \}$ to get the second inequality in 
\begin{equation*}
\mu_2 \left( V_- \leftrightarrow V_+ \right) 
\leq \mu_4 \left( V_{2\pi} \leftrightarrow V_{4\pi} \right) 
\leq \mu_5 \left( V_{2\pi} \leftrightarrow V_{4\pi} \right) .
\end{equation*}
Again at this point we have reduced the problem to the event of an open crossing of fixed shape 
which is now illustrated in 
Figure~\ref{sfig: fk crossings le 4pi c}.
Use again 
Proposition~\ref{prop: fkising basic crossing estimate}
to show that there are constants $C_2 \geq 1$ and $s_2 < 1$ 
such that the right-hand side of the previous
inequality is less than $s_2$ uniformly for any $r \geq C_2 \eta$ and $R = 3 r$. 

When we combine these separate cases, 
the claim follows for $s = \max\{s_1,s_2\}$ and $C' = \max\{ C_1, C_2 \}$.
\end{proof}


\subsection{%
  Spin Ising model}\label{ssec: spin Ising}

Consider the spin Ising model at criticality on the square lattice (or
any isoradial graph)
with Dobrushin boundary conditions.

In \cite{chelkak-smirnov-2009-} a discrete holomorphic observable
$f_\delta(z)=f_\delta^{U_\delta,a_\delta,b_\delta}(z)$ is constructed
with a martingale property.
It is shown in Theorem~5.6 that, as the mesh $\delta$ of the lattice
tends to zero, the discrete domains $U_\delta$
with marked points $a_\delta, b_\delta$ tend to a continuum domain $(U,a,b)$,
the observable converges to its continuous counterpart $f(z)=f^{U,a,b}(z)$.
The latter is given by a solution of the Riemann-Hilbert boundary value problem,
and can be written as $f=\sqrt{\Psi'}$, where $\Psi$ is the conformal
map of $U$ to the upper half-plane
with $\Psi(a)=\infty$, $\Psi(b)=0$, appropriately normalized at $b$.
The convergence is uniform inside the domain $U$, and on straight
pieces of the boundary common to $U$ and $U_\delta$.

Consider the interface (domain wall) joining the points $a_\delta$ and
$b_\delta$ inside $U_\delta$.
The results of  \cite{chelkak-smirnov-2009-} immediately imply
convergence of the interface to SLE$_3$ in the sense of
the Loewner driving functions convergence.
In \cite{chelkak-duminil-hongler-kemppainen-smirnov-2013-} it was shown how
to use the results of the present article together with the crossing estimates of the article
\cite{chelkak-duminil-copin-hongler-2013-}
to deduce the strong interface convergence (see Theorem~\ref{thm: cd-chks} of the present article), by verifying the
conditions of present article.

Below we sketch an alternative way to check Condition~\ref{def: b unf crossing},
using only the observable results of
\cite{chelkak-smirnov-2009-}.

\subsubsection{Fermionic observable of spin Ising model}\label{sssec: observable spin Ising}

The spin Ising model is defined on any finite graph $G=(V,E)$. 
The random configuration $\usigma$ takes values in $\{-1,+1\}^V$
and its distribution
is given by
\begin{equation}
\P_\beta[\usigma=\underline{s}] = \frac{1}{Z} \exp \left( \beta \sum_{(v_1,v_2) \in E} s_{v_1}s_{v_2} \right).
\end{equation}
for any $\underline{s} \in \{-1,+1\}^V$. Here the partition function $Z$ is the constant normalizing the measure to be a probability measure.
The quantity $s_v$ is called the spin at $v$.
The parameter $\beta>0$ is interpreted as the inverse temperature. 

Consider the critical Ising model, $\beta = \beta_c$, on a finite, connected subgraph of the square lattice with mesh $\delta>0$. Then
\begin{equation}
\P_{\beta_c}[\usigma=\underline{s}] = \frac{1}{\tilde{Z}} x_c^{n(\underline{s})}
\end{equation}
where $n(\underline{s})=\# \{ (v_1,v_2) \in E \,:\, s_{v_1} \neq s_{v_2} \}$ and  $x_c = \sqrt{2} -1$.
If we fix the spin at one vertex, 
there is a one-to-one correspondence between spin configurations and even subgraphs of the dual graph of $G$,
given by the interfaces (domain walls) separating $+1$ and $-1$ spins.

Let $\mathcal{S}$ be the collection of even subgraphs of $G^*$,
and $\mathcal{S}_{a,b}$ be the collection of subgraphs which are even everywhere except at $a$ and $b$, where they are odd.
Any element in $S\in\mathcal{S}_{a,b}$ can be written as a pair $S=(\gamma,\Gamma)$ such that
$\gamma$ and $\Gamma$ are edge-disjoint,
$\gamma$ is a non-self-intersecting path from $a$ to $b$ and $\Gamma$ is an even graph.
The representation is not unique, but we will fix it uniquely by taking $\gamma$ to be the \emph{left-most} such path.

For $(\gamma,\Gamma) \in \mathcal{S}_{a,z}$ denote by $W(z,\gamma)$ the winding of $\gamma$ from $a$ to $z$.
Define an observable
\begin{equation}
f_\delta(z) = f_\delta^{U,a,b}(z) 
  =\nu \frac{\sum_{S= (\gamma,\Gamma) \in \mathcal{S}_{a,z}} x_c^{\# S} e^{-\frac{i}{2} W(z,\gamma)} }{
   \sum_{S=  (\gamma,\Gamma) \in \mathcal{S}_{a,b}} x_c^{\# S} e^{-\frac{i}{2} W(b,\gamma)} }
  =\nu \frac{\sum_{S=  (\gamma,\Gamma) \in \mathcal{S}_{a,z}} \; x_c^{\# S} e^{-\frac{i}{2} W(z,\gamma)} }{
   Z_{a,b} \, e^{-\frac{i}{2} W(b)} }   
\end{equation}
where in the last equation we use that when $a$ and $b$ are boundary points, the winding to $b$ doesn't depend on $\gamma$.
Here $Z_{a,b}$ is the partition function $\sum_{S\in \mathcal{S}_{a,b}} x_c^{\# S}$, 
the quantity $\# S$ is the number of edges in $S$ and $\nu$ is a constant.
The constant $\nu$ depends on local shape of the domain $U$ near $b$, but
it takes a fixed value over the class of subdomains of $U$ that we will consider.

\begin{theorem}[Chelkak--Smirnov \cite{chelkak-smirnov-2009-}]\label{thm: cs2009}
Let $U$ be a simply connected domain and let $b$ be a boundary point of $U$.
Assume that the boundary of $U$ is straight near $b$ and $U$ contains a rectangular neighborhood
$R$ of $b$. 
Then
for an appropriate choice of the constant $\nu$, $f$ has both of the following properties:
\begin{enumerate}\enustyii
\item\label{enui: cs thm i} For any $z$, $f_\delta(z)$ is a martingale with respect to the growing interface. 
\item\label{enui: cs thm ii} As $\delta \to 0$,
$f_\delta$ converges to $\lambda \sqrt{\Psi'}$, uniformly on any compact subset of $U$
and uniformly in any straight part of $\partial U$, where
$\Psi$ is a conformal map from $U$ to $\half$ with $\Psi(b) \in \R$ and $|\Psi'(b)|=1$
and $\lambda$ is fixed constant with unit modulus.
\end{enumerate}
Moreover once $b$ and $R$ are fixed, the convergence of $f_\delta^{U,a,b}$ 
in \ref{enui: cs thm ii} is uniform over all domains $U$ as well as
points $a$ and $z$, as long as $a$ is at a finite distance from $R$ and $z$ is inside $R$.
\end{theorem}

\begin{remark}
Although $\Psi$ is only unique up to an additive constant, $\Psi'$ is uniquely determined.
The branch of the square root and the constant $\lambda$ are chosen so that  
$\lambda\sqrt{\Psi'}$ is positive at $b$. Denote by $f=f^{U,a,b}$ the function
\begin{equation}\label{eq: def f cont spin ising}
f^{U,a,b} = \lambda \sqrt{\Psi'} .
\end{equation}
\end{remark}

\subsubsection{Using monotonicity and the martingale observable}

Consider a triplet $(U,a,b)$ and an annulus $A=A(z_0,r,R)$. 
We aim to verify Condition~\ref{def: b unf crossing} for spin Ising model on the domain $(U,a,b)$ with respect to the annulus $A$.
To that effect we consider the random curve $\gamma$ which is the interface
separating the macroscopic $+1$ and $-1$ clusters in the domain $U$ with Dobrushin boundary conditions
which change at $a$ and $b$.

Remember that $A^u$ is defined as in \eqref{eq: definition Au}.
Let $V_k$, $k=1,2,\ldots,n$, be the connected components of $A^u$ which can be crossed by the curve
without first crossing some other connected component of $A^u$. We
can assume that all $V_k$ have boundary conditions $-1$ and that any crossing of $V_k$ has to first go 
from the outer circle to the inner circle of $\partial A$. 
The $+1$ boundary components or the components that go from inside to outside could be
dealt with in identical manner.

Next we observe in the same way as in the case of FK Ising model that 
one of the two cases occurs:
either all $V_k$ can be lifted simultaneously to the universal cover 
\begin{equation}\label{eq: cover of A}
F=\{ (\rho,\theta) \in [r,R] \times \R\}
\end{equation}
of $A$ so that they are in a sector of $6 \pi$ opening in $F$, or that each of $V_k$
crosses a $2\pi$ sector in the angular direction.  
Basically, this division is possible since when we fix an arc of $\partial U$ that crosses $A$,
its winding around $z_0$ is either less than $4 \pi$ or greater than $4 \pi$. In the first
case, when we take the radial line through the point of the arc with smallest angle, then all $V_k$
lie in the $6 \pi$ sector from it. In the second case, we can similarly find a $2 \pi$ sector that
all $V_k$ cross.
We will deal with the first case here explicitly. The other case is similar.

\begin{figure}[th!]
\centering
\subfigure[The domain in ``logarithmic coordinates'', that is, in order to get back to the original domain we need to apply the covering map 
$w \mapsto z_0 + e^w$. 
The annulus is the vertical strip drawn with dots. We wish to give an upper bound to 
the probability of the event of  a connected path of $+1$ spins crossing the annulus 
(or more accurately any of its components with purely $-1$ boundary), indicated by the dashed red arrow.]
{
	\label{sfig: spin ising monot a}
	\hspace{0.25cm}\includegraphics[scale=.7]
{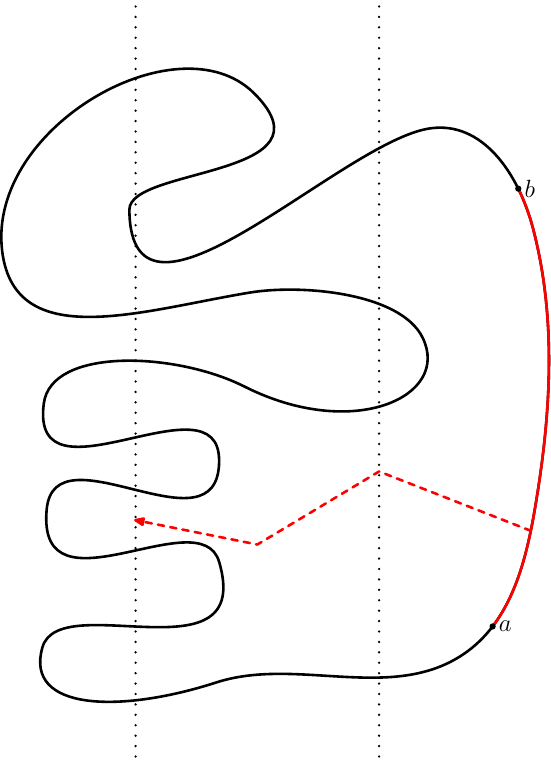}\hspace{0.25cm}
} 
\hspace{0.75cm}
\subfigure[The domain after the transformation. The horizontal dotted line is the middle radial line of the sector. By monotonicity,
the upper bound is given as the probability of having a connected path of $+1$ spins crossing the lower half of the sector.
The fact that the crossing is bounded away from $b$ is useful for technical purposes when using Theorem~\ref{thm: cs2009}.]
{
	\label{sfig: spin ising monot b}
	\hspace{0.5cm}\includegraphics[scale=.7]
{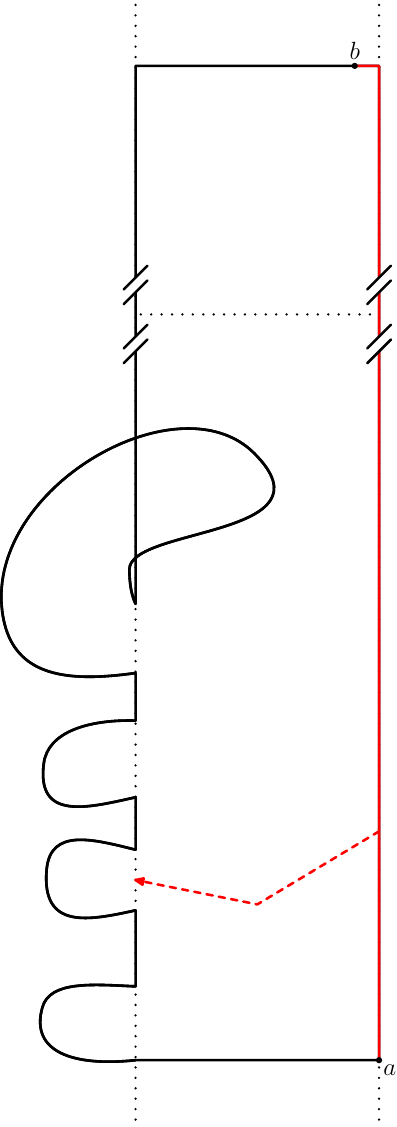}\hspace{0.75cm}
}
\caption{We apply a transformation to the domain which is consistent with the monotonicity of the Ising model.
The transformed domain is a simply connected subdomain of an appropriately chosen Riemann surface.
Thus any of the hanging parts to the left of the annulus in \protect\subref{sfig: spin ising monot b}
which seem to overlap with the part of the domain in the annulus should be considered to be on a different sheet of the cover
than the annular part. Notice that in \protect\subref{sfig: spin ising monot b} we require that the crossing is in the lower half.} 
\label{fig: spin ising monot}
\end{figure}

We apply the transformation illustrated in Figure~\ref{fig: spin ising monot} to the domain $(U,a,b)$.
We'll give the details in next paragraphs.

Let $I_j$, $j=1,2,\ldots,m$, be the boundary arcs of $V_k$ that lie on the circle of radius~$r$
and centered at $z_0$, that is, on the inner boundary of $A$. And for each $j=1,2,\ldots,m$,
let $U_j$ be the connected component of $U \setminus I_j$ which is disconnected from $a$ and $b$
in $U$ by $I_j$.

Suppose that the interface in $(U,a,b)$ makes an unforced crossing of $A$.
Then in particular, there is a crossing of $+1$ spins from the $+1$ boundary arc to
one of the components $U_j$. By monotonicity of the Ising model, the probability
of such an event increases if we pull the ``$+1$ boundary'' closer and push the ``$-1$ boundary'' away.

Consider the $12 \pi$ sector on the universal cover to which we can lift all $V_k$
so that they are lifted to the ``lower'' half (opening of $6\pi$) of the $12 \pi$ sector.
Below we will always consider crossings that stay in the ``lower'' half.
Let $\conti{U}_0 \subset F$ be the $12 \pi$ sector and let
$\discr{U}_0 \subset F$ be its discrete approximation, say, let $\discr{U}_0 \subset F$ be the union of all the faces
of the lifted square lattice that are contained in $\conti{U}_0$.
Denote by $\conti{U}$ and $\discr{U}$ the domain and its discrete approximation which we get by
gluing each $U_j$ along the lifted arcs $I_j$ to $\conti{U}_0$ and $\discr{U}_0$, respectively,
on appropriate Riemann surfaces so that $U_j$ remain disjoint from $\conti{U}_0$ and $\discr{U}_0$.
See Figure~\ref{fig: spin ising monot} for illustration.

Let $\discr{a}$ be one of the boundary points of the dual lattice near one of the corners of $\discr{U}_0$
corresponding to $\rho=R$ and let $\discr{b}$ be close to the other corner, but next to a point with $\rho=9R/10$.
Here $\rho$ refers to the coordinates \eqref{eq: cover of A}.
Suppose that the radial angle of $\discr{a}$ is smaller than the radial angle of $\discr{b}$.
If the boundary condition change at $\discr{a}$ and $\discr{b}$ from $-1$ to $+1$ and back in this new setup, 
then the probability of the $+1$ crossing from $\discr{a} \discr{b}$ to any of $U_j$,
and which stays in ``lower half'',
gives an upper bound to the probability of an unforced crossing of $A$ in $(U,a,b)$ by monotonicity 
of the Ising model (FKG inequality).

This means that the interface in the new setup will make a crossing staying in the 
``lower half'' of $\discr{U}_0$ to some $U_j$. 
The probability of this can be estimated using
the martingale property of $f$.

Let $\conti{a}$ and $\conti{b}$ be the boundary points of $\conti{U}$ that correspond to $\discr{a}$ and $\discr{b}$.

Let $\discr{c}$ be on the same radial boundary segment of $\discr{U}$ as $\discr{b}$
and let $\conti{c}$ be the corresponding boundary point of $\conti{U}$.
Since $\lambda\sqrt{\Psi'(\conti{b})}$ is positive and the boundary near $b$ is straight, 
also $\lambda\sqrt{\Psi'(\conti{c})}$ is positive
as well as any $f_\delta^{\discr{U}_\tau,\discr{a}_\tau,\discr{b}}(\discr{c})$ which we consider below.
We define $\discr{a}_\tau$ to be the tip of $\gamma$ at the random time $\tau$ and
$\discr{U}_\tau$ to be $\discr{U} \setminus \gamma[0,\tau]$.
The domain $\conti{U}_\tau$ is $\conti{U} \setminus ([\conti{a},\discr{a}] \cup \gamma[0,\tau])$
where $[\conti{a},\discr{a}]$ is the line segment from $\conti{a}$ to $\discr{a}$ in the plane.
Let $\tau$ be the hitting time of $\bigcup_{j > 0} U_j$ by $\gamma$ and
$\discrui{U}{-}_0$ be the ``lower half'' of $\discr{U}_0$.
By the martingale property
\begin{equation}\label{ie: martingale prop spin Ising observable}
 f_\delta^{\discr{U},\discr{a},\discr{b}}(\discr{c}) \geq 
  \E\left[ \ind_{\tau < \infty, \; \gamma[0,\tau] \subset \discrui{U}{-}_0}
   \; f_\delta^{\discr{U}_\tau,\discr{a}_\tau,\discr{b}}(\discr{c}) \right] .
\end{equation}
Therefore to get an estimate for $\P[ \tau < \infty, \; \gamma[0,\tau] \subset \discrui{U}{-}_0 ]$
we have to estimate the ratio
\begin{equation*}
\frac{f_\delta^{\discr{U}_\tau,\discr{a}_\tau,\discr{b}}(\discr{c})}{f_\delta^{\discr{U},\discr{a},\discr{b}}(\discr{c})}
\end{equation*}

By the convergence of $f_\delta^{\discr{U},\discr{a},\discr{b}}(\discr{c})$
to $f^{\conti{U},\conti{a},\conti{b}}(\conti{c})$,
we can choose  a constant $\delta_1>0$ such that
\begin{equation*}
f_\delta^{\discr{U},\discr{a},\discr{b}}(\discr{c}) \leq 2 f^{\conti{U},\conti{a},\conti{b}}(\conti{c})
\end{equation*}
for all $0<\delta<\delta_1$. By the uniform convergence of 
$f_\delta^{\discr{U}_\tau,\discr{a}_\tau,\discr{b}}(\discr{c})$
to $f^{\conti{U}_\tau,\discr{a}_\tau,\conti{b}}(\conti{c})$ and by the fact that
$f^{\conti{U}_\tau,\discr{a}_\tau,\conti{b}}(\conti{c})$ is uniformly bounded below by a positive constant,
we can choose a constant $\delta_2>0$ such that
\begin{equation*}
f_\delta^{\discr{U}_\tau,\discr{a}_\tau,\discr{b}}(\discr{c}) \geq \frac{1}{2} f^{\conti{U}_\tau,\discr{a}_\tau,\conti{b}}(\conti{c})
\end{equation*}
for all $0<\delta<\delta_2$. Notice that we continue to denote the tip of the curve by $\discr{a}_\tau$ since
it's the tip of the discrete path $\gamma$ and in fact, $\conti{U}_\tau$ is the continuum domain slitted by the discrete path.

Set $\delta_0 = \min \{\delta_1,\delta_2\}$.
Then 
\begin{equation*}
\frac{f_\delta^{\discr{U}_\tau,\discr{a}_\tau,\discr{b}}(\discr{c})}{f_\delta^{\discr{U},\discr{a},\discr{b}}(\discr{c})}
\geq \frac{1}{4} \frac{f^{\conti{U}_\tau,\discr{a}_\tau,\conti{b}}(\conti{c})}{f^{\conti{U},\conti{a},\conti{b}}(\conti{c})} 
\end{equation*}
for $0< \delta < \delta_0$.

Let $\psi$ be a conformal map that sends $\conti{U}$ to $\half$ and $\conti{b}$ to $\infty$ and
such that its derivative at $\conti{b}$ has modulus equal to $1$ in an appropriate sense.
Then the function $\Psi$ in Theorem~\ref{thm: cs2009} and in \eqref{eq: def f cont spin ising} can be written as
$\Psi = \eta_{\psi(\conti{a})} \circ \psi$ where $\eta_\alpha : z \mapsto -(z-\alpha)^{-1}$.
Notice also that $f^{\conti{U}_\tau,\discr{a}_\tau,\conti{b}}(\conti{c})$ is a constant
times the square root of the conformal map $\eta_{g(\psi(\discr{a}_\tau))} \circ g \circ \psi$
where $g$ is the conformal map
sending $\half \setminus \psi([\conti{a},\discr{a}] \cup \gamma[0,\tau])$ onto $\half$ and normalized hydrodynamically
at $\infty$.
Hence
\begin{align}
\frac{f^{\conti{U}_\tau,\discr{a}_\tau,\conti{b}}(\conti{c})}{f^{\conti{U},\conti{a},\conti{b}}(\conti{c})} 
&= \sqrt{\frac{g'(\psi(\conti{c})) \; \eta_{g(\psi(\discr{a}_\tau))}'(g(\psi(\conti{c})))}{\eta_{\psi(\conti{a})}'(\psi(\conti{c}))}} \nonumber \\
&= \sqrt{\frac{g'(\psi(\conti{c})) \; (\psi(\conti{c})-\psi(\conti{a}))^2}{(g(\psi(\conti{c}))-g(\psi(\discr{a}_\tau)))^2}} . 
\label{eq: ratio spin Ising observable}
\end{align}
It remains to estimate the quantity inside the square root. We will do this in the next section.

\subsubsection{Auxiliary results on conformal maps}

We will slightly simplify the notation of the previous subsection.

Let $0<r<R$ and $\theta_1 < \theta_2 \leq \theta_1 + 2\pi k$, where $k \in \N$ is fixed.
Let $U_0$ be an annular sector $U_0 = \{(\rho,\theta) \in (r,R) \times (\theta_1,\theta_2)\}$,
which we consider as a covering space of $A=A(z_0,r,R)$ through the map $(\rho,\theta) \mapsto z_0 + \rho e^{i\theta}$.
Consider a domain $U$ which is simply connected and is obtained from $U_0$
by gluing (disjointly) to it along the radius $r$ boundary a finite number of $U_j$'s in an appropriate covering space
as in the previous subsection.

Consider some numbers $0< r < r_d < r_c < R_b < R$.
Set 
\begin{gather*}
a=z_0 + R e^{i \theta_1}, \qquad b=z_0 + R_b e^{i \theta_2}, \\ 
c=z_0 + r_c e^{i \theta_2}, \qquad d=z_0 + r_d e^{i \theta_2},
\end{gather*}
where the proportionalities of $R$, $r_c$ and $r_d$ to $r$ are specified later, but we consider $R/R_b$ to be a fixed number close to $1$,
say, equal to $9/10$.
That is, $a,b,c,d$ follow  the counterclockwise order on the boundary of the annular sector
and $a$ is an outer corner of the sector, $b$ is close to the other corner and $b,c,d$ lie on the same ray.
See Figure~\ref{fig: spin ising lemma}.

\begin{figure}[tbh]
\centering
	\includegraphics[scale=.9]
{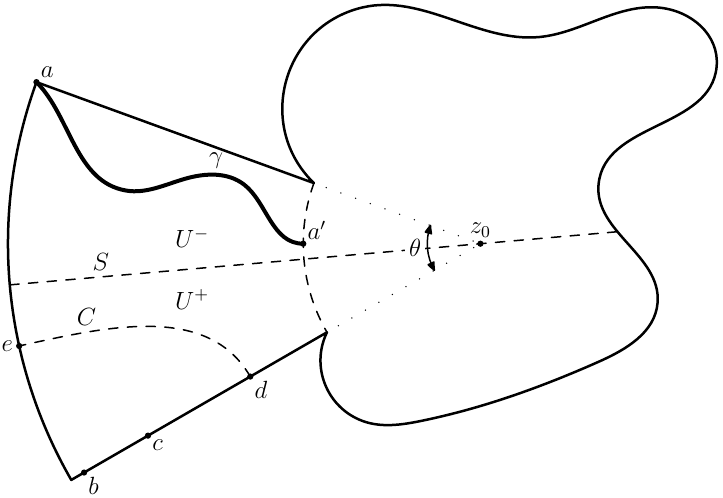}
\caption{The setup for Lemma~\ref{lem: spin ising lemma lemma} and Proposition~\ref{prop: spin ising lemma proposition}.} \label{fig: spin ising lemma}
\end{figure}

Let $S$ be the component of the intersection of $U$ and the line
$\{z_0 + t e^{i (\theta_1 + \theta_2)/2} \,:\, t \in \R \}$ which meets the
annular sector and hence ends at a point on the boundary of the annular section.

Let $\psi: U \to \half$ be conformal and onto such that $b$ is mapped to $\infty$.
Let $C_\half$ be the half-circle in the upper-half plane centered at $\psi(a)$ and 
with the left-end point equal to $\psi(d)$. Set $C= \psi^{-1} (C_\half)$.
Denote $\psi(x)$ by $x_\half$ where $x=a,b,c,d$.

Define further disjoint connected domains $U^\pm$ such that $U = U^- \cup S \cup U^+$
and $U^-$ is next to $a$ and $U^+$ is next to $b$.
Set also $U_0^\pm = U_0 \cap U^\pm$.

\begin{lemma}\label{lem: spin ising lemma lemma}
There is a universal constant $m>1$, such that $C \subset U^+$ if $r_d > m r$.
\end{lemma}

\begin{proof}
Denote the other endpoint of $C$ by $e$. By considering the modulus of the quad
$(U,b,d,a,e)$ and using the conformal invariance of the modulus as well as
the definition of $C_\half$, we see that $e \in \partial U^+$ for large enough $m$. 

Now the path $C$ is characterized by the property that the boundary arcs $ed$ and $de$ of $U$
have both harmonic measure seen form any point of $C$ equal to $1/2$. 
We claim that 
on $S$ the harmonic measure of $de$ is strictly larger than that of $ed$.
This is true in the domain obtained by gluing to $U_0$ along each $I_j$ the cover of $\C$ that is formed from infinite number of sheets
glued along $I_j$. Notice that this is a space where $U$ can be embedded naturally. Here $I_j = \overline{U}_0 \cap \overline{U}_j$. 
Since the harmonic measure of $de$ only increases when the domain is decreased, the claim holds for the general domain $U$.
Thus the lemma follows.
\end{proof}

\begin{proposition}\label{prop: spin ising lemma proposition}
Consider a simple path $\gamma$ from $a$ to some $a'$ with $|a'-z_0|=r$. 
Denote $\psi(\gamma)$ by $\gamma_\half$ and $\psi(a')$ by $a'_\half$.
Suppose that $\gamma$
is contained in $\overline{U_0^-}$. Let $H$ be $\half \setminus \gamma_\half$.
Let $g: H \to \half$ be the Loewner map associated to $\gamma_\half$.
Then for any $M>0$ there exists $N>0$ such that if $R/r \geq N^3$, $r_c = N^2 r$, $r_d = N r$, then
\begin{equation}
g'(c_\half) \;  \frac{(c_\half-a_\half)^2}{(g(c_\half)-g(a'_\half))^2} \geq M .
\end{equation}
\end{proposition}

\begin{proof} 
Set $l=a_\half-d_\half$, which is the radius of the semicircle $C_\half$.
Then $a_\half - c_\half > a_\half - d_\half=l$.

By considering the extremal length of suitable curve families, we can show that
$0< g(a'_\half)-g(d_\half) < g(d_\half)-g(c_\half)$ and that $0 < d_\half - c_\half < \eps l$ for
$\eps>0$ which can be arbitrarily small when $N$ is chosen to be large.
In the former inequality the curve family is the family connecting $a'b$ to $cd$ in $U \setminus \gamma$ and
in the latter it is the family connecting $bc$ to $da$ in $U$, which both have small extremal length.
By translating, we can assume that $a_\half=0$. 
Then $d_\half=-l$ and
$c_\half = -l -l\tilde{\eps}$ with $0 < \tilde{\eps} < \eps$.

We observe that by the properties of Loewner flows  $g(d_\half)-g(c_\half) < d_\half - c_\half = l\tilde{\eps}$.
That is, we can consider $g$ as a Loewner chain at the time corresponding to the value of its half-plane capacity
and use the fact that $t \mapsto |g_t(x)-g_t(x')|$ is decreasing for any real $x,x'$ that lie on the same component of
$\R \setminus \{W_0\}$, where $W_0$ is the Loewner driving term at time $0$. 

The second observation coming from Loewner flows is that $g$ satisfies for all $\eps>0$
\begin{equation*}
g'(c_\half) \geq \tilde{g}'(c_\half) \geq \tilde{g}'(-l-l\eps) = \frac{(1+\eps)^2-1}{(1+\eps)^2} 
  \geq 2 \eps - 3 \eps^2 ,
\end{equation*}
where $\tilde{g} (z)= z + l^2/z$. The first inequality follows from the fact that $g$ and $\tilde{g}$ can be seen
as two time instances of the same Loewner chain and $t \mapsto g'_t(x)$ for real $x$ is decreasing.
The half-plane capacity of $g$ is less than that of $\tilde{g}$, which gives the order of the corresponding
time instances.

Combining these estimates gives
\begin{equation*}
g'(c_\half) \;  \frac{(c_\half-a_\half)^2}{(g(c_\half)-g(a'_\half))^2} 
  \geq g'(c_\half) \; \frac{l^2}{(2l\tilde{\eps})^2}
  \geq (2 \eps - 3 \eps^2) \; \frac{1}{4\eps^2}  = \frac{1}{2\eps} - \frac{3}{4} .
\end{equation*}
And since $\eps>0$ becomes arbitrarily small when $N$ is increased, the claim follows.
\end{proof}

Combining \eqref{ie: martingale prop spin Ising observable} and \eqref{eq: ratio spin Ising observable}
with Proposition~\ref{prop: spin ising lemma proposition} gives Condition~\ref{def: b unf crossing}
for the spin Ising model as we will state in the next proposition. Before that we give the required definitions.

In the following $U$ is an \emph{admissible domain} if all the following conditions hold
\begin{itemize}
\item The domain $U$ is assumed to be a union of full plaquettes of a square lattice (with some mesh size).
\item The domain $U$ is assumed to be cut from the square lattice
by paths $c_1$ and $c_2$ (on the dual lattice), that is,
$U$ is a bounded connected component of $\C \setminus (c_1 \cup c_2)$.
The paths $c_1$ and $c_2$ are assumed to be edge-simple and non-self-crossing, but they are allowed to contain counterclockwise loops
(hence they are not necessarily vertex-simple),
and they assumed to be  mutually disjoint except that they share common parts in both ends (we can assume that
the common parts are at least one edge long in both ends, because we interpret that the boundary conditions change at the end points
from $-1$ to $+1$ and we can always explore the interface by one step).
\item Let $a$ and $b$ be the common end points of $c_1$ and $c_2$.
Then $a$ and $b$ are assumed to be on the boundary of $U$ 
and  $c_1$ and $c_2$ assumed to trace the boundary in clockwise and counterclockwise direction, respectively.
\end{itemize}
This set of conditions is consistent with growing the leftmost path $\gamma$
in the domain wall configuration in the spin Ising model --- to recall the definition see the beginning of
Section~\ref{sssec: observable spin Ising}.
The conformal map $\phi_U$ is defined to be a conformal map from $U$ onto $\disc$ such that
$\phi_U(a)=-1$ and $\phi_U(b)=1$. 

The probability measure $\P_U$ is the law of the leftmost path $\gamma$
in the domain wall configuration in the spin Ising model on the graph corresponding to $U$ with boundary conditions
equal to $+1$ on each vertex on the right of $c_1$ (including the vertices that are inner corners of $\partial U$)
and to $-1$ on each vertex on the left of $c_2$ excluding the vertices are inner corners of $\partial U$.

The martingale property works well with the definition of admissible domains and $\P_U$ and the exploration of $\gamma$.
The following result follows readily from the estimates of this subsection and the previous one.

\begin{proposition}
The collection of the laws of the interface of spin Ising model at criticality on square lattice
(or on isoradial graphs) 
\begin{equation} \label{eq: spin Ising collection}
\Sigma_{\textnormal{spin Ising}} = \{ (U,\phi_U,\P_U) \;:\; U \textnormal{ an admissible domain}  \} 
\end{equation}
satisfies Condition~\ref{def: b unf crossing}. 
\end{proposition}


\subsection{Percolation}

Here we verify 
that the interface of site percolation on the triangular lattice at criticality satisfies Condition~\ref{def: b unf crossing}.
More generally we could work on any graph dual to a planar \emph{trivalent graph}.
The triangular lattice is denoted by $\mathbb{T}$ and it consists of the set of vertices
$\{ x_1 e_1 + x_2 e_2 : x_k \in \Z \}$ where $e_1=1$ and $e_2 = \exp ( i\,\pi/3)$ and
the set of edges such that vertices $v_1,v_2$ are connected by an edge if and only if $|v_1 - v_2|=1$.
The dual lattice of the triangular lattice is the hexagonal lattice $\mathbb{T}'$
consisting of vertices $\{ z_\pm + x_1 e_1 + x_2 e_2 : x_k \in \Z \}$ where 
$z_\pm = (1/\sqrt{3}) \exp( \pm i\,\pi/6)$ and two vertices $v_1,v_2$ are neighbors if $|v_1 - v_2|=1/\sqrt{3}$.

The percolation measure on the whole triangular lattice  with a parameter $p \in [0,1]$ is the probability
measure $\mu_\tl^p$ on $\{ \text{open}, \text{closed} \}^{\tl}$ such that independently each vertex is open with probability $p$
and closed with probability $1-p$. The independence property of the percolation measure gives a consistent way
to define the measure on any subset of $\tl$ by restricting the measure to that set.
The well-known critical value of $p$ is $p_c = 1/2$.

In the case of triangular lattice define the set of \emph{admissible domains} containing any domain $U$
with boundary $\partial U = c_1 \cup c_2$ where $c_1$ and $c_2$ are
\begin{itemize}
\item simple paths on the the hexagonal lattice (write them as $(c_k(n))_{n=0,1,\ldots.N_k}$)
\item mutually avoiding except that they have common beginning and end part:
$c_1(k)=c_2(k)$, $k=0,1,\ldots,l_1$, and $c_1(N_1 - k)=c_2(N_2 - k)$, $k=0,1,\ldots,l_2$, 
where $l_1,l_2 > 0$ 
\item such that 
$a=c_1(0)=c_2(0)$ and $b=c_1(N_1)=c_2(N_2)$ are contained on the boundary of the bounded component
of $\C \setminus (c_1 \cup c_2)$ and furthermore there is at least one path from $a$ to $b$ staying in $U \cap \hl$.
\end{itemize}
The last condition is needed to guarantee that $a$ and $b$ are boundary points of the bounded domain and
that the subgraph containing all the vertices reach from either $a$ or $b$ is connected. Note that the graph
is in fact simply connected.

On an admissible domain $U$ with boundary arcs $c_1$ and $c_2$, denote by $V$ the set of vertices of $\tl$ inside $U$, 
denote by $V_1$ the set of vertices of $\tl$ next to $c_1$ 
and by $V_2$ the set of vertices next to $c_2$. 
Define a probability measure $\mu_{U}^{p}$, $p \in [0,1]$, on the set $\{ \text{open}, \text{closed} \}^{V}$
such that each vertex is independently chosen to be open with the probability $p$ 
and closed with the probability $1-p$ 
and such that it satisfies 
the boundary conditions: the vertices are open on $V_1$ and closed on $V_2$.
Now there are interfaces on $\hl$ separating clusters of open vertices from clusters of closed vertices.
Define $\P_U$ be the law of the unique interface connecting $a$
to $b$ under the critical percolation measure $\mu_{U}^{p_c}$

The proof of the fact that the collection $(\P_U \, : \, U \text{ admissible} )$
satisfies Condition~\ref{def: b unf crossing} couldn't be easier
to give since we have the Russo--Seymour--Welsh theory (RSW).
Let $B_n$ be the set of points in the triangular lattice that are at graph-distance $n$ or less from $0$ and
let $A_n = B_{3n} \setminus B_{n}$ and let $O_n$ be the event that there is a open path inside $A_n$ separating
$0$ from $\infty$. Then there exists $q>0$ such that for any $n$
\begin{equation}
\mu_{\tl}^{p_c} ( O_n ) \geq q .
\end{equation}
Denote by $O'_n$ the event that there is a closed path inside $A_n$ separating
$0$ from $\infty$. By symmetry the same estimate holds for $O'_n$.

Let now $\tilde{A}_n = B_{9n} \setminus B_{n}$, i.e. $\tilde{A}_n$ is the union of the disjoint sets $A_n$ and $A_{3n}$.
Now probabilities that $A_n$ contains an open path and $A_{3n}$ contains a closed path (both separating $0$ from $\infty$)
are independent and hence the corresponding joint event has positive probability
\begin{equation} \label{ie: percolation choking surface}
\mu_{\tl}^{p_c} ( O_n \cap O'_{3n} ) \geq q^2 .
\end{equation}

\begin{proposition}
The collection of the laws of the interface of site percolation at criticality on triangular lattice 
\begin{equation} \label{eq: perc collection}
\Sigma_{\textnormal{Percolation}} = \{ (U,\phi_U,\P_U) \;:\; U \textnormal{ an admissible domain}  \} .
\end{equation}
satisfies Condition~\ref{def: b unf crossing}.
\end{proposition}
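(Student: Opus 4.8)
The plan is to run the argument for the critical FK Ising model (Proposition~\ref{prop: fkIsing}) almost verbatim, with the independence of the percolation measure taking over the role played there by the FKG inequality; as a consequence every step becomes lighter.

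\emph{Reductions.} First, the collection \eqref{eq: perc collection} has the domain Markov property: conditioning the interface on $\gamma[0,t]$, with $t$ a time at which $\gamma$ has reached a vertex of $\hl$, reveals precisely the states of the vertices of $\tl$ adjacent to $\gamma[0,t]$, so the conditional law of the remaining interface is $\P_{U_t}$ for the admissible domain $U_t=U\setminus\gamma[0,t]$ (its open arc being the relevant part of $c_1$ followed by the open side of $\gamma[0,t]$, and similarly for the closed arc). Hence, by Remark~\ref{rem: dmp}, it is enough to verify the unconditional estimate for every admissible $U$. Moreover, by the composition argument used in Section~\ref{ssec: equiv cond} to prove \ref{def: b unf crossing}$\Leftrightarrow$\ref{def: pb unf crossing}, together with the domain Markov property, it suffices to produce one ratio $\lambda>1$ and one constant $s<1$ such that, for every admissible $U$ and every annulus $A=A(z_0,r,\lambda r)$,
\[
\P_U\big(\gamma\text{ makes a crossing of }A\text{ contained in }A^u\big)\le s;
\]
chaining $n$ concentric such annuli then yields Condition~\ref{def: b unf crossing} with $C=\lambda^n$ as soon as $s^n<1/2$. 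Finally, exactly as for FK Ising, there is a lattice constant $\eta>0$ such that for $r\le\eta$ the set $A^u$ admits no crossing at all (the hexagonal boundary arcs leave no room between the inner and outer circle), so we may additionally assume that $r$ exceeds a fixed multiple of the lattice spacing.

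\emph{Geometric reduction.} Suppose $\gamma$ makes an unforced crossing of $A=A(z_0,r,\lambda r)$ inside an avoidable component $V$ of $U\cap A$. Since $\gamma$ is the interface between the open cluster of $c_1$ and the closed cluster of $c_2$, the closed vertices of $\tl$ lying immediately along the crossing subcurve form a closed path shadowing it, which (for $r$ above the lattice scale) is a radial crossing of a slightly shrunk annulus $A(z_0,r',R')$ with $R'/r'$ still of order $\lambda$, and which stays within a one-vertex neighbourhood of $V$. Because $V$ does not disconnect $a$ from $b$ in $U$, its lift to the universal cover of $\C\setminus\{z_0\}$ is contained in an angular strip of fixed finite aperture $\Theta$ (one may take $\Theta=6\pi$, exactly the planar combinatorics that confines $A^u$ to a $6\pi$ opening in the FK argument). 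Using that the restriction of the percolation measure to a subgraph is again i.i.d., and the elementary monotonicity of the increasing event ``there is a closed connection'' under adding closed sites (the elementary substitute for the FKG step in Proposition~\ref{prop: fkIsing}), we may embed $V$ into the full annular sector $S=\{(\rho,\theta):r'<\rho<R',\ \alpha<\theta<\alpha+\Theta\}$ with all sites on its two radial sides declared closed, and conclude that the unforced crossing event is contained in the event that $S$ contains a closed path joining its inner and outer circular sides.

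\emph{RSW estimate.} By planar self-duality of site percolation on $\tl$, the latter event is the complement of the event that an open path connects the two radial sides of $S$. In logarithmic polar coordinates $S$ is a rectangle of width $\Theta$ and height $\log\lambda$, so this open connection is a crossing of a tall thin rectangle in its short direction, and by the Russo--Seymour--Welsh circuit estimate \eqref{ie: percolation choking surface} (applied at a fixed geometric progression of radii inside $S$ and glued into crossings in the standard way) its probability tends to $1$ as $\lambda\to\infty$, uniformly in everything else. Fixing $\lambda$ so large that this probability exceeds $1/2$ gives the desired $s<1/2$, and composing over concentric annuli as above completes the proof. The only genuinely delicate point is the confinement of the avoidable component $A^u$, together with the interface arm running alongside it, to a bounded-aperture annular sector; this is handled exactly as in the proof of Proposition~\ref{prop: fkIsing}, and is in fact simpler here, since independence lets one restrict the configuration to the sector instead of invoking FKG to adjust boundary conditions.
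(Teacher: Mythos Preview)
Your strategy is legitimate: the paper itself remarks that the FK Ising argument carries over to percolation, so rerunning Proposition~\ref{prop: fkIsing} with independence in place of FKG is a correct route. However, the paper's own proof takes a much shorter and genuinely different path. Rather than confining the avoidable component to an angular sector and bounding a one-colour crossing there, it uses independence to get \emph{both} colours at once: by RSW, with probability at least $q^2$ the annulus $\tilde A_n=B_{9n}\setminus B_n$ contains an open circuit in $A_n$ \emph{and} a closed circuit in $A_{3n}$ (these live on disjoint vertex sets, hence are independent). Either circuit alone blocks the interface from making an unforced crossing through a component whose $\partial U$-boundary lies on the arc of the opposite colour; together they block every unforced crossing of $\tilde A_n$. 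Stacking $N$ disjoint concentric copies gives $(1-q^2)^N$. This avoids entirely the angular-sector and two-case analysis you inherit from the FK proof.

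There is also a genuine inaccuracy in your write-up. You assert that because the avoidable component $V$ does not disconnect $a$ from $b$, its lift to the universal cover of $\C\setminus\{z_0\}$ is confined to a sector of fixed aperture (you say $6\pi$). This is false: an avoidable component can wind around $z_0$ arbitrarily many times. Take $z_0\in U$ with $B(z_0,r)$ just barely poking through $\partial U$ and $a,b$ both outside $B(z_0,R)$; then $A\cap U$ is an annulus with a small bite removed, it is avoidable, and its lift has infinite angular extent. This is precisely why the FK Ising proof splits into the cases $\theta\le 4\pi$ and $\theta>4\pi$: the first case embeds into a $6\pi$ sector, but the second does not, and is handled instead by noting that any crossing must pass through two fixed radials $2\pi$ apart and bounding the crossing between those. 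Your sentence ``this is handled exactly as in the proof of Proposition~\ref{prop: fkIsing}'' points to the right fix, but your actual argument (a single bounded-aperture embedding) does not match what that proof does, and as stated would not cover the winding case.
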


\begin{remark}
Exactly the same proof as for FK Ising works for percolation. However RSW provides a simpler way to prove the proposition.
\end{remark}

\begin{proof}
As in the case of FK Ising, we don't have to consider the stopping times at all. The reason for this is that
if $\gamma:[0,N] \to U \cup \{a,b\}$ is the interface parametrized such that $\gamma(k)$, $k=0,1,2,\ldots, N$,
are the vertices along the path, then $U \setminus \gamma(0,k]$ is admissible for any $k=0,1,2,\ldots, N$
and no information is added during $(k,k+1)$. Hence after stopping we stay within the family~\eqref{eq: perc collection}.
Here we also need that the law of percolation conditioned to the vertices explored up to time $n$ is the percolation measure
in the domain where $\gamma(k)$, $k=1,2,\ldots, n$, are erased.

For any $U$, we can apply a translation and consider annuli around the origin. Consider the annular region
$B_{9^N n} \setminus B_n$ for any $n,N \in \N$. By the inequality \eqref{ie: percolation choking surface}
the probability that $\gamma$ makes an unforced crossing is at most $(1-q^2)^N \leq 1/2$, for large enough $N$.
\end{proof}

\subsection{Harmonic explorer}\label{ssec: he}

The result that the harmonic explorer (HE) satisfies Condition~\ref{def: b unf crossing} 
appears already in \cite{schramm-sheffield-2005-}.
We will here just recall the definitions and state the auxiliary result needed. For all the details we refer to
\cite{schramm-sheffield-2005-}.

In this section and also in Sections~\ref{ssec: lerw} and \ref{ssec: apriori ust}
the models are directly related to simple random walk. The next basic estimate is needed for
bounds like in Conditions~\ref{cond: geom power-law} and \ref{cond: conf power-law}.

\begin{lemma}[Weak Beurling estimate of simple random walk]\label{lm: beurling}
Let $L=\Z^2$ or $L=\mathbb{T}$ and let
$(X_t)_{t =0,1,2,\ldots}$ be a simple random walk on $L$ with the law $P_x$ such that $P_x(X_0 = x )=1$
and let $\tau_B$ be the hitting time of a set $B$.
For an annulus $A=A(z_0,r,R)$, denote by $E(A)$ the event that a simple random walk starting at $x \in A \cap L$
makes a non-trivial loop around $z_0$ before exiting $A$, that is,
there exists $0 \leq s < t \leq \tau_{\C \setminus A}$ s.t. $X|_{[s,t]}$ is not nullhomotopic in $A$.
Then there exists $K>0$ and $\Delta>0$ such that 
\begin{equation*}
P_x ( E(A(z_0,r,R))) ) \geq 1 - K \left(\frac{r}{R}\right)^\Delta
\end{equation*}
for any annulus $A(z_0,r,R)$ with $1 \leq r \leq R$ and for any $x \in A(z_0,r,R) \cap L$ such that 
$\sqrt{r R} - 1 < |x-z_0| < \sqrt{r R} + 1$.
\end{lemma}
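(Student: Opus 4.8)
The statement to prove is the Beurling estimate for simple random walk: for $L = \Z^2$ or $L = \mathbb{T}$ and an annulus $A(z_0, r, R)$ with $1 \le r \le R$, a random walk started at $x$ with $\sqrt{rR} - 1 < |x - z_0| < \sqrt{rR} + 1$ makes a non-contractible loop in $A$ before leaving $A$ with probability at least $1 - K(r/R)^\Delta$. The plan is to reduce this to the standard discrete Beurling projection/gambler's-ruin machinery by a multiplicative-decomposition argument. First I would split the dyadic-type annulus $A(z_0, r, R)$ into $N \asymp \log(R/r)$ concentric subannuli $A_j = A(z_0, \rho_{j-1}, \rho_j)$ with a fixed modular ratio $\rho_j / \rho_{j-1} = \lambda > 1$ (so $N = \lfloor \log(R/r)/\log\lambda \rfloor$), chosen so that the walk started anywhere on the ``middle circle'' $|z - z_0| \approx \sqrt{rR}$ lies in a subannulus away from both extremes. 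The key claim to isolate is a \emph{single-scale} estimate: there exists $p_0 > 0$, depending only on $\lambda$ and the lattice, such that for any subannulus $A_j$ and any starting point $y$ on its inner or middle circle, the walk winds all the way around $z_0$ inside $A_j$ before exiting $A_j$ with probability at least $p_0$.

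For the single-scale estimate I would use the classical fact (Beurling-type projection inequality for random walk, as in Lawler's book or Kesten) that from a point at distance comparable to the inner radius of a fixed-modulus annulus, the walk has a uniformly positive chance to reach the outer circle; then by the strong Markov property and rotational quasi-symmetry of the lattice, it has a uniformly positive chance to travel an angular sector of fixed aperture while staying in the annulus. Concatenating a bounded number (depending only on $\lambda$) of such sector-crossings, using the strong Markov property at the successive hitting times of radial segments, forces a full loop around $z_0$ with probability bounded below by some $p_0 = p_0(\lambda, L) > 0$. This is essentially the content of the RSW-free ``winding'' lemma for random walk; the lattice enters only through uniform Harnack/ellipticity bounds, which are standard for $\Z^2$ and $\mathbb{T}$.

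Granting the single-scale claim, I would finish by a Markov-chain argument analogous to the one in Section~\ref{ssec: equiv cond} for annulus crossings. Trace the walk from $x$: with probability $\ge p_0$ it loops around $z_0$ inside the first available subannulus before leaving it; if it fails (probability $\le 1 - p_0$), it exits that subannulus, and conditionally on the exit location it is again positioned to attempt a loop in the next subannulus, and so on. Since the subannuli between radius $r$ and $R$ number $N \asymp \log(R/r)$, and since by choosing $x$ near $\sqrt{rR}$ we have of order $N$ disjoint attempts available (say $\lfloor N/2 \rfloor$ on each side, or just $N/2$ of them to be safe) before the walk could exit $A$ entirely, the probability of \emph{no} non-contractible loop is at most $(1 - p_0)^{\lfloor N/2 \rfloor}$. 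Setting $\Delta = \log(1/(1-p_0)) / (2\log \lambda)$ and absorbing the floor into the constant $K$, this is bounded by $K (r/R)^\Delta$, which is exactly the claimed bound. The main obstacle is the single-scale winding estimate: making the uniform lower bound $p_0$ genuinely uniform over all starting points on the relevant circles and over both lattices requires a careful application of the discrete Beurling projection inequality together with Harnack-type comparisons, rather than any soft argument; everything after that is the by-now-routine geometric-decay bookkeeping.
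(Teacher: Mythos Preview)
Your proposal is correct and coincides with the second alternative in the paper's own sketch: establish a single-scale winding estimate in a fixed-modulus annulus by concatenating elementary crossing events (the paper phrases these as rectangle exits, you phrase them as sector crossings --- the same content), then iterate over $\asymp \log(R/r)$ concentric scales via the strong Markov property to upgrade the constant bound to a power law. The paper's sketch leaves the iteration step entirely implicit, so your write-up is in fact more detailed than the paper's on that point; the paper also mentions an alternative route via Brownian motion and invariance principles, which you do not use but do not need.
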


\begin{proof}[Sketch of proof] 
Either use the similar property of Brownian motion and the convergence of simple random walk to Brownian motion
or construct the event $E(A(z_0,r,4r)$ for $|x - z_0| \approx 2 r$ from elementary events which, for $L=\Z^2$,
are of the type that
a random walk started from $(n,n) \in \Z^2$ will exit the rectangle $R_n=[0,\lfloor a n \rfloor] \times [0,2 n]$
through the side $\{ \lfloor a n \rfloor \} \times [0,2 n]$.
That elementary event for given $a>1$ has positive probability uniformly over all $n$.
\end{proof}

We use here the same definition as in the case of percolation for admissible domains, for $c_k$, for $V_k$ etc.
In the same way as above, the random curve $\gamma$ will be defined on $\hl$. We describe here
how to take the first step in the harmonic explorer. 
Let $U$ be an admissible domain
and choose $a$ and $b$ in some way. 
Suppose for concreteness that $c_1$ follows the boundary clockwise from $a$ to $b$
and therefore $c_1$ lies to the ``left'' from $a$ and $c_2$ lies to the ``right''. Denote by $H_U : U \cap \tl \to [0,1]$
the discrete harmonic function on $U \cap \tl $ that has boundary values $1$ on $V_1$ and $0$ on $V_2$.

Now $\gamma(0)=a$ has either one or two
neighbor vertices in $U$. If it has only one, then set $\gamma(1)$ equal to that vertex. If it has two neighbors, say,
$w_L$ and $w_R$ (defined such that $w_L-a, w_R - a, c_1(1) - a $ are in the clockwise order)
calculate the value of $p_0=H_U(v_0)$ at the center $v_0$ of the hexagon that is lying next to all these three vertices.
Then flip a biased coin and set $\gamma(1) = w_R$ with probability $p_0$ and $\gamma(1) = w_L$ with probability $1-p_0$.
Note that the rule followed when there is only one neighbor can be seen as a special case of the second rule.

Extend $\gamma$ linearly between $\gamma(0)$ and $\gamma(1)$ and set now $U_1 = U \setminus \gamma(0,1]$ which is an admissible
domain. Repeat the same procedure for $U_1$ to define $\gamma(2)$ using a biased coin independent from the first one
so that the curve turns right with probability $p_1=H_{U_1}(v_1)$ and left with probability $1-p_1$ where $v_1$
is the center of the hexagon next to $\gamma(1)$ and its neighbors except for $\gamma(0)$.
Then define $U_2 = U_1 \setminus \gamma(1,2] = U \setminus \gamma(0,2]$ and continue the construction in the same manner.
This repeated procedure defines a random curve $\gamma(k)$, $k=0,1,2, \ldots, N$, such that $\gamma(0) = a$, $\gamma(N)=b$,
$\gamma$ is simple and stays in $U$.

A special property of this model is that the values of the harmonic functions $M_n = H_{U_n}(v)$ for fixed $v \in U \cap \tl$
but for randomly varying $U_n$ defined as above will be a martingale with respect to the $\sigma$-algebra generated by
the coin flips or equivalently by the curve or the domains $(U_n)$.

It turns out that in this case, the harmonic ``observables'' $(H_{U_n}(v))_{v \in U \cap \tl, n = 0,1,\ldots, N}$,
provide also a method to verify the Condition~\ref{def: b unf crossing}. 
This is done in Proposition~6.3 of the article \cite{schramm-sheffield-2005-}.
We only sketch the proof here.
Let $U$ be an admissible domain and $A = A(z_0,r,R)$ an annulus.
Let $V_-$ be the set of
vertices in $V_1 \cap B(z_0,3r)$ that are disconnected from $b$ by $A^u$ and let the corresponding part of $A^u$ be $A^u_-$.
Let
$\tilde{M}_n = \sum_{x \in V_-} \tilde{H}_{U_n} (x)$, where $\tilde{H}_{U}(x)$, $x \in V_1$ is defined to be the harmonic measure
of $V_2$ seen from $x$ and can be expressed in terms of $H_U$ as the average value $H_U$ among the neighbors of $x$.
Now the key observation in the above proof is that $(\tilde{M}_n)$ is a martingale with 
$\tilde{M}_0 = \OO( (r/R)^\Delta )$ for some $\Delta>0$ (following from Beurling estimate of simple random walk)
and on the event of crossing one of $A^u_-$ it increases to $\OO(1)$. A martingale stopping
argument tells that the probability of the crossing event is then $\OO( (r/R)^\Delta )$.

\begin{proposition}[Schramm--Sheffield]
The family of harmonic explorers satisfies Condition~\ref{def: b unf crossing}.
\end{proposition}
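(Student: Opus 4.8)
The plan is to combine the \emph{domain Markov property} of the harmonic explorer with the optional stopping theorem for the natural harmonic martingale, following Schramm and Sheffield \cite{schramm-sheffield-2005-}. The first step is a reduction to a time-zero statement. By construction, each turning probability $p_k=H_{U_k}(v_k)$ of the explorer depends only on the current slit domain $U_k=U\setminus\gamma(0,k]$, which is again an admissible domain, and no fresh randomness is introduced between integer times; hence the conditional law of $\gamma|_{[n,N]}$ given $\gamma[0,n]$ is the law of a harmonic explorer in $U_n$ started from $\gamma(n)$. So the family has the domain Markov property of Remark~\ref{rem: dmp}. Consequently, by Remark~\ref{rem: dmp} it suffices to prove, uniformly over admissible domains $U$ and over annuli $A=A(z_0,r,R)$, the time-zero power-law bound
\begin{equation*}
\P_U\big(\gamma\text{ makes a crossing of }A\text{ which is contained in }A^u\big)\le K\,(r/R)^{\Delta},
\end{equation*}
since for $R/r\ge(2K)^{1/\Delta}$ this gives the bound $1/2$ of Condition~\ref{cond: g const time zero}, and that condition together with the domain Markov property yields Condition~\ref{def: b unf crossing}.

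Fix an admissible $U=U(c_1,c_2)$ with the discrete harmonic function $H_U$ equal to $1$ on the layer $V_1$ of vertices next to $c_1$ and to $0$ on the layer $V_2$ next to $c_2$, and fix $A=A(z_0,r,R)$. As observed after the definition of $A^u$, we may assume $\partial B(z_0,r)\cap\partial U\neq\emptyset$, so every connected component of $A^u$ abuts $c_1$ or $c_2$; by the symmetry $c_1\leftrightarrow c_2$, $H_U\leftrightarrow 1-H_U$ it is enough to control unforced crossings through a component attached to $c_1$. Let $V_-$ be the set of $x\in V_1\cap B(z_0,3r)$ separated from $b$ inside $U$ by $A^u$, let $A^u_-$ be the union of the corresponding components, and put
\begin{equation*}
\tilde M_n=\sum_{x\in V_-}\tilde H_{U_n}(x),
\end{equation*}
where $\tilde H_{U_n}(x)$ is the averaged harmonic measure of $V_2$ seen from $x$ in $U_n$, as in \cite{schramm-sheffield-2005-}. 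Because $v\mapsto H_{U_n}(v)$ is a martingale for the filtration generated by the coin flips (this is the defining property of the harmonic explorer) and $\tilde M_n$ is a finite linear combination of such quantities with index set $V_-$ fixed at time $0$, the process $(\tilde M_n)_{n\ge 0}$ is a bounded martingale.

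It then remains to estimate the two ends of $\tilde M$ and apply optional stopping. At $n=0$, each $x\in V_-$ lies in $B(z_0,3r)$ and is cut off from $b$, hence from $V_2$, by an annular region containing a full crossing of $A(z_0,r,R)$, so the quantitative Beurling estimate for simple random walk, Lemma~\ref{lm: beurling}, gives $\tilde H_{U_0}(x)=\OO\big((r/R)^{\Delta}\big)$ uniformly, whence $\tilde M_0=\OO\big((r/R)^{\Delta}\big)$ (slightly shrinking $\Delta$, or bookkeeping per component, absorbs the cardinality of $V_-$). Let $\sigma$ be the first time the curve completes an unforced crossing of a component of $A^u_-$; at time $\sigma$ the region carrying the value $0$ has been pushed across the full width of the annulus next to $c_1$, so some $x\in V_-$ is then shielded from $c_1$ by the curve within distance $\OO(r)$ and hence $\tilde H_{U_\sigma}(x)\ge c_0$ for a universal $c_0>0$, giving $\tilde M_\sigma\ge c_0$ on this event. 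Optional stopping at $\sigma\wedge N$ for the bounded martingale then gives
\begin{equation*}
\P_U\big(\gamma\text{ makes an unforced crossing of }A^u_-\big)\le\frac{\E[\tilde M_{\sigma\wedge N}]}{c_0}=\frac{\tilde M_0}{c_0}=\OO\big((r/R)^{\Delta}\big),
\end{equation*}
and adding the symmetric bound for components attached to $c_2$ gives the required time-zero estimate.

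I expect the main obstacle to be the two geometric inputs feeding these endpoint estimates: showing that the vertices of $V_-$ are genuinely shielded from $V_2$ by an annular region of modulus $\gtrsim\log(R/r)$, so that Lemma~\ref{lm: beurling} applies with the stated uniformity; and showing that an unforced crossing forces the averaged harmonic measure of $V_2$ from some $x\in V_-$ up to a universal constant (again via the quantitative Beurling estimate, now in the slit domain, using that the crossing subcurve lies in $A^u$ and therefore does not itself separate $x$ from $b$). Both points are carried out in Proposition~6.3 of \cite{schramm-sheffield-2005-}, to which I would refer for the details; what is supplied here is the reduction to the time-zero statement via the domain Markov property together with Remark~\ref{rem: dmp}.
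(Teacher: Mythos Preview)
Your proposal is correct and follows essentially the same approach as the paper. The paper's treatment in Section~\ref{ssec: he} is itself only a sketch that defers the details to Proposition~6.3 of \cite{schramm-sheffield-2005-}, and your argument reproduces exactly that sketch: define the martingale $\tilde M_n=\sum_{x\in V_-}\tilde H_{U_n}(x)$, bound $\tilde M_0$ via the Beurling estimate, observe that an unforced crossing forces $\tilde M$ up to a universal positive constant, and apply optional stopping. Your explicit reduction to the time-zero statement via the domain Markov property (Remark~\ref{rem: dmp}) is a welcome clarification that the paper leaves implicit.
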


%
%

\subsection{Chordal loop-erased random walk}\label{ssec: lerw}

The loop-erased random walk is one of the random curves proved to be conformally invariant. In \cite{lawler-schramm-werner-2004-},
the radial loop-erased random walk between an interior point and a boundary point was considered.
We'll treat here the chordal loop-erased random walk between two boundary points.
Condition~\ref{def: b unf crossing} is slightly harder to verify in this case. 
Namely, the natural extension of Condition~\ref{def: b unf crossing} to the radial case can be verified in the same way, except that
Proposition~\ref{prop: srw crosses quad} is not necessary, and
it is done in \cite{lawler-schramm-werner-2004-}.
For another approach, yet similar, see \cite{zhan-2008b-}.

Let $(X_t)_{t =0,1,\ldots}$ be a simple random walk (SRW) on the lattice $\Z^2$ and $P_x$ its law so that $P_x(X_0=x)=1$.
Consider a bounded, simply connected domain $U \subset \C$ whose boundary $\partial U$ is a path in $\Z^2$.
Call the corresponding graph $G$, i.e., $G$ consists of vertices $\overline{U} \cap \Z^2$ and the edges
which stay in $U$ (except that the end points may be in $\partial U$).
Let $V$ be the set of vertices and $\partial V \dd= V \cap \partial U$.
When $X_0=x \in \partial V$ condition SRW on $X_1 \in U$. For any $X_0=x \in V$
define $T$ to be the hitting time of the boundary, i.e., $T = \inf\{ t \geq 1 \,:\, X_t \in \partial V\}$.

Denote by $\tau_A$ the hitting time of the set $A$ by the simple random walk $(X_t)_{t=0,1,\ldots}$
or $(X_t)_{t=0,1,\ldots, T}$. Let $\omega_U( x,A) = P_x^U(X_T \in A) = P_x^U(\tau_A \leq T)$
which is the \emph{discrete harmonic measure} of $A$ in $U$ as seen from $x$.
The quantity $\omega_U( x,A)$ is discrete harmonic in $x$ and satisfies the properties of a measure with respect to $A$.

For $a \in V$ and $b \in \partial V$ define $P_{a \to b}=P_{a \to b}^U$ to be the law of $(X_t)_{t=0,1,2,\ldots,T}$ with $X_0=a$
conditioned on $X_T=b$.
If $(X_t)_{t=0,1,2,\ldots,T}$ distributed according to $P_{a \to b}^U$ then the process $(Y_t)_{t=0,1,2,\ldots,T'}$,
which is obtained from $(X_t)$ by erasing all loops in chronological order,
is called \emph{loop-erased random walk} (LERW) from $a$ to $b$ in $U$.
Denote its law by $\P^{U,a,b}$. 
We will show that the collection $\{\P^{U,a,b} \,:\, (U,a,b) \}$ of chordal LERWs satisfies Condition~\ref{cond:quad}, where
$U$ runs over all simply connected domains as above and $\{a,b\} \subset \partial U$.

\begin{proposition}\label{prop: srw crosses quad}
There exists $\eps_0>0$ such that
for any $c>0$ there exists $L_0>0$ such that the following holds. Let $U$ be a discrete domain 
($\partial U$ is a path in $\Z^2$) and 
let $Q$ be a topological quadrilateral with ``sides'' $S_0, S_1, S_2, S_3$ and which lies on the boundary 
in the sense that $S_1, S_3 \subset \partial U$. Let $A \subset V \setminus Q$ be a set of vertices such that
$S_0$ disconnects $S_2$ from $A$. If $\elen(Q) \geq L_0$, then there exists $u \in Q$ and $r>0$ such that 
\begin{enumerate}\enustyii
\item $B \dd= V \cap B(u,r) \subset Q$, \label{eni: lerw prop 1}
\item $\min_{x \in B} \omega_U(x,A) \geq c \, \max_{x \in S_2} \omega_U(x,A)$ and \label{eni: lerw prop 2}
\item $P_{x \to y}^Q ( X[0,T] \cap B \neq \emptyset ) \geq \eps_0$ for any $x \in S_0$ and $y \in S_2$. \label{eni: lerw prop 3}
\end{enumerate}
\end{proposition}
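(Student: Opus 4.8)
The plan is to reduce everything to standard potential theory for simple random walk (SRW) in the quadrilateral $Q$. Write $\ell\dd=\elen(Q)\ge L_0$ and let $\varphi\colon Q\to(0,\ell)\times(0,1)$ be the conformal map sending the sides $(S_0,S_1,S_2,S_3)$ to the four sides of the rectangle; ``conformal distance of $w$ from $S_0$'' will mean $\real\varphi(w)$. (On the discrete level one replaces $Q$ by the union of faces it fills, whose extremal length differs from $\elen(Q)$ by $O(1)$; if $Q$ is only $O(1)$ lattice steps wide near the places that matter then every statement below is immediate, so I may assume $Q$ is macroscopic.) Fix once and for all a small universal $r_0\in(0,\tfrac14)$, let $u\in V$ be the vertex closest to $\varphi^{-1}(1,\tfrac12)$ — at conformal distance of order $1$ from $S_0$ and bounded away from all four sides of $Q$ — and set $r\dd=r_0\dist(u,\partial Q)$. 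By the Koebe distortion theorem $B=V\cap B(u,r)\subset Q$, which is \ref{eni: lerw prop 1}; the same $u$ and $r$ will give \ref{eni: lerw prop 2} and \ref{eni: lerw prop 3} once $L_0$ is taken large.

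I will use two facts about SRW on $\Z^2$. The first is the exponential crossing estimate: there are universal $K,\beta>0$ such that if the curve family joining two boundary pieces $F_0,F_2$ inside a discrete domain $W$ has extremal length at least $t$, then $\omega_W(\cdot,F_2)\le K e^{-\beta t}$ on $F_0$; this is classical and is obtained, as in the proof of Proposition~\ref{prop: equiv conditions}, by cutting into unit-modulus slabs, the per-slab bound being a crossing estimate of the type in Lemma~\ref{lm: beurling}. The second is a tube-stabilisation estimate: if $h$ is positive and discrete harmonic in $Q$ and vanishes on $S_1\cup S_3$, then its transverse profile stabilises exponentially, so that
\begin{equation*}
\frac{h(w)}{h(w')}\;\asymp\;e^{-\beta(\real\varphi(w)-\real\varphi(w'))}
\end{equation*}
uniformly over vertices $w,w'$ of $Q$ at conformal distance at least $\tfrac12$ from $S_0$, the implied constants being universal. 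This is the discrete analogue of the eigenfunction expansion in a straight strip — the same mechanism that gives crossing probabilities of long quadrilaterals a well-defined decay rate — and it follows by iterating the boundary Harnack principle together with the first estimate.

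For \ref{eni: lerw prop 2}, put $g\dd=\omega_U(\cdot,A)$, and let $R$ be the component of $U\setminus S_0$ containing $Q$. Since $S_0$ separates $S_2$ from $A$, the set $A$ is disjoint from $S_1\cup S_3$ and, more generally, from the whole arc of $\partial U$ bounding $R$. A walk started anywhere in $R$ that hits $A$ must first leave $R$, and it cannot leave at a boundary vertex of $R$ (those lie on $\partial U$, where $A$ is absent and the walk stops), hence it leaves through $S_0$; conditioning at this first exit gives, for every $v\in R$,
\begin{equation*}
g(v)\;=\;\sum_{z\in S_0}\mathcal P_R(v,z)\,g(z),
\end{equation*}
where $\mathcal P_R(\cdot,z)$ is the discrete Poisson kernel of $R$ at $z$. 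Each $\mathcal P_R(\cdot,z)$ is positive, harmonic in $R\supset Q$ and vanishes on $S_1\cup S_3$, so the tube-stabilisation estimate applied to it with $w\in B$ (conformal distance of order $1$ from $S_0$) and $w'=y\in S_2$ (conformal distance $\ell-O(1)$ from $S_0$) gives $\mathcal P_R(w,z)\ge c_1 e^{\beta\ell}\mathcal P_R(y,z)$ for all $z\in S_0$, with a universal $c_1>0$. Summing against $g(z)\ge 0$ yields $\min_B g\ge c_1 e^{\beta\ell}\max_{S_2}g$, which is $\ge c\max_{S_2}g$ as soon as $\ell\ge L_0\dd=\max\{2,\,\beta^{-1}\log(c/c_1)\}$; this fixes $L_0$ as a function of $c$ alone. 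For \ref{eni: lerw prop 3}, fix $x\in S_0$ and $y\in S_2$; conditioning the bridge $P^Q_{x\to y}$ at its first visit to $\partial B$ and comparing the ``exit at $y$'' probabilities started from $x$ and from $\partial B$ — both positive harmonic in the starting point on $Q$, vanishing on $\partial Q\setminus\{y\}$, with $x$ and $\partial B$ within a bounded Harnack chain of each other near the $S_0$-end and far from $y$, so that boundary Harnack makes them comparable — one obtains
\begin{equation*}
P^Q_{x\to y}\big(X[0,T]\cap B\ne\emptyset\big)\;\ge\;c_2\,P^Q_{x}\big(X[0,T]\cap B\ne\emptyset\big)
\end{equation*}
for a universal $c_2>0$. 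The right-hand probability — an unconditioned SRW from $x\in S_0$ reaching $B$ before leaving $Q$ — is a purely local event inside the conformal-distance-$\le2$ neighbourhood of the $S_0$-end of $Q$, bounded below by a universal $\eps_0'>0$ (by comparison with Brownian motion in $\varphi(Q)$, or trivially if $Q$ is narrow). Taking $\eps_0\dd=c_2\eps_0'$, fixed before $c$, completes the argument.

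The main obstacle is the tube-stabilisation estimate. Plain boundary Harnack only controls $g(w)/\omega_Q(w,S_0)$ up to a constant that could a priori degrade across the $\asymp\ell$ Harnack steps separating $B$ from $S_2$, which would wreck the bound; what is genuinely needed is that this constant stabilises — that harmonic measure, equivalently the Poisson kernel, in the long quadrilateral $Q$ decays into the tube at a well-defined exponential rate with a fixed transverse profile, uniformly in the boundary pole on $S_0$. Granting that, the rest is routine SRW potential theory together with the Koebe distortion theorem to translate between the conformal picture and Euclidean balls.
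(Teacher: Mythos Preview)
Your approach diverges from the paper's, and the step you flag as the main obstacle is genuinely unresolved. The tube-stabilisation estimate as you state it --- for \emph{any} positive discrete harmonic $h$ in $Q$ vanishing on $S_1\cup S_3$, with universal implied constants --- is false: take $h=\omega_Q(\cdot,S_2)$, which grows toward $S_2$, so $h(w)/h(y)$ with $w\in B$ and $y\in S_2$ is exponentially small, not large. In a long strip the generic such $h$ behaves like $(a\,e^{\pi x}+b\,e^{-\pi x})\sin\pi y$ with both $a,b\ge 0$ allowed, and there is no universal direction of decay. What you actually need is the special case $h=\mathcal P_R(\cdot,z)$, whose only singularity lies on $S_0$ and which vanishes on the rest of $\partial R$; decay toward $S_2$ is then plausible, but making it uniform in the pole $z\in S_0$ and in the unknown geometry of $R\setminus Q$ is exactly the substance of the proposition, and ``iterating boundary Harnack together with the first estimate'' does not deliver it. (When $z$ sits near a corner $S_0\cap S_1$ the kernel concentrates near $S_1$, so even comparing its value at the centred ball $B$ to its maximum along a transversal crosscut is already delicate.)

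The paper bypasses the Poisson-kernel route with a short maximum-principle argument. It cuts $Q$ into three sub-quads of moduli $l,l,L-2l$ (with $l$ a fixed universal constant) and places $u$ on the first interior crosscut $p_1$, chosen so that in $Q_1\cup Q_2$ the harmonic measures of $S_1$ and of $S_3$ from $u$ are each at least $1/6$ while that of the two short ends is at most $1/100$. Writing $H=\omega_U(\cdot,A)$ and $M=\max_{p_2}H$ (attained at $x^*$), the maximum principle produces a lattice path $\pi$ from $x^*$ to $A$ on which $H\ge M$; since $A$ lies beyond $S_0$, this path crosses $Q_1\cup Q_2$ from $p_2$ to $S_0$ and hence separates $S_1$ from $S_3$ there, so the walk from $u$ hits $\pi$ with probability at least $1/6$, giving $H(u)\ge M/6$ and then $\min_B H\ge c'M/6$ by Harnack. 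A single Beurling-type crossing estimate across $Q_3$ yields $M\ge e^{\alpha(L-2l)}\max_{S_2}H$, and $L_0$ is chosen from $c$. Claim~\ref{eni: lerw prop 3} is obtained by quoting Chelkak's discrete-complex-analysis toolbox rather than your bridge decomposition. The path-$\pi$ trick is the idea your argument lacks: it converts the comparison into one hitting-probability lower bound at a single well-placed point, with no need for any uniform Poisson-kernel control.
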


\begin{proof}
Cut $Q$ into three quads (topological quadrilaterals) by transversal paths 
$p_1$ and $p_2$
and call these quads $Q_k$, $k=1,2,3$. The sides of $Q_k$ are denoted by $S_j^k$, $j=0,1,2,3$, and we assume
that $S_0^1=S_0$, $S_2^1=p_1=S_0^2$, $S_2^2=p_2 =S_0^3$ and $S_2^3=S_2$. 

We assume that $\elen (Q_1) = \elen(Q_2) = l$ and $\elen(Q_3)=L - 2l$ where $L=\elen(Q)$.
Using the Beurling estimate, Lemma~\ref{lm: beurling}, it is possible to fix $l$ 
so large that $\omega_{Q_1 \cup Q_2} (z, S_0^1 \cup S_2^2) \leq 1/100$ for any $z$ on the discrete path closest
to $S_2^1=S_0^2$.

Since the harmonic measure $z \mapsto \omega_{Q_1 \cup Q_2} (z, S_1^1 \cup S_1^2)$ changes at most by a constant factor
between neighboring sites, we can find $u$ along the discrete path closest
to $S_2^1=S_0^2$ in such a way that 
$\omega_{Q_1 \cup Q_2} (u, S_1^1 \cup S_1^2), \omega_{Q_1 \cup Q_2} (u, S_3^1 \cup S_3^2) \geq 1/6$. Let $r$ be equal
to half of the inradius of $Q_1 \cup Q_2$ at $u$. Then $B \dd= V \cap B(u,r)$ satisfies \ref{eni: lerw prop 1} by definition
and \ref{eni: lerw prop 3} for some $\eps_0>0$ follows from Proposition 3.1 of \cite{chelkak-2012-}.

Let $H(x)= \omega_U( x,A)$.
Let $c'>0$ be such that $H(x) \geq c' \, H(y)$ for any $x,y \in B$. The constant $c'$ can be chosen to be universal
by Harnack's lemma. Let $M= \max_{x \in S_2^2} H(x)$ and let $x^*$ be the point where the maximum is attained.
By the maximum principle there is a path $\pi$ from $x^*$ to $A$ such that $H \geq M$ on $\pi$. Now
$H(u) \geq M/6$ and hence $\min_{x \in B} H(x) \geq M c'/6$. Finally by the Beurling estimate 
$\max_{x \in S_2^2} H(x) \geq \exp (  \alpha \elen(Q_3) ) \max_{x \in S_2} H(x)$ for some universal constant $\alpha>0$.
And hence we can choose $L_0$ so large that \ref{eni: lerw prop 2} holds for any $L \geq L_0$.
\end{proof}

\begin{theorem}
Condition~\ref{cond:quad} holds for LERW.
\end{theorem}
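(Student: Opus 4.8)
The plan is to verify Condition~\ref{cond:quad} for chordal LERW by exploiting the well-known domain Markov property of LERW together with the observable provided by Proposition~\ref{prop: srw crosses quad}. First I would recall that LERW enjoys a domain Markov property: conditioned on the first $n$ steps $Y[0,n]$ of the loop-erased walk from $a$ to $b$ in $U$, the remainder has the law of LERW from $Y_n$ to $b$ in $U\setminus Y(0,n]$ (this is classical; it follows from the corresponding property of the underlying random walk via the loop-erasure procedure). Hence by Remark~\ref{rem: dmp} it suffices to check the ``time $0$'' version: for any discrete domain $U$ with $a\in\partial U$, $b\in\partial U$, and any avoidable quadrilateral $Q$ of $U$ with $S_1\cup S_3\subset\partial U$ and $\elen(Q)\geq M$, we must bound by $1/2$ the probability that the LERW from $a$ to $b$ crosses $Q$, i.e. meets $S_2$ after starting on the $S_0$ side. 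Since $Q$ is avoidable, $S_0$ does not disconnect $a$ from $b$, so I take $A$ in Proposition~\ref{prop: srw crosses quad} to be a set of boundary vertices (near $b$, or containing a neighbourhood of $b$) that $S_0$ separates from $S_2$.

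The key point is to relate the LERW crossing event to an event for the underlying conditioned random walk $X$ with law $P_{a\to b}^U$, and there the standard observation is: if the loop-erasure $Y$ of $X$ crosses $Q$ (reaches $S_2$), then $X$ itself must cross $Q$, in particular $X$ must pass from $S_0$ to $S_2$. Now apply Proposition~\ref{prop: srw crosses quad} with the chosen $c$ (to be fixed) and obtain $L_0$, a ball $B=V\cap B(u,r)\subset Q$ with properties \ref{eni: lerw prop 1}--\ref{eni: lerw prop 3}. The idea is a second-moment / Markov-property comparison: on the event that $X$ reaches $S_2$, condition on the last visit to $S_0$ and the first subsequent visit to $S_2$; by property \ref{eni: lerw prop 3} the walk run between such $x\in S_0$ and $y\in S_2$ has probability at least $\eps_0$ (uniformly) of hitting $B$; hence crossing $Q$ forces, with probability at least $\eps_0$, a visit to $B$ before reaching $S_2$. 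On the other hand, property \ref{eni: lerw prop 2} says that from every point of $B$ the conditioned walk has a chance at least $c\cdot\max_{x\in S_2}\omega_U(x,A)$ of being absorbed at $A$ (hence at $b$, or at least consistent with the conditioning) — which, by the maximum principle and the fact that the conditioned law $P_{a\to b}^U$ is an $h$-transform by the harmonic function $\omega_U(\cdot,\{b\})$, means that once the walk reaches $B$ it is \emph{much less} likely than a walk reaching $S_2$ to be the one we conditioned on. Quantitatively, writing $h(x)=\omega_U(x,\{b\})$ and comparing $\sum_{x\in B}h(x)$ with $\max_{x\in S_2}h(x)$ via \ref{eni: lerw prop 2}, one gets that $P_{a\to b}^U(X \text{ hits } B) \geq \eps_0\, P_{a\to b}^U(X\text{ crosses }Q)$ while simultaneously $P_{a\to b}^U(X\text{ hits }B)$ is itself bounded by a small constant (again from \ref{eni: lerw prop 2}, since reaching $B$ makes the $h$-weight comparatively large relative to the normalising mass near $b$, contradicting a non-negligible contribution unless $c$ was taken small). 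Choosing $c$ small and then $M\geq L_0$ accordingly forces $P_{a\to b}^U(Y\text{ crosses }Q)\le P_{a\to b}^U(X\text{ crosses }Q)\le 1/2$.

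I would organise the write-up as: (i) reduce to the time-$0$ condition via the domain Markov property and Remark~\ref{rem: dmp}; (ii) reduce the LERW crossing event to the random-walk crossing event; (iii) invoke Proposition~\ref{prop: srw crosses quad} to extract $B,u,r$; (iv) run the $h$-transform / last-exit-decomposition comparison to show the crossing probability is controlled by $c$ and hence can be made $\le 1/2$. The main obstacle is step (iv): making the comparison between ``reaching $B$'' and ``reaching $S_2$'' fully rigorous requires carefully writing the conditioned walk $P_{a\to b}^U$ as Doob's $h$-transform of killed SRW, applying a last-exit decomposition at $S_0$ and a first-entrance decomposition at $S_2$ and at $B$, and using Harnack (property \ref{eni: lerw prop 2} already packages the needed harmonic-measure comparison, which is why the proposition was stated that way). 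Everything else is routine bookkeeping; the only subtlety to watch is that $Q$ is \emph{avoidable}, which is exactly what guarantees the existence of the separating set $A$ near $b$ needed to apply Proposition~\ref{prop: srw crosses quad}, and that the constant $\eps_0$ there is universal so that iterating/comparing does not degrade it.
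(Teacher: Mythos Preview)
Your overall strategy matches the paper's: reduce to the time-zero condition via the domain Markov property of LERW, observe that if the loop-erasure crosses $Q$ then so does the underlying conditioned walk $X$, and then use Proposition~\ref{prop: srw crosses quad} to control $P_{a\to b}^U(\tau_{S_2}<T)$. Steps (i)--(iii) of your plan are exactly what the paper does (with $A=\{b\}$).

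The gap is in step (iv), where you have the direction of the key inequality backwards. You propose to take $c$ \emph{small} and to argue that $P_{a\to b}^U(X\text{ hits }B)$ is itself small. Neither of these is correct. Property~\ref{eni: lerw prop 2} is a \emph{lower} bound $\min_{B}h\ge c\,\max_{S_2}h$ on the harmonic function $h(\cdot)=P_\cdot(X_T=b)$; taking $c$ small only weakens it, and there is no mechanism forcing $P_{a\to b}^U(\tau_B<T)$ to be small (indeed, since $h$ is large on $B$, the $h$-transformed walk \emph{likes} visiting $B$). The paper instead takes $c=2$, i.e.\ $c$ \emph{large}, and uses \ref{eni: lerw prop 2} in the forward direction: for any $y\in B$,
\[
P_y(\tau_{S_2}<T\mid X_T=b)=\frac{P_y(\tau_{S_2}<T,\,X_T=b)}{P_y(X_T=b)}\le \frac{\max_{x\in S_2}h(x)}{h(y)}\le\frac{1}{c}=\frac12.
\]
Combined with property~\ref{eni: lerw prop 3}, which (via a last-exit decomposition at $S_0$) gives $P_a(\tau_B<\tau_{S_2}\wedge T\mid X_T=b)\ge\eps_0\,P_a(\tau_{S_2}<T\mid X_T=b)$, one obtains the clean bound $P_a(\tau_{S_2}<T\mid X_T=b)\le 1-\eps_0/4$ whenever $\elen(Q)\ge L_0$. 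This is only bounded away from $1$, not yet $\le 1/2$; the paper then cuts a quad of modulus $\ge nL_0$ into $n$ sub-quads and iterates via the strong Markov property of the conditioned walk to obtain $(1-\eps_0/4)^n\le 1/2$. Your write-up should replace the ``$c$ small'' heuristic by this two-step argument.
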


\begin{proof}
Let $L_0>0$ and $\eps_0>0$ be as in Proposition~\ref{prop: srw crosses quad} for $c=2$.
Consider a quad $Q$ with $L=\elen(Q) \geq L_0>0$ as in 
Proposition~\ref{prop: srw crosses quad} for $A=\{b\}$. 
We will show that there is uniformly positive
probability that $(X_t)_{t=0,1,\ldots,T}$ conditioned on $X_T=b$ doesn't cross $Q$. By iterating
that estimate $n \in \N$ times (for large enough $n$) 
we get that the probability of crossing is at most $1/2$ for $L\geq n\,L_0$.

We can assume $P_a ( \tau_{S_2} < T \,|\, X_T=b) \geq 1/2$, otherwise there wouldn't be anything to prove.
By the previous proposition
\begin{equation*}
P_a(\tau_B < (\tau_{S_2} \wedge T) \,|\, X_T=b) \geq \eps_0 P_a( \tau_{S_2} < T \,|\, X_T=b) \geq \frac{\eps_0}{2} .
\end{equation*}
Now since $\max_{x \in S_2} P_x(X_T=b) \leq (1/2) \cdot \min_{y \in B} P_y(X_T=b)$ by assumption, 
\begin{align*}
P_y(\tau_{S_2} < T \,|\, X_T=b) &= \frac{P_y(\tau_{S_2} < T ,\, X_T=b)}{P_y(X_T=b)} \nonumber \\
&\leq \frac{\max_{x \in S_2} P_x(X_T=b)}{P_y(X_T=b)} \leq \frac{1}{2} 
\end{align*}
for any $y \in B$.
Combine these estimates to show that
\begin{equation*}
P_a ( \tau_{S_2} < T \,|\, X_T=b) \leq 1 - P_a(\tau_B < T < \tau_{S_2}) \,|\, X_T=b) \leq 1 - \frac{\eps_0}{4}
\end{equation*}
from which the claim follows.
\end{proof}

%
%

\subsection{Condition~\ref{def: b unf crossing} fails for uniform spanning tree}\label{ssec: apriori ust}%

For a given connected graph $G$, a spanning tree is a subgraph $T$ of $G$ such that $T$ is a tree, i.e.,
connected and without any cycles, and $T$ is spanning, i.e., $V(T)=V(G)$. A \emph{uniform spanning tree} (UST) of $G$ is 
a spanning tree sampled uniformly at random from the set of all spanning trees of $G$. More precisely,
if $T$ is a uniform spanning tree and $t$ is any spanning tree of $G$ then
\begin{equation}
\P(T=t) = \frac{1}{N(G)}
\end{equation}
where $N(G)$ is the number of spanning trees of $G$. The UST model can be analyzed via simple random walks and electrical networks,
see \cite{grimmett-2010-} and references therein. 
The conformal invariance of UST on planar graphs was shown in \cite{lawler-schramm-werner-2004-} where Lawler, Schramm and Werner proved
that the UST Peano curve (see below) converges to SLE(8). Their work partly relies on Aizenman--Burchard theorem and
\cite{aizenman-etall-1999-} where the relevant
crossing estimate was established. 

Concerning the current work, the UST Peano curve gives a counterexample: it is a curve otherwise
eligible but it fails to satisfy Condition~\ref{def: b unf crossing}. 
For discrete random curves converging to some SLE$(\kappa)$ this
shows that Condition~\ref{def: b unf crossing} is only relevant to the case $0 \leq \kappa < 8$. In some sense $0 \leq \kappa \leq 8$
is the physically relevant case. For instance, the reversibility property holds only in this range of $\kappa$.
Therefore it is interesting to extend the methods of this paper to the spacial case of UST Peano curve.
This shown to be possible by the first author of this paper in \cite{kemppainen-2013-}.

Consider a finite subgraph $G_\delta \subset \delta \Z^2$, $\delta>0$, which is simply connected, i.e., it is a union of entire faces 
of $\delta \Z^2$
such that the corresponding domain is Jordan domain of $\C$.
A boundary edge of $G_\delta$ is an edge $e$ in $G_\delta$ such that there is a face in $\delta \Z^2$ which contains $e$
but which doesn't belong entirely to $G$.
Take a non-empty connected set $E_W$ of boundary edges not equal to the entire set of boundary edges. 
Then $E_W$ will be a path which we call $\emph{wired boundary}$. Call its end points in the counterclockwise 
direction as $\tilde{a}_\delta$ and $\tilde{b}_\delta$. 

Let $T$ be a uniform spanning tree on $G_\delta$ conditioned on $E_W \subset T$. Then $T$ can be seen as an unconditioned UST of
the contracted graph $G_\delta / E_W$. The UST Peano curve is defined to be the simple cycle $\gamma$ on $\delta (1/4 + \Z/2)^2$
which is clockwise oriented and follows $T$ as close as possible, i.e., for each $k$, there is either a vertex of $G_\delta$ 
on the right-hand side of $(\gamma(k),\gamma(k+1))$ or there is a edge of $T$. We restrict this path to a part which goes from a point
next to $\tilde{a}_\delta$ to a point next to $\tilde{b}_\delta$. With an appropriate choice of the domain $U_\delta$, $\gamma$
is a simple curve in $U_\delta$ connecting boundary points $a_\delta$ and $b_\delta$ and it is also a space-filling curve, i.e.,
$\gamma$ visits all the vertices $U_\delta \cap \delta (1/4 + \Z/2)^2$.

It is easy to see that $\gamma$ doesn't satisfy Condition~\ref{def: b unf crossing}: since it is space filling it will make 
an unforced crossing of $A(z_0,r,R)$ with probability $1$ if there are any sites which are disconnected from $a_\delta$ and $b_\delta$
by a component of $A(z_0,r,R)$. However the probability of having more than $2$ crossings in such a component is small. This approach is taken
in \cite{kemppainen-2013-}, where it is sufficient to consider the following event: let $Q \subset U_\delta$ be a topological quadrilateral such that
$\partial_0 Q $ and $\partial_2 Q$ are subsets of $U_\delta$ and $\partial_1 Q $ and $\partial_3 Q$ are subsets of wired part of
$\partial U_\delta$. Then $Q$ has the property that it doesn't disconnect $a_\delta$ from $b_\delta$.
We call a set of edges $E \subset E(G_\delta) \cap Q$ a \emph{confining layer} if $E \cup E_W$ is a tree and
the vertex set of $E$ contains a crossing of $Q$ from $\partial_1 Q $ to $\partial_3 Q$. This is equivalent to the fact that
there can be only two vertex disjoint crossings of $Q$ from $\partial_0 Q $ to $\partial_2 Q$ by paths in
$U_\delta \cap \delta (1/4 + \Z/2)^2$.
Based on the connection to the simple random walk by Wilson's method, it is possible to establish that there exists
universal constant $c>0$ such that
\begin{equation}
\P( T \text{ contains a confining layer in } Q ) \geq 1 - \frac{1}{c} \exp( -c \elen(Q,\partial_0 Q, \partial_2 Q) ) .
\end{equation}
This will be sufficient for the treatment of UST Peano curve in the similar manner as in this article.

\appendix


\section{Appendixes}

\subsection{Schramm--Loewner evolution} \label{ssec: sle}

We will be interested in describing random curves in simply connected
domains with boundary in the complex plane 
by Loewner evolutions with random driving functions.
Since the setup for Loewner evolutions is conformally invariant,
we can define them in some fixed domain.
A standard choice is the upper half-plane $\half := \{ z \in \C : \imag(z)>0 \}$. 
Another choice could be the unit disc 
$\disc := \{ z \in \C : |z| < 1 \}$. 

Consider a simple curve $\gamma:[0,T] \to \C$
such that $\gamma(0) \in \R$ and $\gamma(t) \in \half$ for any $t>0$.
Let $K_t = \gamma [0,t]$ and $H_t = \half \setminus K_t$. Note that $K_t$ is compact and 
$H_t$ is simply connected. 

There is a unique conformal mapping $g_t : H_t \to \half$ satisfying the normalization 
$g_t(\infty)=\infty$ and $\lim_{z \to \infty} [g_t(z) - z] = 0$. 
This is called the \emph{hydrodynamical normalization} and then around the infinity
\begin{equation} \label{eq: hydrodn}
g_t(z) = z + \frac{a_1(t)}{z}+\frac{a_2(t)}{z^2}+\ldots
\end{equation}
The coefficient $a_1(t)=\hcap(K_t)$ is called the \emph{half-plane capacity} of $K_t$ or shorter the \emph{capacity}.
Quite obviously, $a_1(0)=0$, and it can be shown that $t \mapsto a_1(t)$ is strictly increasing and continuous.
The curve can be reparameterized (which also changes the value of $T$) such that $a_1(t)=2t$ for each $t$.

Assuming the above normalization and parameterization, 
the family of mappings $(g_t)_{t \in [0,T]}$ satisfies the upper half-plane version of the \emph{Loewner differential equation}, that is
\begin{equation} \label{eq: loewner eq}
 \frac{\partial g_t}{\partial t} (z) = \frac{2}{g_t(z) -W_t}
\end{equation}
for any $t \in [0,T]$,
where the ``\emph{driving function}'' $t \mapsto W_t$ is continuous and real-valued. 
It can be proven that $g_t$ extends continuously to the point $\gamma(t)$ and $W_t = g_t(\gamma(t))$.
For the proofs of these facts see Chapter 4 of \cite{lawler-2005-}. An illustration of the construction is 
in Figure \ref{fig: basic le}. The equation or rather its version on the unit disc was introduced
by Loewner in 1923 in his study of the Bieberbach conjecture \cite{loewner-1923-}. 

\begin{figure}[tb!]
\centering
	\label{fig: sle}
	\includegraphics[scale=1.3]%
{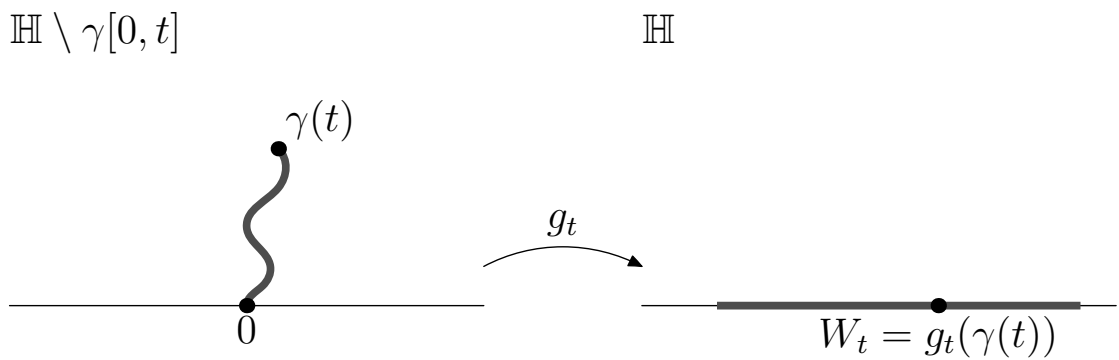}
\caption{The mapping $g_t$ maps the complement of $\gamma[0,t]$ onto the upper half-plane. The tip $\gamma(t)$ is mapped to a point
$W_t$ on the real line.} \label{fig: basic le}
\end{figure}

Consider more general families of growing sets.
Call a compact subset $K$ of $\overline{\half}$ such that $\half \setminus K$ is simply connected, as a \emph{hull} .
The sets $K_t$ given by a simple curve, as above, are hulls. Also other families of hulls 
can be described by the Loewner equation with a continuous driving function. 
The necessary and sufficient condition is given in the following proposition.
Also some facts about the capacity are collected there.

\begin{proposition} \label{prop: sle basic}
Let $T >0$ and $(K_t)_{t \in [0,T]}$ a family of hulls s.t. $K_s \subset K_t$, for any $s < t$, 
and let $H_t = \half \setminus K_t$.
\begin{itemize}
\item If $(K_t \setminus K_s) \cap \half \neq \emptyset$ for all $s < t$, then $t \mapsto \hcap(K_t)$ is strictly increasing
\item If $t \mapsto H_t$ is continuous in Carath\'eodory kernel convergence, then $t \mapsto \hcap(K_t)$ is continuous. 
\item Assume that $\hcap (K_t) = 2t$
  (under the first two assumptions there is always such a time reparameterization).
   Then there is a continuous driving function $W_t$ so that
  $g_t$ satisfies Loewner equation \eqref{eq: loewner eq} 
  if and only if
  for each $\delta >0$ there exists $\eps>0$ so that for any $0\leq s < t \leq T$, $|t-s|<\delta$, 
  a connected set $C\subset H_s$ can be chosen such that $\diam(C)<\eps$ and $C$ separates $K_t \setminus K_s$ from infinity.
\end{itemize}
\end{proposition}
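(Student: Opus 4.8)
The three parts are classical Loewner theory; the first two are short and the third is the substantial one. \textbf{Part 1.} The plan is to use additivity of half-plane capacity under composition. For $s<t$ write $g_t=\tilde g_{s,t}\circ g_s$, where $\tilde g_{s,t}$ is the mapping-out function of the hull $\tilde K_{s,t}\subset\half$ obtained by closing up $g_s\big((K_t\setminus K_s)\cap H_s\big)$. Reading off the coefficient of $1/z$ in \eqref{eq: hydrodn} in this composition gives $\hcap(K_t)=\hcap(K_s)+\hcap(\tilde K_{s,t})$, so it suffices to show $\hcap(\tilde K_{s,t})>0$ whenever $(K_t\setminus K_s)\cap\half\neq\emptyset$. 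If $z_0\in(K_t\setminus K_s)\cap\half$ then $z_0\in H_s$, so $g_s(z_0)\in\tilde K_{s,t}\cap\half$; since $\half\setminus\tilde K_{s,t}$ is simply connected, $\tilde K_{s,t}$ must contain a continuum joining $g_s(z_0)$ to $\R$, hence reaching positive height, and any such continuum has strictly positive capacity (a Beurling-type lower bound; see \cite{lawler-2005-}). Together with monotonicity of $\hcap$ under inclusion this proves Part 1, and the reparametrization remark is then immediate since a strictly increasing continuous function is a homeomorphism onto its image. \textbf{Part 2.} For any sequence $t_n\to s$ the hypothesis gives $H_{t_n}\to H_s$ in the Carath\'eodory sense, so by the Carath\'eodory kernel theorem the hydrodynamically normalized maps $g_{t_n}$ converge to $g_s$ locally uniformly on $H_s$; since all $K_t$, $t\in[0,T]$, lie in the fixed compact set $K_T$, the coefficient $a_1(t)=\hcap(K_t)$ in \eqref{eq: hydrodn} is a continuous functional of $g_t$ in this topology (via the standard uniform estimate on the remainder in the expansion at $\infty$, \cite{lawler-2005-}), so $\hcap(K_{t_n})\to\hcap(K_s)$, and continuity follows.

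\textbf{Part 3, direction $(\Rightarrow)$.} Assume $\hcap(K_t)=2t$ and that $g_t$ solves \eqref{eq: loewner eq} with a continuous, hence uniformly continuous, driving function $W$ with modulus of continuity $\omega$. For $s<t$ the hull $\tilde K_{s,t}$ is generated by the Loewner flow with driving function $r\mapsto W_{s+r}$ on $[0,t-s]$, so by the standard a priori estimate (\cite{lawler-2005-}, Ch.~4) it is contained in the closed half-disc about $W_s$ of radius $\rho:=C\big(\omega(t-s)+\sqrt{t-s}\,\big)$. Then the half-circle $\sigma:=\partial B(W_s,\rho)\cap\half$ separates $\tilde K_{s,t}$ from $\infty$ in $\half$, and $C:=g_s^{-1}(\sigma)\subset H_s$ separates $K_t\setminus K_s$ from $\infty$ in $H_s$. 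That $\diam(C)<\eps$ once $t-s$ is small enough, uniformly in $s$, is the one genuinely delicate point and comes from uniform continuity of $g_s^{-1}$ up to the boundary near the growth point, controlled via conformal distortion estimates exactly as in Pommerenke's treatment of the Loewner equation.

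\textbf{Part 3, direction $(\Leftarrow)$.} Assume the local growth condition. Since the separating set $C\subset H_s$ is connected with $\diam(C)<\eps$ and $K_t\setminus K_s$ lies in the bounded complementary component of $H_s\setminus C$, we already get $\diam(K_t\setminus K_s)<\eps$ for $|t-s|<\delta$; hence for each $t$ the nested closed sets $\overline{K_{t+h}\setminus K_t}$ shrink as $h\downarrow0$ to a single point $w(t)\in\partial H_t$. Likewise, pushing the separating set forward by $g_t$ (conformal invariance of separation) and using the capacity normalization $\hcap(\tilde K_{t,t+h})=2h$ together with the same distortion control as above, the increment hulls $\tilde K_{t,t+h}\subset\half$ shrink as $h\downarrow0$ to a single point $W_t\in\R$; continuity of $t\mapsto W_t$ follows from continuity of $t\mapsto g_t$ (Part 2, applicable because the local growth property makes $t\mapsto H_t$ Carath\'eodory-continuous) and of $t\mapsto w(t)$. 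Finally the Loewner ODE results from the expansion $\tilde g_{t,t+h}(z)=z+\dfrac{2h}{z-W_t}+o(h)$, uniform on compact subsets of $\half$, which holds precisely because $\tilde K_{t,t+h}$ has capacity $2h$ and collapses to $W_t$; dividing by $h$ and letting $h\to0$ yields \eqref{eq: loewner eq}.

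\textbf{Main obstacle.} In both directions of Part 3 the crux is the same: passing between ``small in $H_s$'' and ``small in $\half$'' for the increment hull $\tilde K_{s,t}$, uniformly in $s$. Small capacity alone does not force small diameter --- a thin hull hugging $\R$ far from the origin has tiny capacity --- so one must combine the geometric separation hypothesis with the capacity normalization and with conformal distortion estimates near the growth point; once that uniform control is established, the remaining steps are routine bookkeeping along the classical lines of Pommerenke and \cite{lawler-2005-}.
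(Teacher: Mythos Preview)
The paper does not actually prove this proposition: for Parts~1--2 it gives a one-line sketch (suggesting for Part~2 the formula $\hcap(K)=\frac{1}{2\pi}\int_0^\pi\real\big(Re^{i\theta}g_K(Re^{i\theta})\big)\,\de\theta$ on a large half-circle), and for Part~3 it simply refers to \cite{lawler-schramm-werner-2001-}. Your treatment of Parts~1--2 is correct and more detailed than the paper's, and your Part~3 is an honest attempt at what the paper outsources.

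There is, however, a genuine error in your $(\Leftarrow)$ direction. From ``$C\subset H_s$ connected, $\diam(C)<\eps$, and $K_t\setminus K_s$ lies in the bounded component of $H_s\setminus C$'' you conclude $\diam(K_t\setminus K_s)<\eps$. This is false: that bounded component is bordered by $C$ \emph{and} by pieces of $\partial H_s=\partial K_s\cup\R$, which may be large. Concretely, let $K_s$ be a simple curve that has almost closed a loop of diameter~$1$, leaving a gap of width~$\eps$; for $t$ just past the closing time, $K_t\setminus K_s$ contains the entire swallowed region (diameter $\sim 1$), the gap serves as a separating $C$ of diameter~$\eps$, and $t-s$ is small because $\tilde K_{s,t}=g_s(K_t\setminus K_s)$ is small (the fjord has tiny harmonic measure from infinity). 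Hence $\diam(K_t\setminus K_s)$ does not tend to~$0$ uniformly, and your ``tip'' $w(t)$ is neither well-defined by this argument nor needed for the Loewner equation.

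What must be shown directly is $\diam(\tilde K_{s,t})\to 0$, bypassing $K_t\setminus K_s$. The height is controlled by $\hcap(\tilde K_{s,t})=2(t-s)$; for the width, push $C$ forward: $g_s(C)$ separates $\tilde K_{s,t}$ from $\infty$ in $\half$, but its smallness does \emph{not} come from distortion bounds on $g_s$ (those are exactly what fail at the boundary). It comes from conformal invariance of extremal length: since $\diam(C)<\eps$ and all hulls lie in a fixed bounded set, the extremal distance in $H_s$ from $C$ to a fixed large half-circle is at least $\tfrac{1}{2\pi}\log(\mathrm{const}/\eps)$; by invariance the same holds in $\half$ from $g_s(C)$ to the (barely moved) image half-circle, forcing $\diam(g_s(C))$, hence $\diam(\tilde K_{s,t})$, to be small uniformly in $s$. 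From this one defines $W_t$, proves its continuity, and derives \eqref{eq: loewner eq} as in \cite{lawler-schramm-werner-2001-} or \cite{lawler-2005-}. Your phrase ``the same distortion control as above'' gestures toward this step but does not supply it, and the preceding false claim obscures that this is where the real work lies.
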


Two first statements are relatively simple. The second claim is
almost self-evident: Carath\'eodory kernel convergence means that $g_s \to g_t$ as $s \to t$ in
the compact subsets of $H_t$ and then we have to use the fact that $\hcap(K)$ can be expressed as an integral 
$\frac{1}{2 \pi } \int_0^\pi \real \big(R e^{i\theta} g_{K} (R e^{i\theta})\big) \de \theta$ for $R$ large enough.
The third claim is proved in \cite{lawler-schramm-werner-2001-}. 

By the third claim not all continuous $W_t$ correspond to a simple
curve. One important class of $(K_t)_{t \in [0,T]}$ are the ones generated by a curve in the following sense:
For a curve $\gamma:[0,T] \to \overline{\half}$, $\gamma(0) \in \R$, that is not necessarily simple, define
$H_t$ to be the unbounded component of $\half \setminus \gamma[0,t]$ and $K_t = \overline{ \half \setminus H_t}$.
For each $t$, $K_t$ is a hull and
the collection of hulls $(K_t)_{t \in [0,T]}$ is said to be \emph{generated by the curve} $\gamma$.
But even this class is not general enough:
a counterexample is a spiral that winds infinitely many times around a circle in the upper half-plane and then unwinds,
see for example the discussion in the article
by Lind, Marshall and Rohde \cite{lind-etall-2010-}.

A \emph{Schramm--Loewner evolution}, SLE$_\kappa$, $\kappa > 0$, is a random $(K_t)_{t \geq 0}$ corresponding to a random driving function
$W_t = \sqrt{\kappa} B_t$ where $(B_t)_{t \geq 0}$ is a standard one-dimensional Brownian motion. SLE was introduced by
Schramm \cite{schramm-2000-} in 1999. An important result about them is that they are curves in the following sense:
\begin{itemize}
\item $0<\kappa\leq 4$: $K_t$ is a simple curve 
\item $4<\kappa < 8$: $K_t$ is generated by a curve 
\item $\kappa \geq 8$: $K_t$ is a space filling curve 
\end{itemize}
This is proven  $\kappa\neq 8$ in \cite{rohde-schramm-2005-}. For $\kappa =8$, it follows since SLE$_8$ is a scaling limit
of a random planar curve in the sense explained in the current paper, see \cite{lawler-schramm-werner-2004-}.
So based on this result, the above definition can be reformulated: 
a Schramm--Loewner evolution is a random curve in the upper half-plane whose Loewner evolution is driven by a Brownian motion.

In fact, Schramm--Loewner evolutions are characterized by the conformal Markov property 
\cite{schramm-2000-}, see \cite{smirnov-2006-} for an extended discussion.
For this reason, if the scaling limit of a random planar curve is conformally invariant in an appropriate sense, then
it has to be SLE$_\kappa$, for some $\kappa>0$.

\subsection{Equicontinuity of Loewner chains}

In this section, we prove simple statements about equicontinuity of general Loewner chains.
For $g_t$ as in the previous section, denote its inverse by $f_t = g_t^{-1}$, which
satisfies the corresponding Loewner equation
\begin{equation}
\partial_t f_t(z) = - f_t'(z) \frac{2}{z-W_t}
\end{equation}
together with the initial condition $f_0(z)=z$. We call any of the equivalent objects
$(g_t)_{t \in [0,T]}$, $(f_t)_{t \in [0,T]}$ and $(K_t)_{t \in [0,T]}$ as a \emph{Loewner chain}
(with the driving term $(W_t)_{t \in [0,T]}$).

Let $V_{T,\delta} =[0,T] \times \{ z \in \C \,:\, \imag z \geq \delta \}$

\begin{lemma}\label{lm: equicontinuity of loewner chains}
For any $T,\delta>0$ the family
\begin{equation}\label{eq: def loewner chain family}
\left\{ \tilde{F}: V_{T,\delta} \to \C \,:\, 
  \begin{gathered}
  \text{there is a Loewner chain $(f_t)_{t \in \R_+}$ s.t. } \\
  \tilde{F}(t,z)= f_t(z), \, \forall (t,z) \in V_{T,\delta} 
  \end{gathered}
  \right\}
\end{equation}
is equicontinuous and
\begin{equation}\label{eq: lmeclc nabla bound}
\left| \partial_t \tilde{F} (t,z) \right|  \leq \frac{2}{\delta} e^{8 \frac{t}{\delta^2}} ,\qquad
\left| \partial_z \tilde{F} (t,z) \right|  \leq  e^{8 \frac{t}{\delta^2}}
\end{equation}
for any $\tilde{F}$ in the set \eqref{eq: def loewner chain family} and for any $(t,z) \in V_{T,\delta}$.
\end{lemma}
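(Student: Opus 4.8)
The plan is to derive the two pointwise derivative bounds in \eqref{eq: lmeclc nabla bound} directly from the Loewner equation, and then obtain equicontinuity of the family as a formal consequence of uniform bounds on $\partial_t \tilde F$ and $\partial_z \tilde F$ on the compact set $V_{T,\delta}$ (via the fundamental theorem of calculus / Arzelà--Ascoli-type reasoning). First I would recall that for the inverse maps $f_t = g_t^{-1}$ we have $\partial_t f_t(z) = - f_t'(z)\,\dfrac{2}{z - W_t}$ with $f_0 = \mathrm{id}$, and that the key quantity to control is $\imag(z - W_t) = \imag z \geq \delta$ on $V_{T,\delta}$, since $W_t \in \R$. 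Consequently $\left| \dfrac{2}{z - W_t} \right| \leq \dfrac{2}{\delta}$ throughout.

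The main step is the bound on $\partial_z \tilde F(t,z) = f_t'(z)$. Differentiating the Loewner equation for $f_t$ in $z$ gives
\begin{equation}
\partial_t f_t'(z) = - f_t''(z)\,\frac{2}{z - W_t} + f_t'(z)\,\frac{2}{(z-W_t)^2},
\end{equation}
which still involves $f_t''$; to avoid chasing higher derivatives I would instead work with $\partial_t \log f_t'(z)$. A cleaner route: use the forward equation. From $g_t(f_t(z)) = z$ one gets $g_t'(f_t(z)) f_t'(z) = 1$, and differentiating $\partial_t g_t(w) = \dfrac{2}{g_t(w) - W_t}$ in $w$ yields $\partial_t g_t'(w) = -\dfrac{2 g_t'(w)}{(g_t(w) - W_t)^2}$. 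Hence $\partial_t \log g_t'(w) = -\dfrac{2}{(g_t(w) - W_t)^2}$, and evaluating along $w = f_t(z)$ (being careful with the extra $\partial_t f_t$ term, or better, reparametrising) one obtains $\partial_t \log f_t'(z) = \dfrac{2}{(z - W_t)^2} + (\text{transport term})$. The upshot, after organizing the computation so that the transport term is handled by the method of characteristics (i.e. tracking $z_t$ with $\dot z_t = -\,\dfrac{2}{z_t - W_t}$, along which $\imag z_t$ is nondecreasing, so stays $\geq \delta$), is
\begin{equation}
\left| \partial_t \log f_t'(z) \right| \;\leq\; \frac{2}{\delta^2} \quad\text{is too weak;}
\end{equation}
one actually needs the constant $8$, which comes from being slightly wasteful: along characteristics $\left|\partial_t \log f_t'\right| \le \dfrac{2}{(\imag z_t)^2} \le \dfrac{2}{\delta^2}$ would give $e^{2t/\delta^2}$, so to reach $e^{8t/\delta^2}$ it suffices to bound crudely $\left|\dfrac{2}{(z-W_t)^2}\right| \le \dfrac{2}{\delta^2}$ together with a factor-$4$ slack absorbing the characteristic motion and the imaginary-part normalization. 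Integrating in $t$ from $0$ (where $f_0' = 1$) gives $|f_t'(z)| \le e^{8 t/\delta^2}$, which is the second inequality in \eqref{eq: lmeclc nabla bound}. Feeding this back into $\partial_t f_t(z) = - f_t'(z)\dfrac{2}{z-W_t}$ and using $\left|\dfrac{2}{z-W_t}\right| \le \dfrac{2}{\delta}$ immediately yields $|\partial_t \tilde F(t,z)| \le \dfrac{2}{\delta} e^{8 t/\delta^2}$, the first inequality.

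Finally, equicontinuity: on $V_{T,\delta}$ both $|\partial_t \tilde F|$ and $|\partial_z \tilde F|$ are bounded by $\dfrac{2}{\delta} e^{8T/\delta^2}$ (a constant depending only on $T,\delta$, not on the particular chain), so by the mean value inequality applied to the real and imaginary parts, $|\tilde F(t,z) - \tilde F(t',z')| \le \dfrac{2}{\delta} e^{8T/\delta^2}\big(|t-t'| + |z-z'|\big)$ for all $\tilde F$ in the family; this is uniform Lipschitz continuity, hence equicontinuity. The one technical point I expect to be the main obstacle is making the characteristics argument for $\partial_z$ fully rigorous — in particular verifying that $\imag z_t \geq \delta$ is preserved (it is, since $\partial_t \imag z_t = \imag\!\big({-}\tfrac{2}{z_t - W_t}\big) = \tfrac{2\,\imag z_t}{|z_t - W_t|^2} > 0$) and bookkeeping the constants so that the generous value $8$ genuinely dominates. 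Everything else is routine ODE estimation.
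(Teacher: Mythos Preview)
Your overall strategy matches the paper exactly: bound $|\partial_z \tilde F|=|f_t'(z)|$ first, feed that into the Loewner equation to bound $|\partial_t \tilde F|$, and conclude equicontinuity from uniform Lipschitz bounds on the convex set $V_{T,\delta}$. The reduction $|2/(z-W_t)|\le 2/\delta$ and the final step are fine.

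The gap is precisely where you flag it: the bound on $|f_t'(z)|$. Differentiating $\partial_t f_t=-f_t'\cdot 2/(z-W_t)$ in $z$ gives
\[
\partial_t f_t'(z)=-f_t''(z)\,\frac{2}{z-W_t}+f_t'(z)\,\frac{2}{(z-W_t)^2},
\]
so $\partial_t\log f_t'(z)$ still contains $f_t''/f_t'$; passing to the logarithm does not eliminate the second derivative. Your ``method of characteristics'' sketch does not resolve this either: if you flow $z_t$ so as to cancel the transport term, you are no longer bounding $f_t'$ at a \emph{fixed} $z$, and the statement of the lemma is pointwise in $(t,z)$. The sentence about ``a factor-4 slack absorbing the characteristic motion'' is not an argument, and as written nothing prevents the $f_t''$ term from being large.

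The paper handles this with one clean external input: for any conformal $f:\half\to\C$, normalizing via $\Phi_w(z)=i(\imag w)\tfrac{1+z}{1-z}+\real w$ places $(f\circ\Phi_w - f(w))/(\Phi_w'(0)f'(w))$ in the class $S$, and Bieberbach's bound $|a_2|\le 2$ yields
\[
(\imag w)\,\Big|\frac{f''(w)}{f'(w)}\Big|\le 3.
\]
Applying this to $f_t$ at $w=z$ and plugging into the display above gives
\[
|\partial_t f_t'(z)|\le \frac{2}{\imag z}\cdot\frac{3}{\imag z}\,|f_t'(z)|+\frac{2}{(\imag z)^2}\,|f_t'(z)|=\frac{8}{(\imag z)^2}\,|f_t'(z)|,
\]
whence $|f_t'(z)|\le e^{8t/(\imag z)^2}$ by Gronwall, and this is where the constant $8$ genuinely comes from. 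Replace your characteristics paragraph with this distortion estimate and the proof is complete.
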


\begin{proof}
Since $V_{T,\delta}$ is convex, it is sufficient to show \eqref{eq: lmeclc nabla bound}.
The equicontinuity follows from that bound by integrating along a line segment in $V_{T,\delta}$

Let $\Phi_w (z) = i ( \imag w ) \frac{1+z}{1-z} + \real w$ and $f : \half \to \C$ be any conformal map.
Then 
\begin{equation*}
z \mapsto \frac{f \circ \Phi_w (z) - f \circ \Phi_w (0)}{\Phi_w'(0)f'(w)}
\end{equation*}
belongs to the class $S$ of univalent functions, see Chapter~2 of \cite{duren-1983-}, and therefore by
Bieberbach's theorem
\begin{equation*}
(\imag w) \left| \frac{f''(w)}{f'(w)} \right| \leq 3 .
\end{equation*}
If we apply this bound to the Loewner equation of $f'_t$ we find that
\begin{equation*}
|\partial_t f_t'(z)| \leq \frac{8}{(\imag z)^2} |f_t'(z)|
\end{equation*}
and therefore
\begin{equation*}
|f_t'(z)| \leq e^{8\frac{t}{(\imag z)^2}} \leq e^{8\frac{T}{\delta^2}} .
\end{equation*}
Furthermore, plugging this estimate in the Loewner equation gives
\begin{equation*}
|\partial_t f_t(z)| \leq \frac{2}{\imag z} e^{8\frac{T}{\delta^2}} \leq \frac{2}{\delta} e^{8\frac{T}{\delta^2}} .
\end{equation*}
\end{proof}

For $T,\delta>0$ and family of hulls $(K_t)_{t \in [0,T]}$, let
\begin{equation}
S_K(T,\delta) = \{ (t,z) \in [0,T] \times \overline{\half} \,:\, \dist(z,K_t) \geq \delta\}
\end{equation}

\begin{lemma}\label{lm: curve convergence implies caratheodory}
Let $\gamma_n$ be a sequence of curves in $\half$
and let $\gamma$ be a curve in $\half$ all parametrized with the interval $[0,T]$, $T>0$,
and let $g_{n,t}$ and $g_t$ be the normalized conformal maps related to
the hulls $K_{n,t}$ and $K_t$ of $\gamma_n[0,t]$ and $\gamma[0,t]$, respectively.
If $\gamma_n \to \gamma$ uniformly, then $g_{n,t} \to g_t$ uniformly on $S_K(T,\delta)$. Especially $\hcap \gamma_n [0,\cdot] \to \hcap \gamma [0,\cdot]$
uniformly.
\end{lemma}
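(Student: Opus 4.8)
The plan is to combine the Carath\'eodory kernel theorem with the equicontinuity estimates of Lemma~\ref{lm: equicontinuity of loewner chains} and a monotonicity (Dini/P\'olya) argument, in three steps: fixed-$t$ convergence of the normalized maps, uniform convergence of the capacities, and then uniform convergence of the maps on $S_K(T,\delta)$.

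First I would fix $t\in[0,T]$ and show that $H_{n,t}\to H_t$ in the sense of Carath\'eodory kernel convergence with respect to any $w_0\in H_t$. Uniform convergence $\gamma_n\to\gamma$ gives Hausdorff convergence of the loci $\gamma_n[0,t]\to\gamma[0,t]$, uniformly in $t$. For the lower inclusion: given a compact $Q\subset H_t$, pick a compact connected $\Gamma\subset H_t$ with $Q\subset\Gamma$ and containing a point of modulus exceeding $\diam\gamma[0,t]+1$; since $\dist(\Gamma,\gamma[0,t])>0$, for large $n$ also $\dist(\Gamma,\gamma_n[0,t])>0$, so $\Gamma\subset\half\setminus\gamma_n[0,t]$, and being connected and reaching past $K_{n,t}$ it lies in the unbounded component $H_{n,t}$; hence $Q\subset H_{n,t}$ eventually. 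For the upper inclusion: no domain of the kernel can contain a point $z_0\in\gamma[0,t]$, since $\gamma_n(s)\to\gamma(s)$ forces $\dist(z_0,\gamma_n[0,t])\to 0$, so a small closed ball about $z_0$ meets $\gamma_n[0,t]$ for large $n$; thus the kernel, being connected and containing $w_0$, lies in $\half\setminus\gamma[0,t]$, hence in its unbounded component $H_t$. By the Carath\'eodory kernel theorem (cf.\ \cite{pommerenke-1992-}) the hydrodynamically normalized maps converge locally uniformly, $g_{n,t}\to g_t$ on $H_t$ and $g_{n,t}^{-1}\to g_t^{-1}$ on $\half$; and then $\hcap K_{n,t}\to\hcap K_t$ for each fixed $t$, by the integral representation of $\hcap$ recalled after Proposition~\ref{prop: sle basic} (the relevant circle surrounds $K_{n,t}$ for $n$ large, and the convergence is uniform there).

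Next I would upgrade this to uniformity in $t$. Both $t\mapsto\hcap K_{n,t}$ and $t\mapsto\hcap K_t$ are non-decreasing (the hulls grow), and the limit is continuous by Proposition~\ref{prop: sle basic}; a pointwise limit of monotone functions converging to a continuous monotone function converges uniformly, so $\hcap\gamma_n[0,\cdot]\to\hcap\gamma[0,\cdot]$ uniformly on $[0,T]$, and the same argument shows the family $\{t\mapsto\hcap K_{n,t}\}_n$ is uniformly equicontinuous. For $g_{n,t}$: pointwise convergence on $S_K(T,\delta)$ is immediate from Step~1, since $\{z:\dist(z,K_t)\ge\delta\}\subset H_t$. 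For $|z|$ large one has $|g_{n,t}(z)-z|\le C'/\dist(z,K_{n,t})$ with $C'$ controlled by the uniform bound $\sup_n\hcap K_{n,T}<\infty$, and likewise for $g_t$, so it suffices to prove uniform convergence on the compact set $S_K(T,\delta)\cap\{|z|\le M\}$. There pointwise convergence becomes uniform once the family $\{(t,z)\mapsto g_{n,t}(z)\}_n$ is equicontinuous: equicontinuity in $z$ follows from the Schwarz--Pick bound $|g_{n,t}'(z)|\le 2\,\imag z/\dist(z,\partial H_{n,t})$ together with $\dist(z,\partial H_{n,t})\ge\min(\imag z,\dist(z,K_{n,t}))\ge\min(\imag z,\delta/2)$ for $n$ large, which makes $|g_{n,t}'|$ uniformly bounded on that set; equicontinuity in $t$ follows from Lemma~\ref{lm: equicontinuity of loewner chains} applied to the capacity-reparametrized chains $(K_{n,t})_t$ together with the uniform equicontinuity of $t\mapsto\hcap K_{n,t}$ from this step. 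Pointwise convergence plus equicontinuity on a compact set then yields the claim.

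The main obstacle is Step~1, i.e.\ identifying the Carath\'eodory kernel correctly: Hausdorff convergence of the loci $\gamma_n[0,t]$ does \emph{not} give convergence of the hulls $K_{n,t}$ (a limit curve that nearly closes a loop can have a strictly larger hull than its approximants, or the reverse), and the whole point of passing through the kernel is that it is insensitive to this. The restriction to $S_K(T,\delta)$, rather than all of $H_t$, is also what makes the uniformity in $t$ tenable, by keeping the base point $z$ away from the growing hull so that the Schwarz--Pick and capacity-increment estimates are uniform; everything else is routine.
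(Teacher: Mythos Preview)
Your approach is exactly the paper's --- Carath\'eodory kernel convergence for each fixed $t$, then compactness of $[0,T]$ for the uniformity --- and the paper's own proof is two lines citing the kernel theorem, so you have supplied the details rather than taken a different route.

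One technical point to tidy up: Lemma~\ref{lm: equicontinuity of loewner chains} bounds $\partial_t f_t$ and $\partial_z f_t$ for the \emph{inverse} maps on $\{\imag w\ge\delta\}$, not $\partial_t g_t$. Since $S_K(T,\delta)$ contains points $z$ on or near $\R$, where $\imag g_{n,t}(z)$ is small or zero, invoking that lemma for equicontinuity of $(t,z)\mapsto g_{n,t}(z)$ in $t$ is not immediate; you would need to Schwarz-reflect across $\R\setminus K_{n,t}$ and bound $\dist(g_{n,t}(z),I_{n,t})$ from below. A cleaner route --- and probably what the paper's terse ``uniform since $[0,T]$ is compact'' intends --- is to skip equicontinuity and argue by contradiction: if uniformity failed along $(n_k,t_k,z_k)$ with $(t_k,z_k)\to(t_*,z_*)\in S_K(T,\delta)$, then $\gamma_{n_k}[0,t_k]\to\gamma[0,t_*]$ in Hausdorff distance, and your Step~1 applied verbatim to this diagonal sequence gives $g_{n_k,t_k}(z_k)\to g_{t_*}(z_*)$, while continuity of $(t,z)\mapsto g_t(z)$ (itself an instance of Step~1 with $\gamma_n\equiv\gamma$) gives $g_{t_k}(z_k)\to g_{t_*}(z_*)$, contradicting the assumed gap.
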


\begin{proof}
The lemma follows from the Carath\'eodory convergence theorem (Theorem~3.1 of \cite{duren-1983-} 
and Theorem~1.8 of \cite{pommerenke-1992-}). Convergence is uniform in $t$ since the interval $[0,T]$ is compact.
\end{proof}

\begin{lemma}\label{lm: driving term convergence implies caratheodory}
Let $W_n$ be a sequence of continuous functions on $[0,T]$
and let $W$ be a continuous functions on $[0,T]$
and let $g_{n,t}$ and $g_t$ be the solutions of Loewner equation with the driving terms
$W_{n,t}$ and $W_t$, respectively, and let $K_t$ be the hull of $g_t$.
If $W_n \to W$ uniformly, then $g_{n,t} \to g_t$ uniformly on $S_K(T,\delta)$ and $g_t$ and $W$ satisfy the Loewner equation.
\end{lemma}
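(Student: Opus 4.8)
The plan is to run a Grönwall (continuous-dependence) argument directly on the Loewner ODE, exactly as one proves stability of an ODE under perturbation of its vector field. Fix $T,\delta>0$ and a point $(t_0,z)\in S_K(T,\delta)$. Since the hulls $K_s$ increase in $s$, we have $\dist(z,K_s)\ge\delta$ for all $s\le t_0$, so $z\in H_s$ and $g_s(z)$ is defined on $[0,t_0]$. The first ingredient is a standard a priori estimate for the Loewner flow: there is $c_0=c_0(\delta,T)>0$, depending only on $\delta$ and on the bound $\hcap(K_{t_0})\le 2T$, such that $|g_s(z)-W_s|\ge c_0$ for all $s\in[0,t_0]$ (this follows from the Koebe distortion theorem applied to $g_s^{-1}$ together with $\hcap(K_s)\le 2T$; see Chapter~4 of \cite{lawler-2005-}). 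The key point is that $c_0$ can be chosen uniform over all $(t_0,z)\in S_K(T,\delta)$.

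Second, I have to control $g_{n,s}(z)$, which a priori is only defined off $K_{n,s}$ while $S_K$ is built from the limiting hulls. I would do this by a bootstrap: given $n$, let $\sigma_n\in[0,t_0]$ be the supremum of those $s$ for which $g_{n,r}(z)$ is defined and $|g_{n,r}(z)-W_{n,r}|\ge c_0/2$ on $[0,s]$. On $[0,\sigma_n]$ set $u(s)=|g_{n,s}(z)-g_s(z)|$; then $u(0)=0$, $u$ is locally Lipschitz, and subtracting the two Loewner equations \eqref{eq: loewner eq},
$$u'(s)\;\le\;\left|\frac{2}{g_{n,s}(z)-W_{n,s}}-\frac{2}{g_s(z)-W_s}\right|\;\le\;\frac{2\bigl(u(s)+\|W_n-W\|_{\infty,[0,T]}\bigr)}{(c_0/2)^2},$$
so Grönwall's inequality gives $u(s)\le \|W_n-W\|_{\infty,[0,T]}\,(e^{8s/c_0^2}-1)\le \|W_n-W\|_{\infty,[0,T]}\,(e^{8T/c_0^2}-1)$. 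Since $\|W_n-W\|_{\infty,[0,T]}\to 0$, for $n$ large (uniformly in $(t_0,z)$) the right-hand side is $<c_0/4$; combined with $|g_s(z)-W_s|\ge c_0$ and $\|W_n-W\|_{\infty,[0,T]}<c_0/4$ this forces $|g_{n,s}(z)-W_{n,s}|>c_0/2$ strictly on $[0,\sigma_n]$, and a standard open-closed argument then gives $\sigma_n=t_0$. Hence the Grönwall bound holds on all of $[0,t_0]$, and taking the supremum over $(t_0,z)\in S_K(T,\delta)$ yields $g_{n,t}\to g_t$ uniformly on $S_K(T,\delta)$.

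Finally, to see that the limit satisfies the Loewner equation with driving term $W$, I would pass to the limit in the integral form $g_{n,t}(z)=z+\int_0^t \frac{2\,ds}{g_{n,s}(z)-W_{n,s}}$. By the uniform lower bound $|g_{n,s}(z)-W_{n,s}|\ge c_0/2$ and the uniform convergences $g_{n,s}(z)\to g_s(z)$ and $W_{n,s}\to W_s$, the integrand converges uniformly on $[0,t]$, so $g_t(z)=z+\int_0^t \frac{2\,ds}{g_s(z)-W_s}$; thus $t\mapsto g_t(z)$ is $C^1$ and solves \eqref{eq: loewner eq} with driving function $W$, consistent (by uniqueness of solutions of the Loewner ODE for a given continuous driving function) with $g_t$ being the solution associated to $W$ and $K_t$ its hull. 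The main obstacle is the bookkeeping in the a priori estimate: establishing the lower bound $c_0(\delta,T)$ for $|g_s(z)-W_s|$ on $S_K(T,\delta)$ uniformly, and making the bootstrap thresholds (both $c_0$ and the smallness required of $\|W_n-W\|_{\infty,[0,T]}$) independent of the chosen point $(t_0,z)$, so that the convergence is genuinely uniform; everything else is a routine Grönwall estimate.
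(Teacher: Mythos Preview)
Your proposal is correct and is precisely the ``basic properties of ordinary differential equations'' that the paper's one-line proof invokes: a Gr\"onwall continuous-dependence estimate, preceded by an a~priori lower bound on $|g_s(z)-W_s|$ and followed by passage to the limit in the integral form of the Loewner equation. There is no genuinely different idea here; you have simply written out what the paper leaves implicit.

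One small remark on the a~priori constant: the bound $|g_s(z)-W_s|\ge c_0$ need not be uniform over all driving functions with $\hcap(K_{t_0})\le 2T$, nor does Koebe distortion alone give it in the form you state. What you actually need (and what suffices) is $c_0=c_0(\delta,T,W)>0$ for the fixed limiting driver $W$, uniform over $(t_0,z)\in S_K(T,\delta)$. This is most cleanly obtained by compactness: the map $(s,z)\mapsto |g_s(z)-W_s|$ is continuous and strictly positive on $\{(s,z):s\in[0,T],\ z\in\overline{\half},\ \dist(z,K_s)\ge\delta\}$; for $|z|$ large the bound is trivial since $g_s(z)=z+O(1/z)$ and $W$ is bounded on $[0,T]$, and the remaining region is compact. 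Since $W$ is fixed throughout, this dependence is harmless for your bootstrap and Gr\"onwall steps.
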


\begin{proof}
This lemma follows from the basic properties of ordinary differential equations.
\end{proof}

\subsubsection{Main lemma}

Consider a sequence $\tilde{\gamma}_n \in \xs(\disc)$ with $\tilde{\gamma}_n(0)=-1$ and $\tilde{\gamma}_n(1)=+1$
which converges to some curve $\tilde{\gamma} \in X$. After choosing a parametrization and using 
the chosen conformal transformation from $\disc$ to $\half$, it is natural to consider for some $T>0$
a sequence of one-to-one continuous functions $\gamma_n : [0,T] \to \C$ with $\gamma_n(0) \in \R$ and 
$\gamma(0,T] \subset \half$ such that $\gamma_n$ converges
uniformly to a continuous function $\gamma:[0,T] \to \C$ which is not constant on any subinterval of $[0,T]$.
In this section we present practical conditions under which $\gamma$ is a Loewner chain, that is,
$\gamma$ can be reparametrized with the half-plane capacity.

Let
\begin{equation}
\upsilon_n(t) = \frac{1}{2} \hcap( \gamma_n[0,t]), \qquad
\upsilon(t)   = \frac{1}{2} \hcap( \gamma[0,t]) .
\end{equation}
Then $t \mapsto \upsilon_n(t)$ and $t \mapsto \upsilon(t)$ are continuous 
and $\upsilon_n \to \upsilon$ uniformly as $n \to \infty$.
In particular, $\lim_n \upsilon_n(T) = \upsilon(T)$ and by the assumptions $\upsilon(1)>0$.
Furthermore, $t \mapsto \upsilon(t)$ is non-decreasing.
Let $(W_n(t))_{t \in [0,\upsilon_n(T)]}$ be the driving term of $\gamma_n$ which exists since $\gamma_n$ is simple.

When is it true that $\gamma$ has a continuous driving term? It is a fact that if $\upsilon$ is strictly increasing
then $\gamma$ has a driving term $W((t))_{t \in [0,\upsilon(1)]}$ and that $W_n \to W$ uniformly on $[0,\upsilon(1))$.
However we won't prove this auxiliary result, instead we prove a weaker result which gives a practical conditions to 
be verified.

\begin{lemma}\label{lm: main lemma with convergence}
Let $T>0$ and for each $n \in \N$, let $\gamma_n : [0,T] \to \C$ be injective continuous function such that
$\gamma_n(0) \in \R$ and $\gamma_n (0,T] \subset \half$. Suppose that
\begin{enumerate}
\item $\gamma_n \to \gamma$ uniformly on $[0,T]$ and $\gamma$ is not constant on any subinterval of $[0,T]$
\item $W_n \to W$ uniformly on $[0,\upsilon(T)]$.  
\item $F_n \to F$ uniformly on $[0,T] \times [0,1]$, where 
\begin{equation}
F_n(t,y) = g_{\gamma_n[0,t]}^{-1} \big(\, W_n(\upsilon_n(t)) + iy \,\big) .
\end{equation}
\end{enumerate}
Then $t \mapsto \upsilon(t)$ is strictly increasing and $g_t \coloneq g_{\gamma\circ\upsilon^{-1}[0,t]}$
satisfies the Loewner equation with the driving term $W$. Furthermore, $\gamma_n \circ \upsilon_n^{-1}$, which is
the sequence of curves in the capacity parametrization, converges uniformly to $\gamma \circ \upsilon^{-1}$ and
the sequence of mappings
$(t,z) \mapsto g_{\gamma_n\circ\upsilon_n^{-1}[0,t]}(z)$ converges to $g_t$ uniformly on
\begin{equation}\label{eq: lmmain domain conv g}
S_K(T,\delta) = \{(t,z) \in [0,T] \times \overline{\half} \,:\, \dist(z, K_t) \geq \delta \}
\end{equation}
for any $\delta>0$. Here $K_t$ is the hull of $\gamma[0,t]$.
\end{lemma}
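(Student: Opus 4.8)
The goal is Lemma~\ref{lm: main lemma with convergence}: from the three uniform-convergence hypotheses we must extract that $\upsilon$ is strictly increasing, that the capacity-reparametrized maps $g_t$ satisfy the Loewner equation with driving term $W$, and that the $g_{\gamma_n\circ\upsilon_n^{-1}[0,t]}$ converge uniformly on the sets $S_K(T,\delta)$. The natural order is: first handle the geometry of the hulls and the reparametrization, then pass to the Loewner ODE.

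First I would show $\upsilon$ is strictly increasing. Suppose $\upsilon(s)=\upsilon(t)$ for some $s<t$; then $\hcap(\gamma[0,t])=\hcap(\gamma[0,s])$, so $\hcap(\gamma[0,t]\setminus\gamma[0,s])$ contributes nothing, forcing $\gamma[s,t]\subset\overline{\half\setminus H_s}$ to lie on $\partial H_s$ away from the free part — but $\gamma(0,T]\subset\half$ and $\gamma$ is not constant on any subinterval, so $\gamma(s,t)$ is a nontrivial arc in the open half-plane, which must strictly increase the capacity (use the integral representation $\hcap(K)=\frac1{2\pi}\int_0^\pi\real(Re^{i\theta}g_K(Re^{i\theta}))\,d\theta$, or the first bullet of Proposition~\ref{prop: sle basic}, noting $(K_t\setminus K_s)\cap\half\ne\emptyset$). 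Hence $\upsilon$ is a strictly increasing continuous bijection onto $[0,\upsilon(T)]$ and $\upsilon^{-1}$ is continuous, so $\gamma\circ\upsilon^{-1}$ is a well-defined curve parametrized by half-capacity, and $g_t:=g_{\gamma\circ\upsilon^{-1}[0,t]}$ makes sense.

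Next, the uniform convergence of the maps. Since $\gamma_n\to\gamma$ uniformly on $[0,T]$ and $\upsilon_n\to\upsilon$ uniformly (this is Lemma~\ref{lm: curve convergence implies caratheodory}, which also gives $g_{\gamma_n[0,t]}\to g_{\gamma[0,t]}$ uniformly on $S_K(T,\delta)$), the reparametrized hulls $\gamma_n\circ\upsilon_n^{-1}[0,t]$ converge uniformly to $\gamma\circ\upsilon^{-1}[0,t]$ in $t$, using continuity of $\upsilon^{-1}$ and equicontinuity of the family $\upsilon_n^{-1}$ (which follows from the strict monotonicity and uniform convergence of $\upsilon_n$ on the compact interval; a standard Dini-type argument). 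Applying Lemma~\ref{lm: curve convergence implies caratheodory} once more to this convergent sequence of curves (now each parametrized by capacity up to time $T$, modulo the harmless rescaling $2t\leftrightarrow\hcap$) yields $g_{\gamma_n\circ\upsilon_n^{-1}[0,t]}\to g_t$ uniformly on $S_K(T,\delta)$ for every $\delta>0$, which is the last assertion.

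Finally, the Loewner equation. The functions $F_n(t,y)=g_{\gamma_n[0,t]}^{-1}(W_n(\upsilon_n(t))+iy)$ are the hyperbolic geodesics from $\infty$ to the tips; rewriting in the capacity parametrization they equal $\tilde g_t^{-1}(W_n(t)+iy)$ where $\tilde g_t=g_{\gamma_n\circ\upsilon_n^{-1}[0,t]}$. By hypothesis $F_n\to F$ uniformly, and since $\tilde g_t^{-1}\to g_t^{-1}$ uniformly on compacts of $\overline\half$ (from the previous paragraph, via the Carath\'eodory convergence theorem for the inverse maps — or Lemma~\ref{lm: equicontinuity of loewner chains} gives the needed equicontinuity to justify passing limits inside), we get $F(t,y)=g_t^{-1}(W(t)+iy)$, so as $y\searrow0$ the point $F(t,0)=g_t^{-1}(W(t))$ is a single boundary point, namely the tip $\gamma\circ\upsilon^{-1}(t)$. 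This is exactly the separation/local-growth condition: for $|t-s|$ small the image under $g_s$ of $K_t\setminus K_s$ is separated from $\infty$ by the small crosscut $\{W_n(u)+iy:\dots\}$-image, which has small diameter precisely because $W_n\to W$ is uniformly continuous and the $F_n$ converge uniformly. Invoking the third bullet of Proposition~\ref{prop: sle basic} then shows $g_t$ solves \eqref{eq: loewner eq} with a continuous driving function; identifying that function as $W$ is immediate from $g_t(\gamma\circ\upsilon^{-1}(t))=W(t)$, which we have just read off from $F(t,0)$. Then Lemma~\ref{lm: driving term convergence implies caratheodory} is consistent with (and re-derives) the uniform convergence $g_{n}\to g$, closing the loop.

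**Main obstacle.** The delicate point is the equicontinuity/uniformity needed to interchange the three limits $n\to\infty$, $y\to0$, and $t\to s$ — in particular justifying that the crosscut separating $K_t\setminus K_s$ from $\infty$ has diameter tending to $0$ uniformly. This is where Lemma~\ref{lm: equicontinuity of loewner chains} (equicontinuity of Loewner chains on $V_{T,\delta}$) and the uniform convergence of $F_n$ must be combined carefully; everything else is bookkeeping with the Carath\'eodory kernel theorem and monotone reparametrizations.
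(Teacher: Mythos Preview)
Your argument for the strict monotonicity of $\upsilon$ has a genuine gap, and this is precisely the step where the paper's proof differs from yours in an essential way.

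You assert that ``$\gamma(0,T]\subset\half$'' and hence that $\gamma(s,t)$ is a nontrivial arc in the open half-plane forcing $(K_t\setminus K_s)\cap\half\ne\emptyset$. Neither claim is justified. Only the \emph{approximating} curves $\gamma_n$ are assumed to be simple and to lie in the open half-plane; the uniform limit $\gamma$ is merely an element of $X_0$, so it may touch $\R$, may touch itself, and --- crucially --- may retrace into its own hull. If $\gamma[s,t]\subset K_s$ (for instance if $\gamma$ runs back along an earlier segment, or into a pocket already closed off), then $K_t=K_s$ as hulls and $\upsilon(t)=\upsilon(s)$ even though $\gamma$ is non-constant on $[s,t]$. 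Proposition~\ref{prop: sle basic} gives you nothing here, because its hypothesis $(K_t\setminus K_s)\cap\half\ne\emptyset$ is exactly what fails.

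The paper resolves this by using hypothesis~3, which you do not touch until your final paragraph. The point is that $F_n(t,0)=\gamma_n(t)$, so the uniform limit $F(t,0)=\gamma(t)$. Now if $\upsilon(s)=\upsilon(t)$, then $\upsilon_n(u)-\upsilon_n(s)\to 0$ and $W_n(\upsilon_n(u))-W_n(\upsilon_n(s))\to 0$ uniformly for $u\in[s,t]$; the Lipschitz bound of Lemma~\ref{lm: equicontinuity of loewner chains} on $f_{n,t}$ then forces $F_n(u,y)-F_n(s,y)\to 0$ for every fixed $y>0$, hence $F(u,y)=F(s,y)$ for all $u\in[s,t]$ and $y>0$, and by continuity of $F$ down to $y=0$ (which is part of hypothesis~3) also $\gamma(u)=F(u,0)$ is constant on $[s,t]$, contradicting assumption~1. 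So hypothesis~3 is not a technical afterthought: it is what rules out the retracing scenario that your direct capacity argument cannot exclude.

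Once strict monotonicity is established, the rest of your outline is close to the paper's. For the convergence of the capacity-reparametrized maps the paper invokes Helly's selection theorem to get $\upsilon_n^{-1}\to\upsilon^{-1}$ uniformly (your ``Dini-type argument'' is fine once strict monotonicity of $\upsilon$ is known) and then Lemma~\ref{lm: curve convergence implies caratheodory}. For the Loewner equation, the paper takes a shorter route than your crosscut/local-growth verification: since the maps $g_{\gamma_n\circ\upsilon_n^{-1}[0,t]}$ are already known to be Loewner chains driven by $W_n$, Lemma~\ref{lm: driving term convergence implies caratheodory} says their limit is the Loewner chain driven by $W$; by uniqueness of limits that chain is $g_t$, and the proof is finished without any direct appeal to Proposition~\ref{prop: sle basic}.
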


\begin{remark}
By applying a scaling and corresponding time change, it's enough that there exists $\eps>0$ such that
$F_n \to F$ uniformly on $[0,T] \times [0,\eps]$.
\end{remark}

\begin{proof}
By Lemma~\ref{lm: curve convergence implies caratheodory}, 
$\gamma_n \to \gamma$ implies that $\upsilon_n \to \upsilon$ uniformly as $n \to \infty$.
Let $f_{n,t} = g_{\gamma_n [0,t]}^{-1}$.
Since
\begin{equation}\label{eq: def hyp geod}
F_n(t,y) = f_{n,t}( W_n \circ \upsilon_n(t) + i\, y)
\end{equation}
and since by Lemma~\ref{lm: equicontinuity of loewner chains}
\begin{equation}
|f_{n,t}(z) - f_{n,t'} (z')| \leq C(\delta,\upsilon(T)) \, 
   \left( |\upsilon_n(t)-\upsilon_n(t')| + |z - z'| \right) ,
\end{equation}
it follows directly from the assumptions that if for some $s < t$, $\upsilon(s) = \upsilon(t)$, then
$F(s,y)=F(u,y)=F(t,y)$ for all $u \in [s,t]$ and $y \geq 0$. Especially
$\gamma(u)=F(u,0)$ is constant on the interval $u \in [s,t]$ which contradicts with the assumptions
of the lemma. Hence $\upsilon$ is strictly increasing. An application of Helly's selection theorem
gives that $\upsilon_n^{-1}$ converges uniformly to $\upsilon^{-1}$.
Therefore $\gamma_n \circ \upsilon_n^{-1}$ converges uniformly to $\gamma \circ \upsilon^{-1}$
and hence for any $\delta>0$
$(t,z) \mapsto g_{\gamma_n\circ\upsilon_n^{-1}[0,t]}$ converges to $g_t$ uniformly on the set 
\eqref{eq: lmmain domain conv g}. The convergence of $W_n$ together with standard results about
ODE's imply that $g_{\gamma_n\circ\upsilon_n^{-1}[0,t]}$, which are the solutions of the Loewner equation
with the driving terms $W_n$, converge uniformly to the solution of the Loewner equation with the driving
term $W$, see Lemma~\ref{lm: driving term convergence implies caratheodory}.
Hence $g_t$ is generated by $\gamma$ and driven by $W$.
\end{proof}

\begin{lemma}\label{lm: main lemma with equicontinuity}
Let $\gamma: [0,T] \to \C$ be continuous and not constant on any subinterval of $[0,T]$.
Let $\gamma_n: [0,T] \to \C$ be a sequence of simple parametrized curves such that $\gamma_n(0) \in \R$ and
$\gamma_n( \,(0,T] \,) \subset \half$. Suppose that
$\gamma_n \to \gamma$ uniformly as $n \to \infty$. If
\begin{itemize}
\item $(W_n)_{n \in \N}$ is equicontinuous and
\item there exist increasing continuous $\psi:[0,\delta) \to \R_+$ such that $\psi(0)=0$ and
$|F_n(t,y) - \gamma_n(t)| \leq \psi(y)$ for all $0 < y < \delta$ and for all $n \in \N$
\end{itemize}
then $W_n$ converges to some continuous $W$, 
$\gamma$ can be continuously reparametrized with the half-plane capacity and $\gamma \circ \upsilon^{-1}$ is driven by $W$. 
\end{lemma}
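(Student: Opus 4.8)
The plan is to deduce Lemma~\ref{lm: main lemma with equicontinuity} from Lemma~\ref{lm: main lemma with convergence} by an Arzel\`a--Ascoli argument: the two hypotheses are exactly what is needed to upgrade the assumed uniform convergence $\gamma_n\to\gamma$ to (subsequential) uniform convergence of $W_n$ and of $F_n$, after which the previous lemma applies verbatim; uniqueness of the driving term of a Loewner chain then promotes subsequential convergence to convergence of the whole sequence.

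First I would collect the standing facts. Since $\gamma_n\to\gamma$ uniformly on the compact interval $[0,T]$, the family $(\gamma_n)$ is uniformly bounded and equicontinuous, and by Lemma~\ref{lm: curve convergence implies caratheodory} the capacities converge, $\upsilon_n\to\upsilon$ uniformly, so $\upsilon_n(T)\to\upsilon(T)$ and $(\upsilon_n)$ is equicontinuous. Because $W_n(0)=\gamma_n(0)\in\R$ is bounded and $(W_n)$ is equicontinuous by hypothesis, the family $(W_n)$ is uniformly bounded; hence, after passing to a subsequence, $W_n\to W$ uniformly for some continuous $W$ (the only bookkeeping being that $W_n$ is a priori defined on $[0,\upsilon_n(T)]$, which one handles by extending $W_n$ by its terminal value and invoking $\upsilon_n(T)\to\upsilon(T)$).

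Next I would prove that $(F_n)$ is precompact in $C([0,T]\times[0,1])$, which is the only place the bound by $\psi$ is used. For $y$ bounded below, $y\in[\delta',1]$, Lemma~\ref{lm: equicontinuity of loewner chains} gives uniform Lipschitz control $|f_{n,t}(z)-f_{n,t}(z')|\le C(\delta',\upsilon(T))\,|z-z'|$ and $|f_{n,t}(z)-f_{n,t'}(z)|\le C(\delta',\upsilon(T))\,|\upsilon_n(t)-\upsilon_n(t')|$ on $\{\imag z\ge\delta'\}$; together with equicontinuity of $t\mapsto W_n(\upsilon_n(t))$ (a composition of the equicontinuous, uniformly bounded $W_n$ with the equicontinuous $\upsilon_n$) this makes $(F_n)$ equicontinuous on $[0,T]\times[\delta',1]$. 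Near $y=0$ one uses the hypothesis directly: $|F_n(t,y)-F_n(t,0)|=|F_n(t,y)-\gamma_n(t)|\le\psi(y)$ controls the $y$-oscillation uniformly, and
\begin{align*}
|F_n(t,y)-F_n(t',y)| &\le |F_n(t,y)-\gamma_n(t)|+|\gamma_n(t)-\gamma_n(t')|+|\gamma_n(t')-F_n(t',y)| \\
&\le 2\psi(y)+|\gamma_n(t)-\gamma_n(t')|
\end{align*}
controls the $t$-oscillation via equicontinuity of $(\gamma_n)$. Patching these regimes at $y=\delta'$ (using $\psi(0)=0$ and $\psi$ continuous) yields equicontinuity of $(F_n)$ on all of $[0,T]\times[0,1]$, and uniform boundedness is clear; so along a further subsequence $F_n\to F$ uniformly.

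Now I would apply Lemma~\ref{lm: main lemma with convergence} along this subsequence: all three of its hypotheses now hold, so $\upsilon$ is strictly increasing, $\gamma\circ\upsilon^{-1}$ is a Loewner chain, and $g_t\coloneq g_{\gamma\circ\upsilon^{-1}[0,t]}$ is driven by $W$. To remove the subsequence, observe that any subsequential uniform limit $W$ of $(W_n)$ produced this way is a driving term of the same Loewner chain $(g_{\gamma\circ\upsilon^{-1}[0,t]})$, and this chain depends only on $\gamma$; since a Loewner chain has a unique driving function (from $\partial_t g_t(z)=2/(g_t(z)-W_t)$ the value $W_t$ is recovered pointwise), all subsequential limits coincide, so the full sequence $(W_n)$ converges to that $W$, and the assertions about $\gamma$ follow from any single application of the previous lemma. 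I expect the main obstacle to be the precompactness of $(F_n)$ at $y=0$, where the Koebe-type estimates of Lemma~\ref{lm: equicontinuity of loewner chains} degenerate and one must rely entirely on the modulus bound $\psi$ — in particular keeping the control of the $t$-oscillation of $F_n$ uniform as $y\downarrow0$; the moving domains $[0,\upsilon_n(T)]$ of the $W_n$ are a minor additional nuisance.
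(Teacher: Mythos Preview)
Your proof is correct and follows essentially the same route as the paper: establish equicontinuity of $(F_n)$ on $[0,T]\times[0,1]$ by combining the $\psi$-bound near $y=0$ (via the same triangle inequality through $\gamma_n$) with the Koebe-type estimates of Lemma~\ref{lm: equicontinuity of loewner chains} for $y\ge\delta'$, then extract a convergent subsequence by Arzel\`a--Ascoli and apply Lemma~\ref{lm: main lemma with convergence}. Your additional remark that uniqueness of the driving term promotes subsequential convergence of $(W_n)$ to full convergence is a useful clarification that the paper leaves implicit.
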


\begin{proof}
It is clearly enough to show that $(F_n)_{n \in \N}$ is a equicontinuous family of functions on $[0,T] \times [0,1]$.
The claim then follows from the previous lemma after
choosing by
Arzel\`a-Ascoli theorem a subsequence $n_j$ such that $F_{n_j}$ and $W_{n_j}$ converge. 

Let $g_{n,t} = g_{\gamma_n [0,t]}$ and $f_{n,t} = g_{n,t}^{-1}$. 
Let $\eps>0$ and choose $\delta > 0$ such that
\begin{align}
|F_n(t,y) - \gamma_n(t)| &\leq \frac{\eps}{6} \label{eq: cont small y const t} \\
|\gamma_n(t') - \gamma_n(t)| &\leq \frac{\eps}{6} \label{eq: cont gamma t}
\end{align}
when $0 \leq y \leq \delta$ and $t,t' \in [0,T]$ with $|t-t'| \leq \delta$.
Then by the triangle inequality
\begin{equation}\label{eq: lmeqc cont F}
|F_n(t',y') - F_n(t,y)| \leq \frac{\eps}{2}
\end{equation}
for all $0 \leq y,y' \leq \delta$, for all $t,t' \in [0,T]$ with $|t-t'| \leq \delta$
and for all $n \in \N$.

By \eqref{eq: def hyp geod}
and Lemma~\ref{lm: equicontinuity of loewner chains}, the family of mappings $(F_n |_{[0,T] \times [\delta,1]})_{n \in \N}$ is
equicontinuous. Hence we can choose $0 < \tilde{\delta} \leq \delta$ such that
\eqref{eq: lmeqc cont F}
for all $\delta \leq y,y' \leq 1$ with $|y - y'| \leq \tilde{\delta}$, 
for all $t,t' \in [0,T]$ with $|t-t'| \leq \tilde{\delta}$
and for all $n \in \N$.
Hence by the triangle inequality
\begin{equation}
|F_n(t',y') - F_n(t,y)| \leq \eps
\end{equation}
for all $0 \leq y,y' \leq 1$ with $|y - y'| \leq \tilde{\delta}$, 
for all $t,t' \in [0,T]$ with $|t-t'| \leq \tilde{\delta}$
and for all $n \in \N$.
\end{proof}


\subsection{Some facts about conformal mappings}

%
%

In this section, a collection of simple lemmas about normalized conformal mappings is presented. Only elementary methods are used,
and therefore it is advantageous to present the proofs here, even though they appear elsewhere in the literature.

Denote the inverse mapping of $g_K$ by $f_K$ and by $I \subset \R$ the image of $\partial K$ under $g_K$, 
i.e. $I = \overline{\{ x \in \R \,:\, \imag f_K(x)>0 \}}$. Now $f_K$ can be given by integral with Poisson kernel of upper half-plane as
\begin{equation} \label{eq: poisson}
f_K(z) = z + \frac{1}{\pi} \int_I \frac{\imag f_K(x)}{x-z} \de x .
\end{equation}
This gives a nice proof of the following fact.
\begin{lemma} \label{lm: expand}
Denote $u_+ = \max I$ and $u_- = \min I$ and $x_\pm = f_K(u_\pm)$. Assume $\half \cap K \neq \emptyset$. Then 
\begin{equation} \label{eq: fcontr}
f_K(x) < x \textrm{ when } x \geq u_+ \textrm{ and } f_K(x) > x \textrm{ when } x \leq u_-
\end{equation}
and
\begin{equation} \label{eq: gexpan}
g_K(x) > x \textrm{ when } x \geq x_+ \textrm{ and } g_K(x) < x \textrm{ when } x \leq x_- .
\end{equation}
\end{lemma}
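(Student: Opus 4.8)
The plan is to read off both statements directly from the Poisson‐type representation \eqref{eq: poisson}, using only the positivity of $\imag f_K$ on $I$ and the location of the support $I$ relative to the points in question. First I would record the elementary fact that the hypothesis $\half\cap K\neq\emptyset$ forces $I$ to be a nondegenerate interval (if $I$ were empty or a single point, $f_K$ would be the identity and $K$ would not meet $\half$), so that $\int_I \imag f_K(x)\,\de x>0$; write $\nu(\de x)=\frac1\pi\,\imag f_K(x)\,\de x$, a positive measure supported on $[u_-,u_+]$ with positive total mass.

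For the inequalities \eqref{eq: fcontr}: fix a real $x\ge u_+$. Since $x\notin I^\circ$, the integral $\int_I \frac{\imag f_K(t)}{t-x}\,\de t$ is a genuine (not principal‐value) integral, and for every $t\in I$ we have $t\le u_+\le x$, so $t-x\le 0$, with strict inequality on a set of positive $\nu$‐measure; hence $\int_I \frac{\de\nu(t)}{t-x}<0$ and therefore $f_K(x)=x+\int_I\frac{\de\nu(t)}{t-x}<x$. The case $x\le u_-$ is identical with the sign of $t-x$ reversed. For \eqref{eq: gexpan}, note that by \eqref{eq: fcontr} we have $x_+=f_K(u_+)<u_+$, and more generally $f_K$ maps $[u_+,\infty)$ into $\R$ as a strictly increasing map (it is the boundary trace of the conformal $f_K$, hence injective and orientation preserving there) with $f_K(x)<x$; so for $y\ge x_+=f_K(u_+)$ the point $x=g_K(y)$ satisfies $x\ge u_+$ and $y=f_K(x)<x=g_K(y)$, i.e. $g_K(y)>y$. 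Again the other half is symmetric. One small point to check along the way is that $g_K$ and $f_K$ do extend continuously to the relevant boundary segments $[x_+,\infty)$ and $[u_+,\infty)$ of $\R$ and are mutually inverse there; this is standard for hulls and I would invoke it by reference to the setup in Appendix~\ref{ssec: sle}.

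The main (and only real) obstacle is the bookkeeping about the support of $\nu$ and the continuous boundary extension: one must make sure that the formula \eqref{eq: poisson}, which is stated for $z\in\half$, is applied at real arguments lying strictly outside the closed interval $[u_-,u_+]$ where the integrand is harmless, and that $u_\pm$ really are the extreme points of $I$ so that no cancellation of signs occurs inside the integral. Once those are pinned down, both displays follow from a one‑line sign analysis of $\int_I (t-x)^{-1}\,\de\nu(t)$, so I do not expect to need any further computation.
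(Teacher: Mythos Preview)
Your proposal is correct and follows essentially the same approach as the paper: both arguments read off \eqref{eq: fcontr} directly from the Poisson representation \eqref{eq: poisson} by observing that $\imag f_K$ is nonnegative and positive on a set of positive measure, so that the integral has a definite sign when $x$ lies outside $[u_-,u_+]$, and then obtain \eqref{eq: gexpan} by inverting. Your write-up spells out the inversion step and the boundary-extension bookkeeping in more detail than the paper does, but the underlying argument is the same.
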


\begin{proof}
Note that $\imag f_K(x)$ is non-negative everywhere. It is positive in a set of non-zero Lebesgue measure, otherwise
the equation \eqref{eq: poisson} would imply that $f_K$ is an identity which is a contradiction. Now the equation
\eqref{eq: poisson} implies directly the equation \eqref{eq: fcontr}. Apply $g_K$ on both sides to get the equation
\eqref{eq: gexpan}.
\end{proof}

The lemma can be used, for example, in the following way.
\begin{lemma} \label{lm: cmapinc}
Let $K \subset K'$ be two hulls. Let $x \in \R$ s.t. 
$g_K(\overline{K' \setminus K}) \cap (x,\infty) = \emptyset$,
and let $z = f_K(x)$. Then
$g_K(z) \leq g_{K'}(z)$.
\end{lemma}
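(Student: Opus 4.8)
The plan is to reduce the asserted inequality to Lemma~\ref{lm: expand} through the composition rule for hydrodynamically normalized maps; throughout, statements about boundary points are understood via the continuous extensions of $g_K$, $g_{K'}$, $f_K$ to $\R$ and to the relevant parts of $\partial K$ and $\partial K'$. First I would set $L$ to be the hull generated by $g_K(\overline{K'\setminus K})$. Since $g_K$ carries $\partial(\half\setminus K)$ onto $\R$ and fixes $\infty$ under the hydrodynamic normalization, the set $K'\setminus K$, which abuts $\partial K$ from inside $\half\setminus K$, is sent to a set rooted on $\R$; hence $L$ is a hull rooted on $\R$, and $g_K$ maps $\half\setminus K'$ conformally onto $\half\setminus L$. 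As $g_L\circ g_K$ is then a hydrodynamically normalized conformal map of $\half\setminus K'$ onto $\half$, uniqueness of such a map (Chapter~4 of \cite{lawler-2005-}) gives $g_{K'}=g_L\circ g_K$. Because $z=f_K(x)$ we have $g_K(z)=x$, so $g_{K'}(z)=g_L(x)$, and it remains to prove $g_L(x)\ge x$.

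If $L\cap\half=\emptyset$, then $\overline{K'\setminus K}$ meets $\half$ only inside $K$, so $K'\cap\half=K\cap\half$ and $g_{K'}=g_K$, and the inequality holds with equality. Otherwise, I would observe that filling in the bounded components of $\half\setminus g_K(\overline{K'\setminus K})$ adjoins only points of $\half$ and real intervals whose endpoints already lie in $g_K(\overline{K'\setminus K})$; therefore the right end of the base $L\cap\R$ equals that of $g_K(\overline{K'\setminus K})\cap\R$, and the hypothesis $g_K(\overline{K'\setminus K})\cap(x,\infty)=\emptyset$ forces $\max(L\cap\R)\le x$. With the notation of Lemma~\ref{lm: expand} applied to the hull $L$ we then have $x_+=f_L(u_+)=\max(L\cap\R)\le x$ (the point $x_+$ lies in $L$, being a limit of boundary points $f_L(t_n)$ with $\imag f_L(t_n)>0$), so equation~\eqref{eq: gexpan} yields $g_L(x)>x$. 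Combining, $g_{K'}(z)=g_L(x)>x=g_K(z)$, which even slightly improves the claim.

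I expect the only real effort to be bookkeeping rather than any genuine obstacle: verifying that $g_K(\overline{K'\setminus K})$ generates an honest hull rooted on $\R$, that the continuous boundary extensions make $g_K(z)=x$, $g_{K'}(z)=g_L(x)$ and the composition identity all legitimate when $z\in\partial K$ or $x$ lies on the real base of $L$, and that the filling-in step does not enlarge that base beyond $x$. The substantive ingredients — the composition identity $g_{K'}=g_L\circ g_K$ and Lemma~\ref{lm: expand} — are already available, so there is no deeper difficulty to overcome.
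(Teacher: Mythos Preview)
Your proof is correct and follows exactly the same route as the paper: set $J=g_K(\overline{K'\setminus K})$, use the composition identity $g_{K'}=g_J\circ g_K$ so that $g_{K'}(z)=g_J(x)$, and then apply Lemma~\ref{lm: expand} to $J$ at the point $x$, which lies to the right of $J\cap\R$ by hypothesis. The paper's own proof is the one-line ``Apply Lemma~\ref{lm: expand} to hull $J=g_K(\overline{K'\setminus K})$ and $u=g_J(x)$'', and you have simply written out the details behind that line.
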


\begin{proof}
Apply Lemma \ref{lm: expand} to hull $J = g_K( \overline{K' \setminus K})$ and $u = g_J(x)$.
\end{proof}

Let's introduce still one more concept. Consider now $K=[-l,l]\times[0,h]$ where $l,h>0$. The domain $\half \setminus K$ 
can be thought as a polygon
with the vertices $w_1 = -l$, $w_2= -l + ih$, $w_3= l + ih$, $w_4 = l$ and $w_5 = \infty$. The interior angles at these vertices are
$\alpha_1 = \frac{\pi}{2}$, $\alpha_2 = \frac{3\pi}{2}$, $\alpha_3 = \frac{3\pi}{2}$, $\alpha_4 = \frac{\pi}{2}$ and $\alpha_5 = 0$,
respectively.
For the last one, this has to be thought on the Riemann sphere.

Mappings from $\half$ to polygons are well-known. They are Schwarz-Christoffel mappings. In this case, when $f_K(\infty)=w_5=\infty$
all such mappings can be written in the form
\begin{equation}
f_K(z) = A + C \int^z \frac{ \sqrt{\zeta - z_2} \sqrt{\zeta - z_3} }{ \sqrt{\zeta - z_1} \sqrt{\zeta - z_4}  } \de \zeta .
\end{equation}
Here $f_K(z_i)=w_i$, $i=1,2,3,4$. So in a sense $\real A, \imag A, C$ and $z_i$ are parameters in the problem. 
Two of them can be chosen freely and the rest are determined from them. 
The branches of the square roots are chosen so that far on the positive real axis the square root
is positive and then analytic continuation is used.

In our case $f_K$ is normalized at the infinity. This fixes $C=1$ and $\real A$ so that it cancels the constant term in the expansion of the integral.
But if we are only interested in differences $f_K(z) -f_K(z')$ we don't have to care about $A$.

\begin{lemma} \label{lm: scmapest}
Let $K=[-l/2,l/2]\times[0,h]$, $h,l>0$, and let $z_i$ be as above. 
Then
\begin{equation}
z_3      =-z_2      =\frac{1}{2} l  \, \big(1+\oo(1)\big) \textrm{ and } 
z_4 - z_3= z_2 - z_1=\frac{2}{\pi}h \, \big(1+\oo(1)\big) \textrm{ as } \frac{h}{l} \to 0 .
\end{equation}
\end{lemma}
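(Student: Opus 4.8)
The plan is to reduce the statement to the classical asymptotics of the complete elliptic integrals. First I would use the symmetry of $\half\setminus K$ under $z\mapsto-\bar z$: since the hydrodynamically normalised uniformisation commutes with this reflection, the prevertices may be taken symmetric, $z_1=-b$, $z_2=-a$, $z_3=a$, $z_4=b$ with $0<a<b$, and (as noted in the excerpt) the normalisation at $\infty$ forces the Schwarz--Christoffel constant $C=1$, so
\[
f_K'(\zeta)=\frac{\sqrt{\zeta+a}\,\sqrt{\zeta-a}}{\sqrt{\zeta+b}\,\sqrt{\zeta-b}} .
\]
Evaluating $f_K'$ just above the real axis on the two relevant sub-intervals, and tracking the branches of the four square roots, gives $f_K'(\zeta)=\sqrt{(a^2-\zeta^2)/(b^2-\zeta^2)}>0$ for $\zeta\in(-a,a)$ (the top edge of $K$) and $f_K'(\zeta)=-i\sqrt{(\zeta^2-a^2)/(b^2-\zeta^2)}$ for $\zeta\in(a,b)$ (the right edge of $K$). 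Integrating $f_K'$ over $[-a,a]$ then yields $f_K(a)-f_K(-a)=l$ and over $[a,b]$ yields $f_K(b)-f_K(a)=-ih$, i.e.
\[
l=2\int_0^a\sqrt{\frac{a^2-t^2}{b^2-t^2}}\,dt ,\qquad
h=\int_a^b\sqrt{\frac{t^2-a^2}{b^2-t^2}}\,dt .
\]

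Next I would substitute $t=a\sin\theta$ in the first integral and $t^2=a^2+(b^2-a^2)\sin^2\psi$ in the second; writing $k=a/b$ and $k'=\sqrt{1-k^2}$, and using $\int_0^{\pi/2}\frac{\cos^2\phi\,d\phi}{\sqrt{1-m^2\sin^2\phi}}=\frac{\mathcal E(m)-(1-m^2)\mathcal K(m)}{m^2}$, these collapse to the clean pair
\[
l=2b\bigl(\mathcal E(k)-k'^2\mathcal K(k)\bigr),\qquad
h=b\bigl(\mathcal E(k')-k^2\mathcal K(k')\bigr),
\]
where $\mathcal K,\mathcal E$ are the complete elliptic integrals of the first and second kind. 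The ratio $h/l$, viewed as a function of $k\in(0,1)$, is continuous and strictly positive, tends to $+\infty$ as $k\to0^+$ and to $0$ as $k\to1^-$; since for each $\delta>0$ it is therefore bounded below by a positive constant on $(0,1-\delta]$, the hypothesis $h/l\to0$ forces $k\to1$, i.e. $k'\to0$. Plugging in $\mathcal K(k')=\tfrac\pi2(1+\tfrac14k'^2+\OO(k'^4))$, $\mathcal E(k')=\tfrac\pi2(1-\tfrac14k'^2+\OO(k'^4))$, together with $\mathcal E(k)\to\mathcal E(1)=1$ and $k'^2\mathcal K(k)=\OO(k'^2\log(1/k'))\to0$, the two relations become $l=2b\,(1+\oo(1))$ and $h=\tfrac\pi4\,b\,k'^2\,(1+\oo(1))$, so $b=\tfrac l2(1+\oo(1))$ and $k'^2=\tfrac{8h}{\pi l}(1+\oo(1))$.

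Finally I would translate back: $z_3=a=bk=b\sqrt{1-k'^2}=b\,(1+\oo(1))=\tfrac12 l\,(1+\oo(1))$, and $z_2=-z_3$ by symmetry; likewise $z_4-z_3=z_2-z_1=b-a=b(1-k)=\dfrac{b\,k'^2}{1+k}=\dfrac l2\cdot\dfrac{8h}{\pi l}\cdot\dfrac12\,(1+\oo(1))=\dfrac2\pi h\,(1+\oo(1))$, which is the claim. I expect the only genuinely delicate points to be the bookkeeping of the four branch choices that produces the two integral formulas, and the (easy but necessary) argument that $h/l\to0$ is equivalent to $k\to1$; the reduction to $\mathcal K,\mathcal E$ and their standard expansions at $k'=0$ is then routine. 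An alternative, entirely elementary route avoids elliptic integrals by estimating the two integrals directly as $b-a\to0$, at the cost of first bootstrapping that $(b-a)/a\to0$ from the width relation.
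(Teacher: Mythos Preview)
Your proof is correct and takes a genuinely different route from the paper. You reduce the two side-length integrals to complete elliptic integrals via the substitutions $t=a\sin\theta$ and $t^2=a^2+(b^2-a^2)\sin^2\psi$, obtaining the exact relations $l=2b(\mathcal E(k)-k'^2\mathcal K(k))$ and $h=b(\mathcal E(k')-k^2\mathcal K(k'))$, and then invoke the standard expansions of $\mathcal K,\mathcal E$ at $k'=0$. The paper instead follows precisely the elementary alternative you mention in your last sentence: for $h=\int_{z_3}^{z_4}\sqrt{(\zeta-z_2)/(\zeta-z_1)}\sqrt{(\zeta-z_3)/(z_4-\zeta)}\,d\zeta$ it freezes the first (monotone) factor at its endpoint values and evaluates the second exactly as $\tfrac\pi2(z_4-z_3)$; for $l$ it uses the trivial upper bound (integrand $\le1$) and a one-line optimisation for the lower bound. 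These two-sided inequalities directly yield that $\theta/\lambda\to0$ is equivalent to $h/l\to0$, and the asymptotics follow. Your approach gives exact closed-form expressions and makes the final computation mechanical; the paper's approach needs no special-function machinery and handles the bootstrap ``$h/l\to0\Rightarrow (b-a)/a\to0$'' in one step from the explicit inequalities.
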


\begin{proof}
Note first that by symmetry $z_1=-z_4$ and $z_2 =-z_4$.

Denote $\lambda = z_3 - z_2$ and $\theta = z_4 - z_3$. We would like to estimate
$\lambda$ and $\theta$ in terms of $l$ and $h$.

Calculate $h=\imag (w_3-w_4)$ as an integral along the real axis
\begin{equation*}
h = \int_{z_3}^{z_4} 
\sqrt{ \frac{\zeta - z_2}{\zeta - z_1} }
\sqrt{ \frac{\zeta - z_3}{z_4 - \zeta} } \de \zeta .
\end{equation*}
Since the first factor of the integrand is a decreasing function $\zeta$, it can be bounded with the values at
the end points $z_3$ and $z_4$. After couple of variable changes, the integral of the second factor
is
\begin{equation*}
\int_{z_3}^{z_4} 
\sqrt{ \frac{\zeta - z_3}{z_4 - \zeta} } \de \zeta = \frac{\pi}{2} (z_4 -z_3)
\end{equation*}
and therefore
\begin{equation} \label{ie: hbound}
\frac{\pi}{2} \sqrt{ \frac{1}{1+\frac{\theta}{\lambda} } } \, \theta
\leq h \leq
\frac{\pi}{2} \sqrt{ \frac{1+\frac{\theta}{\lambda}}{1+2 \frac{\theta}{\lambda} } } \, \theta .
\end{equation}

Calculate $l=w_3-w_2$ as
\begin{equation*}
l = \int_{z_2}^{z_3} 
\sqrt{ \frac{(\zeta - z_2)(z_3 - \zeta)}{(\zeta - z_1)(z_4 - \zeta)} } \de \zeta 
=  2 \int_{0}^{z_3} 
\sqrt{ \frac{z_3^2 - \zeta^2}{z_4^2 - \zeta^2} } \de \zeta
\end{equation*}
The integrand is always less or equal then one. So $l \leq \lambda$. For the lower bound, note that the integrand
is a decreasing function of $\zeta$. Therefore
\begin{equation*}
\int_{0}^{z_3} \sqrt{ \frac{z_3^2 - \zeta^2}{z_4^2 - \zeta^2} } \de \zeta
\geq \zeta_0 \sqrt{ \frac{z_3^2 - \zeta_0^2}{z_4^2 - \zeta_0^2} }
\end{equation*}
Maximize this with respect to $\zeta_0 \in [0,z_3]$ to get
\begin{equation*}
\int_{0}^{z_3} \sqrt{ \frac{z_3^2 - \zeta^2}{z_4^2 - \zeta^2} } \de \zeta
\geq \left( z_4 - \sqrt{z_4^2 - z_3^2} \right)
\end{equation*}
To conclude this
\begin{equation} \label{ie: lbound}
\left( 1 + 2 \frac{\theta}{\lambda} -2 \sqrt{\frac{\theta}{\lambda}} \sqrt{ 1+\frac{\theta}{\lambda} } \right)
\, \lambda \leq l \leq \lambda
\end{equation}

The inequalities \eqref{ie: hbound} and \eqref{ie: lbound} can be combined to conclude that $\frac{\theta}{\lambda}$ is
small when $\frac{h}{l}$ is small. And in this case $\theta \approx \frac{2}{\pi} h$ and $\lambda \approx l$. And all the
claims follow.
\end{proof}

\begin{lemma} \label{lm: exitpointest}
Let hull $K$ be a subset of a rectangle $[-l,l] \times [0,h]$, $l,h>0$. If $K \cap \, (\{l\} \times [0,h])\neq \emptyset$ then
uniformly for any $z$ in this set $g_K(z) = l\big( 1+ \oo(1) \big)$ as  $\frac{h}{l} \to 0$.
\end{lemma}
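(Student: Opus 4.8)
The plan is to rescale so that $l=1$, reducing the statement to: if $K\subseteq R:=[-1,1]\times[0,h]$ is a hull meeting the right edge $\{1\}\times[0,h]$, then $g_K(z)=1+\oo(1)$ as $h\to 0$, uniformly in $K$ and in $z$ on that edge. Throughout I will write $f_K=g_K^{-1}$, $I_K=g_K(\partial K\cap\half)\subset\R$, $a_K=\min I_K$, $b_K=\max I_K$, and use that $\imag f_K$ vanishes off $I_K$, that $0\le\imag f_K\le h$ on $I_K$ (because $f_K(I_K)=\partial K\cap\half$ lies in a strip of height $h$), and that $\imag f_K(a_K)=\imag f_K(b_K)=0$ since $f_K(a_K)=x_-(K)$ and $f_K(b_K)=x_+(K)$ are real. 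The three inputs I intend to use are Lemma~\ref{lm: scmapest} (Schwarz--Christoffel asymptotics for a thin rectangle), Lemma~\ref{lm: expand} (outward monotonicity $g_K(x)>x$ for $x\ge x_+(K)$), and the Herglotz representation \eqref{eq: poisson} of $f_K$.

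For the \emph{upper bound}: any $z$ on the right edge is a prime end of $K$, so $g_K(z)\in I_K$ and $g_K(z)\le b_K$. I would then show that $b_{(\cdot)}$ is monotone under hull inclusion: if $K\subseteq K'$, write $g_{K'}=g_{\widetilde K}\circ g_K$ with $\widetilde K:=g_K(\overline{K'\setminus K})$; since $\imag f_{K'}(x)=\imag f_K(f_{\widetilde K}(x))$ one has $I_{\widetilde K}\subseteq I_{K'}$, and either $b_K>x_+(\widetilde K)$, whence $b_{K'}\ge g_{\widetilde K}(b_K)>b_K$ by Lemma~\ref{lm: expand}, or $b_K\le x_+(\widetilde K)$, whence $b_{K'}\ge b_{\widetilde K}=g_{\widetilde K}(x_+(\widetilde K))\ge x_+(\widetilde K)\ge b_K$. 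Taking $K'=R$ and applying Lemma~\ref{lm: scmapest} to the rectangle of width $2$, the point $b_R$ is the $g_R$--image of the corner $1$, namely $z_4=1+\tfrac{2}{\pi}h\bigl(1+\oo(1)\bigr)=1+\oo(1)$; hence $g_K(z)\le b_K\le b_R=1+\oo(1)$.

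For the \emph{lower bound}: fix $z=1+it$ ($t\in[0,h]$) with $z\in K$, put $w:=g_K(z)\in I_K$, and note $\real f_K(w)=\real z=1$. From \eqref{eq: poisson},
\begin{equation*}
1-w=\tfrac1\pi\,\mathrm{p.v.}\!\int_{I_K}\frac{\imag f_K(x)}{x-w}\,\de x
=\tfrac1\pi\!\int_{I_K}\frac{\imag f_K(x)-\imag f_K(w)}{x-w}\,\de x+\frac{\imag f_K(w)}{\pi}\,\log\frac{b_K-w}{w-a_K}.
\end{equation*}
Using $|I_K|\le 4$, $0\le\imag f_K\le h$, the vanishing of $\imag f_K$ at $a_K,b_K$ at a (uniformly controlled) square-root rate, and the uniform local square-root modulus of continuity of $\imag f_K$, the first integral is $\OO\bigl(h\log(1/h)\bigr)=\oo(1)$; since $\imag f_K(w)=t\le h$, the whole right-hand side is $\oo(1)$ \emph{unless} $w-a_K$ or $b_K-w$ is smaller than every power of $h$. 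In that remaining case $z=f_K(w)$ is, by the uniform equicontinuity of $\{f_K\}$ (a consequence of Carath\'eodory's theorem, every $\half\setminus K$ containing $\{\imag\zeta>h\}$, together with the Lipschitz bound $f_K(x)<x$ for $x>b_K$), within $\oo(1)$ of the real point $x_-(K)=f_K(a_K)$, respectively $x_+(K)=f_K(b_K)$; as $\real z=1$ this forces $x_\mp(K)=1-\oo(1)$, and a symmetric Herglotz estimate (namely $x_-(K)-a_K=\tfrac1\pi\int_{I_K}\frac{\imag f_K(y)}{y-a_K}\,\de y=\oo(1)$, likewise for $b_K-x_+(K)$) together with Lemma~\ref{lm: expand} gives $a_K=1+\oo(1)$, respectively $b_K=1+\oo(1)$; since $a_K\le w\le b_K$ and $w$ is within $\oo(1)$ of the relevant endpoint, $w=1+\oo(1)$ in all cases. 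Combined with the upper bound this is the lemma.

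The step I expect to be the main obstacle is the lower bound, and within it the degenerate configuration where $K$ is (essentially) a thin curve hugging the line $\{\real=1\}$: there $\hcap(K)$ need not be small --- it can be of order $h$, the same order as $\hcap(R)$ --- so one cannot argue that $g_K$ is close to the identity, and the geometric content (that $K$ is trapped in a strip of height $h$ and that its rightmost prime end has real part exactly $1$) must be extracted through the vanishing of $\imag f_K$ at the ends of $I_K$ and the equicontinuity of $\{f_K\}$ rather than through any crude capacity bound. Making the ``unless\,$\ldots$'' dichotomy quantitative and uniform in $K$ is where most of the work lies.
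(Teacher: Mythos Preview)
Your upper bound is essentially the paper's: both use monotonicity of $g_{(\cdot)}(l)$ under hull inclusion together with the Schwarz--Christoffel asymptotics for the rectangle to get $g_K(z)\le g_K(l)\le g_R(l)=l(1+\oo(1))$. The paper states this in one line via Lemma~\ref{lm: cmapinc} and Lemma~\ref{lm: scmapest}.

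The lower bound is where the approaches diverge, and yours has a real gap. The paper does \emph{not} analyse the Herglotz density; it simply observes that $z$ and $l$ are at distance $\le h$ from each other on $\partial(\half\setminus K)$ and then invokes a length--area estimate (Wolff's lemma, Proposition~2.2 in \cite{pommerenke-1992-}): circles around $l$ of radii $\rho\in[h,\sqrt{h}]$ produce crosscuts of $\half\setminus K$ separating both $z$ and $l$ from $\infty$, and for one such $\rho$ the $g_K$--image has small diameter, forcing $u_+-g_K(z)=\oo(l)$. This avoids all regularity questions about $\imag f_K$.

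Your route instead needs the ``uniform local square-root modulus of continuity of $\imag f_K$'', and this claim is \emph{false} uniformly over hulls in the rectangle. Take the L-shaped slit $K=\{0\}\times[0,h]\cup[0,l]\times\{h\}$. Then the right side of the vertical segment has harmonic measure from infinity of order $h\,e^{-\pi l/h}$ (Brownian motion must enter the pocket through the gate $\{l\}\times(0,h)$ and then traverse a rectangle of modulus $\sim l/h$), so $\imag f_K$ drops from $h$ to $0$ over a subinterval of $I_K$ of length comparable to $h\,e^{-\pi l/h}$. The H\"older-$1/2$ constant there is of order $\sqrt{h}\,e^{\pi l/(2h)}$, which blows up. Thus your bound $\int_{I_K}\frac{\imag f_K(x)-\imag f_K(w)}{x-w}\,\de x=\OO(h\log(1/h))$ cannot be obtained from modulus-of-continuity input alone; the cancellation that actually makes this integral small is geometric (the bad interval is at distance $\sim h$ from $w$ and has exponentially small length), and capturing it this way essentially reproduces the harmonic-measure/length--area argument you were trying to avoid. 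The ``degenerate case'' patch via equicontinuity of $\{f_K\}$ has the same defect: the family $\{f_K\}$ is \emph{not} equicontinuous up to the boundary uniformly in $K$, for exactly the same reason.

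The fix is to drop the Herglotz decomposition for the lower bound and argue geometrically: both prime ends $z$ and $l$ lie in $\overline{B(l,h)}$, hence are separated from $\infty$ by crosscuts of $\half\setminus K$ of diameter $\OO(h)$, and a length--area (or harmonic-measure) estimate bounds the Lebesgue measure of the corresponding boundary arc $[g_K(z),u_+]$ on $\R$ by $\oo(l)$.
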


\begin{proof}
Assume that $K \cap \half \neq \emptyset$. Otherwise the statement is trivial since $z = l$ and $g_K$ is identity.

Let $K' = [-l,l] \times [0,h]$. Then $K \subset K'$. Take any $z \in K \cap \, (\{l\} \times [0,h])$.
Let $x_+ = l$, $u_+=g_K(x_+)$ and $v_+=g_{K'}(x_+)$.

By Lemma \ref{lm: cmapinc} and Lemma \ref{lm: scmapest} $l \leq u_+ \leq v_+ = l \big( 1+ \oo(1) \big)$. 
And by an length area principle, for example, Wolff's lemma (Proposition~2.2 of \cite{pommerenke-1992-}),  
$0 \leq u_+ - g_K(z) = \oo(1)  \, l$.%
\end{proof}

\begin{lemma} \label{lm: hcaplb}
If $i \in K$ then $\hcap (K) \geq \frac{1}{4}$.
\end{lemma}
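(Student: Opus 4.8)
The plan is to read the bound directly off the Loewner equation, after a routine reduction to the case where $K$ is the trace of a simple curve. So suppose first that $K=\gamma[0,T]$ for a simple curve $\gamma$ with $\gamma(0)\in\R$, $\gamma((0,T])\subset\half$, parametrized by half-plane capacity, so $\hcap(K_t)=2t$ where $K_t=\gamma[0,t]$, and let $W_t$ be the (continuous, real) driving term, $g_t=g_{K_t}$ the Loewner maps. Let $q\in K$ be a point of maximal imaginary part, so $\imag q=h\ge 1$ since $i\in K$, and let $t_0\le T$ be the time at which $q$ is swallowed, i.e. $q=\gamma(t_0)$. For $t<t_0$ we have $q\in H_t=\half\setminus K_t$, so $g_t(q)$ is defined, and $g_{t_0}(q)=W_{t_0}\in\R$; set $\phi(t)=\imag g_t(q)$, which is positive on $[0,t_0)$ with $\phi(0)=\imag q=h$ and $\phi(t_0)=0$.

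First I would differentiate $\phi$ using the Loewner equation $\partial_t g_t(q)=2/(g_t(q)-W_t)$. Writing $\zeta=g_t(q)-W_t$, so $\imag\zeta=\phi(t)>0$, one gets
\[
\phi'(t)=\imag\frac{2}{\zeta}=-\frac{2\,\imag\zeta}{|\zeta|^2}=-\frac{2\phi(t)}{(\real\zeta)^2+\phi(t)^2}\ \ge\ -\frac{2}{\phi(t)}\,.
\]
Hence $\tfrac{d}{dt}\bigl(\phi(t)^2\bigr)=2\phi(t)\phi'(t)\ge -4$, and integrating over $[0,t_0]$ yields $\phi(t_0)^2-\phi(0)^2\ge -4t_0$, that is $-h^2\ge -4t_0$, so $t_0\ge h^2/4$. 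Consequently $\hcap(K)=2T\ge 2t_0\ge h^2/2\ge \tfrac12$, which in particular is $\ge\tfrac14$. (The same computation with $h=u$ gives Lemma~\ref{lm: hcap}, with the slightly sharper constant $u^2/2$.)

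The only real work is the reduction of a general hull $K$ with $i\in K$ to this curve case, which I expect to be the main (if mild) obstacle. The point is that $\half\setminus K$ is simply connected, so the connected component of the reflected hull $K\cup\bar K$ containing $i$ cannot be a compact subset of $\half$ (that would make $\half\setminus K$ multiply connected), hence it meets $\R$; being symmetric it then also contains $-i$, so $K$ contains a continuum joining $\R$ to $i$. I would then approximate $K$ from outside by smoothly bounded Jordan hulls $K_n\downarrow K$ (each $K_n\supseteq K$, hence still reaching height $\ge 1$); each such $K_n$ is generated by its upper boundary arc, a simple curve from $\R$, so the argument above applies and gives $\hcap(K_n)\ge\tfrac12$; and since $K_n\to K$ in the Carath\'eodory sense, $\hcap(K_n)\to\hcap(K)$ by Proposition~\ref{prop: sle basic}, whence $\hcap(K)\ge\tfrac12\ge\tfrac14$. (Alternatively, one may run the Loewner chain of $K$ itself and track $\imag g_t(i)$ up to the swallowing time of $i$, the only thing needing checking being the standard fact that this quantity tends to $0$ as $i$ is absorbed.)
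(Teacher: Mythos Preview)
Your proof is correct and follows essentially the same route as the paper's: approximate $K$ from outside by a simple-curve hull and then use the Loewner ODE to show $(\imag g_t(q))^2$ decreases at rate at most $4$, forcing the swallowing time $t_0\ge h^2/4$ and hence $\hcap(K)=2T\ge 1/2\ge 1/4$; the paper runs the identical computation, only phrased as a contradiction and tracking a generic high point rather than the highest one. Your observation that the argument actually yields the stronger bound $\hcap(K)\ge 1/2$ is right---the vertical slit $[0,i]$ has $g(z)=\sqrt{z^2+1}=z+\tfrac{1}{2z}+\cdots$, so $\hcap=1/2$---and the paper's remark that $1/4$ is sharp appears to be a slip (the stated inequality $\ge 1/4$ is of course still valid).
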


\begin{proof}
First of all note that this is sharp. It is attained by a vertical slit extending from $0$
to $i$.

Now assume that there is $K$ containing $i$ s.t. $\hcap (K) < \frac{1}{4}$. It is possible to choose
$\tilde{K}$ containing $K$ s.t. the capacities are arbitrarily close and the boundary of $\tilde{K}$
is a smooth curve. This can be done by choosing a smooth, simple curve $\gamma$ that separates an interval
containing $g_K(K)$ from $\infty$ in $\half$. Then $\tilde{ K}$ is the hull that has $f_K(\gamma)$ as the boundary.
Therefore there exists now $\tilde{K}$ s.t. it contains $i$, $\hcap (\tilde{K}) < \frac{1}{4}$ and
the boundary is a curve.

Therefore, there exists a simple curve $\gamma(t)$, $t \in [0,T]$, parameterized by the capacity so that $0<T<\frac{1}{4}$ and
$\gamma$ contains some point lying on the line $i + \R$. Now take any point $z$ s.t. $\imag z > 4 T$, and let
$Z_t = X_t + i Y_t = g_t(z)$. Then by Loewner equation
\begin{equation*}
\frac{\de Y_t}{\de t} = - \frac{2 Y_t}{(X_t - U_t)^2 + Y_t^2} \geq - \frac{2}{Y_t} .
\end{equation*}
Therefore
\begin{equation*}
Y_t \geq \sqrt{ \big( \imag  z \big)^2 - 4 t} >0 .
\end{equation*}
Hence $z \notin \gamma[0,T]$. This leads to a contradiction: $\gamma$ doesn't intersect the line $i + \R$.
\end{proof}

\begin{lemma} \label{lm: hcap} Let $K$ be a hull.
If $K \cap \big( \R \times \{h i\} \big) \neq \emptyset$
then $\hcap (K) \geq \frac{1}{4} h^2$.
If $K \subset [-l,l]\times[0,h]$, then $\hcap (K) \leq \hcap\big( [-l,l]\times[0,h] \big)$ and 
$\hcap\big( [-l,l]\times[0,h] \big)= \frac{1}{2\pi} h l \big( 1+ \oo(1) \big)$ as $\frac{h}{l} \to 0$.
\end{lemma}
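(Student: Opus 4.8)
The statement has three essentially independent parts, and the plan is to treat them in the order: monotonicity of $\hcap$ under inclusion, then the lower bound $\hcap(K)\ge\tfrac14 h^2$, then the asymptotics for the rectangle. The second assertion of the lemma (the inequality $\hcap(K)\le\hcap([-l,l]\times[0,h])$) is just the special case $K'=[-l,l]\times[0,h]$ of the monotonicity statement, so that is where I would start.

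\emph{Monotonicity.} I would prove: if $K\subset K'$ are hulls then $\hcap(K)\le\hcap(K')$. If $K\cap\half=\emptyset$ this is trivial ($\hcap(K)=0$), so assume not. Let $\tilde K$ be the hull obtained as the closure of $g_K(K'\setminus K)$; since $\half\setminus\tilde K=g_K(\half\setminus K')$ is a conformal image of a simply connected domain, $\tilde K$ is indeed a hull, and by uniqueness of the hydrodynamically normalized map one has the composition identity $g_{K'}=g_{\tilde K}\circ g_K$. Composing the expansions \eqref{eq: hydrodn} of $g_{\tilde K}$ and $g_K$ at infinity and reading off the coefficient of $1/z$ gives $\hcap(K')=\hcap(K)+\hcap(\tilde K)$. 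Finally $\hcap(\tilde K)\ge 0$: from \eqref{eq: poisson}, expanding $1/(x-z)$ at $z=\infty$ shows $\hcap(\tilde K)=\tfrac1\pi\int_I\imag f_{\tilde K}(x)\,\de x$ with nonnegative integrand (the nonnegativity of $\imag f$ is exactly what is observed in the proof of Lemma~\ref{lm: expand}). Hence $\hcap(K')\ge\hcap(K)$.

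\emph{Lower bound.} For the first assertion I would reduce to Lemma~\ref{lm: hcaplb} via the two invariances of the normalization: real translation, $g_{K+c}(z)=g_K(z-c)+c$ for $c\in\R$, which gives $\hcap(K+c)=\hcap(K)$; and scaling, $g_{\lambda K}(z)=\lambda\,g_K(z/\lambda)$ for $\lambda>0$, which gives $\hcap(\lambda K)=\lambda^2\hcap(K)$ (both checked directly on the expansion \eqref{eq: hydrodn}). If $K$ meets $\R\times\{hi\}$ at a point $x_0+hi$, then $\tfrac1h(K-x_0)$ contains $i$, so Lemma~\ref{lm: hcaplb} gives $\hcap\!\big(\tfrac1h(K-x_0)\big)\ge\tfrac14$, and unwinding the invariances yields $\hcap(K)=\hcap(K-x_0)=h^2\,\hcap\!\big(\tfrac1h(K-x_0)\big)\ge\tfrac14 h^2$.

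\emph{Asymptotics of the rectangle.} The upper bound $\hcap(K)\le\hcap([-l,l]\times[0,h])$ for $K\subset[-l,l]\times[0,h]$ is monotonicity, so it remains to estimate $\hcap\big([-l,l]\times[0,h]\big)$ as $h/l\to 0$. Write $K'=[-l,l]\times[0,h]$ and use the Schwarz--Christoffel picture set up before Lemma~\ref{lm: scmapest}, locating the preimages $z_1<z_2<z_3<z_4$ of the four corners by applying Lemma~\ref{lm: scmapest} with its width parameter equal to $2l$: then $z_3=-z_2=l(1+\oo(1))$ and $z_4-z_3=z_2-z_1=\tfrac{2}{\pi}h(1+\oo(1))$, so also $z_4=z_3+(z_4-z_3)=l(1+\oo(1))$ and $z_1=-l(1+\oo(1))$. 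Along $I=[z_1,z_4]$ the map $f_{K'}$ parametrizes $\partial K'$: on the middle interval $(z_2,z_3)$ it runs along the top edge, so $\imag f_{K'}\equiv h$ there, while on the two side intervals, of total length $2(z_4-z_3)=O(h)$, one has $0\le\imag f_{K'}\le h$. Feeding this into the boundary-integral formula $\hcap(K')=\tfrac1\pi\int_I\imag f_{K'}(x)\,\de x$ (the coefficient of $1/z$ extracted from \eqref{eq: poisson}) sandwiches $\hcap(K')$ between $\tfrac1\pi h(z_3-z_2)$ and $\tfrac1\pi h(z_4-z_1)$, both of which equal $\tfrac1\pi h\cdot 2l(1+\oo(1))$, while the side contribution is only $O(h^2)=\oo(hl)$; this pins $\hcap(K')$ down to a $(1+\oo(1))$ factor and exhibits it as being of order $hl$. (Equivalently, expand the integrand $f_{K'}'(z)=\big((z^2-z_3^2)/(z^2-z_4^2)\big)^{1/2}=1+\tfrac{z_4^2-z_3^2}{2z^2}+O(z^{-4})$ at infinity, integrate, and read off $\hcap(K')=\tfrac12(z_4^2-z_3^2)=\tfrac12(z_4-z_3)(z_4+z_3)$, then substitute Lemma~\ref{lm: scmapest}.) The only genuinely delicate point in the whole proof is this last step: one must feed Lemma~\ref{lm: scmapest} the correct width ($2l$, not $l$) and verify that the two thin side intervals of $\partial K'$ contribute negligibly against the $hl$ main term; everything else is a one-line invariance or composition argument.
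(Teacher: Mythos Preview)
Your argument is correct and follows the paper's line: the lower bound by scaling and Lemma~\ref{lm: hcaplb}, monotonicity of $\hcap$ via additivity under composition (which the paper uses without proof), and the rectangle asymptotics from the Schwarz--Christoffel data of Lemma~\ref{lm: scmapest} --- your parenthetical route, expanding $f_{K'}'$ at infinity to read off $\hcap(K')=\tfrac12(z_4^2-z_3^2)=\tfrac12(z_4-z_3)(z_4+z_3)$, is exactly the paper's computation, while your boundary-integral sandwich is an equally valid alternative. One remark: both of your computations yield $\hcap([-l,l]\times[0,h])=\tfrac{2}{\pi}hl(1+\oo(1))$, not the $\tfrac{1}{2\pi}hl$ in the statement; the discrepancy is in the paper (whose own displayed formula $\tfrac14(z_4-z_3)(z_4+z_3)$ also does not reproduce $\tfrac{1}{2\pi}$), and only the order of magnitude is used in Section~\ref{ssec: cont driving}.
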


\begin{proof}
The lower bound follows from Lemma \ref{lm: hcaplb} and scaling.

For the upper bound let's use the Schwarz-Christoffel mapping. Write
\begin{align}
\hcap(K) &=  \frac{1}{8} \big( -z_1^2 - z_4^2 + z_2^2 + z_3^2  \big) = \frac{1}{4} \big( z_4 - z_3 \big)\big( z_4 + z_3 \big) \nonumber \\
  & = \frac{1}{2\pi} h l \big( 1+ \oo(1) \big)
\end{align}
This gives the desired upper bound.
\end{proof}

\section*{Acknowledgments}

A.K. was financially supported by Academy of Finland.
Both authors were supported 
by the Swiss NSF, NCCR SwissMAP, EU RTN CODY and ERC AG COMPASP.

%
\nocite{pommerenke-1992-}
\nocite{lawler-2005-}
\nocite{rohde-schramm-2005-}
\nocite{grimmett-2006-}


\end{document}